\definecolor{myPurple}{RGB}{230,230,255}
\definecolor{interesting}{RGB}{140,0,60}
\definecolor{brown}{RGB}{210,180,140}
\renewcommand{\epsilon}{\varepsilon}
\renewcommand{\phi}{\varphi}
\newcommand{\ie}{i.\,e.\xspace}
\newcommand{\Imc}{\ensuremath{\mathcal{I}}\xspace}
\newcommand{\Omc}{\ensuremath{\mathcal{O}}\xspace}
\newcommand{\Qmc}{\ensuremath{\Pi}\xspace}
\newcommand{\Rmc}{\ensuremath{\mathcal{R}}\xspace}
\newcommand{\Cc}{\mathscr{C}}
\newcommand{\Oof}{\Omc}
\newcommand{\Nbb}{\ensuremath{\mathbb{N}}\xspace}
\newcommand{\Zbb}{\ensuremath{\mathbb{Z}}\xspace}
\newcommand{\wone}{\ensuremath{\mathsf{W}[1]}}
\newcommand{\iljk}{\ensuremath{1 \leq i < j \leq \kappa}}
\newcommand{\kctwo}{\ensuremath{\binom{\kappa}{2}}\xspace}
\newcommand{\budget}{b\xspace}
\newcommand{\N}{\mathbb{N}}
\newcommand{\Z}{\mathbb{Z}}
\newcommand{\mcc}{\textsc{Multi-Colored Clique}}
\newcommand{\varS}{S}
\newcommand{\conn}{\mathcal{C}}
\newcommand{\cp}{\ensuremath{\mathsf{coNP/poly}}}
\newcommand{\XNLP}{\ensuremath{\mathsf{XNLP}}}
\theoremstyle{remark}
\newtheorem{theorem}{Theorem}[section]
\newtheorem{corollary}{Corollary}[section]
\newtheorem{definition}{Definition}[section]
\newtheorem{lemma}{Lemma}[section]
\newtheorem{observation}{Observation}[section]
\newtheorem{claim}{Claim}
\crefname{corollary}{Corollary}{Corollaries}
\crefname{lemma}{Lemma}{Lemmas}
\crefname{section}{Section}{Sections}
\newenvironment{claimproof}[1][\proofname]{%
	\begin{proof}[#1]%
	}{%
	\end{proof}%
}
\newtheorem*{result*}{}
\newtheorem*{remark*}{Remark}
\DeclarePairedDelimiter\ceil{\lceil}{\rceil}
\begin{document}
\title{Kernelization complexity of solution discovery problems}
\author{Mario Grobler\\University of Bremen, Germany \and Stephanie Maaz\thanks{Funded by a grant from the Natural Sciences and Engineering Research Council of Canada.}\\University of Waterloo, Canada \and Amer E.~Mouawad\\American University of Beirut, Lebanon \and Naomi Nishimura\footnotemark[1]\\University of Waterloo, Canada\and Vijayaragunathan Ramamoorthi\thanks{Funded by the ``Mind, Media, Machines'' high-profile area at the University of Bremen.}\\University of Bremen, Germany\and Sebastian Siebertz\\University of Bremen, Germany}
\date{}
\maketitle

\begin{abstract}
\noindent In the solution discovery variant of a vertex (edge) subset problem $\Qmc$ on graphs, we are given an initial configuration of tokens on the vertices (edges) of an input graph $G$ together with a budget~$\budget$. 
The question is whether we can transform this configuration into a feasible solution of~$\Qmc$ on~$G$ with at most $\budget$ modification steps. 
We consider the token sliding variant of the solution discovery framework, where each modification step consists of sliding a token to an adjacent vertex (edge). 
The framework of solution discovery was recently introduced by Fellows et al. [Fellows et al., ECAI 2023] and for many solution discovery problems the classical as well as the parameterized complexity has been established. 
In this work, we study the kernelization complexity of the solution discovery variants of \textsc{Vertex Cover}, \textsc{Independent Set}, \textsc{Dominating Set}, \textsc{Shortest Path}, \textsc{Matching}, and \textsc{Vertex Cut} with respect to the parameters number of tokens $k$, discovery budget $\budget$, as well as structural parameters such as~pathwidth.
\end{abstract}
\section{Introduction}
In the realm of optimization, traditional approaches revolve around computing optimal solutions to problem instances from scratch. 
However, many practical scenarios can be formulated as the construction of a
feasible solution from an infeasible starting state.
Examples of such scenarios include reactive systems involving human interactions. 
The inherent dynamics of such a system is likely to lead to an infeasible state.
However, computing a solution from scratch may lead to a solution that may
differ arbitrarily from the starting state.
The modifications required to reach such a solution from the starting state may be costly, difficult to implement, or sometimes unacceptable.

Let us examine a specific example to illustrate. 
A set of workers is assigned tasks so that every task is handled by a qualified worker. 
This scenario corresponds to the classical matching problem in bipartite graphs. 
Suppose one of the workers is now no longer available (e.g.\ due to illness); hence, the schedule has to be changed. 
An optimal new matching could be efficiently recomputed from scratch, but it is desirable to find one that is as close to the original one as possible, so that most of the workers keep working on the task that they were initially assigned. 

Such applications can be conveniently modeled using the \emph{solution discovery} framework, which is the central focus of this work. 
In this framework, rather than simply finding a feasible solution to an instance $\Imc$ of a source problem $\Qmc$, we investigate whether it is possible to transform a given infeasible configuration into a feasible one by applying a limited number of transformation steps.
In this work we consider vertex (edge) subset problems $\Pi$ on graphs, where the \emph{configurations} of the problem are sets of vertices (edges). 
These configurations are represented by the placement of tokens on the vertices (edges) of the configuration. 
An atomic \emph{modification step} consists of moving one of the tokens and the question is whether a feasible configuration is reachable after at most $\budget$ modification steps. 
Inspired by the well-established framework of combinatorial reconfiguration~\cite{bousquet2022survey,nishimura2018introduction,van2013complexity}, commonly allowed modification steps are the addition/removal of a single token, the jumping of a token to an arbitrary vertex/edge, or the slide of a token to an adjacent vertex (edge). 

Problems defined in the solution discovery framework are useful and have been appearing in recent literature. 
Fellows et al.~\cite{fellows2023solution} introduced the term \emph{solution discovery}, and along with Grobler et al.~\cite{grobler2023solution} initiated the study of the (parameterized) complexity of solution discovery problems for various \NP-complete source problems including \textsc{Vertex Cover~(VC)}, \textsc{Independent Set~(IS)}, \textsc{Dominating Set (DS)}, and \textsc{Coloring~(Col)} as well as various source problems in $\P$ such as \textsc{Spanning Tree (ST)}, \textsc{Shortest Path (SP)}, \textsc{Matching (Mat)}, and \textsc{Vertex Cut (VCut) / Edge Cut (ECut)}. 

Fellows et al.~\cite{fellows2023solution} and Grobler et al.~\cite{grobler2023solution} provided a full classification of polynomial-time solvability vs.\ \NP-completeness of the above problems in all token movement models (token addition/removal, token jumping, and token sliding). 
For the \NP-complete solution discovery problems, they provided a classification of fixed-parameter tractability vs.\ $\W[1]$-hardness. 
Recall that a \textit{fixed-parameter tractable algorithm} for a problem $\Qmc$ with respect to a parameter $p$ is one that solves~$\Qmc$ in time $f(p) \cdot n^{\Omc(1)}$, where $n$ is the size of the instance and $f$ is a computable function dependent solely on $p$, while $\W[1]$-hardness provides strong evidence that the problem is likely not fixed-parameter tractable (\ie, does not admit a fixed-parameter tractable algorithm)~\cite{downey2013fundamentals}.

A classical result in parameterized complexity theory is that every problem $\Qmc$ that admits a fixed-parameter tractable algorithm necessarily admits a kernelization algorithm as well~\cite{cai1997advice}. 
A \textit{kernelization algorithm} for a problem $\Qmc$ is a polynomial-time preprocessing algorithm that, given an instance~$x$ of the problem $\Qmc$ with parameter $p$, produces a \emph{kernel} -- an equivalent instance~$x'$ of the problem $\Qmc$ with a parameter $p'$, where both the size of~$x'$ and the parameter $p'$ are bounded by a computable function  depending only on $p$~\cite{downey2013fundamentals}.
Typically, kernelization algorithms generated using the techniques of Cai et al.~\cite{cai1997advice} yield kernels of exponential (or even worse) size. 
In contrast, designing problem-specific kernelization algorithms frequently yields more efficiently-sized kernels, often quadratic or even linear with respect to the parameter.
Note that once a decidable problem~$\Qmc$ with parameter $p$ admits a kernelization algorithm, it also admits a fixed-parameter tractable algorithm, as a kernelization algorithm always produces a kernel of size that is simply a function of~$p$.
The fixed-parameter tractable solution discovery algorithms of Fellows et al.~\cite{fellows2023solution} and Grobler et al.~\cite{grobler2023solution} are not based on kernelization algorithms.
 
Unfortunately, it is unlikely that all fixed-parameter tractable problems admit polynomial kernels. 
Bodlaender et al.~\cite{bodlaender2009problems,bodlaender2014kernelization} developed the first framework for proving kernel lower bounds and Fortnow and Santhanam~\cite{fortnow2008infeasibility} showed a connection to the hypothesis $\NP \not\subseteq \cp$. 
Specifically, for several \NP-hard problems, a kernel of polynomial size with respect to a parameter would imply that $\NP \subseteq \cp$, and thus an unlikely  collapse of the polynomial hierarchy to its third level~\cite{yap1983some}. 
Driven by the practical benefits of kernelization algorithms, we explore the size bounds on kernels for most of the above-mentioned solution discovery problems in the token sliding model, particularly those identified as fixed-parameter tractable in the works of Fellows et al.~\cite{fellows2023solution} and Grobler et al.~\cite{grobler2023solution}.

\medskip \noindent
\textbf{Overview of our results.} We focus on the kernelization complexity of  solution discovery in the token sliding model for the following source problems: \textsc{Vertex Cover}, \textsc{Independent Set}, \textsc{Dominating Set}, \textsc{Shortest Path}, \textsc{Matching}, and \textsc{Vertex Cut}. 
For a base problem $\Pi$ we write $\Pi$-\textsc{D} for the discovery version in the token sliding model. 

\Cref{fig:summary-figure} summarizes our results. 
All graph classes and width parameters appearing in this introduction are defined in the preliminaries. 
Fellows et al.~\cite{fellows2023solution} and Grobler et al.~\cite{grobler2023solution} gave fixed-parameter tractable algorithms with respect to the parameter $k$ for \textsc{IS-D} on nowhere dense graphs, for \textsc{VC-D}, \textsc{SP-D}, \textsc{Mat-D}, and \textsc{VCut-D} on general graphs and for \textsc{DS-D} on biclique-free graphs.

We show that \textsc{IS-D}, \textsc{VC-D}, \textsc{DS-D}, and \textsc{Mat-D} parameterized by $k$ admit polynomial size kernels (on the aforementioned classes), while \textsc{VCut-D} does not admit kernels of size polynomial in~$k$. For \textsc{SP-D}, we show that the problem does not admit a kernel of polynomial size  parameterized by $k + \budget$ unless $\NP \subseteq \cp$.

As \NP-hardness provides strong evidence that a problem admits no polynomial-time algorithm, $\W[t]$-hardness (for a positive integer $t$) with respect to a parameter $p$ provides strong evidence that a problem admits no fixed-parameter tractable algorithm with respect to $p$.
Fellows et al.~\cite{fellows2023solution} proved that \textsc{VC-D}, \textsc{IS-D}, and \textsc{DS-D} are $\W[1]$-hard with respect to parameter $\budget$ on $d$-degenerate graphs but provided fixed-parameter tractable algorithms on nowhere dense graphs. 
They also showed that these problems are slicewise polynomial (\XP) with respect to the parameter treewidth and left open the parameterized complexity of these problems with respect to the parameter treewidth alone. 
We show that these problems remain \XNLP-hard (which implies $\W[t]$-hardness for every positive integer~$t$) for parameter pathwidth (even if given a path decomposition realising the pathwidth), 
which is greater than or equal to treewidth, and that they admit no polynomial kernels (even if given a path decomposition realising the pathwidth) 
with respect to the parameter $\budget + \textit{pw}$, where $\textit{pw}$ is the pathwidth of the input graph, unless $\NP \subseteq \cp$. 

Finally, we also consider the parameter feedback vertex set number (\emph{fvs}), which is an upper bound on the treewidth of a graph, but is incomparable to pathwidth. 
We complement the parameterized complexity classification for the results of Fellows et al.~\cite{fellows2023solution} by showing that \textsc{IS-D, VC-D}, and \textsc{DS-D} are $\W[1]$-hard for the parameter \emph{fvs}.

\begin{figure}
\centering
\begin{tikzpicture}
\draw[line width=0.5pt, draw=cyan] (0,-2) -- (8, -2);
\draw[line width=0.5pt]  (8, -2) -- (\linewidth,-2);
\draw[line width=0.5pt] (\linewidth,-2) -- (\linewidth,2);
\draw[line width=0.5pt, draw=red] (\linewidth,2) -- (\linewidth,5);; 
\draw[line width=0.5pt, draw=red] (\linewidth,5) -- (0,5);
\draw[line width=0.5pt, draw=red] (0,5) -- (0,3.75);
\draw[line width=0.5pt] (0,3.75) -- (0,2);
\draw[line width=0.5pt, draw=cyan] (0,2) -- (0,-2);
\fill[pattern=vertical lines, pattern color=red!30] (0,5) -- (0,3.75) .. controls (2,3.75) and (5,2.85) .. (\linewidth,2) -- (\linewidth,5) -- cycle;
\fill[pattern=grid, pattern color=cyan] (0,2) .. controls (3,2) and (8,2.5) .. (8,-2) -- (0,-2) -- cycle;
\fill[fill=white] (0,3.75) .. controls (2,3.55) and (5,2.85) .. (\linewidth,2) -- (\linewidth,-2) -- (8,-2) .. controls (8,2.5) and (3,2.5) .. (0,2);
\draw[line width=0.5pt, draw=red] (0,3.75) .. controls (2,3.75) and (5,2.85) .. (\linewidth,2);
\draw[line width=0.5pt, draw=cyan] (0,2) .. controls (3,2) and (8,2.5) .. (8,-2);
\node[align=center, fill=white, text width=4cm, rounded corners, inner sep=0.5mm, draw=none] at (5.7,-1.75) (label1) {\scriptsize \textit{polynomial size kernels}};
\node[circle, inner sep=2pt, draw=black, fill=white, label=right:{\scriptsize \textsc{IS-D}: ($k$)-nowhere dense \textit{(Thm.~\ref{thm:nowheredense-IS-k})}}] at (1.2,1) {};
\node[draw=red, inner sep=1.5pt, regular polygon, fill=white, regular polygon sides=3, label=right:{\scriptsize \textsc{VC-D}: ($k$)-general \textit{(Thm.~\ref{thm:VC-parameter-k})}}] at (1.25,-0.6) {};
\node[rectangle, inner sep=3pt, draw=teal, fill=white, label=right:{\scriptsize \textsc{DS-D}: ($k$)-biclique-free \textit{(Thm.~\ref{thm:DS-K-biclique-semi-ladder})}
}] at (2.3,0.3) {};
\node[signal, signal to=east, signal from=west, inner sep=2pt, draw={rgb,255:red,231; green,7; blue,128}, fill=white, label=right:{\scriptsize \textsc{Mat-D}: ($k$)-general \textit{(Thm.~\ref{thm:Mat-k-general})}}]  at (2.1,-1.2) {};
\node[align=center, pattern = grid, pattern color=cyan, text width=6.75cm, rounded corners, inner sep=0.5mm, draw=none] at (13,-1.75) (label2) {\scriptsize \textit{no poly kernels (assuming $\NP \not\subseteq \cp$)}};
\node[diamond, inner sep=2pt, draw={rgb:red,188; green,236; blue,50}, fill=white, label=right:{\scriptsize \textsc{VCut-D}: ($k$)-general \textit{(Thm.~\ref{thm:cross_composition-Vcut-k},}~\cite{grobler2023solution}\textit{)}}] at (4.7,2.1){};
\node[star, star points=5, star point ratio=2, inner sep=1.5pt, draw={rgb:red,100; green,64; blue,64}, fill=white, label={[align=left, text width=6cm] right:{\scriptsize \vspace{-1em} \textsc{SP-D}: ($k+\budget$)-general \textit{(Thm.~\ref{thm:SP-kb-general},}~\cite{grobler2023solution}\textit{)}}}] at (10.8,-0.35) {};
\node[draw=red, inner sep=1.5pt, regular polygon, regular polygon sides=3, label={[align=left, text width=6cm] right:{\scriptsize \vspace{-1em} \textsc{VC-D}: ($\budget + pw$)-general \textit{(Thm.~\ref{thm:cross_composition-VC-bpw},}~\cite{fellows2023solution}\textit{)}}}] at (9.25,1.2) {};
\node[rectangle, inner sep=3pt, draw=teal, fill=white, label={[align=left, text width=6cm] right:{\scriptsize \vspace{-1em} \textsc{DS-D}: ($\budget + pw$)-general 
\textit{(Thm.~\ref{thm:cross_composition-DS-bpw},}~\cite{fellows2023solution}\textit{)}}}] at (7.65,0.5) {};
\node[circle, inner sep=2pt, draw=black, fill=white, label={[align=left, text width=6cm] right:{\scriptsize \vspace{-1em} \textsc{IS-D}: ($\budget + pw$)-general \textit{(Thm.~\ref{thm:cross_composition-IS-bpw},}~\cite{fellows2023solution}\textit{)}}}] at (9,-1.15){};
\node[align=center, fill=white, text width=4.75cm, rounded corners, inner sep=0.5mm, draw=none] at (14,4.75) (label3) {\scriptsize \textit{$\W[1]$-hard}};
\node[circle, inner sep=2pt, draw=black, fill=white, label={[align=left, text width=4.5cm] right:{\scriptsize \vspace{-0.5em} \textsc{IS-D}: (fvs)-general \textit{(Thm.~\ref{thm:IS-D-fvs})}, \vspace{-1em}~~~~~~~~($pw$)-general \textit{(Thm.~\ref{thm:IS-D-pathwidth})}}}] at (6.5,4) {};
\node[draw=red, inner sep=1.5pt, regular polygon, fill=white, regular polygon sides=3, label={[align=left, text width=4.5cm] right:{\scriptsize \vspace{-0.5em} \textsc{VC-D}: (fvs)-general \textit{(Thm.~\ref{thm:VC-D-fvs})}, \vspace{-1em}~~~~~~~~~($pw$)-general \textit{(Thm.~\ref{thm:VC-D-pathwidth})}}}] at (11,3.2) {};
\node[rectangle, inner sep=3pt, draw=teal, fill=white, label={[align=left, text width=4.5cm] right:{\scriptsize \vspace{-0.5em} \textsc{DS-D}: (fvs)-general \textit{(Thm.~\ref{thm:DS-D-fvs})}, \vspace{-1em}~~~~~~~~~($pw$)-general \textit{(Thm.~\ref{thm:DS-D-pathwidth})}}}] at (1.25,4.3) {};
\end{tikzpicture}
\vspace{-1em}
\caption{\footnotesize A classification of problems into three categories: (blue, alternatively grid) problems for which we obtain polynomial kernels, (white) those for which polynomial kernels are unlikely, and (red, alternatively lines) those for which fixed-parameter tractable algorithms are unlikely. Each entry in a category mentions a solution discovery problem, one or more parameters (in parentheses and followed by a dash), and the graph class with respect to which the problem falls into the category. A reference in the parentheses indicates that the fixed-parameter tractability of that problem was established in the cited work. $pw$ denotes the pathwidth and \emph{fvs} denotes the feedback vertex set number of the input graph.}
\label{fig:summary-figure}
\end{figure}
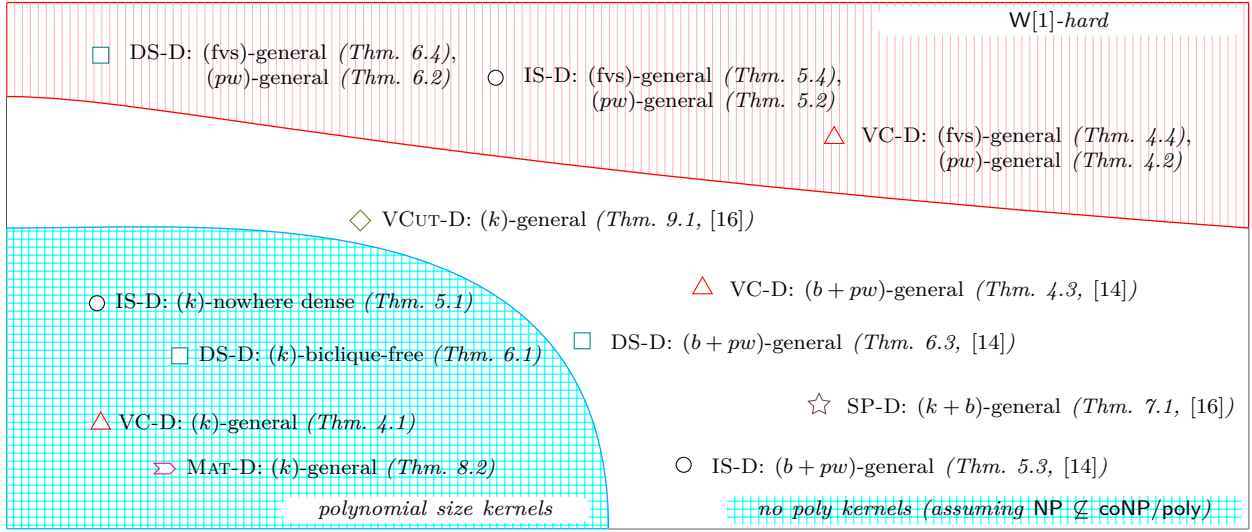

Several interesting questions remain open. 
For instance, while their parameterized complexity was determined, the kernelization complexity of \textsc{Col-D} and \textsc{ECut-D} remains unsettled.
Similarly, the kernelization complexity of \textsc{IS-D} and \textsc{DS-D} with respect to parameter $k$ is unknown on $d$-degenerate and semi-ladder-free graphs, respectively, where the problems are known to be fixed-parameter tractable.
In addition, it remains open whether \textsc{VCut-D} parameterized by $k + \budget$ admits a polynomial kernel or whether \textsc{Mat-D} parameterized by~$b$ admits polynomial kernels on restricted classes of graphs. 

\medskip
\noindent
\textbf{Organization of the paper.} We introduce all relevant notation in \Cref{sec:prelims}. In \Cref{sec:foundational}, we provide fundamental graph gadgets that appear in many constructions presented in the paper and provide several lemmas describing useful properties of those gadgets. Afterwards, we present our results for \textsc{VC-D} in \Cref{sec:vc}, \textsc{IS-D} in \Cref{sec:is}, \textsc{DS-D} in \Cref{sec:ds}, \textsc{SP-D} in \Cref{sec:sp}, \textsc{Mat-D} in \Cref{sec:mat}, and \textsc{VCut-D} in \Cref{sec:cut}.

\section{Preliminaries}
\label{sec:prelims}
We use the symbol $\N$ for the set of non-negative integers (including $0$), $\Z$ for the set of all integers, and $\Z_+$ for the set of positive non-zero integers. 
For $k \in \mathbb{N}$, we define $[k] = \{1, \ldots, k\}$ with the convention that $[0] = \varnothing$.

\medskip
\noindent \textbf{Graphs.}~
We consider finite and simple graphs only. 
We denote the vertex set and the edge set of a graph $G$ by $V(G)$ and $E(G)$, respectively, and denote an undirected edge between vertices~$u$ and~$v$ by $uv$ (or equivalently $vu$) and a directed edge from~$u$ to~$v$ by $(u,v)$. 
We use $N(v)$ to denote the set of all neighbors of $v$ and $E(v)$ to denote the set of all edges incident with $v$. Furthermore, we define the closed neighborhood of $v$ as $N[v] = N(v) \cup \{v\}$.
For a set $X$ of vertices we write $G[X]$ for the subgraph induced by $X$. 

A sequence of edges $e_1 \dots e_\ell$ for some $\ell \geq 1$ is a (simple) path of length $\ell$ if every two consecutive edges in the sequence share exactly one endpoint and each other pair of edges share no endpoints. 
For vertices $u$ and $v$, we denote the length of a shortest path $e_{1} \dots e_\ell$ that connects $u$ to $v$ by~$d(u,v)$, where $d(v,v) = 0$ for all $v \in V(G)$.
For edges $e, e' \in E(G)$, we denote by $d(e, e')$ the length of a shortest path $e_1 \dots e_\ell$ with $e_1$ being incident to $e$ and $e' = e_\ell$.
For a vertex $v \in V(G)$ and a non-negative integer $i$, we denote by $V(v, i) = \{u \in V(G) \mid d(u,v) = i\}$.
For an edge $e \in E(G)$, we denote by $E(e, i) = \{e' \in E(G) \mid d(e,e') = i\}$.

The complete graph (clique) on $n$ vertices is denoted by $K_{n}$ and a complete bipartite graph (biclique) with parts of size $m$ and $n$, respectively, by $K_{m,n}$.
For an in-depth review of general graph theoretic definitions we refer the reader to the textbook by Diestel~\cite{editionreinhard}.

\medskip
\noindent\textbf{Pathwidth and treewidth.}~
A \emph{tree decomposition} of a graph $G$ is a pair $\mathcal{T}=(T, (X_i)_{i \in V(T)})$ where $T$ is a tree and $X_i \subseteq V(G)$ for each $i \in V(T)$, such that 
\begin{enumerate}[itemsep=0pt, label=\roman*.]
    \item $\bigcup_{i \in V(T)} X_i = V(G)$,
    \item for every edge $uv = e \in E(G)$, there is an $i \in V(T)$ such that $u, v \in X_i$, and
    \item for every $v \in V(G)$, the subgraph $T_v$ of $T$ induced by $\{i \in V(T) \text{ } |\text{ } v \in X_i\}$ is connected, \ie,~$T_v$ is a tree.
\end{enumerate}

We refer to the vertices of $T$ as the \emph{nodes} of $T$. 
For a node $i$, we say that the corresponding set~$X_i$ is the \emph{bag} of $i$. 
The \emph{width} of the tree decomposition $(T, (X_i)_{i \in V(T)})$ is max$_{i \in V(T)} |X_i| - 1$. 
The \emph{treewidth} of $G$, denoted \textit{tw}($G$), is the smallest width of any tree decomposition of $G$. 

A \emph{path decomposition} of a graph $G$ is a tree decomposition $\mathcal{P}=(T, (X_i)_{i \in V(T)})$ in which $T$ is a path.
We represent a path decomposition $\mathcal{P}$ by the sequence of its bags only.
The \emph{pathwidth} of $G$, denoted \textit{pw}($G$), is the smallest width of any path decomposition of $G$. 
A \emph{nice path decomposition} of $G$ is one that begins and ends with nodes corresponding to empty bags and such that each other node in the decomposition corresponds to a bag that either \emph{introduces} a vertex $v \in V(G)$ ($X_i = X_{i-1} \cup \{v\}$ for $v \not\in X_i$) or \emph{forgets} one ($X_i = X_{i-1} \setminus \{v\}$ for $v \in X_i$). 
Every path decomposition can be efficiently turned into a nice path decomposition of the same width~\cite{cygan2015parameterized}. 
Subdividing or deleting edges of a graph $G$ preserves its path- or treewidth~\cite{robertson1986graph}. 
Additionally, the following holds.
\begin{observation}\label{obs:treewidth}
Let $G$ be a graph and $X \subseteq V(G)$. Then \textit{pw}($G$) $\le$ \textit{pw}($G - X$) $+$ $|X|$ and \textit{tw}($G$) $\le$ \textit{tw}($G - X$) $+$ $|X|$.
\end{observation}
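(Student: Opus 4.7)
The plan is to prove both inequalities with essentially the same argument, since a path decomposition is a special case of a tree decomposition and the construction I propose preserves the ``path'' property. First I would let $\mathcal{T} = (T, (X_i)_{i \in V(T)})$ be a tree (respectively path) decomposition of $G - X$ of optimal width $\textit{tw}(G-X)$ (respectively $\textit{pw}(G-X)$), and then define a new decomposition $\mathcal{T}' = (T, (X_i \cup X)_{i \in V(T)})$ of $G$ by simply adding the vertex set $X$ to every bag.

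Next I would verify that $\mathcal{T}'$ is a valid tree decomposition of $G$. Condition (i) follows since $\bigcup_{i \in V(T)} (X_i \cup X) = (\bigcup_{i \in V(T)} X_i) \cup X = V(G-X) \cup X = V(G)$. For condition (ii), every edge $uv \in E(G)$ either has both endpoints in $V(G-X)$, in which case $\mathcal{T}$ already covers it by some bag $X_i$ and hence $X_i \cup X$ covers it in $\mathcal{T}'$, or at least one endpoint lies in $X$, in which case the other endpoint is contained in some bag $X_j$ (by condition (i) applied to $\mathcal{T}$ or trivially if it too lies in $X$), so $X_j \cup X$ contains both endpoints. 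Condition (iii) is preserved because for $v \in V(G-X)$ the set of bags containing $v$ is the same as in $\mathcal{T}$ (and hence induces a subtree), while for $v \in X$ the vertex appears in \emph{every} bag of $\mathcal{T}'$, and $T$ itself is connected.

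Finally I would bound the width: each bag of $\mathcal{T}'$ has size at most $|X_i| + |X| \leq (\textit{tw}(G-X)+1) + |X|$, yielding width at most $\textit{tw}(G-X) + |X|$, which gives the treewidth bound. Since the underlying tree $T$ is unchanged, if $\mathcal{T}$ was a path decomposition then so is $\mathcal{T}'$, giving the pathwidth bound by the same calculation. There is no real obstacle here; the observation is standard and follows immediately from the construction of padding every bag with $X$.
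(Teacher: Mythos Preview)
Your proof is correct and is exactly the standard argument for this folklore fact. The paper does not actually supply a proof of this observation; it is stated without justification in the preliminaries, so there is nothing to compare against beyond noting that your padding-every-bag construction is the canonical way to see it.
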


\begin{definition}
A class $\Cc$ of graphs has bounded treewidth (bounded pathwidth) if there exists a constant $t$ such that all $G\in \Cc$ have treewidth (pathwidth) at most $t$. 
\end{definition}

\noindent{\bf Feedback vertex set number (fvs).} For a graph $G$, by \emph{fvs}$(G)$ we mean the minimum size of a vertex set whose deletion leaves the graph acyclic. 

\medskip
\noindent\textbf{Nowhere dense graphs.}~
A graph $H$ is a \emph{minor} of a graph $G$, denoted $H \preceq G$, if there exists a mapping that associates each vertex $v$ of $H$ with a non-empty connected subgraph~$G_v$ of~$G$ such that~$G_u$ and~$G_v$ are disjoint for $u \neq v$ and whenever there is an edge between $u$ and $v$ in $H$, there is an edge between a vertex of $G_u$ and a vertex of $G_v$. 
The subgraph $G_v$ is referred to as the \emph{branch set} of $v$. We call
$H$ a \emph{depth-$r$ minor} of $G$, denoted $H \preceq_r G$, if each branch set of the mapping induces a graph of radius at most $r$.

\begin{definition}
A class $\Cc$ is \emph{nowhere dense} if there exists a function $t: \mathbb{N} \rightarrow \mathbb{N}$ such that $K_{t(r)} \not\preceq_r G$ for all $r \in \mathbb{N}$ and all $G \in \Cc$. 
\end{definition}

An \emph{$r$-independent set} in a graph $G$ is a set of vertices $I$ such that the distance between any two vertices of $I$ is at least $r + 1$.
We make use of the fact that nowhere dense classes are uniform quasi-wide, as clarified by the following theorem. 
\begin{theorem}[\cite{kreutzer2018polynomial,pilipczuk2018number}]
\label{thm:quasi-wide-bounds}
Let $\Cc$ be a nowhere dense class of graphs. 
For all $r \in \mathbb{N}$, there is a polynomial $N_r: \mathbb{N} \rightarrow \mathbb{N}$ and a constant $x_r \in \mathbb{N}$ such that following holds. Let $G \in \Cc$ and let $A \subseteq V(G)$ be a vertex subset of size at least $N_r(m)$, for a given $m \in \mathbb{N}$. 
Then there exists a set $X \subseteq V(G)$ of size $|X| \le x_r$ and a set $B \subseteq A \setminus X$ of size at least $m$ that is $r$-independent in $G - X$. 
Moreover, given $G$ and $A$, such sets $X$ and $B$ can be computed in time $\Omc(|A| \cdot |E(G)|)$. 
\end{theorem}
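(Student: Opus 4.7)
The plan is to prove the result by induction on $r$, using the fact that nowhere dense classes forbid dense shallow minors as the main combinatorial lever. The base case $r = 0$ is immediate: take $X = \emptyset$ and $B = A$, since every set is vacuously $0$-independent. For the inductive step, assume the statement for $r-1$ with polynomial $N_{r-1}$ and constant $x_{r-1}$; the goal is to define $N_r$ and $x_r$ and to describe how to compute $X$ and $B$ constructively.

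Given $A$ with $|A| \ge N_r(m)$ for a sufficiently large polynomial $N_r$, first invoke the induction hypothesis on $A$ to extract a subset $A' \subseteq A$ of size $m'$, where $m'$ is a suitable polynomial in $m$, together with a set $X_0 \subseteq V(G)$ of size at most $x_{r-1}$, such that $A'$ is $(r-1)$-independent in $G - X_0$. It remains to refine $A'$ into an $r$-independent set after removing a bounded number of additional vertices. Form the auxiliary graph $H$ on vertex set $A'$ whose edges are pairs at distance exactly $r$ in $G - X_0$, and fix a shortest-path system in $G - X_0$ realising these distances. If $H$ has too many edges, then contracting these internally disjoint shortest paths produces a depth-$r$ minor of $G$ of large average degree, which in turn contains $K_{t(r)}$ as a depth-$r$ minor, contradicting the hypothesis that $\Cc$ is nowhere dense. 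Hence $H$ has bounded average degree depending only on $r$ and $\Cc$, so by greedily deleting a constant number of vertices --- endpoints or internal vertices of a few offending shortest paths --- one obtains $X_1$ of size $\Omc(1)$ and a large residual $B \subseteq A'$ in which every pair of distinct vertices is at distance at least $r + 1$ in $G - (X_0 \cup X_1)$. Setting $X = X_0 \cup X_1$ and tracking the sizes across the induction yields both $N_r$ and $x_r$.

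The main obstacle is keeping $N_r$ polynomial rather than allowing it to balloon into a tower. Using only the crude bound on depth-$r$ minor density, which is exponential in $r$, the naive induction sketched above yields tower-type growth at each level; the polynomial bound requires replacing this crude counting by the sharp polynomial bounds on the $r$-admissibility, or equivalently the weak $r$-coloring number, of nowhere dense classes. These bounds control how many short paths between vertices of $A'$ can share a common internal vertex, which in turn gives near-linear control on how much $|A'|$ shrinks when passing from $(r-1)$-independence to $r$-independence. Substituting these polynomial bounds into the recurrence for $N_r$ produces a polynomial solution, and the constants compose to yield $x_r$.

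Algorithmically, each inductive step reduces to running BFS from the vertices of $A'$ to determine the pairwise distances in $G - X_0$, computing $H$, and iteratively removing high-degree vertices of $H$; both tasks can be executed in $\Omc(|A| \cdot |E(G)|)$ time. Since the recursion depth is the constant $r$, the overall running time matches the claimed $\Omc(|A| \cdot |E(G)|)$ bound up to a factor depending only on $r$, which is absorbed into the constant in the $\Omc$-notation.
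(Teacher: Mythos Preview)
First, note that the paper does not prove this theorem: it is quoted as a black box from \cite{kreutzer2018polynomial,pilipczuk2018number} and used later in the kernel for \textsc{IS-D}. So there is no ``paper's own proof'' to compare against; any assessment is of your sketch on its own merits.

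Your outline has the right shape (induction on $r$, pass from $(r-1)$-independence to $r$-independence, invoke density of shallow minors), and you correctly flag that the crux is keeping $N_r$ polynomial. However, the inductive step as written has a real gap. You argue that if the auxiliary graph $H$ on $A'$ (edges = pairs at distance exactly $r$ in $G-X_0$) has many edges, then ``contracting these internally disjoint shortest paths produces a depth-$r$ minor''. But the shortest paths witnessing the edges of $H$ are \emph{not} internally disjoint in general, so this contraction does not produce a minor model. The actual argument in the quasi-wideness literature does a case split instead: either some vertex $v$ lies on short paths to many elements of $A'$ (then add $v$ to $X$ and recurse on those elements, which is where the nonzero $x_r$ comes from), or no such concentration exists, in which case one can genuinely carve out disjoint balls and apply the shallow-minor hypothesis. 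Your sketch skips this dichotomy, and without it the bounded-average-degree claim for $H$ is unjustified.

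A second, smaller inconsistency: if $H$ really did have bounded average degree, then a large independent set of $H$ is already $r$-independent in $G-X_0$, and no further deletions $X_1$ are needed; your line ``greedily deleting a constant number of vertices \dots\ one obtains $X_1$'' is doing no work and obscures where $x_r > x_{r-1}$ actually arises. In the correct proof, $X$ grows precisely in the ``concentration'' branch of the dichotomy above, not from trimming $H$. Your pointer to weak $r$-coloring numbers / $r$-admissibility for the polynomial bound is apt, but it has to be plugged into that branch-and-recurse structure, not into a degeneracy argument on $H$.
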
 

\medskip
\noindent 
\textbf{\small Biclique-free graphs.}~ A graph is said to be \emph{$d$-biclique-free} it excludes the biclique $K_{d,d,}$ as a subgraph. 

\begin{definition}
    A class $\Cc$ of graphs is \emph{biclique-free} if there exists a number $d$ such that all $G\in \Cc$ are $d$-biclique-free. 
\end{definition}

An inclusion diagram of all presented graph classes is depicted in \Cref{fig:classes}.

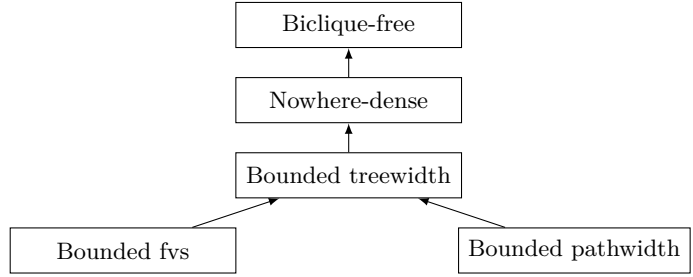
\begin{figure}
    \centering
    \begin{tikzpicture}
    \node[draw, rectangle, minimum width=3cm, minimum height=0.6cm] (nd) at (0,0) {\footnotesize Nowhere-dense};
    \node[draw, rectangle, minimum width=3cm, minimum height=0.6cm] (bf) at (0,1) {\footnotesize Biclique-free};
    \node[draw, rectangle, minimum width=3cm, minimum height=0.6cm] (bt) at (0,-1) {\footnotesize Bounded treewidth};
    \node[draw, rectangle, minimum width=3cm, minimum height=0.6cm] (bp) at (3,-2) {\footnotesize Bounded pathwidth};
    \node[draw, rectangle, minimum width=3cm, minimum height=0.6cm] (bd) at (-3,-2) {\footnotesize Bounded fvs};
    \draw[-latex] (nd) -- (bf);
    \draw[-latex] (bd) to (bt);
    \draw[-latex] (bt) -- (nd);
    \draw[-latex] (bp) -- (bt);
    \end{tikzpicture}
    \caption{Graph classes considered in this paper. Arrows indicate inclusion.}
\end{figure}\label[figure]{fig:classes}

\medskip
\noindent\textbf{Solution discovery.}~
A vertex (edge) subset problem $\Qmc$ is a problem defined on graphs such that a solution consists of a subset of vertices (edges) satisfying certain requirements.
For a vertex (edge) subset problem $\Qmc$ on an instance with an input graph $G$, a \emph{configuration} $C$ on $G$ is a subset of its vertices (edges).
Alternatively, a configuration can be seen as the placement of tokens on a subset of vertices (edges) in $G$. 
In the \emph{token sliding} model, a configuration $C'$ can be obtained (in one step) from a configuration $C$, written $C\vdash C'$, if $C' = (C \setminus \{y\}) \cup \{x\}$ for elements $y \in C$ and $x \notin C$ such that $x$ and $y$ are neighbors in $G$, that is, if $x,y \in V(G)$, then $xy \in E(G)$; and if $x,y \in E(G)$, then they share an endpoint. 
Alternatively, when a token \emph{slides} from a vertex to an adjacent one or from an edge to an incident one, we get $C\vdash C'$.
A \emph{discovery sequence} of length $\ell$ in $G$ is a sequence of configurations $C_0 C_1 \dots C_\ell$ of $G$ such that $C_i \vdash C_{i+1}$ for all $0 \leq i < \ell$. 

The \textsc{$\Pi$-Discovery} problem is defined as follows. We are
given a graph $G$, a configuration $S \subseteq V(G)$ (resp.\ $S\subseteq E(G)$) of size $k$ (which at this point is not necessarily a solution
for $\Pi$), and a budget~$\budget$ (a non-negative integer). 
We denote instances of \textsc{$\Pi$-Discovery} by $(G,S,\budget)$. 
The goal is to decide whether there exists a discovery sequence $C_0 C_1 \dots C_\ell$ in $G$ for some $\ell \leq \budget$ such that $S = C_0$ and~$C_\ell$ is a solution for $\Pi$.
When a path decomposition is given as part of the input, the instances are denoted by $(G,\mathcal{P}_G,S,\budget)$ to highlight that the path decomposition $\mathcal{P}_G$ of $G$ is provided. 

\medskip
\noindent \textbf{Parameterized complexity and kernelization.}~ Downey and Fellows~\cite{downey2012parameterized} developed a framework for parameterized problems which include a parameter $p$ in their input.
A parameterized problem~$\Qmc$ has inputs of the form $(x, p)$ where $|x| = n$ and $p \in \Nbb$. 
Fixed-parameter tractable problems belong to the complexity class $\FPT$.
The class $\XNLP$ consists of the parameterized problems that can be solved with a non-deterministic algorithm that uses $f(p)\cdot \log n$ space and $f(p)\cdot n^{\Oof(1)}$ time.
The \emph{\W-hierarchy} is a collection of parameterized complexity classes $\FPT \subseteq \W[1] \subseteq \W[2] \subseteq \ldots \subseteq \XNLP$ where inclusions are conjectured to be strict.

For parameterized problems $\Qmc$ and $\Qmc'$, an \emph{\FPT-reduction} from~$\Qmc$ to~$\Qmc'$ is a reduction that given an instance $(x,p)$ of $\Qmc$ produces $(x',p')$ of $\Qmc'$ in time $f(p) \cdot |x|^{\Omc(1)}$ and such that $p' \leq g(p)$ where $f,g$ are computable functions. 
A \emph{pl-reduction} from~$\Qmc$ to~$\Qmc'$ is one that additionally computes $(x',p')$ using $\Oof(h(p) + \log |x|)$ working space where $h$ is a computable function.
We write $\Qmc \leq_{\textsc{fpt}} \Qmc'$ (resp.\ $\Qmc \leq_{\text{pl}} \Qmc'$) if there is an \FPT-reduction (resp.\ pl-reduction) from $\Qmc$ to~$\Qmc'$.
If~$\Qmc$ is $\W[t]$-hard for a positive integer $t$ and $\Qmc \leq_{\textsc{fpt}} \Qmc'$, then $\Qmc'$ is also $\W[t]$-hard.
If~$\Qmc$ is $\XNLP$-hard and $\Qmc \leq_{\text{pl}} \Qmc'$, then~$\Qmc'$ is \XNLP-hard and, in particular, $\W[t]$-hard for all $t\geq 1$. 

Every problem that is in $\FPT$ admits a kernel, although it may be of exponential size or larger. 
Under the complexity-theoretic assumption that $\NP \not\subseteq \cp$, we can rule out the existence of a polynomial kernel for certain fixed-parameter tractable problems $\Qmc$.
The machinery for such kernel lower bounds heavily relies on composing instances that are equivalent according to a polynomial equivalence relation~\cite{cygan2015parameterized}.

\begin{definition}\label{def:equivalence-relation}
An equivalence relation $\Rmc$ on the set of instances of a problem $\Qmc$ is called a \emph{polynomial equivalence relation} if the following two conditions hold.
\begin{enumerate}[itemsep=0pt]
    \item There is an algorithm that given two instances $x$ and $y$ of $\Qmc$ decides whether $x$ and $y$ belong to the same equivalence class in time polynomial in $|x| + |y|$.
    \item For any finite set $S$ of instances of $\Qmc$, the equivalence relation $\Rmc$ partitions the elements of $S$ into at most $(\max_{x \in S}$ $|x|)^{\Oof(1)}$ classes.
\end{enumerate}
\end{definition}

We can compose equivalent instances in more than one way. 
We focus here on or-cross-compositions.
\begin{definition}[\cite{bodlaender2014kernelization}]
\label{def:or-cross-composition}
Let $\Qmc'$ be a problem and let $\Qmc$ be a parameterized problem. We say that~$\Qmc$ \emph{or-cross-composes} into $\Qmc'$ if there is a polynomial equivalence relation $\Rmc$ on the set of instances of $\Qmc$ and an algorithm that, given $t$ instances (where $t \in \mathbb{Z}_+$) $x_1, x_2, \ldots , x_t$ belonging to the same equivalence class of $\mathcal{R}$, computes an instance $(x^*, k^*)$ in time polynomial in $\Sigma^t_{i=1} |x_i|$ such that the following properties hold.
\begin{enumerate}[itemsep=0pt]
    \item  $(x^*, k^*)$ $\in \Qmc$ if and only if there exists at least one index $i$ such that $x_i$ is a yes-instance of $\Qmc'$.
    \item $k^*$ is bounded above by a polynomial in $\max^t_{i=1} |x_i| + \log t$.
\end{enumerate}
\end{definition}

The inclusion $\NP \subseteq \cp$ holds if an \NP-hard problem or-cross-composes into a parameterized problem $\Qmc$ having a polynomial kernel. As this inclusion is believed to be false, we will constantly make use of the following theorem to show that the existence of a polynomial kernel is unlikely.

\begin{theorem}[\cite{bodlaender2014kernelization}]
\label{thm:no-poly-kernel-theorem}
If a problem $\Qmc'$ is \NP-hard and $\Qmc'$ or-cross-composes into the parameterized problem $\Qmc$, then there is no polynomial kernel for $\Qmc$ unless $\NP \subseteq \cp$.
\end{theorem}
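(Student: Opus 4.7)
\emph{Proof proposal.} The plan is to combine a hypothetical polynomial kernel for $\Qmc$ with the or-cross-composition from $\Qmc'$ into $\Qmc$ to obtain a weak or-distillation of $\Qmc'$, and then invoke the Fortnow--Santhanam framework to conclude $\NP \subseteq \cp$.

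Suppose for contradiction that $\Qmc$ admits a kernel of size $p(k)$ for some polynomial $p$. Given instances $y_1, \dots, y_t$ of $\Qmc'$ each of bit-size at most $n$, I would first partition them into classes $C_1, \dots, C_r$ using the polynomial equivalence relation $\Rmc$ supplied by the or-cross-composition; by \Cref{def:equivalence-relation}, $r \leq n^{\Omc(1)}$ and the partition is computable in time polynomial in $\sum_i |y_i|$. Applying the or-cross-composition within each class $C_j$ yields a $\Qmc$-instance $(x_j^*, k_j^*)$ that is a yes-instance iff some $y_i \in C_j$ is a yes-instance of $\Qmc'$, with $k_j^* \leq q(n + \log t)$ for some polynomial $q$. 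Feeding each $(x_j^*, k_j^*)$ into the assumed polynomial kernel produces an equivalent instance $(z_j, k_j')$ of total size at most $p(q(n + \log t))$.

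Taking $t$ to be a sufficiently large polynomial in $n$, the collection $(z_1, k_1'), \dots, (z_r, k_r')$ is then a compression of the OR of $t$ instances of $\Qmc'$ into polynomially many $\Qmc$-instances whose total bit-size is polynomial in $n$. This is precisely the weak or-distillation setting of Bodlaender et al.~\cite{bodlaender2014kernelization}. The main obstacle lies in the final step: invoking the complementary witness lemma of Fortnow and Santhanam~\cite{fortnow2008infeasibility}, whose information-theoretic core shows that such a compression for an \NP-hard language $\Qmc'$ yields polynomial-size nondeterministic advice for the complement of $\Qmc'$, so that $\mathsf{coNP} \subseteq \mathsf{NP/poly}$; by Yap's theorem~\cite{yap1983some} this is equivalent to $\NP \subseteq \cp$ and would collapse the polynomial hierarchy to its third level, contradicting the underlying hypothesis. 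Once this lemma is in hand, the rest of the argument is essentially a bookkeeping reduction; the conceptual difficulty is entirely concentrated in the Fortnow--Santhanam step.
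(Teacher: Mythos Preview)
The paper does not prove this theorem at all: it is stated with a citation to~\cite{bodlaender2014kernelization} and used as a black box throughout. Your sketch correctly reconstructs the standard argument from that reference---combining the or-cross-composition with the assumed polynomial kernel to obtain a weak or-distillation of the \NP-hard problem $\Qmc'$, and then appealing to the Fortnow--Santhanam complementary witness lemma~\cite{fortnow2008infeasibility} to derive $\NP \subseteq \cp$---so there is nothing to compare against beyond noting that your outline matches the proof in the cited source.
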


\noindent We refer the reader to textbooks~\cite{cygan2015parameterized,downey2013fundamentals} for more on parameterized complexity and kernelization.
\section{An Auxiliary Problem and Foundational Gadgets}\label{sec:foundational}
In this section, we describe foundational gadgets used in our reductions and compositions and explain how combining such gadgets preserves a bound on the pathwidth of the constructed graphs (assuming we start with  graphs of bounded pathwidth). We show first that starting from a graph of bounded pathwidth $H$, we can construct new graphs~$G_H$,~$\Tilde{G}_H$,~$G_t$, and~$\hat{G}_t$, using our gadgets such that~$G_H$,~$\Tilde{G}_H$,~$G_t$, and~$\hat{G}_t$ still have bounded pathwidth (in addition to other useful properties). 

The following problem will be used in the reductions that establish the \XNLP-hardness of \mbox{\textsc{IS-D}}, \textsc{VC-D} and \textsc{DS-D} with respect to parameter $pw$ and subsequently in the or-cross-compositions that render it unlikely for any of these problems to have a polynomial kernel with respect to parameter~$\budget + pw$.
We denote by \emph{orientation} of a graph $G$ a mapping $\lambda: E(G) \rightarrow V(G) \times V(G)$ such that $\lambda(uv) \in \{(u,v), (v,u)\}$.
\\[1ex]
\textsc{Minimum Maximum Outdegree (\textsc{MMO}):}
\newline
\noindent\textbf{Input}: Undirected weighted graph $H$, a path decomposition $\mathcal{P}_H$ of $H$ of width $pw$, an edge weighting $\sigma: E(H) \rightarrow \Zbb_{+}$ and a positive integer~$r$ (all integers are given in unary).
\newline
\textbf{Question}: Is there an orientation of $H$ such that for each $v \in V(H)$, the total weight of the edges directed away from $v$ is at most $r$?
\\[1ex]
Bodlaender et al.~\cite{bodlaender2022problems} showed that \textsc{MMO} is \XNLP-complete with respect to pathwidth given a path decomposition realising the pathwidth. 
If all edge weights are identical, then \textsc{MMO} (on general graphs) can be solved in polynomial time using network flows~\cite{asahiro2011graph}. 

For an instance $(H, \mathcal{P}_H, \sigma, r)$ of \textsc{MMO}, we define $\bm{\sigma} = \sum_{e \in E(H)} \sigma(e)$, $n = |V(H)|$ and $m = |E(H)|$. 
We construct for an instance $(H, \mathcal{P}_H, \sigma, r)$ of \textsc{MMO}, a graph $G_H$ consisting of disjoint subgraphs~$G_e$ for each $e \in E(H)$ and $G_v$ for each $v \in V(H)$. 
We refer to the edge-based and vertex-based subgraphs as \textit{MMO-edge-gadgets} and \textit{MMO-vertex-gadgets}, respectively. For an edge $e \in E(H)$ we refer to $G_e$ as \textit{MMO-edge-$e$}. Similarly, for a vertex $v \in V(H)$ we refer to $G_v$ as \textit{MMO-vertex-$v$}. 

\medskip
\noindent\textbf{MMO-edge-e.} For an edge $e = uv \in E(H)$, an \textsc{MMO}-edge-$e$ $G_e$ contains $\sigma(e) + 1$ edges with endpoints $a_e^i$ and $b_e^i$ for $i \in [\sigma(e) + 1]$, and an edge $e^ue^v$ such that $b_e^{\sigma(e) + 1}$ is adjacent to each of $e^u$ and $e^v$. 
We define $A_e = \cup_{i \in [\sigma(e)]} \text{ } \text{ } a_e^i$ and $B_e = \cup_{i \in [\sigma(e)]} \text{ } \text{ } b_e^i$ (see \Cref{fig:mmo-edge}).
We refer to the connected component inside $G_e$ (or any subdivision of $G_e$) containing $e^u$ and $e^v$ by $G^{sel}_e$.

\begin{figure}[h]
    \centering
    \begin{tikzpicture}[scale=0.7]
    \fill[blue!10] (3.5,-1.5) -- (5,-1.5) -- (5,1.5) -- (3.5,1.5) -- cycle; 
    \fill[blue!10] (4.5,-0.5) -- (8,-2) -- (8,0) -- (4.5, 0) -- cycle;
    \fill[blue!10] (4.5,0.5) -- (8,2) -- (8,0) -- (4.5, 0) -- cycle; 
    \draw[black] (4,-1) -- (4.5,-1) node[circle, draw=black, fill=white, inner sep=1pt, label=left:{\scalebox{0.5}{$a^1_e$}}, pos=0]{} node[circle, draw=black, fill=white, inner sep=1pt, label=right:{\scalebox{0.5}{$b^1_e$}}, pos=1]{};
    \draw[black] (4,-0.5) -- (4.5,-0.5) node[circle, draw=black, fill=white, inner sep=1pt, label=left:{\scalebox{0.5}{$a^2_e$}}, pos=0]{} node[circle, draw=black, fill=white, inner sep=1pt, label=right:{\scalebox{0.5}{$b^2_e$}}, pos=1]{};
    \draw[black] (4,0.5) -- (4.5,0.5) node[circle, draw=black, fill=white, inner sep=1pt, label=left:{\scalebox{0.5}{$a^3_e$}}, pos=0]{} node[circle, draw=black, fill=white, inner sep=1pt, label=right:{\scalebox{0.5}{$b^3_e$}}, pos=1]{};
    \draw[black] (4,1) -- (4.5,1) node[circle, draw=black, fill=white, inner sep=1pt, label=left:{\scalebox{0.5}{$a^4_e$}}, pos=0]{} node[circle, draw=black, fill=white, inner sep=1pt, label=right:{\scalebox{0.5}{$b^4_e$}}, pos=1]{};
    \draw[black] (7.5,1.5) -- (7.5, -1.5);
    \draw[black] (4.5,0) -- (7.5,1.5) node[circle, draw=black, fill=white, inner sep=1pt, label=right:{\scalebox{0.5}{$e^u$}}, pos=1]{};
    \draw[black] (4.5,0) -- (7.5,-1.5) node[circle, draw=black, fill=white, inner sep=1pt, label=right:{\scalebox{0.5}{$e^v$}}, pos=1]{};
    \draw[black] (4,0) -- (4.5,0) node[circle, draw=black, fill=white, inner sep=1pt, label=left:{\scalebox{0.5}{$a^5_e$}}, pos=0]{} node[circle, draw=black, fill=white, inner sep=1pt, label=right:{\scalebox{0.5}{$b^5_e$}}, pos=1]{};
    \end{tikzpicture}
    \caption{\footnotesize An \textsc{MMO}-edge-$e$  $G_e$ for an edge $uv = e \in E(H)$ for a graph $H$ and edge weight function $\sigma$ of an \textsc{MMO} instance, with $\sigma(e) = 4$.}
    \label{fig:mmo-edge}
\end{figure}
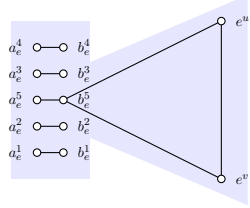

\medskip
\noindent \textbf{MMO-vertex-v.} For a vertex $v$ in $V(H)$, an \textsc{MMO}-vertex-$v$ $G_v$ contains a \emph{representative vertex of $v$} denoted by $w_v$, adjacent to $r$ \emph{target vertices of v} denoted by $x_v^1, x_v^2, \ldots, x_v^{r}$ and one extra vertex~$x_v^{r + 1}$. 
Additionally, for each edge $e \in E(H)$ incident to $v$, the \textsc{MMO}-vertex-$v$ contains $\sigma(e)$ edges with endpoints $y_e^{v(i)}$ and $z_e^{v(i)}$ for $i \in [\sigma(e)]$ such that $y_e^{v(i)}$ is adjacent to $w_v$, the representative vertex of $v$ (see \Cref{fig:mmo-vertex-gadget}). 
We define $X_v = \cup_{i \in [r]} \text{ } \text{ } x_v^i$, $Y^v_e = \cup_{i \in [\sigma(e)]} \text{ } \text{ } y_e^{v(i)}$, $Z^v_e = \cup_{i \in [\sigma(e)]} \text{ } \text{ } z_e^{v(i)}$, $Y^v = \cup_{e \in E(H)} \text{ } \text{ } Y^v_e$, and $Z^v = \cup_{e \in E(H)} \text{ } \text{ } Z^v_e$.

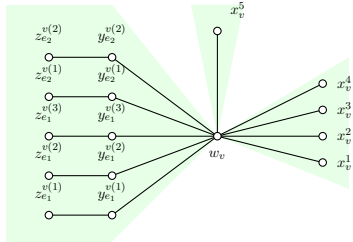
\begin{figure}[h]
    \centering
    \begin{tikzpicture}[scale=0.7]
    \fill[green!10] (0,0) -- (2.5,-1) -- (2.5,1.5) -- (0,0) -- cycle;
    \fill[green!10] (0,0) -- (-0.5,2.5) -- (0.5,2.5) -- (0,0) -- cycle;
    \fill[green!10] (0,0) -- (-2,-2) -- (-2,2.5) -- (0,0) -- cycle;
    \fill[green!10] (-2,2.5) -- (-4,2.5) -- (-4,-2) -- (-2,-2) -- cycle;
    \draw[black] (0,0) -- (2,-0.5);
    \draw[black] (0,0) -- (2,0);
    \draw[black] (0,0) -- (2,0.5);
    \draw[black] (0,0) -- (2,1);  
    \draw[black] (0,0) -- (0,2);   
    \draw[black] (0,0) -- (-2,-1.5);
    \draw[black] (0,0)-- (-2,-0.75);
    \draw[black] (0,0) -- (-2,0);
    \draw[black] (0,0) -- (-2,1.5);
    \draw[black] (0,0) -- (-2,0.75);
    \node[circle, fill=black, inner sep=1pt, draw=black, fill=white, label=below:{\scalebox{0.5}{$w_v$}}] at (0,0) {};
    \node[circle, fill=black, inner sep=1pt, draw=black, fill=white, label=right:{\scalebox{0.5}{$x_v^1$}}] at (2,-0.5) {};
    \node[circle, fill=black, inner sep=1pt, draw=black, fill=white, label=right:{\scalebox{0.5}{$x_v^2$}}] at (2,0) {};
    \node[circle, fill=black, inner sep=1pt, draw=black, fill=white, label=right:{\scalebox{0.5}{$x_v^3$}}] at (2,0.5) {};
    \node[circle, fill=black, inner sep=1pt, draw=black, fill=white, label=right:{\scalebox{0.5}{$x_v^4$}}] at (2,1) {};
    \node[circle, fill=black, inner sep=1pt, draw=black, fill=white, label=above right:{\scalebox{0.5}{$x_v^{5}$}}] at (0,2) {};
    \draw[black] (-3.2,-1.5) -- (-2,-1.5) node[circle, draw=black, fill=white, inner sep=1pt, label=above:{\scalebox{0.5}{$y_{e_1}^{v(1)}$}}, pos=1]{} node[circle, draw=black, fill=white, inner sep=1pt, label=above:{\scalebox{0.5}{$z_{e_1}^{v(1)}$}}, pos=0]{} ;
    \draw[black] (-3.2,-0.75) -- (-2,-0.75) node[circle, draw=black, fill=white, inner sep=1pt, label=above:{\scalebox{0.5}{$y_{e_1}^{v(2)}$}}, pos=1]{} node[circle, draw=black, fill=white, inner sep=1pt, label=above:{\scalebox{0.5}{$z_{e_1}^{v(2)}$}}, pos=0]{};
    \draw[black] (-3.2,0) -- (-2,0) node[circle, draw=black, fill=white, inner sep=1pt, label=above:{\scalebox{0.5}{$y_{e_1}^{v(3)}$}}, pos=1]{} node[circle, draw=black, fill=white, inner sep=1pt, label=above:{\scalebox{0.5}{$z_{e_1}^{v(3)}$}}, pos=0]{};
    \draw[black] (-3.2,1.5) -- (-2,1.5) node[circle, draw=black, fill=white, inner sep=1pt, label=above:{\scalebox{0.5}{$y_{e_2}^{v(2)}$}}, pos=1]{} node[circle, draw=black, fill=white, inner sep=1pt, label=above:{\scalebox{0.5}{$z_{e_2}^{v(2)}$}}, pos=0]{};
    \draw[black] (-3.2,0.75) -- (-2,0.75) node[circle, draw=black, fill=white, inner sep=1pt, label=above:{\scalebox{0.5}{$y_{e_2}^{v(1)}$}}, pos=1]{} node[circle, draw=black, fill=white, inner sep=1pt, label=above:{\scalebox{0.5}{$z_{e_2}^{v(1)}$}}, pos=0]{};
    \end{tikzpicture}
    \caption{\footnotesize An \textsc{MMO}-vertex-$v$ $G_v$ for a vertex $v \in V(H)$ for a graph $H$, edge weight function $\sigma$, and integer $r$ of an \textsc{MMO} instance. The vertex $v$ is incident to edges $e_1, e_2 \in E(H)$, $\sigma(e_1) = 3$, $\sigma(e_2) = 2$, and $r= 4$.}
    \label{fig:mmo-vertex-gadget}
\end{figure}

\noindent\textbf{\boldmath The graph $G_H$.} We let $A = \cup_{e \in E(H)} \text{ } \text{ } A_e$, $A^+ = \cup_{e \in E(H)} \text{ } \text{ } a_e^{\sigma(e)+1}$, $B = \cup_{e \in E(H)} \text{ } \text{ } B_e$, $B^+ = \cup_{e \in E(H)} \text{ } \text{ } b_e^{\sigma(e)+1}$, $X = \cup_{v \in V(H)} \text{ } \text{ } X_v$, $X^+ = \cup_{v \in V(H)} \text{ } \text{ } x_v^{r+1}$, $Y = \cup_{v \in V(H)} \text{ } \text{ } Y^v$, and $Z = \cup_{v \in V(H)} \text{ } \text{ } Z^v$. 
We form $G_H$ by connecting its \textsc{MMO}-edge-gadget vertices to its \textsc{MMO}-vertex-gadget  vertices as follows. 
For a vertex $v \in V(H)$ and edge $e \in E(H)$ incident to $v$, we connect each vertex of $B_e$ to a corresponding distinct vertex in $Z_e^v$ (in other words, each $b_e^i$ to $z_e^{v(i)}$ for $i \in [\sigma(e)]$). Similarly, we connect $e^v$ to each vertex of $Y_e^v$ (see \Cref{fig:combined} for an example). 

\begin{figure}[H]
    \centering
    \begin{tikzpicture}[scale=0.7]
        \fill[blue!10] (-10.5,-1) -- (-9,-1) -- (-9,1) -- (-10.5,1) -- cycle; 
        \fill[blue!10] (-9.5,-0.5) -- (-7,-2) -- (-7,0) -- (-9.5, 0) -- cycle; 
        \fill[blue!10] (-9.5,0.5) -- (-7,2) -- (-7,0) -- (-9.5, 0) -- cycle; 
        \fill[green!10] (0,-2) -- (1.5,-3) -- (1.5,-0.5) -- (0,-2) -- cycle;
        \fill[green!10] (0,-2) -- (-0.5,-0.5) -- (0.5,-0.5) -- (0,-2) -- cycle;
        \fill[green!10] (0,-2) -- (-2,-2.5) -- (-2,0) -- (0,-2) -- cycle;
        \fill[green!10] (-2,0) -- (-4,0) -- (-4,-2.5) -- (-2,-2.5) -- cycle;
        \fill[green!10] (0,2) -- (1.5,1) -- (1.5,3.5) -- (0,2) -- cycle;
        \fill[green!10] (0,2) -- (-0.5,3.5) -- (0.5,3.5) -- (0,2) -- cycle;
        \fill[green!10] (0,2) -- (-2,1) -- (-2,4) -- (1,2) -- cycle;
        \fill[green!10] (-2,4) -- (-4,4) -- (-4,1) -- (-2,1) -- cycle;
        \draw[red] (-9.5,0.5) .. controls (-8,3) .. (-3.2,2);
        \draw[red] (-9.5,0.5) .. controls (-6,-1) .. (-3.2,-1.5);
        \draw[red] (-9.5,-0.5) .. controls (-7,0.5) and (-5,1) .. (-3.2,1.25);
        \draw[red] (-9.5,-0.5) .. controls (-8,-3) .. (-3.2,-2.25);
        \draw[yellow] (-7.5,1.5) .. controls (-5,0.5) .. (-2,2);
        \draw[yellow] (-7.5,1.5) .. controls (-5,0) .. (-2,1.25);
        \draw[yellow] (-7.5,-1.5) .. controls (-5,-1) .. (-2,-1.5);
        \draw[yellow] (-7.5,-1.5) .. controls (-5,-1.75) .. (-2,-2.25);
        \draw[black] (-10,-0.5) -- (-9.5,-0.5) node[circle, draw=black, fill=white, inner sep=1pt, label=left:{\scalebox{0.5}{$a^1_e$}}, pos=0]{} node[circle, draw=black, fill=white, inner sep=1pt, label=right:{\scalebox{0.5}{$b^1_e$}}, pos=1]{};
        \draw[black] (-10,0.5) -- (-9.5,0.5) node[circle, draw=black, fill=white, inner sep=1pt, label=left:{\scalebox{0.5}{$a^2_e$}}, pos=0]{} node[circle, draw=black, fill=white, inner sep=1pt, label=right:{\scalebox{0.5}{$b^2_e$}}, pos=1]{};
        \draw[black] (-7.5,1.5) -- (-7.5, -1.5);
        \draw[black] (-9.5,0) -- (-7.5,1.5) node[circle, draw=black, fill=white, inner sep=1pt, label=above right:{\scalebox{0.5}{$e^u$}}, pos=1]{};
        \draw[black] (-9.5,0) -- (-7.5,-1.5) node[circle, draw=black, fill=white, inner sep=1pt, label=above right:{\scalebox{0.5}{$e^v$}}, pos=1]{};
        \draw[black] (-10,0) -- (-9.5,0) node[circle, draw=black, fill=white, inner sep=1pt, label=left:{\scalebox{0.5}{$a^3_e$}}, pos=0]{} node[circle, draw=black, fill=white, inner sep=1pt, label=right:{\scalebox{0.5}{$b^3_e$}}, pos=1]{};
        \draw[black] (0,2) -- (1,1.5);
        \draw[black] (0,2) -- (1,2);
        \draw[black] (0,2) -- (1,2.5);
        \draw[black] (0,2) -- (1,3);
        \draw[black] (0,2) -- (0,3);
        \draw[black] (0,2)-- (-2,1.25);
        \draw[black] (0,2) -- (-2,2);
        \draw[black] (0,2) -- (-2,3.5);
        \draw[black] (0,2) -- (-2,2.75);
        \node[circle, fill=black, inner sep=1pt, draw=black, fill=white, label=below:{\scalebox{0.5}{$w_u$}}] at (0,2) {};
        \node[circle, fill=black, inner sep=1pt, draw=black, fill=white, label=right:{\scalebox{0.5}{$x_u^1$}}] at (1,1.5) {};
        \node[circle, fill=black, inner sep=1pt, draw=black, fill=white, label=right:{\scalebox{0.5}{$x_u^2$}}] at (1,2) {};
        \node[circle, fill=black, inner sep=1pt, draw=black, fill=white, label=right:{\scalebox{0.5}{$x_u^3$}}] at (1,2.5) {};
        \node[circle, fill=black, inner sep=1pt, draw=black, fill=white, label=right:{\scalebox{0.5}{$x_u^4$}}] at (1,3) {};
        \node[circle, fill=black, inner sep=1pt, draw=black, fill=white, label=above right:{\scalebox{0.5}{$x_u^{5}$}}] at (0,3) {};
        \draw[black] (-3.2,1.25) -- (-2,1.25) node[circle, draw=black, fill=white, inner sep=1pt, label=above:{\scalebox{0.5}{$y_{e}^{u(1)}$}}, pos=1]{} node[circle, draw=black, fill=white, inner sep=1pt, label=above:{\scalebox{0.5}{$z_{e}^{u(1)}$}}, pos=0]{};
        \draw[black] (-3.2,2) -- (-2,2) node[circle, draw=black, fill=white, inner sep=1pt, label=above:{\scalebox{0.5}{$y_{e}^{u(2)}$}}, pos=1]{} node[circle, draw=black, fill=white, inner sep=1pt, label=above:{\scalebox{0.5}{$z_{e}^{u(2)}$}}, pos=0]{};
        \draw[black] (-3.2,3.5) -- (-2,3.5) node[circle, draw=black, fill=white, inner sep=1pt, label=above:{\scalebox{0.5}{$y_{e'}^{u(2)}$}}, pos=1]{} node[circle, draw=black, fill=white, inner sep=1pt, label=above:{\scalebox{0.5}{$z_{e'}^{u(2)}$}}, pos=0]{};
        \draw[black] (-3.2,2.75) -- (-2,2.75) node[circle, draw=black, fill=white, inner sep=1pt, label=above:{\scalebox{0.5}{$y_{e'}^{u(1)}$}}, pos=1]{} node[circle, draw=black, fill=white, inner sep=1pt, label=above:{\scalebox{0.5}{$z_{e'}^{u(1)}$}}, pos=0]{};
        \draw[black] (0,-2) -- (1,-2.5);
        \draw[black] (0,-2) -- (1,-2);
        \draw[black] (0,-2) -- (1,-1.5);
        \draw[black] (0,-2) -- (1,-1);
        \draw[black] (0,-2) -- (0,-1);
        \draw[black] (0,-2)-- (-2,-2.25);
        \draw[black] (0,-2) -- (-2,-1.5);
        \draw[black] (0,-2) -- (-2,-0.75);
        \node[circle, fill=black, inner sep=1pt, draw=black, fill=white, label=below:{\scalebox{0.5}{$w_v$}}] at (0,-2) {};
        \node[circle, fill=black, inner sep=1pt, draw=black, fill=white, label=right:{\scalebox{0.5}{$x_v^1$}}] at (1,-2.5) {};
        \node[circle, fill=black, inner sep=1pt, draw=black, fill=white, label=right:{\scalebox{0.5}{$x_v^2$}}] at (1,-2) {};
        \node[circle, fill=black, inner sep=1pt, draw=black, fill=white, label=right:{\scalebox{0.5}{$x_v^3$}}] at (1,-1.5) {};
        \node[circle, fill=black, inner sep=1pt, draw=black, fill=white, label=right:{\scalebox{0.5}{$x_v^4$}}] at (1,-1) {};
        \node[circle, fill=black, inner sep=1pt, draw=black, fill=white, label=above right:{\scalebox{0.5}{$x_v^{5}$}}] at (0,-1) {};
        \draw[black] (-3.2,-2.25) -- (-2,-2.25) node[circle, draw=black, fill=white, inner sep=1pt, label=above:{\scalebox{0.5}{$y_{e}^{v(1)}$}}, pos=1]{} node[circle, draw=black, fill=white, inner sep=1pt, label=above:{\scalebox{0.5}{$z_{e_1}^{v(1)}$}}, pos=0]{};
        \draw[black] (-3.2,-1.5) -- (-2,-1.5) node[circle, draw=black, fill=white, inner sep=1pt, label=above:{\scalebox{0.5}{$y_{e}^{v(2)}$}}, pos=1]{} node[circle, draw=black, fill=white, inner sep=1pt, label=above:{\scalebox{0.5}{$z_{e}^{v(2)}$}}, pos=0]{};
        \draw[black] (-3.2,-0.75) -- (-2,-0.75) node[circle, draw=black, fill=white, inner sep=1pt, label=above:{\scalebox{0.5}{$y_{e''}^{v(1)}$}}, pos=1]{} node[circle, draw=black, fill=white, inner sep=1pt, label=above:{\scalebox{0.5}{$z_{e''}^{v(1)}$}}, pos=0]{};
    \end{tikzpicture}
    \caption{\footnotesize Edges from one \textsc{MMO}-edge-$e$, for an edge $e = uv$ for a graph $H$, edge weight function $\sigma$, and integer $r$ of an \textsc{MMO} instance, to the \textsc{MMO}-vertex-$u$ and \textsc{MMO}-vertex-$v$ subgraphs in $G_H$. Red is used for edges between vertices in $B$ and $Z$ and yellow is used for edges between vertices in $\{e^u, e^v\}$ and $Y$. $\sigma(e) = 2$ and $r = 4$.}
    \label{fig:combined}
\end{figure}
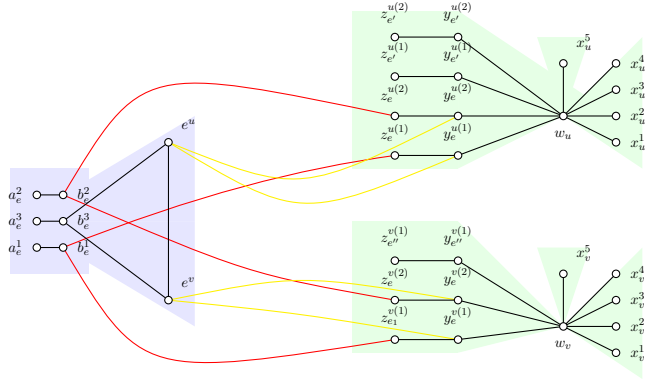

\medskip
\noindent\textbf{\boldmath The supplier gadget and the graph $\Tilde{G}_H$.}
In some of our reductions, we add a new gadget to~$G_H$ and make one of its vertices the \emph{supplier vertex} adjacent to various vertices within~$G_H$. 
We denote the graph thus obtained by $\Tilde{G}_H$ and refer to the gadget containing the supplier vertex as the \emph{supplier gadget}.
We let $G_s$ be the supplier gadget that we connect to~$G_H$, and we let $s$ denote the \emph{supplier vertex}.
In particular, $G_s$ contains a supplier vertex $s$ that is adjacent to \emph{donor vertices} $d_1^i$ of the \emph{donor paths} $D^i =\{d_1^i, d_2^i, d_3^i\}$ for $i \in [rn -  \bm{\sigma}]$ as well as another vertex $d_1^{rn - \bm{\sigma} + 1}$ (see \Cref{fig:G-supplier}). 

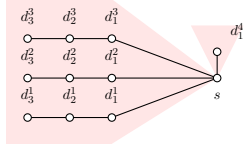
\begin{figure}[h]
    \centering
    \begin{tikzpicture}[scale=0.7]
    \fill[red!10] (5,0) -- (4.5,1) -- (5.5,1) -- (5,0) -- cycle;
    \fill[red!10] (5,0) -- (3,-1.25) -- (3,1.5) -- (5,0) -- cycle;
    \fill[red!10] (3,-1.25) -- (1,-1.25) -- (1,1.5) -- (3,1.5) -- cycle;
    \draw[black] (5,0) -- (5,0.5);
    \draw[black] (5,0)-- (3,-0.75);
    \draw[black] (5,0) -- (3,0);
    \draw[black] (5,0) -- (3,0.75);
    \draw[black] (2.2,-0.75) -- (3,-0.75);
    \draw[black] (2.2,0) -- (3,0);
    \draw[black] (2.2,0.75) -- (3,0.75); 
    \draw[black] (1.4,-0.75) -- (2.2,-0.75);
    \draw[black] (1.4,0) -- (2.2,0);
    \draw[black] (1.4,0.75) -- (2.2,0.75); 
    \node[circle, fill=black, inner sep=1pt, draw=black, fill=white, label=below:{\scalebox{0.5}{$s$}}] at (5,0) {};
    \node[circle, fill=black, inner sep=1pt, draw=black, fill=white, label=above right:{\scalebox{0.5}{$d_1^4$}}] at (5,0.5) {};
    \node[circle, draw=black, fill=white, inner sep=1pt, label=above:{\scalebox{0.5}{$d^1_2$}}] at (2.2,-0.75) {};
    \node[circle, draw=black, fill=white, inner sep=1pt, label=above:{\scalebox{0.5}{$d^1_1$}}] at (3,-0.75) {};
    \node[circle, draw=black, fill=white, inner sep=1pt, label=above:{\scalebox{0.5}{$d^2_2$}}] at (2.2,0) {};
    \node[circle, draw=black, fill=white, inner sep=1pt, label=above:{\scalebox{0.5}{$d^2_1$}}] at (3,0) {};
    \node[circle, draw=black, fill=white, inner sep=1pt, label=above:{\scalebox{0.5}{$d^3_2$}}] at (2.2,0.75) {};
    \node[circle, draw=black, fill=white, inner sep=1pt, label=above:{\scalebox{0.5}{$d^3_1$}}] at (3,0.75) {};

    \node[circle, draw=black, fill=white, inner sep=1pt, label=above:{\scalebox{0.5}{$d^1_3$}}] at (1.4,-0.75) {};
    \node[circle, draw=black, fill=white, inner sep=1pt, label=above:{\scalebox{0.5}{$d^2_3$}}] at (1.4,0) {};
    \node[circle, draw=black, fill=white, inner sep=1pt, label=above:{\scalebox{0.5}{$d^3_3$}}] at (1.4,0.75) {};
    \end{tikzpicture}
    \caption{\footnotesize $G_s$ for a graph $H$, edge weight function $\sigma$, and integer r, such that $rn - \bm{\sigma}$ = $3$.}
    \label{fig:G-supplier}
\end{figure}

\medskip
\noindent\textbf{\boldmath Pathwidth of $G_H$,  $\Tilde{G}_H$, and their subdivisions.}
Our reductions and compositions must use at most $\Oof(h(pw) + \log |x|)$ working space, for an input instance of size $|x|$ and parameter $pw$, and a computable function $h$.
We show that our reductions/compositions can be performed on a log-space transducer and are pl-reductions.
A log-space transducer is a type of Turing machine with a read-only input tape, a read/write work tape of logarithmic size and a write-only, write-once output tape.

\begin{lemma}\label{lem:bounded-pathwidth-G-H}
Let $(H, \mathcal{P}_H, w, r)$ be an instance of \textsc{MMO}. 
Then, there exists a log-space transducer that transforms a path decomposition of $H$ to one of~$G_H$ (resp.~$\Tilde{G}_H$, or any subdivision of~$G_H$, or any subdivision of~$\Tilde{G}_H$) with width at most $pw(H) + 6$ (resp.\ $pw(H) + 7$, or $pw(H) + 6$, or $pw(H) + 7$).
Thus,~$pw(G_H) \le pw(H) + 6$, and~$pw(\Tilde{G}_H) \le pw(H) + 7$, and any subdivision of~$G_H$ or~$\Tilde{G}_H$ results in a graph of bounded pathwidth. 
\end{lemma}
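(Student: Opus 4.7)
The plan is to transform the given path decomposition $\mathcal{P}_H$ of $H$ into a path decomposition of $G_H$ by expanding each bag of $\mathcal{P}_H$ into a carefully chosen short sequence of bags, where every vertex $v \in X_i$ is represented by its identity vertex $w_v$. First I would preprocess $\mathcal{P}_H$ into a nice path decomposition, then traverse it once from left to right. Whenever a vertex $v$ is introduced, I introduce $w_v$ into the current bag and stream through the target vertices $x_v^1,\ldots,x_v^{r+1}$ one at a time, introducing each $x_v^j$ together with its single edge to $w_v$ and immediately forgetting it before moving on. The remaining $y,z$ vertices of $G_v$ are deferred and handled during the edge-processing step; when $v$ is eventually forgotten in $\mathcal{P}_H$, I forget $w_v$.

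The heart of the proof is the treatment of an edge $e = uv$, which I process the first time both $u$ and $v$ appear in a common bag of $\mathcal{P}_H$, so that $w_u$ and $w_v$ are already in the current bag. I plan to first introduce $e^u$ and $e^v$, then transiently introduce $b_e^{\sigma(e)+1}$ and $a_e^{\sigma(e)+1}$ to install the edges of $G_e^{sel}$ incident to $b_e^{\sigma(e)+1}$ before forgetting both. Then, for each $i \in [\sigma(e)]$, I bring in $y_e^{u(i)}, z_e^{u(i)}, b_e^i$ (adding $w_u y_e^{u(i)}$, $e^u y_e^{u(i)}$, $y_e^{u(i)} z_e^{u(i)}$, $b_e^i z_e^{u(i)}$), forget $y_e^{u(i)}$ and $z_e^{u(i)}$, then bring in $y_e^{v(i)}, z_e^{v(i)}$ (adding $w_v y_e^{v(i)}$, $e^v y_e^{v(i)}$, $y_e^{v(i)} z_e^{v(i)}$, $b_e^i z_e^{v(i)}$) and forget them, and finally briefly introduce $a_e^i$ to attach it to $b_e^i$ before forgetting both $a_e^i$ and $b_e^i$. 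The vertices $e^u$ and $e^v$ are forgotten at the end of the entire edge-processing block. Counting the live vertices in every constructed bag, the peak beyond $\{w_u,w_v\}$ is the set $\{e^u, e^v, y_e^{u(i)}, z_e^{u(i)}, b_e^i\}$ of size five, so every bag has size at most $|X_i| + 5 \leq pw(H) + 6$, yielding width at most $pw(H) + 5 \leq pw(H) + 6$.

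For $\Tilde{G}_H$ the plan is to put the supplier vertex $s$ in every bag of the above decomposition (raising the width by exactly one) and prepend a short initial segment that streams each donor path $d_1^i d_2^i d_3^i$ in and out; this never requires more than three additional vertices beyond $s$, so the overall bound $pw(H) + 7$ is maintained. For any subdivision of $G_H$ or $\Tilde{G}_H$, I invoke the fact recalled in the preliminaries that subdividing edges preserves pathwidth, so the same bounds carry over unchanged. The log-space claim then follows because at any moment the transducer needs to remember only a pointer into the input $\mathcal{P}_H$, an identifier of the vertex or edge of $H$ currently being processed, and a few counters over $[\sigma(e)]$ or $[r+1]$, all of which fit in $\Oof(\log |x|)$ space; the construction itself is a single left-to-right streaming pass that writes each bag onto the output tape without revisiting it.

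The main obstacle will be engineering the per-$i$ step of the edge block so that only five vertices beyond $\{w_u,w_v\}$ are ever simultaneously alive: $e^u$ and $e^v$ must remain in the bag throughout the entire $i$-loop because edges $e^u y_e^{u(i)}$ and $e^v y_e^{v(i)}$ arise for every $i$, while simultaneously $b_e^i$ must coexist with $z_e^{u(i)}$ in order to install $b_e^i z_e^{u(i)}$ and later with $z_e^{v(i)}$ in order to install $b_e^i z_e^{v(i)}$. Splitting row $i$ into a $u$-side and a $v$-side sub-phase that share only $b_e^i$, and clearing each sub-phase completely before beginning the next, is what forces the peak count to fit in the claimed bound.
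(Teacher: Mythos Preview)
Your proposal is correct and follows essentially the same scheme as the paper: replace each $v\in V(H)$ by $w_v$, stream the $x_v^j$ one at a time upon introducing $v$, and process the entire gadget for an edge $e=uv$ while both $w_u,w_v$ are present (introducing $e^u,e^v$, then the $a_e,b_e,y_e,z_e$ vertices per index, then forgetting $e^u,e^v$). Your per-index splitting into a $u$-side and a $v$-side sub-phase sharing only $b_e^i$ is in fact slightly sharper than the paper's description (the paper introduces all six of $a_e^j,b_e^j,z_e^{u(j)},y_e^{u(j)},z_e^{v(j)},y_e^{v(j)}$ together and still has $e^u,e^v$ in the bag), so your peak of five extra vertices yields width $pw(H)+5\leq pw(H)+6$, comfortably within the stated bound. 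The only place you are lighter than the paper is for subdivisions: you invoke the pathwidth-preservation fact from the preliminaries, which gives the width bound but not by itself the \emph{log-space} transducer for the subdivision's decomposition; the paper handles this via a short claim showing how to splice subdivision vertices into the output stream, and your streaming argument would need the same one-line extension.
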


\begin{proof}
The final statement of the lemma follows from the preceding statement. 
Thus, we build log-space transducers for the graphs $G_H$, $\Tilde{G}_H$ and their subdivisions and we start with $G_H$. 

Given the path decomposition of $H$, we first ensure that it is nice (which can be done via a log-space transducer~\cite{kintali2012computing}).
Then, we pass through the bags from left to right. 
For a forget bag in the path decomposition of~$H$, we output a bag containing the representative vertices (of the vertex gadgets) of the vertices in the forget bag. 
For a bag that introduces a vertex $v \in V(H)$, we output the following bags in order:
\begin{enumerate}[itemsep=0pt]
   \item for $j \in [r+ 1]$, we output one bag that introduces the vertex $x_v^{j}$, followed by one that forgets the same vertex (these bags have a size larger than the pathwidth of $H$ by only $1$), 
   \item for each vertex $u$ in the bag such that $u \neq v$ and $uv = e \in E(H)$:
   \begin{enumerate}
       \item we output four bags that introduce the vertices $e^u$, $e^v$, $b_e^{\sigma(e)+1}$, and $a_e^{\sigma(e)+1}$, respectively, followed by two bags that forget the vertices $b_e^{\sigma(e)+1}$, and $a_e^{\sigma(e)+1}$, respectively (these bags have a size larger than the pathwidth of $H$ by only $4$),
       \item for $j \in [\sigma(e)]$, we output bags that introduce the vertices $a_e^{j}$, $b_e^{j}$, $z_e^{u(j)}$, $y_e^{u(j)}$, $z_e^{v(j)}$ and~$y_e^{v(j)}$, followed by bags that forget all those vertices (these bags have a size larger than the pathwidth of $H$ by only $6$), 
       \item then, we output two bags that forget the vertices $e^u$ and $e^v$, respectively.
   \end{enumerate}
\end{enumerate}

It is easy to verify that the result is a nice path decomposition of the graph~$G_H$. The width has increased by at most $6$.

For the graph $\Tilde{G}_H$, we change the log-space transducer for $G_H$ to output at first a bag that introduces the supplier vertex. Additionally, we then let the same log-space transducer output for each $i \in [rn - \bm{\sigma}]$, bags that represent the path decompositions of each of the donor paths (augmented by the supplier vertex). 
The log-space transducer then outputs a bag containing the supplier vertex and the vertex $d^{rn-\bm{\sigma}+1}$ followed by a bag that forgets $d^{rn-\bm{\sigma}+1}$. 
The log-space transducer finally behaves similarly to that of $G_H$ but augments each of the then outputted bags by the supplier vertex. 
It is easy to verify that the result is a nice path decomposition of the graph~$\Tilde{G}_H$. The width has increased by at most $7$.

The following claim finalizes the proof of the lemma.

\begin{claim}\label{cla:log-space-transducer}
A log-space transducer that takes as input a nice path decomposition of a graph $G$ and outputs a path decomposition of a graph $G'$ can be adapted to output a path decomposition of any subdivision of $G'$ with width $pw(G')$.   
\end{claim}
\begin{claimproof}
Note that for any edge subdivision of edges $uv_1, \ldots, uv_q$ incident to a vertex $u$ in the graph $G$, that introduce vertices $w_1, \ldots, w_q$ to the graph, the log-space transducer can be adapted to output directly before the bag that introduces $v_i$ for $i \in [q]$, a bag that introduces $w_i$ and before any bag that introduces a neighbor of $v_i$ one bag that forgets the vertex $w_i$. 
If some vertices $v_i, \ldots, v_{i'}$ have no neighbors, the log-space transducer can be adapted to output bags that introduce $w_i, \ldots, w_{i'}$ just before the vertex $u$ is forgotten and only output the bags that introduce the vertices $v_i, \ldots, v_{i'}$ after the vertex $u$ is forgotten.     
\end{claimproof}
It is easy to see then that~\Cref{cla:log-space-transducer} can be used to prove the existence of a log-space transducer that outputs a path decomposition of any subdivision of $G_H$ (resp.~$\Tilde{G}_H$) of the same width of a path decomposition of $G_H$ (resp.~$\Tilde{G}_H$).
\end{proof}

We note here that in the \textsc{DS-D} reduction and composition, we may augment subdivisions of~$G_H$ (resp.~$\Tilde{G}_H$), by an edge $dd'$ ($d$ and $d'$ are new vertices, and we refer to $d$ as the \emph{dominator vertex}) where $d$ is adjacent to various vertices in the subdivisions of~$G_H$ (resp.~$\Tilde{G}_H$). 
We denote the resulting graphs by \emph{augmented subdivisions of}~$G_H$ (resp.~$\Tilde{G}_H$). 
By Observation~\ref{obs:treewidth}, this modification can increase the pathwidth of those graphs by at most $2$.  
A log-space transducer for such modified graphs can be built by adapting one of the log-space transducers from~\Cref{lem:bounded-pathwidth-G-H} to first output bags that introduce the vertices $d$ and $d'$ and only forget them at the end of the path decomposition. 

\begin{corollary}\label{cor:aug-G-H-pathwidth}
Let $(H, \mathcal{P}_H, w, r)$ be an instance of \textsc{MMO}. 
Then, there exists a log-space transducer that transforms a path decomposition of $H$ to one of an augmented subdivision of~$G_H$ (resp.~$\Tilde{G}_H$) with width at most $pw(H) + 8$ (resp.\ $pw(H) + 9$).
\end{corollary}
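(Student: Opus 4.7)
The plan is to obtain this result as a direct extension of \Cref{lem:bounded-pathwidth-G-H} by modifying the log-space transducer described there to also handle the two additional vertices $d$ and $d'$ introduced by the augmentation. Since the augmentation consists of adding exactly one edge $dd'$ together with some additional edges from $d$ to vertices inside the subdivision of $G_H$ (resp.\ $\Tilde{G}_H$), the only requirement for correctness of a path decomposition is that $d$ appears in every bag containing a neighbor of $d$ (and similarly for $d'$, which has only the neighbor $d$).

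First, I would take the log-space transducer from \Cref{lem:bounded-pathwidth-G-H} that, on input $\mathcal{P}_H$, outputs a nice path decomposition of a subdivision of $G_H$ (resp.\ $\Tilde{G}_H$) with width at most $pw(H)+6$ (resp.\ $pw(H)+7$). I would then modify it so that, before emitting the first bag, it outputs a bag that introduces the two new vertices $d$ and $d'$, and, after emitting its last bag, it outputs two additional bags that forget $d'$ and then $d$. For every bag produced in between, I augment the output on the fly by inserting $d$ and $d'$ into the bag. Since $d$ is adjacent to $d'$ and to a (possibly large) subset of vertices of the subdivision, and $d'$ has only $d$ as neighbor, keeping $\{d,d'\}$ in every intermediate bag ensures that the three conditions for a tree (path) decomposition continue to hold for the augmented graph.

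The additional memory required for this modification is only a constant number of extra symbols in the work tape (to remember the identifiers of $d$ and $d'$ and to track whether we are at the first or last bag), so the composed machine is still a log-space transducer. The width of the output decomposition is at most $2$ larger than that produced by the transducer from \Cref{lem:bounded-pathwidth-G-H}, giving $pw(H)+8$ in the $G_H$ case and $pw(H)+9$ in the $\Tilde{G}_H$ case. This matches the bound implied by \Cref{obs:treewidth} applied with $X=\{d,d'\}$, which confirms that no better savings are required or expected.

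There is no real obstacle here: the argument is essentially a ``carry two extra vertices in every bag'' augmentation of an existing transducer, and the only mild care needed is to verify that the output remains a valid (nice) path decomposition of the augmented graph, which follows because the added vertices $d,d'$ are present throughout the whole decomposition and thus trivially satisfy the connectivity condition of their bag-subtrees and cover the new edges incident to them.
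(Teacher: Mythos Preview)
Your proposal is correct and follows essentially the same approach as the paper: the paper's argument (given in the paragraph preceding the corollary) is to adapt the log-space transducer of \Cref{lem:bounded-pathwidth-G-H} so that it introduces $d$ and $d'$ at the start and forgets them only at the end, invoking \Cref{obs:treewidth} for the $+2$ bound. Your write-up supplies a few more implementation details (correctness of the decomposition, constant extra workspace), but the idea is identical.
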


\medskip
\noindent 
\textbf{MMO-instance-selector.} In our or-cross-compositions, we assume that we are given as input a family of $t$ \textsc{MMO} instances $(H_j, \mathcal{P}_{H_j}, \sigma_j, r_j)$, where for each $j \in [t]$, $H_j$ is a bounded-pathwidth graph with path decomposition $\mathcal{P}_{H_j}$, $|V(H_j)| = n$, $|E(H_j)| = m$, $\sigma_j: E(H_j) \rightarrow \mathbb{Z}_{+}$ such that $\sum_{e_j \in E(H_j)} \sigma_j(e_j) = \bm{\sigma}$ and $r_j = r \in \mathbb{Z}_{+}$ (integers are given in unary). 
It is not hard to see that these instances belong to the same equivalence class of a polynomial equivalence relation $\mathcal{R}$ (\Cref{def:equivalence-relation}) whose polynomial-time algorithm decides that two instances are equivalent if they have the same number of vertices, number of edges, and total weight on the edges.
$\mathcal{R}$ also has at most $\max_{x \in S} |x|^{\Oof(1)}$ equivalence classes, where $S$ is a set of \textsc{MMO} instances of the form $(H, \mathcal{P}_H, \sigma, r)$, where $H$ is of bounded pathwidth. In particular it has at most $m \cdot n \cdot (\max_{x \in S} |x|)$ equivalence classes.

Some of our or-cross-compositions will encode, in a graph $G_t$, all $t$ input instances of \textsc{MMO} in~$(y^*, k^*)$ (\Cref{def:or-cross-composition}) using the multiple induced subgraphs $G_{H_j}$ for $j \in [t]$.
We must also encode the OR behavior.
An \textit{instance selector} is a gadget with $t$ possible states, each corresponding to a distinct instance $x_j$ for $j \in [t]$ and compelling us to select $x_j$ so that $(x^*, k^*)$ is solved. 
We form an instance selector by constructing a new gadget, called a \textit{MMO-instance-selector}. 
In $G_t$, we make some of the \textsc{MMO}-instance-selector vertices adjacent to various vertices of $G_{H_j}$ for $j \in [t]$. 

An \textsc{MMO}-instance-selector contains, for each $j \in [t]$, an edge with endpoints $\textsc{\footnotesize Select}_j$ and $\textsc{\footnotesize Unselect}_j$.
It also contains edges $f^1g^1, f^2g^2, \ldots, f^{\bm{\sigma}} g^{\bm{\sigma}}$ and a \textit{weights-hub vertex} $h$ adjacent to each vertex $f^i$ for $i \in [\bm{\sigma}]$.
It also comprises edges $o^1p^1, o^2p^2, \ldots, o^m p^m$ and an \textit{orientations-quay vertex} $q$ adjacent to each vertex $o^i$ for $i \in [m]$.

\begin{figure}[h]
    \centering
    \begin{tikzpicture}[scale=0.7]
    \fill[yellow!10] (-2,0) -- (-4,-2) -- (-4,2) -- (-2,0) -- cycle;
    \fill[yellow!10] (-4,-2) -- (-6,-2) -- (-6,2) -- (-4,2) -- cycle;
    \fill[yellow!10] (3.2,0) -- (1.2,-2) -- (1.2,2) -- (3.2,0) -- cycle;
    \fill[yellow!10] (1.2,-2) -- (-0.7,-2) -- (-0.7,2) -- (1.2,2) -- cycle;
    \fill[yellow!10] (4.5,-0.5) -- (4.5,1.5) -- (5.5,1.5) -- (5.5,-0.5) -- cycle;
    \fill[yellow!10] (6,-0.5) -- (6,1.5) -- (7,1.5) -- (7,-0.5) -- cycle;
    \fill[yellow!10] (7.5,-0.5) -- (7.5,1.5) -- (8.5,1.5) -- (8.5,-0.5) -- cycle;
    \draw[black] (-2,0) -- (-4,-1.5);
    \draw[black] (-2,0)-- (-4,-0.75);
    \draw[black] (-2,0) -- (-4,0);
    \draw[black] (-2,0) -- (-4,1.5);
    \draw[black] (-2,0) -- (-4,0.75);
    \draw[black] (3.2,0) -- (1.2,-1.5);
    \draw[black] (3.2,0)-- (1.2,-0.5);
    \draw[black] (3.2,0) -- (1.2,0.5);
    \draw[black] (3.2,0) -- (1.2,1.5);
    \node[circle, fill=black, inner sep=1pt, draw=black, fill=white, label=below:{\scalebox{0.5}{$h$}}] at (-2,0) {};
    \node[circle, fill=black, inner sep=1pt, draw=black, fill=white, label=below:{\scalebox{0.5}{$q$}}] at (3.2,0) {};
    \draw[black] (1.2,-1.5) -- (0,-1.5) node[circle, draw=black, fill=white, inner sep=1pt, label=above:{\scalebox{0.5}{$p^1$}}, pos=1]{} node[circle, draw=black, fill=white, inner sep=1pt, label=above:{\scalebox{0.5}{$o^1$}}, pos=0]{} ;
    \draw[black] (1.2,-0.5) -- (0,-0.5) node[circle, draw=black, fill=white, inner sep=1pt, label=above:{\scalebox{0.5}{$p^2$}}, pos=1]{} node[circle, draw=black, fill=white, inner sep=1pt, label=above:{\scalebox{0.5}{$o^2$}}, pos=0]{};
    \draw[black] (1.2,0.5) -- (0,0.5) node[circle, draw=black, fill=white, inner sep=1pt, label=above:{\scalebox{0.5}{$p^3$}}, pos=1]{} node[circle, draw=black, fill=white, inner sep=1pt, label=above:{\scalebox{0.5}{$o^3$}}, pos=0]{};
    \draw[black] (1.2,1.5) -- (0,1.5) node[circle, draw=black, fill=white, inner sep=1pt, label=above:{\scalebox{0.5}{$p^4$}}, pos=1]{} node[circle, draw=black, fill=white, inner sep=1pt, label=above:{\scalebox{0.5}{$o^4$}}, pos=0]{};
    \draw[black] (5,0) -- (5,1) node[circle, draw=black, fill=white, inner sep=1pt, label=above:{\scalebox{0.5}{\textsc{Unselect}$_1$}}, pos=1]{} node[circle, draw=black, fill=white, inner sep=1pt, label=below:{\scalebox{0.5}{\textsc{Select}$_1$}}, pos=0]{} ;
    \draw[black] (6.5,0) -- (6.5,1) node[circle, draw=black, fill=white, inner sep=1pt, label=above:{\scalebox{0.5}{\textsc{Unselect}$_2$}}, pos=1]{} node[circle, draw=black, fill=white, inner sep=1pt, label=below:{\scalebox{0.5}{\textsc{Select}$_2$}}, pos=0]{} ;
    \draw[black] (8,0) -- (8,1) node[circle, draw=black, fill=white, inner sep=1pt, label=above:{\scalebox{0.5}{\textsc{Unselect}$_3$}}, pos=1]{} node[circle, draw=black, fill=white, inner sep=1pt, label=below:{\scalebox{0.5}{\textsc{Select}$_3$}}, pos=0]{} ;
    \draw[black] (-5.2,-1.5) -- (-4,-1.5) node[circle, draw=black, fill=white, inner sep=1pt, label=above:{\scalebox{0.5}{$f^1$}}, pos=1]{} node[circle, draw=black, fill=white, inner sep=1pt, label=above:{\scalebox{0.5}{$g^1$}}, pos=0]{} ;
    \draw[black] (-5.2,-0.75) -- (-4,-0.75) node[circle, draw=black, fill=white, inner sep=1pt, label=above:{\scalebox{0.5}{$f^2$}}, pos=1]{} node[circle, draw=black, fill=white, inner sep=1pt, label=above:{\scalebox{0.5}{$g^2$}}, pos=0]{};
    \draw[black] (-5.2,0) -- (-4,0) node[circle, draw=black, fill=white, inner sep=1pt, label=above:{\scalebox{0.5}{$f^3$}}, pos=1]{} node[circle, draw=black, fill=white, inner sep=1pt, label=above:{\scalebox{0.5}{$g^3$}}, pos=0]{};
    \draw[black] (-5.2,1.5) -- (-4,1.5) node[circle, draw=black, fill=white, inner sep=1pt, label=above:{\scalebox{0.5}{$f^5$}}, pos=1]{} node[circle, draw=black, fill=white, inner sep=1pt, label=above:{\scalebox{0.5}{$g^5$}}, pos=0]{};
    \draw[black] (-5.2,0.75) -- (-4,0.75) node[circle, draw=black, fill=white, inner sep=1pt, label=above:{\scalebox{0.5}{$f^4$}}, pos=1]{} node[circle, draw=black, fill=white, inner sep=1pt, label=above:{\scalebox{0.5}{$g^4$}}, pos=0]{};
    \end{tikzpicture}
    \caption{\footnotesize An \textsc{MMO}-instance-selector for $t = 3$ \textsc{MMO} instances $(H_j, \mathcal{P}_{H_j}, \sigma_j, r_j)$ with $|E(H_j)| = m = 4$, edge weight function $\sigma_j$ such that $\sum_{e_j \in E(H_j)} \sigma_j(e_j) = \bm{\sigma} = 5$, and integers $r_j = r \in \mathbb{Z}_{+}$.}
    \label{fig:mmo-instance-selector}
\end{figure}
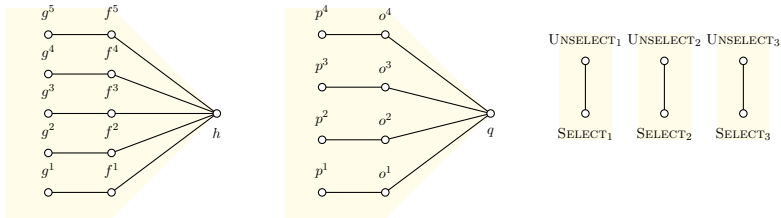
In $G_t$, we will make the vertices $h$ and $q$ adjacent to different vertices in the $t$ induced subgraphs~$G_{H_j}$ for $j \in [t]$. 
Additionally, the vertices $\textsc{\footnotesize Select}_j$ or $\textsc{\footnotesize Unselect}_j$ for $j \in [t]$ will also be adjacent to different vertices within their corresponding induced subgraph $G_{H_j}$.
For some source problems, we may additionally augment subdivisions of $G_t$ by attaching to $h$ and $q$ a number of pendant vertices. 
We denote the resulting graphs by \emph{augmented subdivisions of} $G_t$.

\begin{lemma}\label{lem:bounded-pathwidth-G}
There exists a log-space transducer that given $t$ \textsc{MMO} instances $(H_j, \mathcal{P}_{H_j}, \sigma_j, r_j)$, where for each $j \in [t]$, $|V(H_j)| = n$, $|E(H_j)| = m$, $\sigma_j: E(H_j) \rightarrow \mathbb{Z}_{>0}$ is such that $\sum_{e_j \in E(H_j)} \sigma_j(e_j) = \bm{\sigma}$, and $r_j = r \in \mathbb{Z}_{+}$ (integers are given in unary), outputs a path decomposition of the graph~$G_t$ (resp.\ any augmented subdivision of~$G_t$) with width at most $\max_{j \in [t]} pw(H_j) + 10$.
Thus, the graph~$G_t$ and any augmented subdivision of it are graphs of bounded pathwidth.
\end{lemma}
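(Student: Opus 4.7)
The plan is to build a single log-space transducer that produces a path decomposition of $G_t$ by concatenating the decompositions of the individual subgraphs $G_{H_j}$ supplied by \Cref{lem:bounded-pathwidth-G-H}, interleaved with bookkeeping for the \textsc{MMO}-instance-selector. First, I would output bags introducing the two \emph{globally shared} selector vertices $h$ and $q$ (which attach to vertices across all $G_{H_j}$), and then handle the selector's internal edges $f^i g^i$ and $o^i p^i$ one at a time by introducing each endpoint pair and immediately forgetting it. Throughout this phase the bags have size at most $4$.

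Next, for each $j \in [t]$ in turn, I would emit a bag introducing \textsc{Select}$_j$ and \textsc{Unselect}$_j$, then invoke the transducer from \Cref{lem:bounded-pathwidth-G-H} on $(H_j, \mathcal{P}_{H_j})$ and route each of its output bags to the main output tape after augmenting it by the four vertices $\{h, q, \textsc{Select}_j, \textsc{Unselect}_j\}$, and finally emit a bag forgetting \textsc{Select}$_j$ and \textsc{Unselect}$_j$ before moving to index $j+1$. Since the bags emitted by \Cref{lem:bounded-pathwidth-G-H} have size at most $pw(H_j) + 7$, the augmented bags have size at most $pw(H_j) + 11$, so the width is at most $pw(H_j) + 10$. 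Taking the maximum over $j$, and finally forgetting $h$ and $q$, yields the claimed bound $\max_{j \in [t]} pw(H_j) + 10$. Validity is straightforward: $h$ and $q$ remain alive throughout, \textsc{Select}$_j$ and \textsc{Unselect}$_j$ remain alive across the $j$-th block (covering the selector edge and all edges connecting the selector to $G_{H_j}$), and every other vertex and edge is handled by the inner transducer.

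For an \emph{augmented subdivision} of $G_t$, I combine two observations. By \Cref{cla:log-space-transducer}, the inner transducer can be adapted so that edge subdivisions do not increase the width, so the subdivision component is handled for free. The pendants attached to $h$ and $q$ can be inserted into the initial phase, where each pendant is introduced in a dedicated bag together with its neighbor ($h$ or $q$) and immediately forgotten; this keeps bag sizes safely within the $pw(H_j) + 11$ budget already established.

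The main subtlety, which I expect to be the obstacle, is maintaining the log-space budget rather than the width calculation itself. Because the outer transducer processes the $t$ instances one at a time in streaming fashion, only ever augments bags by a fixed set of four selector vertices, and uses counters of size $O(\log n)$ for the indices $i, j$, its composition with the already log-space inner transducer from \Cref{lem:bounded-pathwidth-G-H} still runs in $O(\log n)$ workspace, as required for a pl-construction and for use in subsequent pl-reductions and or-cross-compositions.
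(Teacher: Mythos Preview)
Your proposal is correct and follows essentially the same approach as the paper: introduce $h$ and $q$ globally, handle the selector's $f^ig^i$ and $o^ip^i$ edges and any pendants in an initial phase, then for each $j$ introduce $\textsc{Select}_j,\textsc{Unselect}_j$, stream the bags from the \Cref{lem:bounded-pathwidth-G-H} transducer augmented by these four vertices, and forget the selector pair before moving on; the width calculation $pw(H_j)+6+4$ and the use of \Cref{cla:log-space-transducer} for subdivisions match the paper exactly. Your explicit discussion of the log-space bookkeeping is slightly more detailed than the paper's, but the argument is the same.
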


\begin{proof}
The last statement of the lemma follows from the preceding statement. 
Thus, we build log-space transducers for the graphs $G_t$ and its augmented subdivisions and we start with a log-space transducer for (an augmented) $G_t$.
An augmented $G_t$ is $G_t$ with a number of pendant vertices attached to $h$ and $q$.
We first ensure that the path decompositions of $H_j$ for each $j \in [t]$ are nice (which can be done via a log-space transducer). 
Afterwards, the log-space transducer outputs bags that introduce the vertices $h$ and $q$, introduce and forget the pendant vertices attached to~$h$ or $q$ in the case of an augmented $G_t$ one-by one, and then outputs bags that represent the path decompositions of the edges $f^1g^1, f^2g^2, \ldots, f^{\bm{\sigma}} g^{\bm{\sigma}}$, $o^1p^1, o^2p^2, \ldots, o^m p^m$, augmented by the vertices~$h$ and $q$. 
Next, the log-space transducer outputs for each $j \in t$, the bags in the path decomposition of the graph $G_{H_j}$ (\ie, behaves as the log-space transducer of~\Cref{lem:bounded-pathwidth-G-H} that outputs a path decomposition of the graph $G_H$) but augmented by the vertices $h$, $q$, $\textsc{\footnotesize Select}_j$ and $\textsc{\footnotesize Unselect}_j$, followed by bags that forget the vertices $\textsc{\footnotesize Select}_j$ and $\textsc{\footnotesize Unselect}_j$.
It is easy to see that the result is a path decomposition of (an augmented) $G_t$ with width $\max_{j \in [t]} pw(H_j) + 10$ (as the path decomposition of $G_{H_j}$ for any $j \in [t]$ has width at most $\max_{j \in [t]} pw(H_j) + 6$). 

Using~\Cref{cla:log-space-transducer}, we can build a log-space transducer for any augmented subdivision of $G_t$ that outputs a path decomposition of the subdivision with width $\max_{j \in [t]} pw(H_j) + 10$.
\end{proof}

In the \textsc{DS-D} composition, we form the graph $\hat{G}_t$ in a manner akin to a subdivision of $G_t$ except, we encode each of the $t$ input instances of \textsc{MMO} using the multiple induced subgraphs that are augmented subdivisions of $G_{H_j}$ for $j \in [t]$.
Using \Cref{cor:aug-G-H-pathwidth} instead of \Cref{lem:bounded-pathwidth-G-H} in the proof of \Cref{lem:bounded-pathwidth-G}, the log-space transducer can output for each $j \in [t]$, the bags in the path decomposition of an augmented subdivision of $G_{H_j}$, and we get the following.

\begin{corollary}\label{cor:cor-bounded-pathwidth-G'-t}
There exists a log-space transducer that given $t$ \textsc{MMO} instances $(H_j, \mathcal{P}_{H_j}, \sigma_j, r_j)$, where for each $j \in [t]$, $|V(H_j)| = n$, $|E(H_j)| = m$, $\sigma_j: E(H_j) \rightarrow \mathbb{Z}_{+}$ is such that $\sum_{e_j \in E(H_j)} \sigma_j(e_j) =~\bm{\sigma}$, and $r_j = r \in \mathbb{Z}_{+}$ (integers are given in unary), outputs a path decomposition of the graph~$\hat{G}_t$ (resp.\ any subdivision of $\hat{G}_t)$ with width at most $\max_{j \in [t]} pw(H_j) + 12$.
Thus, the graph~$\hat{G}_t$ and any subdivision of it are graphs of bounded pathwidth.    
\end{corollary}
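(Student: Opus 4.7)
The plan is to follow the construction in the proof of \Cref{lem:bounded-pathwidth-G} almost verbatim, with a single substitution: wherever that proof invokes the transducer of \Cref{lem:bounded-pathwidth-G-H} to stream out a path decomposition of $G_{H_j}$, I will instead invoke the transducer of \Cref{cor:aug-G-H-pathwidth} to stream out a path decomposition of an augmented subdivision of $G_{H_j}$ of width at most $pw(H_j) + 8$. Everything else in the outer construction (the instance-selector block introducing $h$ and $q$, processing the edges $f^i g^i$ and $o^i p^i$, and the per-instance augmentation by $\{h, q, \textsc{Select}_j, \textsc{Unselect}_j\}$) carries over unchanged, since it only depends on the combinatorial interface between the selector and the instance gadget, not on whether the instance gadget has been subdivided or equipped with a dominator edge.

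Concretely, the composed transducer would first output introduce bags for $h$ and $q$, then handle the $\bm\sigma$ edges $f^i g^i$ and the $m$ edges $o^i p^i$ with introduce/forget pairs augmented by $\{h, q\}$. Next, for each $j \in [t]$ in turn, it calls the \Cref{cor:aug-G-H-pathwidth} sub-transducer on $(H_j, \mathcal{P}_{H_j}, \sigma_j, r_j)$, augments every bag it emits by the four selector vertices $\{h, q, \textsc{Select}_j, \textsc{Unselect}_j\}$, and closes the block with two forget bags for $\textsc{Select}_j$ and $\textsc{Unselect}_j$. The result is a valid nice path decomposition of $\hat{G}_t$: the three axioms hold within each block by correctness of the sub-transducer, and the new edges crossing between the selector and the instance gadget always have both endpoints present, because $h$, $q$, $\textsc{Select}_j$, and $\textsc{Unselect}_j$ persist throughout the entire block for instance $j$. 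The maximum bag size in this block is at most $(pw(H_j) + 9) + 4 = pw(H_j) + 13$, hence the overall pathwidth is at most $\max_{j \in [t]} pw(H_j) + 12$, as claimed.

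For the parenthetical clause on subdivisions of $\hat{G}_t$, I would appeal directly to \Cref{cla:log-space-transducer}, which already shows how to post-process any log-space transducer producing a nice path decomposition into one that produces a path decomposition of any subdivision, at the same width. The thing to verify, rather than any real obstacle, is that the composed transducer remains log-space: the only state persisting across sub-calls consists of counters bounded by $t$, $n$, $m$, and $\bm\sigma$, each representable in $O(\log |x|)$ bits, together with the logarithmic workspace used internally by each invocation of the sub-transducer. Since sub-transducers are executed sequentially in a streaming fashion and their workspaces can be reused, this bookkeeping is routine, and the corollary follows.
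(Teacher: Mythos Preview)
Your proposal is correct and follows essentially the same approach as the paper, which simply states that one should use \Cref{cor:aug-G-H-pathwidth} in place of \Cref{lem:bounded-pathwidth-G-H} within the proof of \Cref{lem:bounded-pathwidth-G}. Your width arithmetic $(pw(H_j)+9)+4$ and your appeal to \Cref{cla:log-space-transducer} for the subdivision clause are exactly the intended justifications.
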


\begin{corollary}\label{cor:reduction-space}
Given an \textsc{MMO} instance $(H, \mathcal{P}_H, \sigma, r)$, one can build a log-space transducer that outputs a path decomposition of (any subdivision of) $G_H$, (resp.\ (any subdivision of)~$\Tilde{G}_H$, or an augmented subdivision of $G_H$, or an augmented subdivision of~$\Tilde{G}_H$) with width at most $pw(H) + 6$ (resp.\ $pw(H) + 7$, or $pw(H) + 8$, or $pw(H) + 9$), along with a representation of the graph, any subset of its vertices, and an integer with at most a polynomial (in the input size) number of bits. 
\end{corollary}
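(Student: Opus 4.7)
The pathwidth bounds are precisely the content of \Cref{lem:bounded-pathwidth-G-H} and \Cref{cor:aug-G-H-pathwidth}, so it suffices to argue that on the same log-space transducer we can additionally stream out (i) a description of the constructed graph, (ii) an indication of a specified vertex subset, and (iii) an integer, all using only $\Omc(\log |x|)$ bits of work-tape and writing a polynomially-bounded number of output bits.

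The plan is to observe that each gadget in \Cref{sec:foundational} is indexed by a constant number of pointers into the input: an edge $e \in E(H)$ (or a vertex $v \in V(H)$), a counter $i$ in the range $[\sigma(e)]$, $[r]$, $[r+1]$, or $[\sigma(e)+1]$, a counter $j \in [t]$ in the or-composition setting, and an $\Oof(1)$-bit tag identifying which gadget vertex (e.g.\ $a_e^i$ vs.\ $b_e^i$, or $x_v^j$, $y_e^{v(j)}$, etc.) is being named. Since $\sigma$ and $r$ are given in unary, each such counter fits in $\Omc(\log |x|)$ bits, so a name for every vertex of $G_H$, $\Tilde{G}_H$, $\hat{G}_t$, or any of the considered subdivisions/augmentations can be encoded by a constant-length tuple of log-sized indices. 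The edge set is described by $\Oof(1)$ fixed adjacency rules per tag pair; deciding adjacency from two names is a log-space test, and enumerating all edges amounts to nested loops through these counters. Thus the graph itself can be emitted on the output tape by a straightforward iteration, in parallel with (or as a prefix to) the path-decomposition emission of \Cref{lem:bounded-pathwidth-G-H} and \Cref{cor:aug-G-H-pathwidth}.

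For the claimed subset of vertices, each concrete choice arising in our later reductions (initial token configuration, terminal set, dominator neighborhood, etc.) is membership-definable by an $\Oof(1)$-depth Boolean test on the tag and indices of a vertex name, and hence is log-space recognizable; outputting the subset reduces to iterating through all vertex names and writing those that pass the test. For the integer, values like $\budget$, $k$, or targets defined in terms of $n$, $m$, $\bm{\sigma}$, $r$, $t$ are polynomials in these unary inputs, and since addition and multiplication of unary/binary integers of polynomial magnitude can be done in log-space, the output integer fits within a polynomial number of bits.

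The only point needing care is composability: one must write the graph, the subset, and the integer while simultaneously reusing the transducers of \Cref{lem:bounded-pathwidth-G-H}, \Cref{cor:aug-G-H-pathwidth}, and \Cref{lem:bounded-pathwidth-G} for the path decomposition, without exceeding $\Omc(\log |x|)$ simultaneous work-tape usage. This is handled by the standard composition of log-space transducers: we run the several streams sequentially on the output tape, each invocation reusing the same work tape since a log-space transducer has read-only input and write-once output. Combining these observations with \Cref{lem:bounded-pathwidth-G-H} and \Cref{cor:aug-G-H-pathwidth} yields the corollary.
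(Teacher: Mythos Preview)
Your proposal is correct and is, in fact, more detailed than the paper itself: the paper states this corollary without proof, treating it as an immediate consequence of \Cref{lem:bounded-pathwidth-G-H} and \Cref{cor:aug-G-H-pathwidth}. Your explicit discussion of vertex naming via $\Omc(\log |x|)$-bit index tuples, log-space adjacency tests, and sequential composition of log-space transducers spells out exactly the routine verification the paper leaves implicit; the only minor slip is the stray reference to a counter $j\in[t]$ and to \Cref{lem:bounded-pathwidth-G}, which pertain to the composition setting (\Cref{cor:composition-space}) rather than to this single-instance corollary, but this does not affect the argument.
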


\begin{corollary}\label{cor:composition-space}
Given $t$ \textsc{MMO} instances $(H_j, \mathcal{P}_{H_j}, \sigma_j, r_j)$, where for each $j \in [t]$, $|V(H_j)| = n$, $|E(H_j)| = m$, $\sigma_j$ is such that $\sum_{e_j \in E(H_j)} \sigma_j(e_j) = \bm{\sigma}$ and $r_j = r \in \mathbb{Z}_{+}$ (integers are given in unary), one can build a log-space transducer that outputs a path decomposition (of an augmented subdivision) of the graph~$G_t$ (resp.\ a path decomposition of the graph~$\hat{G}_t$), with width at most $\max_{j \in [t]} pw(H_j) + 10$ (resp.\ $\max_{j \in [t]} pw(H_j) + 12$), along with a representation of the graph, any subset of its vertices, and an integer with at most a polynomial (in the input size) number of bits. 
\end{corollary}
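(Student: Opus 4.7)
The plan is to layer the corollary on top of Lemma~\ref{lem:bounded-pathwidth-G} and Corollary~\ref{cor:cor-bounded-pathwidth-G'-t}, which already supply log-space transducers producing path decompositions of $G_t$ (and its augmented subdivisions) of width at most $\max_{j\in[t]} pw(H_j)+10$ and of $\hat{G}_t$ (and its subdivisions) of width at most $\max_{j\in[t]} pw(H_j)+12$. What remains is to argue that the same transducer, or a composition of it with a few additional log-space routines, can also emit the graph itself, any relevant vertex subset, and the relevant integers, all using only logarithmically many work-tape bits at a time.

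The key step is to exploit the fact that every vertex of $G_t$ and $\hat{G}_t$ has a canonical name built from a constant-length tag (one of $a, b, e, w, x, y, z, s, d, f, g, o, p, h, q, \textsc{Select}, \textsc{Unselect}$) together with one or more indices drawn from $[t]$, $[n]$, $[m]$, $[r+1]$, $[\bm{\sigma}+1]$, or a vertex/edge identifier of some $H_j$. Since the integers $n$, $m$, $r$, and $\bm{\sigma}$ are all bounded polynomially in the input size (they are given in unary), each such name fits in $O(\log(\sum_j |x_j|))$ bits on the work tape. The transducer therefore enumerates vertex names in lexicographic order, and for each ordered pair of names decides adjacency by a constant-depth check against the input (e.g., whether two indices differ by one, or whether an edge lies in some $E(H_j)$, which requires only a single scan of the input tape). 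Vertex subsets such as the initial token configuration and integers such as $\budget$ and $k$ are explicit polynomial-time computable functions of $n$, $m$, $r$, $\bm{\sigma}$, and $t$, and can therefore be produced on the fly using a few counters.

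For the augmented or subdivided versions, I would handle subdivisions by Claim~\ref{cla:log-space-transducer} (applied pointwise to each original edge) and handle the pendants attached to $h$ and $q$ by simply emitting their introduce/forget bags and their incident edges inside the outer enumeration loop, as in the proof of Lemma~\ref{lem:bounded-pathwidth-G}. None of these modifications increases the amount of state needed beyond $O(\log(\text{input size}))$ bits, and all integer outputs remain polynomial-sized because they depend only on the input parameters.

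The main obstacle, and the only place where care is required, is ensuring that the transducer never materialises an entire bag, vertex list, or intermediate data structure on the work tape; it must print each symbol as soon as it is generated. This is already how the transducers in Lemma~\ref{lem:bounded-pathwidth-G} and Corollary~\ref{cor:cor-bounded-pathwidth-G'-t} operate — they maintain only a constant number of index counters — and the additional routines described above (enumerating vertices and edges, printing subsets and integers) follow the same discipline. With that in place, all outputs are produced within logarithmic work space and have polynomially many bits, establishing the corollary.
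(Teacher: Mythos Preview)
Your proposal is correct and aligns with the paper's approach: the paper states this corollary without proof, treating it as an immediate consequence of Lemma~\ref{lem:bounded-pathwidth-G} and Corollary~\ref{cor:cor-bounded-pathwidth-G'-t}, and your argument supplies exactly the routine log-space bookkeeping (canonical vertex names, index counters, on-the-fly adjacency checks) that justifies this. If anything, you have been more explicit than the paper about why emitting the graph, the vertex subset, and the integer also fits in log space.
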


\section{Vertex Cover Discovery}
\label{sec:vc}
Fellows et al.~\cite{fellows2023solution} showed that \textsc{VC-D} is in $\FPT$ with respect to parameter $k$ on general graphs and in $\FPT$ with respect to parameter $\budget$ on nowhere dense classes of graphs.
We show in this section that the problem has a polynomial kernel with respect to parameter $k$.
With respect to the parameter~$\budget + pw$, where $pw$ is the pathwidth of the input graph, we show that the problem does not have a polynomial kernel unless $\NP \subseteq \cp$.

\begin{theorem}\label{thm:VC-parameter-k}
\textsc{VC-D} has a kernel of size $\Oof(k^2)$. 
\end{theorem}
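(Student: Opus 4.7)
My plan is to adapt the classical Buss kernelization for \textsc{Vertex Cover} to the discovery setting. Observe that token sliding preserves the total number of tokens, so the final configuration is a vertex cover of size exactly $k$; the classical structural arguments for the non-existence of a vertex cover of size at most $k$ therefore still apply. As a first reduction, I would delete every isolated vertex of $G$ not in $S$, since such vertices can neither help cover edges nor receive/emit tokens.

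Next I would identify the set $F := \{v \in V(G) : \deg_G(v) > k\}$ of \emph{forced} vertices. Any vertex cover of $G$ of size at most $k$ contains $F$: otherwise the more than $k$ neighbors of some omitted $v \in F$ would all have to be taken. If $|F| > k$, output a trivial no-instance. Every vertex in $G' := G - F$ has degree at most $k$, and a size-$k$ vertex cover of $G$ restricts to one of $G'$ of size at most $k - |F|$ covering at most $k(k - |F|) \le k^2$ edges; so if $|E(G')| > k^2$, output a trivial no-instance. Thus $G'$ has at most $2k^2$ non-isolated vertices.

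The main work is bounding the set $L \subseteq V(G) \setminus F$ of vertices that are isolated in $G'$ but adjacent in $G$ to $F$. These can be arbitrarily numerous (think of many leaves at a single forced vertex), yet they are covered automatically by $F$ in any vertex cover containing $F$ and matter only as (i) initial token positions for tokens in $S$, (ii) transit points for tokens sliding between vertices of $F$, and (iii) witnesses maintaining the status $\deg_G(f) > k$ of each $f \in F$. I would keep: every $L$-vertex in $S$ (at most $k$); for each $f \in F$, enough $L$-neighbors of $f$ to guarantee $\deg_{G^\star}(f) \ge k+1$ in the kernel $G^\star$ (at most $k+1$ per $f$, total $O(k^2)$); and for each pair $\{f_1, f_2\} \subseteq F$ with a common $L$-neighbor in $G$, one representative $L$-vertex adjacent to both (at most $\binom{k}{2}$). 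All other $L$-vertices are deleted. Combined with $|F| \le k$ and the $\le 2k^2$ non-isolated vertices of $G'$, the kernel has $O(k^2)$ vertices and edges. Finally I would cap $b$ at $k \cdot |V(G^\star)| = O(k^3)$, which fits in $O(\log k)$ bits, yielding an overall size bound of $O(k^2)$.

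The main obstacle is correctness of the $L$-reduction. The forward direction is immediate since the kernel is an induced subgraph of $G$ with the same token placement. For the backward direction, given a discovery sequence in $G$, any slide that enters a deleted $L$-vertex $v$ with $N(v) \cap F \supseteq \{f_1, f_2\}$ must be rerouted through a retained representative whose neighborhood also contains $\{f_1, f_2\}$. Since tokens slide one at a time and an $L$-vertex is occupied only transiently during a transit, a single representative per pair can be reused by many successive transits without collision, and the kept $L$-neighbors of each forced vertex provide enough ``parking'' capacity. I expect the heart of the proof to be this simulation argument, which must respect the token-sliding ``no two tokens on one vertex'' rule and confirm that the number of slides does not grow.
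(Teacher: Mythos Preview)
Your approach is essentially the paper's: both adapt Buss's kernelization by keeping the forced set $F$, the non-isolated part of $G-F$, all token-carrying vertices, one $L$-representative per pair in $F$ for rerouting transits, and enough $L$-neighbors of each $f\in F$ to keep its degree above $k$ so that any size-$k$ cover of the kernel must contain $F$ and hence covers every deleted edge of $G$. The paper's equivalence argument is exactly the simulation you sketch (replace a deleted $L$-vertex on a shortest discovery sequence by the retained pair-representative $x_{uv}$), and it too glosses over the collision issue you flag; your capping of $b$ is an additional bookkeeping step the paper omits.
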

\begin{proof}
Let $(G, S, \budget)$ be an instance of \textsc{VC-D}.
Let $G'$ be the graph obtained from $G$ by deleting the vertices of degree greater than $k$. If $G'$ has more than $k^2$ edges or more than $2k^2$ non-isolated vertices we can reject the instance. 
The vertices of degree greater than $k$ must be in any vertex cover of size at most $k$. 
The remaining vertices of the vertex cover can cover at most $k$ edges each, as we have at most $k$ vertices for the vertex cover, there can be at most $k^2$ edges left. These have at most $2k^2$ endpoints. 

We now construct the kernel $(H,S,b)$. 
We define $H$ to be the graph obtained from $G$ as follows: We keep all non-isolated vertices of $G'$ as well as all vertices that contain a token. Furthermore, we keep all vertices of $G$ with degree greater than $k$ (but not all their neighbors). 
For all $u,v$ with degree greater than $k$ in $G$, if $N_G(u)\cap N_G(v)$ contains only isolated vertices in $G'$, then we keep one arbitrary vertex of this intersection and name it $x_{uv}$. 
These vertices need to be kept to ensure that all discovery sequences survive in $H$. 
Finally, for every vertex $u$ of degree greater than $k$ in $G$, if $u$ has degree $d<k+1$ in $H$, we add arbitrary $k+1-d$ isolated neighbors to $u$. 

We claim that $(H,S,\budget)$ is equivalent to $(G,S,\budget)$ and has at most $3k^2+2k$ vertices: 
at most~$k$ vertices with degree at least $k+1$ (yielding $k(k+1)$ vertices), at most 
$2k^2$ non-isolated vertices from $G'$ and at most~$k$ isolated vertices with tokens on them. 

It remains to show that the instances  are equivalent. 
Assume $(G,S,\budget)$ is a yes-instance. 
Consider a shortest discovery sequence $S=C_0\vdash C_1\vdash\ldots \vdash C_\ell$ for $\ell\leq \budget$. 
We claim that there exists a discovery sequence $S=C_0\vdash C_1'\vdash \ldots \vdash C_{\ell-1}'\vdash C_\ell$ in $H$ of the same length that ends in the same configuration $C_\ell$ and that also constitutes a vertex cover in $H$. 
First observe that because we consider a shortest discovery sequence all $C_i$ do not contain isolated vertices of $G'$ unless they belong to $S$, or to $N_G(u)\cap N_G(v)$ for vertices $u,v$ with degree greater than $k$ in $G$. Furthermore,~$C_\ell$ contains no isolated vertices of $G'$ unless they belong to $S$. 
Now every slide along a vertex $x$ of $N(u)\cap N(v)$ that does not belong to $H$ can be replaced by the slide along $x_{uv}$, that is,  either $C_i'=C_i$ or $C_i'=(C_i\setminus \{x\})\cup \{x_{uv}\}$. As $C_\ell$ is a vertex of $G$ and $H$ is a subgraph of $G$, also $C_\ell$ is a vertex cover of $H$. 

Conversely, assume $(H,S,\budget)$ is a yes-instance. 
As $H$ is a subgraph of $G$, every discovery sequence is also a discovery sequence in $G$. 
It remains to show that every vertex cover of size $k$ of~$H$ is also a vertex cover of $G$. 
This easily follows from the fact that every vertex of degree greater than $k$ in $G$ is also a vertex of degree greater than $k$ in $H$ and every vertex cover of $H$ must contain all vertices of degree greater than $k$. 
This implies that all edges between high degree vertices and isolated vertices in $G'$ are covered in $G$. 
All other edges appear also in $H$ and are hence covered in~$H$ as well as in~$G$. 
\end{proof}

\begin{theorem}\label{thm:VC-D-pathwidth}
\textsc{VC-D} is \XNLP-hard parameterized by pathwidth.
\end{theorem}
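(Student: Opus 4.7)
I propose a pl-reduction from \textsc{MMO}, which is \XNLP-complete with respect to pathwidth when a path decomposition is given in the input by Bodlaender et al.~\cite{bodlaender2022problems}, to \textsc{VC-D} parameterized by pathwidth. Given an \textsc{MMO} instance $(H,\mathcal{P}_H,\sigma,r)$, the reduction outputs the \textsc{VC-D} instance $(G_H,\mathcal{P}_{G_H},S,\budget)$, where $G_H$ is the foundational graph from \Cref{sec:foundational}, the path decomposition $\mathcal{P}_{G_H}$ of width $pw(H)+6$ is produced by the log-space transducer of \Cref{lem:bounded-pathwidth-G-H}, and $S$ and $\budget$ are chosen so that orientations of $H$ with maximum outdegree at most $r$ correspond bijectively to discovery sequences of length at most $\budget$ ending in a vertex cover. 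Since \Cref{cor:reduction-space} allows the whole construction to be carried out by a log-space transducer, the reduction is a pl-reduction and transfers \XNLP-hardness.

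The intended encoding is as follows. The initial configuration $S$ is almost a vertex cover of $G_H$, the only uncovered edges being (i) the selector edges $e^u e^v$ inside each $G_e^{sel}$, and (ii) a prescribed family of pendant edges inside each MMO-vertex-gadget $G_v$. Tokens are placed on each $b_e^i$ (covering $a_e^i b_e^i$ and the cross edges to the $z$-vertices of $G_u, G_v$), on $b_e^{\sigma(e)+1}$ (covering the edges to $e^u, e^v$ and $a_e^{\sigma(e)+1}$), on all $x_v^i$ for $i \in [r+1]$ (covering the capacity pendants at $w_v$), and on one side of each pendant $y_e^{v(i)} z_e^{v(i)}$, in a way consistent with no orientation having been chosen yet. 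Orienting an edge $e = uv$ from $u$ to $v$ then corresponds to sliding the selector token at $b_e^{\sigma(e)+1}$ onto $e^v$ (covering $e^u e^v$) and routing the $\sigma(e)$ weight tokens from the $b_e^i$ into the $G_u$-side of the graph (where they cover the $y_e^{u(i)} z_e^{u(i)}$ pendants). The budget $\budget$, of order $|E(H)| + r|V(H)|$, is calibrated so that every slide must participate in the intended semantics, and the star on $\{x_v^1,\ldots,x_v^{r+1}\}$ acts as a capacity gadget: absorbing more than $r$ incoming weight tokens at $v$ would force the cover to include $w_v$ \emph{and} handle extra $w_v y_e^{v(i)}$ edges, overshooting the budget and mirroring the outdegree constraint.

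The forward direction --- translating an outdegree-$\le r$ orientation of $H$ into a discovery sequence of length at most $\budget$ in $G_H$ terminating in a vertex cover --- reduces to unfolding the intended semantics and checking that each orientation choice contributes the predicted number of slides. I expect the main obstacle to lie in the converse direction: showing that \emph{any} discovery sequence of length at most $\budget$ ending in a vertex cover of $G_H$ must in fact follow the encoding, that is, the selector token of each $G_e^{sel}$ ends up on exactly one of $e^u, e^v$ and at most $r$ weight tokens migrate into each $G_v$. This will be handled by local exchange arguments that normalise an arbitrary short completion: tightness of $\budget$ rules out ``wasted'' slides, each token is then forced along one of a small number of prescribed routes through the edge- and vertex-gadgets, and reading off the selector choice at each $G_e^{sel}$ yields an orientation of $H$ whose in-weight at each $v$ equals the number of weight tokens absorbed by $G_v$, giving the required \textsc{MMO}-feasible orientation.
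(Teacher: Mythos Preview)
Your high-level strategy---a pl-reduction from \textsc{MMO} using the foundational gadgets of \Cref{sec:foundational}, with \Cref{cor:reduction-space} handling the space bound---matches the paper. However, the concrete instance you sketch diverges from the paper's construction in ways that break the argument, and the differences are not cosmetic.

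First, the paper reduces to an instance on $\tilde{G}_H$, not $G_H$: the supplier gadget $G_s$ is essential. In the paper's encoding the uncovered edges are the $rn$ edges between $c(X)$ and $X$ together with the $m$ edges $a_e^{\sigma(e)+1}b_e^{\sigma(e)+1}$; the former must each receive a token, but only $\bm\sigma$ tokens can be supplied from the edge side (one per unit of edge weight), so the remaining $rn-\bm\sigma$ are delivered by the donor paths of $G_s$. Your plan omits the supplier entirely and does not explain how the budget balances when $\bm\sigma<rn$. Second, the paper subdivides the $a_e^ib_e^i$ and $w_vx_v^i$ edges (introducing the vertices in $C$ and $c(X)$) precisely so that tokens on $b_e^i$ and $w_v$ may leave without uncovering pendants; you omit these subdivisions. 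Third, your token placement is internally inconsistent: if neither $e^u$ nor $e^v$ carries a token (so that $e^ue^v$ is uncovered), then the edges $e^v y_e^{v(i)}$ force all $y_e^{v(i)}$ into $S$, which covers the $y$--$z$ pendants and leaves no ``prescribed family of pendant edges'' to be discovered; and sliding the token off $b_e^{\sigma(e)+1}$ immediately uncovers $a_e^{\sigma(e)+1}b_e^{\sigma(e)+1}$. In the paper both $e^u$ and $e^v$ lie in $S$, the selector slide goes in the \emph{opposite} direction ($e^u$ or $e^v$ moves onto $b_e^{\sigma(e)+1}$), and the tokens routed into the vertex gadgets come from $Y$, not from $B$.

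In short, the overall architecture is right, but you should rebuild the instance on $\tilde{G}_H$ with the subdivisions of step~(a)--(b), the initial configuration $S=C\cup B\cup Y\cup\{e^u,e^v\}\cup\{w_v\}\cup\{s\}\cup\{d_1^i,d_3^i\}$, and budget $\budget=m+3rn$; the forward and backward directions are then \Cref{lem:hardness-VS-pathwidth-forward} and \Cref{lem:hardness-VC-pathwidth-backward}.
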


As stated in \Cref{sec:foundational}, we present a pl-reduction from \textsc{MMO}. 
Let $(H, \mathcal{P}_H, \sigma, r)$ be an instance of \textsc{MMO} where $H$ is a bounded pathwidth graph, $|V(H)| = n$, $|E(H)| = m$, $\sigma: E(H) \rightarrow \mathbb{Z}_+$ such that $\sum_{e \in E(H)} \sigma(e) = \bm{\sigma}$ and $r \in \mathbb{Z}_+$ (integers are given in unary).
We construct an instance $(\Tilde{G}_H, \mathcal{P}_{\Tilde{G}_H}, S, \budget)$ of \textsc{VC-D} as follows.
We form the graph $\Tilde{G}_H$ as outlined below (see \Cref{fig:VC-pathwidth-reduction}):

\begin{enumerate}[itemsep=0pt, label=(\alph*)]
    \item We subdivide each edge $a^ib^i$ for each $i \in [\sigma(e)]$ for each edge $e \in E(H)$, of a subgraph $G_e$ (which is the \textsc{MMO}-edge-$e$ described in \Cref{sec:foundational}), and add it to $\Tilde{G}_H$.
    We denote the introduced vertex (from a subdivision of an edge $a^ib^i$) by $c^i$. 
    We let $C_e = \cup_{i \in [w(e)]} \text{ } \text{ } c^i$, $C = \cup_{e \in E(H)} C_e$.
    \item We subdivide each edge $w_v x^{v(i)}$ for $i \in [r]$, of a subgraph $G_v$ (which is the \textsc{MMO}-vertex-$v$ described in \Cref{sec:foundational}), for each vertex $v \in V(H)$, and add it to $\Tilde{G}_H$.
    We denote the introduced vertex (from a subdivision of an edge $w_v x^{v(i)}$) by $c(x^{v(i)})$. 
    We let $C(X_v) = \cup_{i \in [r]} \text{ } \text{ } c(x^{v(i)})$, $C(X) = \cup_{v \in V(H)} C(X_v)$.
    \item As described in \Cref{sec:foundational} (under the graph $G_H$ heading), we make each vertex $b^i_e$, for each edge $uv=e \in E(H)$, adjacent to the vertices in $Z_e^{v}$ and $Z_e^u$, and each vertex $e^v$ for each edge $uv=e \in E(H)$ adjacent to all vertices in $Y_e^v$.
    \item We let the supplier gadget be $G_s$ as described in \Cref{sec:foundational}, add it to $\Tilde{G}_H$, and make the vertex $s$ adjacent to all vertices in $X$.
\end{enumerate}

\begin{figure}
    \centering
    \begin{tikzpicture}

    \fill[blue!10] (-0.75,2.75) -- (0.75,2.75) -- (0.75,-0.25) -- (-0.75,-0.25) -- cycle; 
    \fill[blue!10] (0.5,0.1) -- (1.25,0.9) -- (1.25,0) -- (0.5,0) -- cycle; 
    \fill[blue!10] (0.5,-0.1) -- (1.25,-0.9) -- (1.25,0) -- (0.5, 0) -- cycle; 
    \fill[blue!10] (-0.75,-2) -- (0.75,-2) -- (0.75,-3.25) -- (-0.75,-3.25) -- cycle; 
    \fill[blue!10] (0.5,-2.9) -- (1.25,-2) -- (1.25,-3) -- (0.5,-3) -- cycle; 
    \fill[blue!10] (0.5,-3.1) -- (1.25,-4) -- (1.25,-3) -- (0.5, -3) -- cycle; 
    \fill[green!10] (8,1) -- (7.5,2.5) -- (8.5,2.5) -- (8,1) -- cycle;
    \fill[green!10] (8,1) -- (6,0) -- (6,3) -- (8,1) -- cycle;
    \fill[green!10] (8,1) -- (10,-0.25) -- (10,2.5) -- (8,1) -- cycle;
    \fill[green!10] (6,0) -- (6,3) -- (4.75,3) -- (4.75,0) -- cycle;
    \fill[green!10] (8,-1.5) -- (7.5,0) -- (8.5,0) -- (8,-1.5) -- cycle;
    \fill[green!10] (8,-1.5) -- (6,-1.75) -- (6,-1.25) -- (8,-1.5) -- cycle;
    \fill[green!10] (8,-1.5) -- (10,0) -- (10,-3) -- (8,-1.5) -- cycle;
    \fill[green!10] (6,-1.75) -- (6,-1.25) -- (4.75,-1.25) -- (4.75,-1.75) -- cycle;
    \fill[green!10] (8,-4.7) -- (7.5,-3.2) -- (8.5,-3.2) -- (8,-4.7) -- cycle;
    \fill[green!10] (8,-4.7) -- (6,-6.25) -- (6,-2.75) -- (8,-4.7) -- cycle;
    \fill[green!10] (8,-4.7) -- (10,-2.5) -- (10,-5.75) -- (8,-4.7) -- cycle;
    \fill[green!10] (6,-6.25) -- (6,-2.75) -- (4.75,-2.75) -- (4.75,-6.25) -- cycle;
    \draw[black] (8,1) -- (9.5,0.25);
    \draw[black] (8,1) -- (9.5,1);
    \draw[black] (8,1) -- (9.5,1.75);
    \draw[black] (8,1) -- (8,2);
    \draw[black] (8,1) -- (6,0.25);
    \draw[black] (8,1) -- (6,2.25);
    \draw[black] (8,1) -- (6,1.25);
    \draw[black] (8,-1.5) -- (9.5,-2.5);
    \draw[black] (8,-1.5) -- (9.5,-1.5);
    \draw[black] (8,-1.5) -- (9.5,-0.5);
    \draw[black] (8,-1.5) -- (6,-1.5);
    \draw[black] (8,-1.5) -- (8,-0.5);
    \draw[black] (8,-4.7) -- (9.5,-5.2);
    \draw[black] (8,-4.7) -- (9.5,-4.2);
    \draw[black] (8,-4.7) -- (9.5,-3.2);
    \draw[black] (8,-4.7) -- (8,-3.75);
    \draw[black] (8,-4.7)-- (6,-6);
    \draw[black] (8,-4.7) -- (6,-5.2);
    \draw[black] (8,-4.7)-- (6,-4.2);
    \draw[black] (8,-4.7) -- (6,-3.4);
    \draw[yellow] (1,-2.5) .. controls (2,-2.25) .. (6,-1.5);
    \draw[yellow] (1,-3.5) .. controls (2.5,-5) and  (5,-5) .. (6,-6);    
    \draw[red] (0.5,-2.25) .. controls (1.5,-1.25) .. (5,-1.5);
    \draw[red] (0.5,-2.25) .. controls (2,-4.7) and  (5,-6) .. (5,-6);  
    \draw[red] (0.5,0.75) .. controls (1,1) .. (5,0.25);
    \draw[red] (0.5,1.5) .. controls (1,2.5) and  (2,2) .. (5,1.25);
    \draw[red] (0.5,2.25) .. controls (1,3) and  (2,3.5) .. (5,2.25);  
    \draw[blue] (5,-10) .. controls (8,-9) .. (9.5,-5.2);
    \draw[blue] (5,-10) .. controls (10,-9.25) and (11,-5) .. (9.5,-4.2);
    \draw[blue] (5,-10) .. controls (10.25,-10.25) and (11.25,-5) .. (9.5,-3.2);
    \fill[red!10] (5,-10) -- (4.5,-9) -- (5.5,-9) -- (5,-10) -- cycle;
    \fill[red!10] (5,-10) -- (3,-8) -- (3,-12) -- (5,-10) -- cycle;
    \fill[red!10] (3,-8) -- (1,-8) -- (1,-12) -- (3,-12) -- cycle;
    \node[circle, fill=black, inner sep=1pt, draw=black, fill=black, label=below:{\scalebox{0.5}{$w_u$}}] at (8,1) {};
    \node[circle, fill=black, inner sep=1pt, draw=black, fill=black, label=below:{\scalebox{0.5}{$w_w$}}] at (8,-1.5) {};
    \node[circle, fill=black, inner sep=1pt, draw=black, fill=black, label=below:{\scalebox{0.5}{$w_v$}}] at (8,-4.7) {};
    \node[circle, fill=black, inner sep=1pt, draw=black, fill=white, label=right:{\scalebox{0.5}{$x_u^1$}}] at (9.5,0.25) {};
    \node[circle, fill=black, inner sep=1pt, draw=black, fill=white, label=above:{\scalebox{0.5}{$c(x_u^1)$}}] at (9,0.5) {};
    \node[circle, fill=black, inner sep=1pt, draw=black, fill=white, label=right:{\scalebox{0.5}{$x_u^2$}}] at (9.5,1) {};
    \node[circle, fill=black, inner sep=1pt, draw=black, fill=white, label=above:{\scalebox{0.5}{$c(x_u^2)$}}] at (9,1) {};
    \node[circle, fill=black, inner sep=1pt, draw=black, fill=white, label=right:{\scalebox{0.5}{$x_u^3$}}] at (9.5,1.75) {};
    \node[circle, fill=black, inner sep=1pt, draw=black, fill=white, label=above:{\scalebox{0.5}{$c(x_u^3)$}}] at (9,1.5) {};
    \node[circle, fill=black, inner sep=1pt, draw=black, fill=white, label=right:{\scalebox{0.5}{$x_w^1$}}] at (9.5,-2.5) {};
    \node[circle, fill=black, inner sep=1pt, draw=black, fill=white, label=above:{\scalebox{0.5}{$c(x_w^1)$}}] at (9,-2.15) {};
    \node[circle, fill=black, inner sep=1pt, draw=black, fill=white, label=right:{\scalebox{0.5}{$x_w^2$}}] at (9.5,-1.5) {};
    \node[circle, fill=black, inner sep=1pt, draw=black, fill=white, label=above:{\scalebox{0.5}{$c(x_w^2)$}}] at (9,-1.5) {};
    \node[circle, fill=black, inner sep=1pt, draw=black, fill=white, label=right:{\scalebox{0.5}{$x_w^3$}}] at (9.5,-0.5) {};
    \node[circle, fill=black, inner sep=1pt, draw=black, fill=white, label=above:{\scalebox{0.5}{$c(x_w^3)$}}] at (9,-0.8) {};
    \node[circle, fill=black, inner sep=1pt, draw=black, fill=white, label=right:{\scalebox{0.5}{$x_v^1$}}] at (9.5,-5.2) {};
    \node[circle, fill=black, inner sep=1pt, draw=black, fill=white, label=above:{\scalebox{0.5}{$c(x_v^1)$}}] at (8.9,-5) {};
    \node[circle, fill=black, inner sep=1pt, draw=black, fill=white, label=right:{\scalebox{0.5}{$x_v^2$}}] at (9.5,-4.2) {};
    \node[circle, fill=black, inner sep=1pt, draw=black, fill=white, label=above:{\scalebox{0.5}{$c(x_v^2)$}}] at (8.9,-4.4) {};
    \node[circle, fill=black, inner sep=1pt, draw=black, fill=white, label=right:{\scalebox{0.5}{$x_v^3$}}] at (9.5,-3.2) {};
    \node[circle, fill=black, inner sep=1pt, draw=black, fill=white, label=above:{\scalebox{0.5}{$c(x_v^3)$}}] at (8.9,-3.8) {};
    \node[circle, fill=black, inner sep=1pt, draw=black, fill=white, label=above right:{\scalebox{0.5}{$x_u^{4}$}}] at (8,2) {};
    \node[circle, fill=black, inner sep=1pt, draw=black, fill=white, label=above right:{\scalebox{0.5}{$x_w^{4}$}}] at (8,-0.5) {};
    \node[circle, fill=black, inner sep=1pt, draw=black, fill=white, label=above right:{\scalebox{0.5}{$x_v^{4}$}}] at (8,-3.75) {};
    \draw[black] (5,-1.5) -- (6,-1.5) node[circle, draw=black, fill=black, inner sep=1pt, label=above:{\scalebox{0.5}{$y_{e_2}^{w(1)}$}}, pos=1]{} node[circle, draw=black, fill=white, inner sep=1pt, label=above:{\scalebox{0.5}{$z_{e_2}^{w(1)}$}}, pos=0]{};
    \draw[black] (5,1.25) -- (6,1.25) node[circle, draw=black, fill=black, inner sep=1pt, label=above:{\scalebox{0.5}{$y_{e_1}^{u(2)}$}}, pos=1]{} node[circle, draw=black, fill=white, inner sep=1pt, label=above:{\scalebox{0.5}{$z_{e_1}^{u(2)}$}}, pos=0]{};
    \draw[black] (5,0.25) -- (6,0.25) node[circle, draw=black, fill=black, inner sep=1pt, label=above:{\scalebox{0.5}{$y_{e_1}^{u(1)}$}}, pos=1]{}  node[circle, draw=black, fill=white, inner sep=1pt, label=above:{\scalebox{0.5}{$z_{e_1}^{u(1)}$}}, pos=0]{};
    \draw[black] (5,2.25) -- (6,2.25) node[circle, draw=black, fill=black, inner sep=1pt, label=above:{\scalebox{0.5}{$y_{e_1}^{u(3)}$}}, pos=1]{} node[circle, draw=black, fill=white, inner sep=1pt, label=above:{\scalebox{0.5}{$z_{e_1}^{v(3)}$}}, pos=0]{};
    \draw[black] (5,-3.4) -- (6,-3.4) node[circle, draw=black, fill=black, inner sep=1pt, label=above:{\scalebox{0.5}{$y_{e_1}^{v(2)}$}}, pos=1]{} node[circle, draw=black, fill=white, inner sep=1pt, label=above:{\scalebox{0.5}{$z_{e_1}^{v(2)}$}}, pos=0]{};
    \draw[black] (5,-4.2) -- (6,-4.2) node[circle, draw=black, fill=black, inner sep=1pt, label=above:{\scalebox{0.5}{$y_{e_1}^{v(3)}$}}, pos=1]{} node[circle, draw=black, fill=white, inner sep=1pt, label=above:{\scalebox{0.5}{$z_{e_1}^{v(3)}$}}, pos=0]{};
    \draw[black] (5,-5.2) -- (6,-5.2) node[circle, draw=black, fill=black, inner sep=1pt, label=above:{\scalebox{0.5}{$y_{e_1}^{v(1)}$}}, pos=1]{} node[circle, draw=black, fill=white, inner sep=1pt, label=above:{\scalebox{0.5}{$z_{e_1}^{v(1)}$}}, pos=0]{};
    \draw[black] (5,-6) -- (6,-6) node[circle, draw=black, fill=black, inner sep=1pt, label=above:{\scalebox{0.5}{$y_{e_2}^{v(1)}$}}, pos=1]{} node[circle, draw=black, fill=white, inner sep=1pt, label=above:{\scalebox{0.5}{$z_{e_2}^{v(1)}$}}, pos=0]{};
    \draw[black] (-0.5,0.75) -- (0.5,0.75) node[circle, draw=black, fill=white, inner sep=1pt, label=above:{\scalebox{0.5}{$a^1_{e_1}$}}, pos=0]{} node[circle, draw=black, fill=black, inner sep=1pt, label=above:{\scalebox{0.5}{$b^1_{e_1}$}}, pos=1]{};
    \node[draw, circle, fill=black, inner sep=1pt, label=above:{\scalebox{0.5}{$c^1_{e_1}$}}] at (0,0.75) {};
    \draw[black] (-0.5,1.5) -- (0.5,1.5) node[circle, draw=black, fill=white, inner sep=1pt, label=above:{\scalebox{0.5}{$a^2_{e_1}$}}, pos=0]{} node[circle, draw=black, fill=black, inner sep=1pt, label=above:{\scalebox{0.5}{$b^2_{e_1}$}}, pos=1]{};
    \node[draw, circle, fill=black, inner sep=1pt, label=above:{\scalebox{0.5}{$c^2_{e_1}$}}] at (0,1.5) {};
    \draw[black] (-0.5,2.25) -- (0.5,2.25) node[circle, draw=black, fill=white, inner sep=1pt, label=above:{\scalebox{0.5}{$a^3_{e_1}$}}, pos=0]{} node[circle, draw=black, fill=black, inner sep=1pt, label=above:{\scalebox{0.5}{$b^3_{e_1}$}}, pos=1]{};
    \node[draw, circle, fill=black, inner sep=1pt, label=above:{\scalebox{0.5}{$c^3_{e_1}$}}] at (0,2.25) {};
    \draw[black] (1,0.5) -- (1, -0.5);
    \draw[black] (0.5,0) -- (1,0.5) node[circle, draw=black, fill=black, inner sep=1pt, label=right:{\scalebox{0.5}{$e_1^v$}}, pos=1]{};
    \draw[black] (0.5,0) -- (1,-0.5) node[circle, draw=black, fill=black, inner sep=1pt, label=right:{\scalebox{0.5}{$e_1^u$}}, pos=1]{};
    \draw[black] (-0.5,0) -- (0.5,0) node[circle, draw=black, fill=white, inner sep=1pt, label=above:{\scalebox{0.5}{$a^4_{e_1}$}}, pos=0]{} node[circle, draw=black, fill=white, inner sep=1pt, label=above:{\scalebox{0.5}{$b^4_{e_1}$}}, pos=1]{};
    \draw[black] (-0.5,-2.25) -- (0.5,-2.25) node[circle, draw=black, fill=white, inner sep=1pt, label=above:{\scalebox{0.5}{$a^1_{e_2}$}}, pos=0]{} node[circle, draw=black, fill=black, inner sep=1pt, label=above:{\scalebox{0.5}{$b^1_{e_2}$}}, pos=1]{};
    \node[draw, circle, fill=black, inner sep=1pt, label=above:{\scalebox{0.5}{$c^1_{e_2}$}}] at (0,-2.25) {};
    \draw[black] (1,-2.5) -- (1, -3.5);
    \draw[black] (0.5,-3) -- (1,-2.5) node[circle, draw=black, fill=black, inner sep=1pt, label=right:{\scalebox{0.5}{$e_2^w$}}, pos=1]{};
    \draw[black] (0.5,-3) -- (1,-3.5) node[circle, draw=black, fill=black, inner sep=1pt, label=right:{\scalebox{0.5}{$e_2^u$}}, pos=1]{};
    \draw[black] (-0.5,-3) -- (0.5,-3) node[circle, draw=black, fill=white, inner sep=1pt, label={above:{\scalebox{0.5}{$a^2_{e_2}$}}}, pos=0]{} node[circle, draw=black, fill=white, inner sep=1pt, label=above:{\scalebox{0.5}{$b^2_{e_2}$}}, pos=1]{};
    \draw[black] (5,-10) -- (5,-9.5);
    \draw[black] (5,-10)-- (3,-11.5);
    \draw[black] (5,-10)-- (3,-10.75);
    \draw[black] (5,-10) -- (3,-10);
    \draw[black] (5,-10) -- (3,-9.25);
    \draw[black] (5,-10) -- (3,-8.5);
    \draw[black] (2.2,-11.5) -- (3,-11.5);
    \draw[black] (2.2,-10.75) -- (3,-10.75);
    \draw[black] (2.2,-10) -- (3,-10);
    \draw[black] (2.2,-9.25) -- (3,-9.25); 
    \draw[black] (2.2,-8.5) -- (3,-8.5); 
    \draw[black] (1.4,-11.5) -- (2.2,-11.5);
    \draw[black] (1.4,-10.75) -- (2.2,-10.75);
    \draw[black] (1.4,-10) -- (2.2,-10);
    \draw[black] (1.4,-9.25) -- (2.2,-9.25); 
    \draw[black] (1.4,-8.5) -- (2.2,-8.5);
    \node[circle, fill=black, inner sep=1pt, draw=black, fill=black, label=below right:{\scalebox{0.5}{$s$}}] at (5,-10) {};
    \node[circle, fill=black, inner sep=1pt, draw=black, fill=white, label=above right:{\scalebox{0.5}{$d_1^6$}}] at (5,-9.5) {}; 
    \node[circle, draw=black, fill=black, inner sep=1pt, label=above:{\scalebox{0.5}{$d^1_1$}}] at (3,-11.5) {};
    \node[circle, draw=black, fill=white, inner sep=1pt, label=above:{\scalebox{0.5}{$d^1_2$}}] at (2.2,-11.5) {};
    \node[circle, draw=black, fill=black, inner sep=1pt, label=above:{\scalebox{0.5}{$d^1_3$}}] at (1.4,-11.5) {};
    \node[circle, draw=black, fill=white, inner sep=1pt, label=above:{\scalebox{0.5}{$d^2_2$}}] at (2.2,-10.75) {};
    \node[circle, draw=black, fill=black, inner sep=1pt, label=above:{\scalebox{0.5}{$d^2_1$}}] at (3,-10.75) {};
    \node[circle, draw=black, fill=white, inner sep=1pt, label=above:{\scalebox{0.5}{$d^3_2$}}] at (2.2,-10) {};
    \node[circle, draw=black, fill=black, inner sep=1pt, label=above:{\scalebox{0.5}{$d^3_1$}}] at (3,-10) {};
    \node[circle, draw=black, fill=white, inner sep=1pt, label=above:{\scalebox{0.5}{$d^4_2$}}] at (2.2,-9.25) {};
    \node[circle, draw=black, fill=black, inner sep=1pt, label=above:{\scalebox{0.5}{$d^4_1$}}] at (3,-9.25) {};
    \node[circle, draw=black, fill=black, inner sep=1pt, label=above:{\scalebox{0.5}{$d^2_3$}}] at (1.4,-10.75) {};
    \node[circle, draw=black, fill=black, inner sep=1pt, label=above:{\scalebox{0.5}{$d^3_3$}}] at (1.4,-10) {};
    \node[circle, draw=black, fill=black, inner sep=1pt, label=above:{\scalebox{0.5}{$d^4_3$}}] at (1.4,-9.25) {};

    \node[circle, draw=black, fill=black, inner sep=1pt, label=above:{\scalebox{0.5}{$d^5_1$}}] at (3,-8.5) {};
    \node[circle, draw=black, fill=white, inner sep=1pt, label=above:{\scalebox{0.5}{$d^5_2$}}] at (2.2,-8.5) {};
    \node[circle, draw=black, fill=black, inner sep=1pt, label=above:{\scalebox{0.5}{$d^5_3$}}] at (1.4,-8.5) {};
    \end{tikzpicture}
    \caption{\footnotesize Parts of the graph $\Tilde{G}_H$ constructed by the reduction of Theorem~\ref{thm:VC-D-pathwidth} given an instance $(H, \mathcal{P}_H, \sigma, r)$, where~$H$ has three vertices $u, v$, and $w$, and two edges $e_1=uv$ and $e_2=uw$, and $r = 3$. Additionally, $\sigma(e_1) = 3$ and $\sigma(e_2) = 1$. For clarity, the edges between the vertices in $B_{e_1}$ and $Z_{e_1}^u$ are missing. The same applies for the edges between $e_1^v$ and the vertices of $Y_{e_1}^v$, the edges between $e_1^u$ and the vertices of $Y_{e_1}^u$ as well as some of the edges between~$s$ and the vertices in $X$. Red, yellow, and blue edges are used to highlight the different types of edges used to connect the subgraphs $G_{e_1}, G_{e_2}, G_{u}, G_{v}, G_{w}$, and $G_s$ of $\Tilde{G}_H$, vertices in black are in $S$ and those in white are not.}
    \label{fig:VC-pathwidth-reduction}
\end{figure}
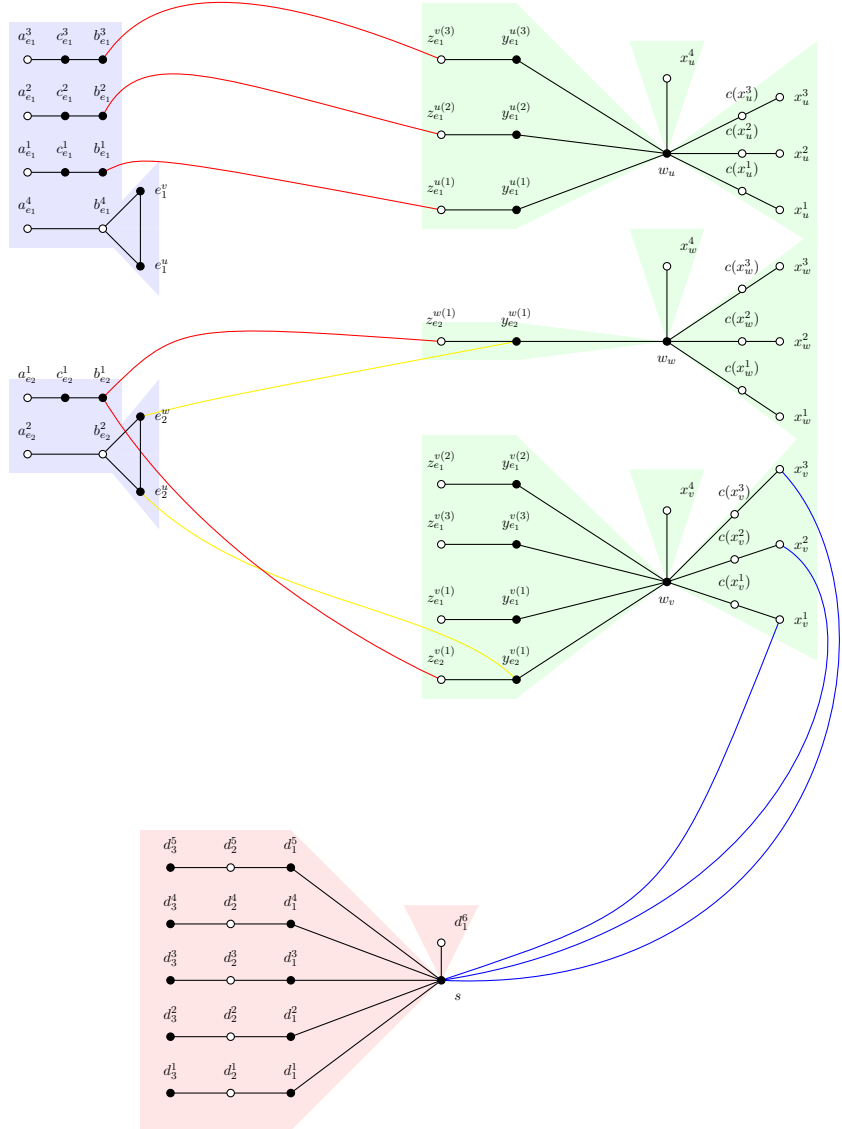
By \Cref{lem:bounded-pathwidth-G-H}, $\Tilde{G}_H$ is of bounded pathwidth (it is a subdivision of the original graph $\Tilde{G}_H$ constructed in \Cref{sec:foundational}).

We set $S = C \text{ } \cup \text{ } B \text{ } \cup \text{ } Y \text{ } \cup \text{ } \bigcup_{uv=e \in E(H)} (e^u \cup e^v) \text{ } \cup \text{ } \bigcup_{v \in V(H)} w_v \text{ } \cup \text{ } s  \text{ } \cup \text{ } \bigcup_{i \in [rn -  \bm{\sigma}]} (d_1^i \cup d_3^i)$ and $\budget = m + 3rn$.
Given that all integers are given in unary, the construction of the graph $\Tilde{G}_H$, or its path decomposition (as described in \Cref{lem:bounded-pathwidth-G-H}), and as a consequence the reduction, take time polynomial in the size of the input instance.
Additionally, by \Cref{cor:reduction-space}, this reduction is a pl-reduction.
We claim that $(H, \mathcal{P}_H, \sigma, r)$ is a yes-instance of \textsc{MMO} if and only if $(\Tilde{G}_H, \mathcal{P}_{\Tilde{G}_H}, S, \budget)$ is a yes-instance of \textsc{VC-D}.

\begin{lemma}\label{lem:hardness-VS-pathwidth-forward}
If $(H, \mathcal{P}_H, \sigma , r)$ is a yes-instance of \textsc{MMO}, then $(\Tilde{G}_H, \mathcal{P}_{\Tilde{G}_H}, S, \budget)$ is a yes-instance of \textsc{VC-D}.
\end{lemma}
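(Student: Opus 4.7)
The plan is to use the orientation $\lambda$ of $H$ witnessing the yes-answer for \textsc{MMO} (so that $\sigma_v^{\text{out}} \le r$ at every $v \in V(H)$) to construct a discovery sequence in $\Tilde{G}_H$ of length at most $\budget = m + 3rn$ that transforms $S$ into a vertex cover. A routine gadget-by-gadget check shows that only two families of edges in $\Tilde{G}_H$ fail to be covered by $S$: the $m$ edges of the form $a_e^{\sigma(e)+1}b_e^{\sigma(e)+1}$ (one per edge of $H$) and the $rn$ edges of the form $c(x^{v(i)})x^{v(i)}$ (one per vertex-target pair). Hence the $\budget$ slides must be spent precisely to place tokens on endpoints of these uncovered edges.

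The construction proceeds in two phases. Phase~1 uses $m$ slides: for each $e = uv$ oriented $u \to v$ under $\lambda$, slide the token from $e^v$ onto $b_e^{\sigma(e)+1}$, covering all $m$ edges of the first family by the new $b_e^{\sigma(e)+1}$-tokens. The crucial consequence of this phase is that at every vertex $v$ and every edge $e$ outgoing from $v$, the selector $e^v$ retains its token, so the $\sigma_v^{\text{out}} \le r$ tokens $y_e^{v(i)}$ (for such outgoing $e$ and $i \in [\sigma(e)]$) may be moved without destroying the final coverage of the cross-edges $e^v y_e^{v(i)}$.

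Phase~2 consists, at each vertex $v \in V(H)$, of an ordered routine of exactly $3r$ slides, for a total of $3rn$. Writing $k_v = \sigma_v^{\text{out}}$, the routine proceeds as follows. First, for each outgoing $(e, i)$ at $v$, perform one auxiliary slide $b_e^i \to z_e^{v(i)}$, pre-placing $k_v$ tokens on the $z$-vertices at $v$ (these will cover the $y$--$z$ edges at the end). Then run a $w_v$-refill chain of $2k_v$ slides, alternating $w_v \to c(x^{v(\cdot)})$ and $y_e^{v(i)} \to w_v$; the chain ends with $w_v$ full again (refilled by the last $y \to w_v$ slide) and $k_v$ of the $c(x^{v(\cdot)})$-vertices carrying tokens. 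Finally, for each of the $r - k_v$ still-uncovered $c(x^{v(i)})x^{v(i)}$ edges at $v$, route one donor-path supply via the three-slide sequence $s \to x^{v(i)}$, $d_1^j \to s$, $d_3^j \to d_2^j$ using a fresh donor path $j$. The counts are $k_v + 2k_v + 3(r - k_v) = 3r$ slides per vertex, and the counting identity $\sum_v (r - k_v) = rn - \bm\sigma$ equals the number of donor paths available, so supplier capacity is exactly what is needed.

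The main obstacle is verifying (a) that each individual slide is legal (source full, target empty) and (b) that the final configuration is a vertex cover. For (a), the sub-routines are ordered so that each $b_e^i \to z_e^{v(i)}$ precedes any conflicting move on $(e,i)$, and each $d_1^j \to s$ follows the preceding $s \to x^{v(\cdot)}$. For (b), final coverage splits by gadget: the edge gadgets are covered by the initial $c_e^i$- and the new $b_e^{\sigma(e)+1}$-tokens together with the surviving tail-side selectors $e^u$; the vertex gadgets at $v$ are covered by the refilled $w_v$, the chain-placed $c(x^{v(\cdot)})$-tokens, the supply-placed $x^{v(\cdot)}$-tokens, the pre-placed $z_e^{v(i)}$-tokens from the auxiliary $b$-slides, and the unmoved $y_e^{v(i)}$-tokens for incoming edges; the cross-gadget edges $b_e^i z_e^{u(i)}$ and $b_e^i z_e^{v(i)}$ are covered by the symmetric auxiliary $b$-slides performed at both endpoints of each $e$, with $\lambda$ pinning down which endpoint supplies which token. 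This verification completes the construction showing that $(\Tilde{G}_H, \mathcal{P}_{\Tilde{G}_H}, S, \budget)$ is a yes-instance of \textsc{VC-D}.
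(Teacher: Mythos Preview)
Your sequence of slides is exactly the one the paper gives, only organised per vertex instead of per edge: push the head-side selector onto $b_e^{\sigma(e)+1}$, slide $b_e^i$ to the tail-side $z$-vertex, refill $c(X_v)$ with the tail-side $y$-tokens through $w_v$, and serve the remaining $x$-targets from donor paths. The budget arithmetic and the slide-legality checks are fine.

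The problem is in your coverage verification. You write that ``the cross-gadget edges $b_e^i z_e^{u(i)}$ and $b_e^i z_e^{v(i)}$ are covered by the symmetric auxiliary $b$-slides performed at both endpoints of each $e$.'' But there is only one token on each $b_e^i$, and your Phase~2 performs the slide $b_e^i\to z_e^{\cdot(i)}$ only at the endpoint from which $e$ is \emph{outgoing}; no second $b$-slide occurs at the other endpoint (there, $e$ is incoming and contributes nothing to $k_v$). After the slide, $b_e^i$ is empty and the head-side $z$-vertex is still empty (it is not in $S$ and never receives a token), so the edge from $b_e^i$ to that $z$-vertex is uncovered in your final configuration. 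Concretely, take $H$ to be a single edge $e=uv$ with $\sigma(e)=r=1$: your seven slides leave both $b_e^1$ and the head-side $z_e^{\cdot(1)}$ tokenless, hence the edge between them is not covered. The paper performs the identical moves and simply does not spell out the check for this edge, so the issue is shared; but your explicit ``symmetric at both endpoints'' justification is false as stated.
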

\begin{proof}
Let $\lambda: E(H) \rightarrow V(H) \times V(H)$ be an orientation of the graph $H$ such that for each $v \in V(H)$, the total weight of the edges directed out of $v$ is at most $r$. 
In $\Tilde{G}_H$, the edges between $c(X)$ and $X$ are not covered.
The same applies for the edges between $A^+$ and $B^+$.
To fix that, for each edge $uv=e \in E(H)$ such that $\lambda(e) = (v, u)$:
\begin{itemize}[itemsep=0pt]
    \item we move, for each $i \in [\sigma(e)]$, the token on $y_e^{v(i)}$ to any free vertex of $c(X_v)$ and the token on $b_e^{i}$ to $z_e^{v(i)}$ (this consumes $3\sigma(e)$ slides),
    \item we slide the token on $e^u$ to $b_e^{\sigma(e)+1}$, hence covering $a_e^{\sigma(e)+1}b_e^{\sigma(e)+1}$ (this consumes $1$ slide).
\end{itemize}
This constitutes $3\bm{\sigma} + m$ slides.
We cover the $rn - \bm{\sigma}$ remaining uncovered edges between $c(X)$ and~$X$ using three slides per $D^i$ path for $i \in [rn - \bm{\sigma}]$ (by sliding the token on $d^i_3$ to $d^i_{2}$, and moving the token on $d^i_1$ to a token-free vertex in $X$).
\end{proof}

\begin{lemma}\label{lem:hardness-VC-pathwidth-backward}
If  $(\Tilde{G}_H, \mathcal{P}_{\Tilde{G}_H}, S, \budget)$ is a yes-instance of  \textsc{VC-D}, then $(H, \mathcal{P}_H, \sigma, r)$ is a yes-instance of \textsc{MMO}.
\end{lemma}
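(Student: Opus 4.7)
The plan is to extract from a given discovery sequence $\mathcal{D}$ (of length at most $\budget = m + 3rn$) an orientation $\lambda \colon E(H)\to V(H)\times V(H)$ of $H$ witnessing the outdegree bound $r$. First, for each edge $e=uv\in E(H)$, the edge $a_e^{\sigma(e)+1}b_e^{\sigma(e)+1}$ is initially uncovered (neither endpoint lies in $S$) but must be covered in the final vertex cover; since $a_e^{\sigma(e)+1}$ is a pendant of $b_e^{\sigma(e)+1}$, a token must first arrive at $b_e^{\sigma(e)+1}$ at some step of $\mathcal{D}$, and its only token-bearing neighbours at that moment are $e^u$ and $e^v$. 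I set $\lambda(e)=(v,u)$ if this first arriving slide originates from $e^u$, and $\lambda(e)=(u,v)$ otherwise.

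The core of the argument is a tight slide-counting bound. The only initially uncovered edges of $\Tilde{G}_H$ are the $m$ edges $a_e^{\sigma(e)+1}b_e^{\sigma(e)+1}$ and the $rn$ edges $c(x_v^{(i)})x_v^{(i)}$. Each of the former requires a dedicated ``edge slide'' into $b_e^{\sigma(e)+1}$, giving at least $m$ slides localised to the MMO-edge gadgets. Each of the latter requires a ``cover'': a final token sitting on $c(x_v^{(i)})$ or $x_v^{(i)}$. Tracing each such token back through $\mathcal{D}$, I would classify every cover at $v$ as either a $Y^v$-type cover, realised by $y_e^{v(i)}\to w_v\to c(x_v^{(\cdot)})$ with an accompanying repair $b_e^i\to z_e^{v(i)}$ needed to keep the displaced $y z$-edges covered, or a donor-type cover, realised by $s\to x_v^{(i)}$ with accompanying repairs $d_1^j\to s$ and $d_3^j\to d_2^j$ needed to keep $sd_1^{rn-\bm{\sigma}+1}$ and $d_1^jd_2^j$ covered. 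Each cover therefore spends at least three slides, and the regions used (distinct MMO-vertex gadgets, distinct donor paths, and MMO-edge gadgets for the edge slides) are pairwise disjoint, so the lower bounds add to at least $m+3rn=\budget$. They must all be tight: no slides are wasted, every cover costs exactly three slides, and exactly one edge slide occurs per $e$.

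Writing $\kappa_v$ for the number of $Y^v$-covers at $v$ and $W_v=\sum_{e=uv\,:\,\lambda(e)=(v,u)}\sigma(e)$, I would then extract $W_v\le r$. A $Y^v$-cover that displaces $y_e^{v(i)}$ forces $e^v$ to keep its token throughout $\mathcal{D}$ (otherwise $e^v y_e^{v(i)}$ ends uncovered), so the unique edge slide for $e$ must originate from $e^u$; this gives $\lambda(e)=(v,u)$, whence $\kappa_v\le W_v$, and trivially $\kappa_v\le r$. On the other hand, since there are only $rn-\bm{\sigma}$ donor paths and each supports at most one donor-cover in the tight sequence, we have $\sum_v\kappa_v\ge rn-(rn-\bm{\sigma})=\bm{\sigma}=\sum_v W_v$. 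Combined with $\kappa_v\le\min(r,W_v)$, this forces $\min(r,W_v)=W_v$ and hence $W_v\le r$ for every $v\in V(H)$, so $\lambda$ is a valid MMO orientation and $(H,\mathcal{P}_H,\sigma,r)$ is a yes-instance of \textsc{MMO}.

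The main obstacle will be the amortised lower bound of three slides per cover. Justifying it rigorously requires ruling out slide schemes that conflate distinct covers, recycle a single slide for several purposes, or exploit backtracking and re-entry into $c(X_v)\cup X_v$. The key is a clean bookkeeping in which each persistent final token on $c(X_v)\cup X_v$ is traced to a unique ancestor in $\{w_v\}\cup Y^v$ or in $\{s\}$, and the vacation of that ancestor triggers a fixed set of localised repair slides, so that the slide contributions of different covers remain in pairwise disjoint parts of the graph and add up additively to the claimed bound.
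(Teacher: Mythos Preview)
Your proposal is correct and takes essentially the same approach as the paper: a tight slide-counting argument showing that $\budget = m + 3rn$ is exactly consumed by $m$ slides into the vertices $b_e^{\sigma(e)+1}$ plus three slides for each of the $rn$ uncovered edges between $c(X)$ and $X$, with the orientation of $H$ read off from which of $e^u,e^v$ vacates in each edge gadget. The paper differs only in presentation---it first normalises to a minimal solution in which the tokens on $s$ and on every $w_v$ never move (which is how it dispatches the three-slide-per-cover lower bound you correctly flag as the main obstacle), and it phrases the final extraction as a direct per-edge argument rather than your $\kappa_v\le W_v$ sum inequality, but the underlying logic is the same.
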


\begin{proof}
First, note that for an edge $a_e^{\sigma(e)+1}b_e^{\sigma(e)+1}$, where $uv=e \in E(H)$, to be covered with a minimal number of slides, the token on the vertex $e^u$ or the token on the vertex $e^v$ must move to~$b_e^{\sigma(e)+1}$ (note that any other token on the vertices of the graph must pass through either $e^u$ or $e^v$ to get to $b_e^{\sigma(e)+1}$, thus we can safely assume that the token already on either of $e^u$ or $e^v$ is the token that slides to $b_e^{\sigma(e)+1}$). This consumes at least $m$ slides, leaving $3rn$ slides.

No vertex cover formed with a minimal number of slides would need to make the token on the vertex $s$ or the token on a vertex $w_v$ for a vertex $v \in V(H)$ slide (as this token must always be replaced by another to cover the incident edges $sd^{rn-\bm{\sigma}+1}$ or $w_vx^{r+1}$ for a vertex $v \in V(H)$, respectively, with a minimal number of slides, thus we can always assume that the token has not been moved).
Thus, an edge between a pair of vertices in $X_v$ and $C(X_v)$ can be covered by either moving the token on a vertex $d^i_1$ for an integer $i \in [rn - \bm{\sigma}]$ towards the vertex in $X_v$, or moving a token from a vertex $y_e^{v(i_1)}$, for an edge $uv=e \in E(H)$ and an integer $i_1 \in [\sigma(e)]$, towards the vertex in $C(X_v)$. 
If the token on $d^i_1$ moves towards the vertex in $X_v$, the token on $d^i_3$ must slide to $d^i_2$. 
If a token on $y_e^{v(i_1)}$ moves towards the vertex in $C(X_v)$, it must be the case that another token has moved to either the vertex $z_e^{v(i_1)}$ or to $y_e^{v(i_1)}$. 
This however requires at least one slide per such a token.

Thus, if an edge between a pair of vertices in $X$ and $C(X)$ is covered by moving a token on one vertex $d_1^i$ for an integer $i \in [rn-\bm{\sigma}]$ towards the vertex in $X$, it does not consume more slides than moving a token from a vertex in $Y$ towards the vertex in $C(X)$.
Given that at most $rn - \bm{\sigma}$ edges can be covered using tokens from the donor paths (as $rn - \bm{\sigma} + 1$ tokens are needed to cover~$G_s$), each of the at least $\bm{\sigma}$ remaining uncovered edges must be covered by moving a token from a vertex in $Y$ towards a vertex in $c(X)$. 
Additionally, each of the remaining uncovered edges between~$c(X)$ and $X$ will require at least one additional slide (besides the two slides needed to move a token from~$Y$) and thus, tokens on distinct vertices in $Y$ must be used to cover the edges, as the remaining at most $\bm{\sigma}$ slides do not allow to get any token not initially on a vertex in $Y$ to a vertex in $Y$. 
This is true because, each $G_{e_1}^{sel}$ component for an edge $u_1v_1=e_1 \in E(H)$ cannot have less than two tokens, thus if, w.l.o.g., we move a token from $e_1^{u_1}$ to $Y^{u_1}$, then we have to move another token onto the component.
This also implies that if the token on $y_e^{v(i_1)}$ moves towards a vertex in $c(X_v)$, and consequently the token on the vertex $e^v$ slides to $y_e^{v(i_1)}$, it will require at least one more slide as~$e^ue^v$ will not be covered.
Given that we have only $\bm{\sigma}$ remaining slides, and at least one slide for each of the remaining at least $\bm{\sigma}$ remaining uncovered edges is needed, the token on the vertex $b_e^{(i_1)}$ must slide to $z_e^{v(i_1)}$.

This totals $\budget$ slides. 
Each vertex that is token-free in~$Y$ after the $\budget$ slides are consumed must be adjacent to a vertex of the form $e_2^{u_2}$ with a token, for an edge $u_2v_2=e_2 \in E(H)$ (so that the edges between $e_2^{u_2}$ and the vertices in $Y$ are covered). 
This implies that for each edge $u_2v_2= e_2 \in E(H)$, at most $\sigma(e_2)$ tokens can move to $c(X)$ from tokens on the vertices of the sets $Y^{v_2}_{e_2}$ and $Y^{u_2}_{e_2}$, and from only one of those sets, as only one of $e_2^{u_2}$ and $e_2^{v_2}$ has a token. 
To cover the $\bm{\sigma}$ remaining uncovered edges, each edge $u_2v_2=e_2 \in E(H)$ must allow $\sigma(e_2)$ tokens to move from either vertices in $Y^{v_2}_{e_2}$ and $Y^{u_2}_{e_2}$ and from at most one.
This gives a feasible orientation for the instance $(H, \mathcal{P}_H, \sigma, r)$ as any of $c(X_u)$ or $c(X_v)$ can receive at most $r$ tokens.
\end{proof}
The proofs of Lemmas~\ref{lem:hardness-VS-pathwidth-forward} and \ref{lem:hardness-VC-pathwidth-backward} complete the proof of Theorem~\ref{thm:VC-D-pathwidth}.  

\begin{theorem}\label{thm:cross_composition-VC-bpw}
There exists an or-cross-composition from \textsc{MMO} into \textsc{VC-D} on bounded pathwidth graphs with respect to parameter $\budget$. Consequently, \textsc{VC-D} does not admit a polynomial kernel with respect to $\budget + pw$, where $pw$ denotes the pathwidth of the input graphs, unless $\NP \subseteq \cp$.
\end{theorem}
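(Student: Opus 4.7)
\medskip

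\noindent\textbf{Proof proposal.} The plan is to adapt the reduction from \Cref{thm:VC-D-pathwidth} into an or-cross-composition by combining multiple copies of $G_{H_j}$ with a single shared MMO-instance-selector and a single supplier gadget. First, \textsc{MMO} on bounded pathwidth graphs is \NP-hard (as it is \XNLP-complete). Two \textsc{MMO} instances will be declared equivalent by the relation $\mathcal{R}$ whenever they agree on $(n, m, \bm{\sigma}, r)$ and on the pathwidth; this is a polynomial equivalence relation satisfying \Cref{def:equivalence-relation}. Padding to make all inputs share these parameters is harmless, so it suffices to compose equivalent instances.

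Given $t$ equivalent instances $(H_j,\mathcal{P}_{H_j},\sigma_j,r_j)$, the composed graph $G^{\ast}$ will consist of a subdivided copy of $G_{H_j}$ (subdividing each $a_e^i b_e^i$ and each $w_v x_v^i$, exactly as in the proof of \Cref{thm:VC-D-pathwidth}) for each $j \in [t]$, together with \emph{one} shared supplier gadget $G_s$ and \emph{one} MMO-instance-selector. The key design choice is to route the $rn-\bm{\sigma}$ donor tokens of $G_s$ through the selector so that they can reach the $X$-vertices of only one copy $G_{H_{j^{\ast}}}$. Concretely, I make $s$ adjacent to the weights-hub $h$, attach each $f^i g^i$ path as a ``gate'' that a supplier token must traverse, and wire $\textsc{Select}_j$ and $\textsc{Unselect}_j$ so that opening the gate to $G_{H_j}$ (i.e.\ sliding the token off $\textsc{Unselect}_j$) forces all subsequent tokens arriving from $h$ into the $X$-vertices of $G_{H_j}$ and no other copy. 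The orientations-quay $q$ is used analogously to supply the $m$ slides needed to cover the $a_e^{\sigma(e)+1} b_e^{\sigma(e)+1}$ edges within one selected instance. The initial configuration $S^{\ast}$ places the same tokens inside each $G_{H_j}$ as in the single-instance reduction, plus tokens on $h$, $q$, each $g^i$, each $p^i$, each $\textsc{Unselect}_j$, and the usual donor-path endpoints of $G_s$. The budget is set to $\budget^{\ast} = m + 3rn + c$, where $c = O(1)$ accounts for the extra slides needed to traverse the selector gates; crucially, $\budget^{\ast}$ is polynomial in $\max_j|x_j|$ and independent of $t$.

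For the forward direction, if $(H_{j^{\ast}},\mathcal{P}_{H_{j^{\ast}}},\sigma_{j^{\ast}},r_{j^{\ast}})$ has a valid orientation, I open the gate for $j^{\ast}$, perform the $m+3rn$ slides of \Cref{lem:hardness-VS-pathwidth-forward} inside $G_{H_{j^{\ast}}}$ (with supplier tokens flowing through the selector instead of directly from $s$), and leave every other $G_{H_j}$ untouched; since the initial placement in each $G_{H_j}$ already covers every edge of $G_{H_j}$ \emph{except} those between $C(X_v)$ and $X_v$ and those between $A^+$ and $B^+$, which are only exposed once their selector gate is opened, the resulting configuration is a vertex cover of $G^{\ast}$. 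For the backward direction, a counting argument based on \Cref{lem:hardness-VC-pathwidth-backward} shows that the budget $\budget^{\ast}$ is too tight to pay for opening two gates: each donor token requires three slides to reach any $X$-vertex, each opened gate costs $c$ additional slides, and the $rn$ available supplier tokens suffice to cover the $rn$ uncovered $X$-$c(X)$ edges of exactly one $G_{H_{j^{\ast}}}$. The pattern of tokens that move through $G_{H_{j^{\ast}}}$ then encodes a valid orientation of $H_{j^{\ast}}$ verbatim as in \Cref{lem:hardness-VC-pathwidth-backward}.

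Finally, the pathwidth of $G^{\ast}$ is bounded by $\max_j pw(H_j) + O(1)$ by (a slight adaptation of) \Cref{lem:bounded-pathwidth-G}, with the path decomposition producible by a log-space transducer via \Cref{cor:composition-space}. Hence $\budget^{\ast} + pw(G^{\ast})$ is polynomially bounded in $\max_j|x_j|$, and \Cref{thm:no-poly-kernel-theorem} yields the kernel lower bound. The main obstacle I anticipate is engineering the selector-to-supplier wiring with a tight enough budget: I must ensure that \emph{any} leakage of supplier tokens to two distinct instance copies costs strictly more than $\budget^{\ast}$, which forces a careful distance-and-capacity analysis on paths through $h$, $q$, and the $\textsc{Select}_j$/$\textsc{Unselect}_j$ gates, while simultaneously keeping the pathwidth increment a constant.
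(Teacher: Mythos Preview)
Your proposal has a genuine gap in the forward direction that breaks the construction. You keep the initial configuration inside each $G_{H_j}$ exactly as in the single-instance reduction of \Cref{thm:VC-D-pathwidth}. In that configuration the edges $c(x_v^i)x_v^i$ for all $v,i$ and the edges $a_e^{\sigma(e)+1}b_e^{\sigma(e)+1}$ for all $e$ are \emph{uncovered}: neither endpoint carries a token and no neighbour of either endpoint carries a token that would cover them. This means that in your composed graph $G^\ast$ there are $rn+m$ uncovered edges in \emph{every} one of the $t$ copies, not just in the one whose gate gets opened. Your sentence ``which are only exposed once their selector gate is opened'' is simply false as stated, and you describe no wiring of $\textsc{Select}_j/\textsc{Unselect}_j$ that would make it true. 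Covering all of these edges across $t-1$ unselected copies would require budget growing with $t$, which destroys the parameter bound required by \Cref{def:or-cross-composition}. A secondary issue is that your routing mechanism (``make $s$ adjacent to $h$'' and ``attach each $f^ig^i$ as a gate'') is too vague to force supplier tokens into a single copy; nothing in your wiring prevents a token at $h$ from sliding into $X$-vertices of two different copies at the same cost.

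The paper avoids this by \emph{reversing the token flow}. It drops the supplier gadget entirely, changes the initial configuration so that every $G_{H_j}$ is already a vertex cover of itself (tokens are placed on $X$ and on $B^+$), and places the only uncovered edges in the shared selector (the $f^ig^i$ and $o^ip^i$ edges). The solution then moves tokens \emph{out} of the selected $G_{H_{\mathfrak j}}$ into the selector; the resulting holes in $G_{H_{\mathfrak j}}$ are re-covered by the orientation mechanics, and the single slide $\textsc{Unselect}_{\mathfrak j}\to\textsc{Select}_{\mathfrak j}$ covers the freshly exposed edges between $\textsc{Select}_{\mathfrak j}$ and the emptied vertices of $G_{H_{\mathfrak j}}$. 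This yields budget $2m+5\bm{\sigma}+1$, independent of $t$, and the pendant vertices attached to $h$ and $q$ pin those tokens in place so the counting argument goes through. If you want to salvage your direction of flow, you would need to redesign the initial configuration so that each $G_{H_j}$ starts covered and the selector is where the deficit lives---which is essentially what the paper does.
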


\begin{proof}
As stated in \Cref{sec:foundational}, we can assume that we are given a family of $t$ \textsc{MMO} instances $(H_j, \mathcal{P}_{H_j}, \sigma_j, r_j)$, where $H_j$ is a bounded pathwidth graph with path decomposition $\mathcal{P}_{H_j}$, $|V(H_j)| = n$, $|E(H_j)| = m$, $\sigma_j: E_j \rightarrow \mathbb{Z}_+$ is a weight function such that $\sum_{e_j \in E(H_j)} \sigma_j(e_j) = \bm{\sigma}$ and $r_j = r \in \mathbb{Z}_+$ (integers are given in unary).
The construction of the instance $(G_t, \mathcal{P}_{G_t}, S, \budget)$ of \textsc{VC-D} is twofold. 

For each instance $H_j$ for $j \in [t]$, we add to $G_t$ the graph $G_{H_j}$ formed as per the construction in \Cref{thm:VC-D-pathwidth}, but without the supplier gadget.
We refer to the sets $A$, $B$, $X$, $X^+$, $C$, $Y$, and $c(X)$, subsets of vertices of a subgraph $G_{H_j}$ of $G_t$, by $A_j$, $B_j$, $X_j$, $X_j^+$, $C_j$, $Y_j$, and $c(X_j)$, respectively. 
Subsequently, we let $A = \cup_{j \in [t]} A_j$, $B = \cup_{j \in [t]} B_j$, $X = \cup_{j \in [t]} X_j$, and so on.
We attach to each of the weights-hub vertex $h$ and the orientations-quay vertex $q$ of the \textsc{MMO}-instance-selector described in \Cref{sec:foundational}, $2m + 5\bm{\sigma} + 2$ pendant vertices.
We add the \textsc{MMO}-instance selector and connect it to the rest of $G_t$ as follows (see \Cref{fig:VC-pathwidth-composition}). 
We make for each $j \in [t]$, the vertex $\textsc{\footnotesize Select}_j$ adjacent to the vertices in $V(G_{H_j}) \cap S$, where $S$ is as defined later. 
We make the vertex $h$ adjacent to each vertex in $X$ and the vertex $q$ adjacent to each vertex of $e^u$ and $e^v$ for each edge $uv=e \in E(H_j)$ for each $j \in [t]$.
By \Cref{lem:bounded-pathwidth-G}, $G_t$ is of bounded pathwidth (it is an augmented subdivision of the original graph $G_t$ appearing in \Cref{sec:foundational}).
Now, we set
\begin{equation*}
\begin{split}
    S = C \text{ } \cup \text{ } 
    B  \text{ } \cup \text{ } 
    B^+ \text{ } \cup \text{ } 
    Y \text{ } \cup \text{ } 
    X \text{ } \cup  
    \bigcup_{\substack{j \in [t] \\ uv=e \in E(H_j)}} (e^u \cup e^v) 
    \text{ } \cup 
    \bigcup_{\substack{j \in [t] \\ v \in V(H_j)}} w_v 
    \text{ } \cup 
    \bigcup_{j \in [t]} \textsc{\footnotesize Unselect}_j 
    \text{ } \cup \text{ } h 
    \text{ } \cup \text{ } q
\end{split}
\end{equation*}
and $\budget = 2m + 5\bm{\sigma} + 1$. 

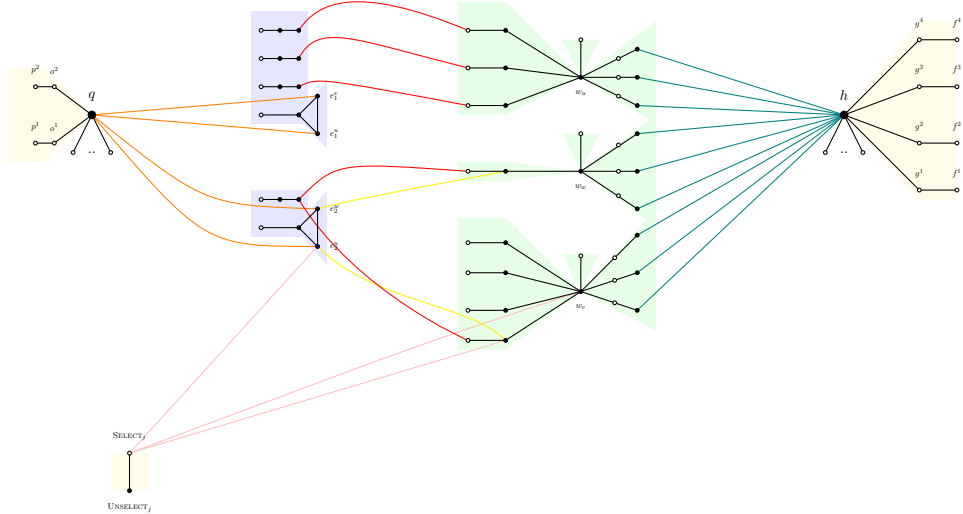
\begin{figure}[H]
    \centering
    \begin{tikzpicture}[scale=0.5]
    \fill[blue!10] (-0.75,2.75) -- (0.75,2.75) -- (0.75,-0.25) -- (-0.75,-0.25) -- cycle; 
    \fill[blue!10] (0.5,0.1) -- (1.25,0.9) -- (1.25,0) -- (0.5,0) -- cycle; 
    \fill[blue!10] (0.5,-0.1) -- (1.25,-0.9) -- (1.25,0) -- (0.5, 0) -- cycle; 
    \fill[blue!10] (-0.75,-2) -- (0.75,-2) -- (0.75,-3.25) -- (-0.75,-3.25) -- cycle; 
    \fill[blue!10] (0.5,-2.9) -- (1.25,-2) -- (1.25,-3) -- (0.5,-3) -- cycle; 
    \fill[blue!10] (0.5,-3.1) -- (1.25,-4) -- (1.25,-3) -- (0.5, -3) -- cycle;
    \fill[green!10] (8,1) -- (7.5,2) -- (8.5,2) -- (8,1) -- cycle;
    \fill[green!10] (8,1) -- (6,0) -- (6,3) -- (8,1) -- cycle;
    \fill[green!10] (8,1) -- (10,-0.25) -- (10,2.5) -- (8,1) -- cycle;
    \fill[green!10] (6,0) -- (6,3) -- (4.75,3) -- (4.75,0) -- cycle;
    \fill[green!10] (8,-1.5) -- (7.5,-0.5) -- (8.5,-0.5) -- (8,-1.5) -- cycle;
    \fill[green!10] (8,-1.5) -- (6,-1.75) -- (6,-1.25) -- (8,-1.5) -- cycle;
    \fill[green!10] (8,-1.5) -- (10,0) -- (10,-3) -- (8,-1.5) -- cycle;
    \fill[green!10] (6,-1.75) -- (6,-1.25) -- (4.75,-1.25) -- (4.75,-1.75) -- cycle;
    \fill[green!10] (8,-4.7) -- (7.5,-3.7) -- (8.5,-3.7) -- (8,-4.7) -- cycle;
    \fill[green!10] (8,-4.7) -- (6,-6.25) -- (6,-2.75) -- (8,-4.7) -- cycle;
    \fill[green!10] (8,-4.7) -- (10,-2.5) -- (10,-5.75) -- (8,-4.7) -- cycle;
    \fill[green!10] (6,-6.25) -- (6,-2.75) -- (4.75,-2.75) -- (4.75,-6.25) -- cycle;
    \fill[yellow!10] (-3.5,-10) -- (-4.5,-10) -- (-4.5,-9) -- (-3.5,-9) -- cycle;
    \fill[yellow!10] (-5,0) -- (-6.25,-1.25) -- (-6.25,1.25) -- cycle;
    \fill[yellow!10] (-6.25,-1.25) -- (-7.25,-1.25) -- (-7.25, 1.25) -- (-6.25,1.25) -- cycle;
    \fill[yellow!10] (15,0) --  (17,2.5) -- (17,-2.25) -- cycle;
    \fill[yellow!10] (17,2.5) -- (18,2.5) -- (18,-2.25) -- (17,-2.25) -- cycle; 
    \draw[teal] (9.5,0.25) -- (15, 0);
    \draw[teal] (9.5,1) -- (15, 0);
    \draw[teal] (9.5,1.75) -- (15, 0);
    \draw[teal] (9.5,-2.5) -- (15, 0);
    \draw[teal] (9.5,-1.5) -- (15, 0);
    \draw[teal] (9.5,-0.5) -- (15, 0);
    \draw[teal] (9.5,-5.2) -- (15, 0);
    \draw[teal] (9.5,-4.2) -- (15, 0);
    \draw[teal] (9.5,-3.2) -- (15, 0);
    \draw[orange] (1,0.5) -- (-5,0);
    \draw[orange] (1,-0.5) -- (-5,0);
    \draw[orange] (1,-2.5) .. controls (-2,-2.4) .. (-5,0);
    \draw[orange] (1,-3.5) .. controls (-2,-3.5) .. (-5,0);
    \draw[pink] (-4,-9) -- (1,-3.5);
    \draw[pink] (-4,-9) .. controls (4,-6) .. (8,-4.7);
    \draw[pink] (-4,-9) -- (6,-6);
    \draw[black] (8,1) -- (9.5,0.25);
    \draw[black] (8,1) -- (9.5,1);
    \draw[black] (8,1) -- (9.5,1.75);
    \draw[black] (8,1) -- (8,2);
    \draw[black] (8,1) -- (6,0.25);
    \draw[black] (8,1) -- (6,2.25);
    \draw[black] (8,1) -- (6,1.25);
    \draw[black] (8,-1.5) -- (9.5,-2.5);
    \draw[black] (8,-1.5) -- (9.5,-1.5);
    \draw[black] (8,-1.5) -- (9.5,-0.5);
    \draw[black] (8,-1.5) -- (6,-1.5);
    \draw[black] (8,-1.5) -- (8,-0.5);
    \draw[black] (8,-4.7) -- (9.5,-5.2);
    \draw[black] (8,-4.7) -- (9.5,-4.2);
    \draw[black] (8,-4.7) -- (9.5,-3.2);
    \draw[black] (8,-4.7) -- (8,-3.75);
    \draw[black] (8,-4.7)-- (6,-6);
    \draw[black] (8,-4.7) -- (6,-5.2);
    \draw[black] (8,-4.7)-- (6,-4.2);
    \draw[black] (8,-4.7) -- (6,-3.4);
    \draw[black] (-6,0.75) -- (-5,0);
    \draw[black] (-6,-0.75) -- (-5, 0);
    \draw[black] (17,0.75) -- (15,0);
    \draw[black] (17,-0.75) -- (15,0);
    \draw[black] (17,2) -- (15,0);
    \draw[black] (17,-2) -- (15,0);
    \draw[yellow] (1,-2.5) .. controls (2,-2.25) .. (6,-1.5);
    \draw[yellow] (1,-3.5) .. controls (2.5,-5) and  (5,-5) .. (6,-6);    
    \draw[red] (0.5,-2.25) .. controls (1.5,-1.25) .. (5,-1.5);
    \draw[red] (0.5,-2.25) .. controls (2,-4.7) and  (5,-6) .. (5,-6);  
    \draw[red] (0.5,0.75) .. controls (1,1) .. (5,0.25);
    \draw[red] (0.5,1.5) .. controls (1,2.5) and  (2,2) .. (5,1.25);
    \draw[red] (0.5,2.25) .. controls (1,3) and  (2,3.5) .. (5,2.25);  
    \node[circle, fill=black, inner sep=0.5pt, draw=black, fill=black, label=below:{\scalebox{0.3}{$w_u$}}] at (8,1) {};
    \node[circle, fill=black, inner sep=0.5pt, draw=black, fill=black, label=below:{\scalebox{0.3}{$w_w$}}] at (8,-1.5) {};
    \node[circle, fill=black, inner sep=0.5pt, draw=black, fill=black, label=below:{\scalebox{0.3}{$w_v$}}] at (8,-4.7) {};
    \node[circle, fill=black, inner sep=0.5pt, draw=black, fill=black] at (9.5,0.25) {};
    \node[circle, fill=black, inner sep=0.5pt, draw=black, fill=white] at (9,0.5) {};
    \node[circle, fill=black, inner sep=0.5pt, draw=black, fill=black] at (9.5,1) {};
    \node[circle, fill=black, inner sep=0.5pt, draw=black, fill=white] at (9,1) {};
    \node[circle, fill=black, inner sep=0.5pt, draw=black, fill=black] at (9.5,1.75) {};
    \node[circle, fill=black, inner sep=0.5pt, draw=black, fill=white] at (9,1.5) {};
    \node[circle, fill=black, inner sep=0.5pt, draw=black, fill=black] at (9.5,-2.5) {};
    \node[circle, fill=black, inner sep=0.5pt, draw=black, fill=white] at (9,-2.15) {};
    \node[circle, fill=black, inner sep=0.5pt, draw=black, fill=black] at (9.5,-1.5) {};
    \node[circle, fill=black, inner sep=0.5pt, draw=black, fill=white] at (9,-1.5) {};
    \node[circle, fill=black, inner sep=0.5pt, draw=black, fill=black] at (9.5,-0.5) {};
    \node[circle, fill=black, inner sep=0.5pt, draw=black, fill=white] at (9,-0.8) {};
    \node[circle, fill=black, inner sep=0.5pt, draw=black, fill=black] at (9.5,-5.2) {};
    \node[circle, fill=black, inner sep=0.5pt, draw=black, fill=white] at (8.9,-5) {};
    \node[circle, fill=black, inner sep=0.5pt, draw=black, fill=black] at (9.5,-4.2) {};
    \node[circle, fill=black, inner sep=0.5pt, draw=black, fill=white] at (8.9,-4.4) {};
    \node[circle, fill=black, inner sep=0.5pt, draw=black, fill=black] at (9.5,-3.2) {};
    \node[circle, fill=black, inner sep=0.5pt, draw=black, fill=white] at (8.9,-3.8) {};
    \node[circle, fill=black, inner sep=0.5pt, draw=black, fill=white] at (8,2) {};
    \node[circle, fill=black, inner sep=0.5pt, draw=black, fill=white] at (8,-0.5) {};
    \node[circle, fill=black, inner sep=0.5pt, draw=black, fill=white] at (8,-3.75) {};
    \draw[black] (5,-1.5) -- (6,-1.5) node[circle, draw=black, fill=black, inner sep=0.5pt, pos=1]{} node[circle, draw=black, fill=white, inner sep=0.5pt, pos=0]{};    
    \draw[black] (5,1.25) -- (6,1.25) node[circle, draw=black, fill=black, inner sep=0.5pt, pos=1]{} node[circle, draw=black, fill=white, inner sep=0.5pt, pos=0]{};
    \draw[black] (5,0.25) -- (6,0.25) node[circle, draw=black, fill=black, inner sep=0.5pt, pos=1]{} node[circle, draw=black, fill=white, inner sep=0.5pt, pos=0]{};
    \draw[black] (5,2.25) -- (6,2.25) node[circle, draw=black, fill=black, inner sep=0.5pt, pos=1]{} node[circle, draw=black, fill=white, inner sep=0.5pt, pos=0]{};
    \draw[black] (-6.5,0.75) -- (-6,0.75) node[circle, draw=black, fill=white, inner sep=0.5pt, pos=1, label=above:\scalebox{0.3}{$o^2$}]{} node[circle, draw=black, fill=white, inner sep=0.5pt, pos=0, label=above:\scalebox{0.3}{$p^2$}]{};
    \draw[black] (-6.5,-0.75) -- (-6, -0.75) node[circle, draw=black, fill=white, inner sep=0.5pt, pos=1, label=above:\scalebox{0.3}{$o^1$}]{} node[circle, draw=black, fill=white, inner sep=0.5pt, pos=0, label=above:\scalebox{0.3}{$p^1$}]{};
    \draw[black] (17,0.75) -- (18,0.75) node[circle, draw=black, fill=white, inner sep=0.5pt, pos=1, label=above:\scalebox{0.3}{$f^3$}]{} node[circle, draw=black, fill=white, inner sep=0.5pt, pos=0, label=above:\scalebox{0.3}{$g^3$}]{};
    \draw[black] (17,-0.75) -- (18,-0.75) node[circle, draw=black, fill=white, inner sep=0.5pt, pos=1, label=above:\scalebox{0.3}{$f^2$}]{} node[circle, draw=black, fill=white, inner sep=0.5pt, pos=0, label=above:\scalebox{0.3}{$g^2$}]{};
    \draw[black] (17,2) -- (18,2) node[circle, draw=black, fill=white, inner sep=0.5pt, pos=1, label=above:\scalebox{0.3}{$f^4$}]{} node[circle, draw=black, fill=white, inner sep=0.5pt, pos=0, label=above:\scalebox{0.3}{$g^4$}]{};
    \draw[black] (17,-2) -- (18,-2) node[circle, draw=black, fill=white, inner sep=0.5pt, pos=1, label=above:\scalebox{0.3}{$f^1$}]{} node[circle, draw=black, fill=white, inner sep=0.5pt, pos=0, label=above:\scalebox{0.3}{$g^1$}]{};
    \draw[black] (-4,-10) -- (-4,-9) node[circle, draw=black, fill=white, inner sep=0.5pt, pos=1, label=above:\scalebox{0.3}{$\textsc{Select}_j$}]{} node[circle, draw=black, fill=black, inner sep=0.5pt, pos=0, label=below:\scalebox{0.3}{$\textsc{Unselect}_j$}]{};
    \draw[black] (5,-3.4) -- (6,-3.4) node[circle, draw=black, fill=black, inner sep=0.5pt, pos=1]{} node[circle, draw=black, fill=white, inner sep=0.5pt, pos=0]{};
    \draw[black] (5,-4.2) -- (6,-4.2) node[circle, draw=black, fill=black, inner sep=0.5pt, pos=1]{} node[circle, draw=black, fill=white, inner sep=0.5pt, pos=0]{};
    \draw[black] (5,-5.2) -- (6,-5.2) node[circle, draw=black, fill=black, inner sep=0.5pt, pos=1]{} node[circle, draw=black, fill=white, inner sep=0.5pt, pos=0]{};
    \draw[black] (5,-6) -- (6,-6) node[circle, draw=black, fill=black, inner sep=0.5pt, pos=1]{} node[circle, draw=black, fill=white, inner sep=0.5pt, pos=0]{};
    \draw[black] (-0.5,0.75) -- (0.5,0.75) node[circle, draw=black, fill=white, inner sep=0.5pt, pos=0]{} node[circle, draw=black, fill=black, inner sep=0.5pt, pos=1]{};
    \node[draw, circle, fill=black, inner sep=0.5pt] at (0,0.75) {};
    \draw[black] (-0.5,1.5) -- (0.5,1.5) node[circle, draw=black, fill=white, inner sep=0.5pt, pos=0]{} node[circle, draw=black, fill=black, inner sep=0.5pt, pos=1]{};
    \node[draw, circle, fill=black, inner sep=0.5pt] at (0,1.5) {};
    \draw[black] (-0.5,2.25) -- (0.5,2.25) node[circle, draw=black, fill=white, inner sep=0.5pt, pos=0]{} node[circle, draw=black, fill=black, inner sep=0.5pt, pos=1]{};
    \node[draw, circle, fill=black, inner sep=0.5pt] at (0,2.25) {};
    \draw[black] (1,0.5) -- (1, -0.5);
    \draw[black] (0.5,0) -- (1,0.5) node[circle, draw=black, fill=black, inner sep=0.5pt, label=right:{\scalebox{0.3}{$e_1^v$}}, pos=1]{};
    \draw[black] (0.5,0) -- (1,-0.5) node[circle, draw=black, fill=black, inner sep=0.5pt, label=right:{\scalebox{0.3}{$e_1^u$}}, pos=1]{};
    \draw[black] (-0.5,0) -- (0.5,0) node[circle, draw=black, fill=white, inner sep=0.5pt, pos=0]{} node[circle, draw=black, fill=black, inner sep=0.5pt, pos=1]{};
    \draw[black] (-0.5,-2.25) -- (0.5,-2.25) node[circle, draw=black, fill=white, inner sep=0.5pt, pos=0]{} node[circle, draw=black, fill=black, inner sep=0.5pt, pos=1]{};
    \node[draw, circle, fill=black, inner sep=0.5pt] at (0,-2.25) {};
    \draw[black] (1,-2.5) -- (1, -3.5);
    \draw[black] (0.5,-3) -- (1,-2.5) node[circle, draw=black, fill=black, inner sep=0.5pt, label=right:{\scalebox{0.3}{$e_2^w$}}, pos=1]{};
    \draw[black] (0.5,-3) -- (1,-3.5) node[circle, draw=black, fill=black, inner sep=0.5pt, label=right:{\scalebox{0.3}{$e_2^u$}}, pos=1]{};
    \draw[black] (-0.5,-3) -- (0.5,-3) node[circle, draw=black, fill=white, inner sep=0.5pt, pos=0]{} node[circle, draw=black, fill=black, inner sep=0.5pt, pos=1]{};
    \draw[black] (-5, 0) -- (-5.5, -1);
    \draw[black] (-5,0) -- (-4.5, -1);
    \node at (-5,-1) {\scalebox{0.5}{..}};
    \draw[black] (15,0) -- (15.5, -1);
    \draw[black] (15,0) -- (14.5,-1);
    \node at (15,-1) {\scalebox{0.5}{..}};
    \node[circle, draw=black, fill=black, inner sep=1pt, label=above:\scalebox{0.5}{$h$}] at (15, 0) {};
    \node[circle, draw=black, fill=black, inner sep=1pt, label=above:\scalebox{0.5}{$q$}] at (-5, 0) {};
    \node[circle, draw=black, fill=white, inner sep=0.5pt] at (-5.5, -1) {};
    \node[circle, draw=black, fill=white, inner sep=0.5pt] at (-4.5, -1) {};
    \node[circle, draw=black, fill=white, inner sep=0.5pt] at (14.5, -1) {};
    \node[circle, draw=black, fill=white, inner sep=0.5pt] at (15.5, -1) {};
    
    \end{tikzpicture}
    \caption{\footnotesize Orange, pink, and green edges highlighting the different types of edges between an \textsc{MMO}-instance-selector of the composition in \Cref{thm:cross_composition-VC-bpw} and a subgraph $G_{H_j}$ for a $j \in [t]$ of the same. $H_j$ has $3$ vertices $u, v$, and $w$, and two edges $e_1=uv$ and $e_2=uw$, and $r = 3$. Additionally, $\sigma_j(e_1) = 3$ and $\sigma_j(e_2) = 1$. For clarity, not all edges inside~$G_{H_j}$ are drawn, nor are all the pink edges depicted. Vertices in black are in $S$ and those in white are not.}
    \label{fig:VC-pathwidth-composition}
\end{figure}

Given that all integers are given in unary, the construction of the graph $G_t$, or its path decomposition (as described in \Cref{lem:bounded-pathwidth-G}), and as a consequence the reduction take time polynomial in the size of the input instances. 
Additionally, by \Cref{cor:composition-space}, this composition is a pl-reduction.
We claim that $(G_t, \mathcal{P}_{G_t}, S, \budget)$ is a yes-instance of \textsc{VC-D} if and only if for some $\mathfrak{j} \in [t]$, $(H_\mathfrak{j}, \mathcal{P}_{H_\mathfrak{j}}, \sigma_\mathfrak{j}, r_\mathfrak{j})$ is a yes-instance of \textsc{MMO}.
\\
\begin{claim}
If for some $\mathfrak{j} \in [t]$, $(H_\mathfrak{j}, \mathcal{P}_{H_\mathfrak{j}}, \sigma_\mathfrak{j}, r_\mathfrak{j})$ is a yes-instance of \textsc{MMO}, then $(G_t, \mathcal{P}_{G_t}, S, b)$ is a yes-instance of \textsc{VC-D}.
\end{claim}

\begin{claimproof}
Let $(H_\mathfrak{j}, \mathcal{P}_{H_\mathfrak{j}}, \sigma_\mathfrak{j}, r_\mathfrak{j})$ be a yes-instance of \textsc{MMO} and let $\lambda$ be a feasible orientation of $H_\mathfrak{j}$ such that for each $v \in V(H_\mathfrak{j})$, the total weight of the edges directed out of $v$ is at most $r$.
In $G_t$, the edges $f^1g^1, \ldots, f^{\bm{\sigma}}g^{\bm{\sigma}}, o^1p^1,$ $\ldots, o^mp^m$ are not covered.
First, we slide the token on $\textsc{\footnotesize Unselect}_\mathfrak{j}$ to $\textsc{\footnotesize Select}_\mathfrak{j}$. 
Using $2m$ slides, we move for each edge $e \in E(H_\mathfrak{j})$ the token on $e^u$ ($e^v$), if $\lambda(e) = (v, u)$ ($\lambda(e) = (u, v)$), towards a token-free vertex in $o^1, \ldots, o^m$.
We additionally slide each token on a vertex $b^i_e$ for $i \in [\sigma_\mathfrak{j}(e)]$ to the vertex $z_e^{v(i)}$, move the token on $y_e^{v(i)}$ towards a token-free vertex in~$c(X_v)$ and consequently, move the token on the adjacent vertex in $X_v$ towards a token-free vertex in $f^1, \ldots, f^{\bm{\sigma}}$.
The total number of slides performed is $\budget$ and they achieve a configuration for the tokens that covers all of $G_t$.
\end{claimproof}

\begin{claim}
If $(G_t, \mathcal{P}_{G_t}, S, b)$ is a yes-instance of \textsc{VC-D}, then there exists an integer $\mathfrak{j} \in [t]$, such that $(H_\mathfrak{j}, \mathcal{P}_{H_\mathfrak{j}}, \sigma_\mathfrak{j}, r_\mathfrak{j})$ is a yes-instance of \textsc{MMO}. 
\end{claim}

\begin{claimproof}
In any solution that uses the minimal number of slides, the tokens on the vertices $h$, $q$, $w_v$ for each $v \in V(H_j)$ and integer $j \in [t]$, and on the vertices in $C$ do not need to be moved (as these tokens must be replaced by others to cover the edges incident to the pendant vertices and $h$, or the pendant vertices and $q$, or to cover the edges $w_v x^{r+1}$, or the edges incident to the vertices in $A$, thus we can assume these tokens remain stationary).
In the same solution, we can similarly assume that a token on one of $\textsc{\footnotesize Unselect}_j$ and $\textsc{\footnotesize Select}_j$ for each $j \in [t]$ remains on either one of those vertices.
To cover the edges $o^1p^1, \ldots, o^mp^m$ and $f^1g^1, \ldots, f^{\bm{\sigma}}g^{\bm{\sigma}}$, at least $2m + 2\bm{\sigma}$ slides are needed to get tokens from one or more of the vertices in $X$ onto the vertices $f^1, \ldots, f^{\bm{\sigma}}$, and from one or more of the vertices of the form $e^u$ for an edge $e \in E(H_j)$ incident to a vertex $u \in V(H_j)$ for an integer $j \in [t]$, onto the vertices $o^1, \ldots, o^m$.
If a token is moved out of a subgraph $G_{H_j}$ (for an integer $j \in [t]$) of $G_t$, which is bound to happen to get tokens onto the vertices $f^1, \ldots, f^{\bm{\sigma}}, o^1, \ldots, o^m$, at least one slide is needed to cover the edge between now token-free vertices in $G_{H_j}$ and $\textsc{\footnotesize Select}_j$ and exactly one slide can only be achieved by sliding the token on $\textsc{\footnotesize Unselect}_j$ to $\textsc{\footnotesize Select}_j$ (since otherwise a token has to move from one of the vertices of a subgraph $G_{H_{j'}}$ for $j' \neq j \in [t]$ into $G_{H_j}$, and this requires more than one slide). 

W.l.o.g. assume a token on a vertex, denoted $x^i_v$, in $X$, for a vertex $v \in V(H_\mathfrak{j})$, and integers $\mathfrak{j} \in [t]$ and $i \in [r]$, is moved to one of the vertices $f^1, \ldots, f^{\bm{\sigma}}$, then at least $2$ slides are needed to move a token into either $x^i_v$ or $c(x^i_v)$ (since the tokens on $w_v$ and $h$ are assumed to be stationary).
Since a token moving from any other vertex, except $x^i_v$, in $X$ into $x^i_v$ can replace the token on $x^i_v$ in moving into one of the vertices $f^1, \ldots, f^{\bm{\sigma}}$, in a solution that uses the minimal number of slides $2$ slides can only be achieved by moving a token on a vertex, denoted $y_e^{v(i_1)}$, in $Y^v$, for some edge $e \in E(H_\mathfrak{j})$ adjacent to $v$ and some integer $i_1 \in [\sigma_\mathfrak{j}(e)]$, to $c(x^i_v)$.
Additionally, $3$ slides can only be achieved by moving a token on the same vertex to $x^i_v$.
Since in a solution that uses the minimal number of slides a token on one of $\textsc{\footnotesize Unselect}_\mathfrak{j}$ and $\textsc{\footnotesize Select}_\mathfrak{j}$ is assumed to remain on either of those vertices, and a token in $B$ can slide at most one slide to a vertex in $Z$, if a token on $y_e^{v(i_1)}$ is moved to a vertex in $c(X)$ (or $X$), either a token has to move to the vertex $z_e^{v(i_1)}$, or a token has to slide from the vertex $e^v$ to $y_e^{v(i_1)}$. 

Given the budget and the fact that $\bm{\sigma}$ tokens in any solution must move from $X$ onto the vertices $f^1, \ldots, f^{\bm{\sigma}}$, tokens must move from distinct vertices in $X$ onto the vertices $f^1, \ldots, f^{\bm{\sigma}}$ and from distinct vertices of the form $e^u$ for an edge $e \in E(H_j)$ incident to a vertex $u \in V(H_j)$ for an integer $j \in [t]$, onto the vertices $o^1, \ldots, o^m$. 
Additionally, given the budget, tokens in the same solution must move onto $f^1, \ldots, f^{\bm{\sigma}}$ from only the vertices in $X_{\mathfrak{j}}$ and onto $o^1, \ldots, o^m$ from only the vertices of the form $e_1^{u}$, for an edge $uw=e_1 \in E(H_\mathfrak{j})$ (note that one token sliding to $\textsc{\footnotesize Select}_{\mathfrak{j}}$ will cover the edge between $\textsc{\footnotesize Select}_{\mathfrak{j}}$ and $e_1^{u}$ and the edge between $\textsc{\footnotesize Select}_{\mathfrak{j}}$ and a vertex in $X_{\mathfrak{j}}$).
In the same solution, if a token moves from the vertex $e_1^{u}$ onto one of the vertices $o^1, \ldots, o^m$, the token on $e_1^{w}$ remains stationary as the budget does not allow for another token to move into either one of the adjacent vertices $e_1^{u}$ and $e_1^{w}$.
To fill all of  $o^1, \ldots, o^m$ with tokens, exactly one token must move from~$G_{e_2}^{sel}$ onto the vertices $o^1, \ldots, o^m$ for each edge $e_2 \in E(H_\mathfrak{j})$.
The latter implies that the token on~$e^v$ does not move to $y_e^{v(i_1)}$ and given the budget that the token on $b_e^{i_1}$ slides to $z_e^{v(i_1)}$.

W.l.o.g. assume that the token on $e^v$ does not move to one of $o^1, \ldots, o^m$, then at most the $\sigma(e)$ tokens on the vertices of $Y_e^v$ can be sent to $c(X_v)$.
This implies that for each edge in $H$, at most its weight in tokens can move to $c(X)$ from and to exactly one of the vertex gadgets corresponding to the vertices incident to that edge in $H$.
Given that $\bm{\sigma}$ tokens are needed on the vertices of $c(X)$, it must be the case that for each edge, all its weight in tokens must move to $c(X)$.
This gives a feasible orientation for $(H_\mathfrak{j}, \mathcal{P}_{H_\mathfrak{j}}, \sigma_\mathfrak{j}, r_\mathfrak{j})$, since for each $v \in V(H_\mathfrak{j})$, we have at most $r$ vertices in $c(X_v)$.
\end{claimproof}
This concludes the proof of the theorem.
\end{proof}

Next we consider the feedback vertex set number (\emph{fvs}) parameterization of the VC-D problem. In Theorem~\ref{thm:VC-D-pathwidth}, we proved that the VC-D problem is \textsc{XNLP}-hard for the parameter pathwidth of the input graph. 
The feedback vertex set number (\emph{fvs}) and pathwidth are upper bounds of treewidth, but are incomparable. 
We will show that the VC-D problem is $\W[1]$-hard for the parameter \emph{fvs}. 
\begin{theorem}
    \label{thm:VC-D-fvs}
    The VC-D problem is $\W[1]$-hard when parameterized by the feedback vertex set number, i.e., \emph{fvs}, of the input graph. 
\end{theorem}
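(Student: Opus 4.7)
The plan is to present an FPT-reduction from \mcc, which is $\W[1]$-hard when parameterized by the number of colors~$\kappa$. Given a \mcc instance with graph $H$ and color classes $V_1, \ldots, V_\kappa$ each of size $n$, I will construct an equivalent \textsc{VC-D} instance $(G, S, \budget)$ whose graph satisfies $\mathrm{fvs}(G) = \Oof(\kappa^2)$, bounded purely in terms of~$\kappa$. Since \mcc is $\W[1]$-hard under FPT-reductions, this will establish the theorem.

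The construction uses two kinds of \emph{hub} vertices, which together will form the feedback vertex set. For each color $i \in [\kappa]$, I introduce a \emph{selector hub} $h_i$ attached to $n$ tree-like \emph{choice gadgets} $P_{i,1}, \ldots, P_{i,n}$, one per vertex of~$V_i$. The initial token placement, together with a few auxiliary pendant edges at $h_i$ that can only be cheaply covered via a token sliding through one specific choice gadget, forces any budget-respecting discovery sequence to select exactly one index $\ell_i \in [n]$ per color $i$, thereby encoding a candidate tuple $(v_{1,\ell_1}, \ldots, v_{\kappa,\ell_\kappa})$. For each pair $\{i,j\} \in \binom{[\kappa]}{2}$, I introduce a \emph{verification hub} $h_{ij}$ connected, via small tree-like subgadgets, to specific internal vertices of the choice gadgets of $V_i$ and $V_j$. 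These subgadgets are designed so that, within the allotted budget, every edge incident to $h_{ij}$ can be covered if and only if $v_{i,\ell_i} v_{j,\ell_j} \in E(H)$, thus enforcing that the selected tuple is a $\kappa$-clique of $H$.

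The bound on fvs then follows directly: after deleting the set $F = \{h_i : i \in [\kappa]\} \cup \{h_{ij} : 1 \le i < j \le \kappa\}$ of $\kappa + \binom{\kappa}{2}$ hubs, every remaining component is a tree (each choice gadget and each attached verification subgadget becomes a disjoint tree-like piece), so $\mathrm{fvs}(G) \le \kappa + \binom{\kappa}{2} = \Oof(\kappa^2)$. For correctness, in the forward direction a $\kappa$-clique in $H$ directly prescribes a discovery sequence of length exactly $\budget$ that covers $E(G)$; in the backward direction, a local token-counting analysis around each hub (in the spirit of the proofs of Lemmas~\ref{lem:hardness-VS-pathwidth-forward} and~\ref{lem:hardness-VC-pathwidth-backward}) shows that any budget-respecting discovery sequence must commit to one $\ell_i$ per color and simultaneously satisfy every pairwise edge constraint, hence encodes a clique.

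The main technical obstacle will be engineering the verification gadgets so that they simultaneously (i) attach only to the hubs and to a bounded number of internal vertices of the choice gadgets, ensuring that their removal together with the hubs leaves a forest, (ii) tightly enforce the edge-of-$H$ constraints within the budget without allowing tokens to take global ``shortcuts'' between different color classes or pairs, and (iii) admit a tight budget accounting. Achieving this local-enforcement-with-global-acyclicity balance, without appealing to the large supplier-style gadget used in \Cref{thm:VC-D-pathwidth} (which would spoil the fvs bound), is the delicate design step.
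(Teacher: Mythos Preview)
Your high-level framework---reduce from \mcc, place a bounded number of ``hub'' vertices whose deletion leaves a forest, and do a tight budget analysis---is indeed the route the paper takes. But your proposal stops precisely at the hard part: you write that ``the main technical obstacle will be engineering the verification gadgets,'' and you never supply that engineering. Saying that edges at $h_{ij}$ ``can be covered within budget iff $v_{i,\ell_i}v_{j,\ell_j}\in E(H)$'' is a specification, not a construction; nothing in your outline explains what mechanism forces this equivalence, and the analogy to Lemmas~\ref{lem:hardness-VS-pathwidth-forward} and~\ref{lem:hardness-VC-pathwidth-backward} is misleading, since those rely on the \textsc{MMO} supplier structure you explicitly want to avoid.

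The paper's actual mechanism is quite different from a single verification hub per pair. For each pair $\{i,j\}$ it builds an \emph{edge-block} $H_{i,j}$ (one star per edge of $E_{i,j}$, with $2n$ leaves) and two \emph{connectors} $\conn_{i,j}^i,\conn_{i,j}^j$, each containing vertices $s_{i,j}^l,r_{i,j}^l$ together with $n$-element sets $A_{i,j}^l,D_{i,j}^l$ that require exactly $n$ incoming tokens apiece. The crucial trick is arithmetic: from the vertex-block $H_i$ one can push $x_i$ tokens to $s_{i,j}^i$ and $n-x_i$ to $r_{i,j}^i$, while from the edge-block (having selected an edge $u_{i,z_i}u_{j,z_j}$) one can push $n-z_i$ tokens to $s_{i,j}^i$ and $z_i$ to $r_{i,j}^i$. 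Since $A_{i,j}^i$ needs exactly $n$ tokens through $s_{i,j}^i$, one gets $x_i+(n-z_i)=n$, i.e.\ $x_i=z_i$, which is what ties the vertex selection to the edge selection. The feedback vertex set is then $\{\tilde s_{i,j}^l,\hat s_{i,j}^l,\tilde r_{i,j}^l,\hat r_{i,j}^l\}$, of size $8\binom{\kappa}{2}$. Your proposal contains no counterpart to this balancing argument, and without it there is no reason to believe a single hub $h_{ij}$ with ``small tree-like subgadgets'' can enforce the edge constraint under a tight budget.
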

We present a parameterized reduction from the \mcc~problem. 
We utilize the reduction presented in Theorem~\ref{thm:DS-D-fvs}, and apply some changes over the constructed graph to obtain a reduced instance of  the VC-D problem. 
Consider the graph $H$ constructed in the proof of Theorem~\ref{thm:DS-D-fvs}. 
For each $i \in [\kappa]$, we add a neighbor $\Tilde{t}_i$ to $t_i$ in the vertex-block $H_i$. 
For each $\iljk$, we add a neighbor $\Tilde{t}_{i,j}$ to $t_{i,j}$ in the edge-block $H_{i,j}$. 
For each $\iljk$ and $l \in \{i,j\}$, we do the following changes in the connector $\conn_{i,j}^l$:
\begin{itemize}
    \item Add four new vertices $\Tilde{s}_{i,j}^l, \hat{s}_{i,j}^l, \Tilde{r}_{i,j}^l$ and $\hat{r}_{i,j}^l$. 
    \item Connect $\Tilde{s}_{i,j}^l, \hat{s}_{i,j}^l$ with $s_{i,j}^l$, and $\Tilde{r}_{i,j}^l, \hat{r}_{i,j}^l$ with $r_{i,j}^l$. 
    \item For each neighboring vertex $v$ of $s_{i,j}^l$ from the vertex-blocks, remove the edge $s_{i,j}^lv$ and add the edge $\Tilde{s}_{i,j}^lv$.
    \item For each neighboring vertex $v$ of $s_{i,j}^l$ from the edge-blocks, remove the edge $s_{i,j}^lv$ and add the edge $\hat{s}_{i,j}^lv$.
    \item For each neighboring vertex $v$ of $r_{i,j}^l$ from the vertex-blocks, remove the edge $r_{i,j}^lv$ and add the edge $\Tilde{r}_{i,j}^lv$.
    \item For each neighboring vertex $v$ of $r_{i,j}^l$ from the vertex-blocks, remove the edge $r_{i,j}^lv$ and add the edge $\hat{r}_{i,j}^lv$.
\end{itemize}
An illustration of a connector connecting a vertex-block and an edge-block is given in Figure~\ref{fig:VC-D-fvs-illustration}. 
This completes the construction of graph $H$ for the VC-D instance. 
Next we describe the initial configuration $\varS$ as follows:
\[\varS = \bigcup_{i \in [\kappa], {x \in [n]}} Q_{i,x} \cup \bigcup_{e \in E} Q_e \cup \bigcup_{i \in [\kappa]}\{t_i , \Tilde{t}_i\}  \cup \bigcup_{\iljk}\{t_{i,j},\Tilde{t}_{i,j}\} \cup \bigcup_{\iljk, l\in \{i,j\}} \{\Tilde{s}_{i,j}^l, \hat{s}_{i,j}^l, \Tilde{r}_{i,j}^l, \hat{r}_{i,j}^l\}.\]
Finally, we set $\budget = (12n+2)\kctwo+2\kappa$ and the reduced VC-D instance is $(H, \varS, \budget)$. 
\begin{figure}
    \tikzset{decorate sep/.style 2 args=
{decorate,decoration={shape backgrounds,shape=circle,shape size=#1,shape sep=#2}}}
\centering
    \begin{tikzpicture}
    \coordinate (ti) at (0,0);
    \coordinate (pix) at (1,0);
    \coordinate (qix1) at (2,1.5);
    \coordinate (qixx) at (2,0.5);
    \coordinate (qixx1) at (2,-0.5);
    \coordinate (qixn) at (2,-1.5);
    \coordinate (siij) at (6,2);
    \coordinate (stiij) at (5,2.5);
    \coordinate (shiij) at (7,2.5);
    \coordinate (riij) at (6, -2);
    \coordinate (rtiij) at (5,-2.5);
    \coordinate (rhiij) at (7,-2.5);
    \coordinate (aiji1) at (5, 1.5);
    \coordinate (aijin) at (7, 1.5);
    \coordinate (biji1) at (5, 0.5);
    \coordinate (bijin) at (7, 0.5);
    \coordinate (ciji1) at (5, -0.5);
    \coordinate (cijin) at (7, -0.5);
    \coordinate (diji1) at (5, -1.5);
    \coordinate (dijin) at (7, -1.5);
    \coordinate (tij) at (12, 0);
    \coordinate (pe) at (11, 0);
    \coordinate (qez1) at (10,1.5);
    \coordinate (qezz) at (10, 0.5);
    \coordinate (qezz1) at (10, -0.5);
    \coordinate (qezn) at (10, -1.5);
    \coordinate (qew1) at (10, -2.5);
    \coordinate (qewn) at (10, -3.5);
    \draw[black, thick] (ti) -- ($(ti) + (0,-1)$);
    \draw[black, thick] (ti) -- (pix);
    \draw[black, thick] (pix) -- (qix1);
    \draw[black, thick] (pix) -- (qixx);
    \draw[black, thick] (pix) -- (qixx1);
    \draw[black, thick] (pix) -- (qixn);
    \draw[black, thick] (tij) -- ($(tij) + (0,-1)$);
    \draw[black, thick] (tij) -- (pe);
    \draw[black, thick] (pe) -- (qez1);
    \draw[black, thick] (pe) -- (qezz);
    \draw[black, thick] (pe) -- (qezz1);
    \draw[black, thick] (pe) -- (qezn);
    \draw[black, thick] (pe) -- (qew1);
    \draw[black, thick] (pe) -- (qewn);
    \draw[black, thick] (shiij) -- ($(shiij) + (0,0.5)$);
    \draw[black, thick] (stiij) -- ($(stiij) + (0,0.5)$);
    \draw[black, thick] (rhiij) -- ($(rhiij) + (0,-0.5)$);
    \draw[black, thick] (rtiij) -- ($(rtiij) + (0,-0.5)$);
    \draw[black, thick] (siij) -- (stiij);
    \draw[black, thick] (siij) -- (shiij);
    \draw[black, thick] (riij) -- (rtiij);
    \draw[black, thick] (riij) -- (rhiij);
    \draw[black, thick] (siij) -- (aiji1);
    \draw[black, thick] (siij) -- (aijin);
    \draw[black, thick] (riij) -- (diji1);
    \draw[black, thick] (riij) -- (dijin);
    \draw[black, thick] (aiji1) -- (biji1);
    \draw[black, thick] (aijin) -- (bijin);
    \draw[black, thick] (ciji1) -- (diji1);
    \draw[black, thick] (cijin) -- (dijin);
    \draw[black, thick] (stiij) to[bend right=30] (qix1);
    \draw[black, thick] (stiij) to[bend right=30] (qixx);
    \draw[black, thick] (rtiij) to[bend left=30] (qixx1);
    \draw[black, thick] (rtiij) to[bend left=30] (qixn);
    \draw[black, thick] (shiij) to[bend left=30] (qez1);
    \draw[black, thick] (shiij) to[bend left=30] (qezz);
    \draw[black, thick] (rhiij) to[bend right=30] (qezz1);
    \draw[black, thick] (rhiij) to[bend right=30] (qezn);
    \fill[white, draw=black, thick] (ti) circle (0.1cm) node[left,blue] {$t_i$};
    \fill[white, draw=black, thick] ($(ti) + (0,-1)$) circle (0.1cm);
    \fill[white, draw=black, thick] (tij) circle (0.1cm) node[right,blue] {$t_{i,j}$};
    \fill[white, draw=black, thick] ($(tij) + (0,-1)$) circle (0.1cm);
    \fill[white, draw=black, thick] (siij) circle (0.1cm) node[above,blue] {$s_{i,j}^i$};
    \fill[white, draw=black, thick] (stiij) circle (0.1cm); 
    \fill[white, draw=black, thick] (shiij) circle (0.1cm);
    \fill[white, draw=black, thick] ($(stiij) + (0,0.5)$) circle (0.1cm);
    \fill[white, draw=black, thick] ($(shiij) + (0,0.5)$) circle (0.1cm);
    \fill[white, draw=black, thick] (riij) circle (0.1cm) node[below,blue] {$r_{i,j}^i$};
    \fill[white, draw=black, thick] (rtiij) circle (0.1cm); 
    \fill[white, draw=black, thick] (rhiij) circle (0.1cm); 
    \fill[white, draw=black, thick] ($(rtiij) + (0,-0.5)$) circle (0.1cm);
    \fill[white, draw=black, thick] ($(rhiij) + (0,-0.5)$) circle (0.1cm);
    \fill[white, draw=black, thick] (pix) circle (0.1cm) node[below,blue] {$p_{i,x}$};
    \fill[white, draw=black, thick] (qix1) circle (0.1cm) node[above,blue] {$q_{i,x}^{j,1}$};
    \fill[white, draw=black, thick] (qixx) circle (0.1cm) node[below,blue] {$q_{i,x}^{j,x}$};
    \fill[white, draw=black, thick] (qixx1) circle (0.1cm);
    \fill[white, draw=black, thick] (qixn) circle (0.1cm) node[below,blue] {$q_{i,x}^{j,n}$};
    \fill[white, draw=black, thick] (pe) circle (0.1cm) node[above,blue] {$p_e$};    
    \fill[white, draw=black, thick] (qez1) circle (0.1cm) node[above,blue] {$q_e^1$};
    \fill[white, draw=black, thick] (qezz) circle (0.1cm) node[below,blue] {$q_e^{n-z}$};
    \fill[white, draw=black, thick] (qezz1) circle (0.1cm);
    \fill[white, draw=black, thick] (qezn) circle (0.1cm) node[below,blue] {$q_e^n$};
    \fill[white, draw=black, thick] (qew1) circle (0.1cm); 
    \fill[white, draw=black, thick] (qewn) circle (0.1cm); 
    \fill[white, draw=black, thick] (aiji1) circle (0.1cm);
    \fill[white, draw=black, thick] (aijin) circle (0.1cm);
    \fill[white, draw=black, thick] (biji1) circle (0.1cm);
    \fill[white, draw=black, thick] (bijin) circle (0.1cm);
    \fill[white, draw=black, thick] (ciji1) circle (0.1cm);
    \fill[white, draw=black, thick] (cijin) circle (0.1cm);
    \fill[white, draw=black, thick] (diji1) circle (0.1cm);
    \fill[white, draw=black, thick] (dijin) circle (0.1cm);
    \fill[red] (ti) circle (0.05cm);
    \fill[red] ($(ti) + (0,-1)$) circle (0.05cm);
    \fill[red] (tij) circle (0.05cm);
    \fill[red] ($(tij) + (0,-1)$) circle (0.05cm);
    \fill[red] (qix1) circle (0.05cm);
    \fill[red] (qixx) circle (0.05cm);
    \fill[red] (qixx1) circle (0.05cm);
    \fill[red] (qixn) circle (0.05cm);
    \fill[red] (qez1) circle (0.05cm);
    \fill[red] (qezz) circle (0.05cm);
    \fill[red] (qezz1) circle (0.05cm);
    \fill[red] (qezn) circle (0.05cm);
    \fill[red] (qew1) circle (0.05cm);
    \fill[red] (qewn) circle (0.05cm);
    \fill[red] (shiij) circle (0.05cm);
    \fill[red] (stiij) circle (0.05cm);
    \fill[red] (rhiij) circle (0.05cm);
    \fill[red] (rtiij) circle (0.05cm);
    \draw[black,rounded corners] (4.75, 1.25) rectangle (7.25, 1.75) node[midway] {$A_{i,j}^i$};
    \draw[black,rounded corners] (4.75, 0.25) rectangle (7.25, 0.75) node[midway] {$B_{i,j}^i$};
    \draw[black,rounded corners] (4.75, -0.75) rectangle (7.25,-0.25) node[midway] {$C_{i,j}^i$};
    \draw[black,rounded corners] (4.75, -1.75) rectangle (7.25, -1.25) node[midway] {$D_{i,j}^i$};
    \draw[gray, thick, dashed, rounded corners=15pt] (1.5, -2.5) rectangle (2.5, 2.5);
    \draw[gray, thick, dashed, rounded corners=15pt] (0.6, -3.15) rectangle (2.75, 3.15);
    \draw[gray, thick, dashed, rounded corners=15pt] (-0.5, -3.5) rectangle (3.5, 3.5) node[midway, black, above=3.5cm] {$H_i$};
    \draw[gray, thick, dashed, rounded corners=15pt] (9.5, 2.5) rectangle (11.25, -4);
    \draw[gray, thick, dashed, rounded corners=15pt] (8.5, 3.5) rectangle (12.5, -5) node[midway, black, above=4.25cm] {$H_{i,j}$};
    \draw[gray, thick, dashed, rounded corners=15pt] (4.5, -3.5) rectangle (7.5, 3.5) node[black, midway, above=3.5cm] {$\conn_{i,j}^i$};
    \draw[decorate sep={0.5mm}{1.75mm},fill] (2,1.25) -- (2,0.7);
    \draw[decorate sep={0.5mm}{1.75mm},fill] (2,-0.75) -- (2,-1.3);
    \draw[decorate sep={0.5mm}{1.75mm},fill] (10,1.25) -- (10,0.7);
    \draw[decorate sep={0.5mm}{1.75mm},fill] (10,-0.75) -- (10,-1.3);
    \draw[decorate sep={0.5mm}{1.75mm},fill] (10,-2.75) -- (10,-3.3);
    \draw[decorate sep={0.5mm}{1.75mm},fill] (2,2.7) -- (2,3.1);
    \draw[decorate sep={0.5mm}{1.75mm},fill] (2,-2.7) -- (2,-3.1);
    \end{tikzpicture}
    \caption{\footnotesize An illustration of the reduction of Theorem~\ref{thm:VC-D-fvs}. For $\iljk$, the vertex-block $H_i$ and the edge-block $H_{i,j}$ are connected to the connector $\conn_{i,j}^i$. The initial configuration is denoted by vertices with red circle. }
    \label{fig:VC-D-fvs-illustration}
\end{figure}
\begin{lemma}
    \label{lem:VC-D-fvs-bound}
    The \emph{fvs} of the graph $H$ is at most $8\kctwo$. 
\end{lemma}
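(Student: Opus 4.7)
The plan is to exhibit an explicit feedback vertex set $F$ of size $8\kctwo$ by taking, from each of the two connectors attached to every edge-pair $\iljk$, the four ``bridging'' vertices introduced by the reduction. Concretely, set
\[
F \;=\; \bigcup_{\iljk}\;\bigcup_{l\in\{i,j\}} \bigl\{\Tilde{s}_{i,j}^{l},\,\hat{s}_{i,j}^{l},\,\Tilde{r}_{i,j}^{l},\,\hat{r}_{i,j}^{l}\bigr\}.
\]
Since there are $\kctwo$ pairs, each contributing two connectors of four vertices, we have $|F|=8\kctwo$.

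Next I would verify that $H-F$ is acyclic by analysing its connected components separately. The key observation is that by construction every edge leaving a connector $\conn_{i,j}^{l}$ towards a vertex-block passes through $\Tilde{s}_{i,j}^{l}$ or $\Tilde{r}_{i,j}^{l}$, and every edge leaving it towards an edge-block passes through $\hat{s}_{i,j}^{l}$ or $\hat{r}_{i,j}^{l}$ (this is precisely the purpose of the re-wiring step in the construction). Therefore, after deleting $F$, the connectors become disconnected from the vertex-blocks and edge-blocks.

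It then suffices to show that (i) each vertex-block $H_i$ is a tree, (ii) each edge-block $H_{i,j}$ is a tree, and (iii) the residual subgraph of each connector $\conn_{i,j}^{l}-F$ is a tree. Parts (i) and (ii) are immediate from the construction inherited from Theorem~\ref{thm:DS-D-fvs}: $H_i$ is rooted at $t_i$ with children $p_{i,x}$ and grand-children $q_{i,x}^{j,y}$ (plus the leaf $\Tilde{t}_i$), and $H_{i,j}$ has the analogous tree structure rooted at $t_{i,j}$. For (iii), once $F$ is removed, the connector reduces to the vertex $s_{i,j}^{l}$ together with the star on $A_{i,j}^{l}$ and the $A$--$B$ matching, plus the symmetric piece $r_{i,j}^{l}$ with $D_{i,j}^{l}$ and the $C$--$D$ matching; there is no edge between the $s$-side and the $r$-side in the residual connector, so it is a disjoint union of two trees. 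Combining (i)--(iii), $H-F$ is a forest, which yields $\mathit{fvs}(H)\le |F|=8\kctwo$. The only subtle point, and the one worth double-checking carefully, is that there is no connector-internal edge joining the $s$-half and the $r$-half that was not routed through one of the four removed vertices; this follows directly from the connector description in the proof of Theorem~\ref{thm:DS-D-fvs} together with the local modifications listed above.
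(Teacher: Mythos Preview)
Your proof is correct and takes essentially the same approach as the paper: you exhibit the identical feedback vertex set $F=\{\Tilde{s}_{i,j}^{l},\hat{s}_{i,j}^{l},\Tilde{r}_{i,j}^{l},\hat{r}_{i,j}^{l}\mid \iljk,\ l\in\{i,j\}\}$ of size $8\kctwo$ and argue that $H-F$ is a forest. The paper simply asserts the latter in one line, whereas you spell out the component-by-component verification, but the content is the same.
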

\begin{proof}
    Let $F = \{\Tilde{s}_{i,j}^l, \hat{s}_{i,j}^l, \Tilde{r}_{i,j}^l, \hat{r}_{i,j}^l \mid \iljk, l\in \{i,j\}\}$. 
    The removal of $F$ from $H$ results in a forest. 
    Therefore, the \emph{fvs} of $H$ is at most $|F| = 8\binom{\kappa}{2}$. 
\end{proof}
Next we prove the correctness of the reduction.
\begin{lemma}
\label{lem:VC-D-fvs-forward}
    If $(G, \kappa)$ is a yes-instance of the \mcc~problem, then $(H, \varS, \budget)$ is a yes-instance of the \textsc{VC-D} problem. 
\end{lemma}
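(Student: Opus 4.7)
The plan is to exhibit, given a multi-colored clique $C = \{v_{1,x_1}, \ldots, v_{\kappa,x_\kappa}\}$ in $G$ with clique edges $e_{i,j} = v_{i,x_i}v_{j,x_j}$ for each $\iljk$, an explicit discovery sequence of length at most $\budget = (12n+2)\kctwo + 2\kappa$ that transforms $\varS$ into a vertex cover of $H$.

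First I would verify that in the initial configuration $\varS$, every edge of $H$ outside the connectors is already covered. The only token-free vertices are the $p$-type vertices $p_{i,x}$ and $p_e$, the gate-center vertices $s_{i,j}^l, r_{i,j}^l$, and all vertices of the sets $A_{i,j}^l, B_{i,j}^l, C_{i,j}^l, D_{i,j}^l$; but every neighbor of a $p$-vertex belongs to $\{t_i, \Tilde{t}_i, t_{i,j}, \Tilde{t}_{i,j}\}$ or to the $Q$-sets, and the edges incident to $s_{i,j}^l$ and $r_{i,j}^l$ going to $\Tilde{s}, \hat{s}, \Tilde{r}, \hat{r}$ are covered by tokens on those gate vertices. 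Hence the only uncovered edges lie inside the connectors: for each $\conn_{i,j}^l$, the $4n$ edges of the form $sa^k$, $a^kb^k$, $rd^k$, $c^kd^k$ for $k \in [n]$.

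The discovery sequence proceeds in three phases. In phase one, for each $i \in [\kappa]$ I slide the token on $t_i$ to $p_{i,x_i}$ and then the token on $\Tilde{t}_i$ onto the now-empty $t_i$, keeping every edge of $H_i$ covered; total cost $2\kappa$. In phase two, for each $\iljk$ the analogous 2-slide operation in $H_{i,j}$ places tokens on $p_{e_{i,j}}$ and $t_{i,j}$; total cost $2\kctwo$. In phase three, for each $\iljk$ and each $l \in \{i,j\}$, I use $6n$ slides within $\conn_{i,j}^l$ to place a token on every $a^k$ and every $d^k$, which simultaneously covers the edges $sa^k$, $a^kb^k$, $rd^k$, and $c^kd^k$: tokens are routed from the gate vertices $\Tilde{s}_{i,j}^l, \hat{s}_{i,j}^l$ (respectively $\Tilde{r}_{i,j}^l, \hat{r}_{i,j}^l$) through $s_{i,j}^l$ (respectively $r_{i,j}^l$) onto the targets, and refills are drawn from those $q$-vertices in $H_l$ adjacent to $\Tilde{s}$ (resp.\ $H_{i,j}$ adjacent to $\hat{s}$) whose indices correspond to the chosen clique vertex $v_{l,x_l}$ and clique edge $e_{i,j}$. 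Summing across the $2\kctwo$ connectors gives $12n\cdot\kctwo$ slides in phase three, so the grand total is $2\kappa + (12n+2)\kctwo = \budget$.

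The main obstacle is the bookkeeping inside phase three: one must order the $6n$ slides of each connector so that every source carries a token and every destination is empty at the relevant moment, and so that no edge outside the connector ever becomes uncovered. The crucial structural fact enabling this is clique consistency: because $v_{l,x_l}$ is the chosen vertex of color $l$ and $e_{i,j}$ is the chosen edge between colors $i$ and $j$, the particular $q$-vertices in $H_l$ adjacent to $\Tilde{s}_{i,j}^l$ (and in $H_{i,j}$ adjacent to $\hat{s}_{i,j}^l$) associated with index $x_l$ (resp.\ $e_{i,j}$) have, apart from their gate vertex, only neighbors $p_{l,x_l}$ and $p_{e_{i,j}}$, both of which received tokens in phases one and two; hence these $q$-tokens can be relocated without leaving any other edge uncovered. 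Once the sequence is fully specified and this non-interference is verified, the final configuration is a vertex cover of $H$ obtained within the budget, as required.
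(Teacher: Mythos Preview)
Your proposal is correct and follows essentially the same approach as the paper's proof: both start from a clique $\{u_{i,x_i}\}$ with edges $e_{i,j}$, first spend two slides per vertex-block and per edge-block to put tokens on $p_{i,x_i}$ and $p_{e_{i,j}}$ (keeping $t_i,t_{i,j}$ occupied), then route the freed $Q_{i,x_i}$ and $Q_{e_{i,j}}$ tokens through the gate vertices $\Tilde s,\hat s,\Tilde r,\hat r$ and $s,r$ into $A_{i,j}^l$ and $D_{i,j}^l$, with the clique consistency guaranteeing exactly $n$ tokens arrive at each. Your per-connector count of $6n$ slides is just a regrouping of the paper's split count ($2(\kappa-1)n$ per vertex-block, $4n$ per edge-block, $2n$ per connector); both sum to $(12n+2)\binom{\kappa}{2}+2\kappa$.

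One small remark: your worry that ``no edge outside the connector ever becomes uncovered'' during the sequence is unnecessary, since in the discovery framework only the final configuration $C_\ell$ must be a vertex cover, not the intermediate ones. What matters is your subsequent observation that in the \emph{final} configuration the emptied $q$-vertices have their other neighbor $p_{i,x_i}$ or $p_{e_{i,j}}$ occupied---that is exactly what the paper uses as well.
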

\begin{proof}
    Let $C = \subseteq V(G)$ be a $\kappa$-clique in $G$. 
    For each $i \in [\kappa]$, let $u_{i,x_i}$ be the vertex in $C \cap V_i$ for some $x_i \in [n]$. 
    For each $\iljk$, let $e_{i,j} = u_{i,x_i}u_{j,x_j}$.
    For each $i \in [\kappa]$, we slide the token on~$\Tilde{t}_i$ to $p_{i,x_i}$. 
    Then, for each $j \not= i \in [\kappa]$, we slide $x_i$-tokens in $Q_{i,x_i}^j$ towards $\Tilde{s}^i_{i,j}$ and $n-x_i$-tokens in~$Q_{i,x_i}^j$ towards $\Tilde{r}_{i,j}^i$. 
    For each $\iljk$, we slide the token on $\Tilde{t}_{i,j}$ to $p_{e_{i,j}}$. 
    Then, we slide 
    \begin{itemize}
        \item $n-x_i$ tokens in $Q_{e_{i,j}}$ to $\hat{s}_{i,j}^i$,
        \item $x_i$ tokens in $Q_{e_{i,j}}$ to $\hat{r}_{i,j}^i$,
        \item $n-x_j$ tokens in $Q_{e_{i,j}}$ to $\hat{s}_{i,j}^j$, and
        \item $x_j$ tokens in $Q_{e_{i,j}}$ to $\hat{r}_{i,j}^j$. 
    \end{itemize}
    
    The tokens received at the vertices $\Tilde{s}_{*,*}^*$ and $\hat{s}_{*,*}^*$ are pushed to $s_{*,*}^*$. 
    Similarly, the tokens received at the vertices $\Tilde{r}_{*,*}^*$ and $\hat{r}_{*,*}^*$ are pushed to $r_{*,*}^*$. 
    For each $\iljk$, and for each $l \in {i,j}$, $s_{i,j}^l$ receives $x_l$-tokens from $H_l$ and $n-x_l$-tokens from $H_{i,j}$.
    Similarly, $r_{i,j}^l$ receives $n-x_l$-tokens from~$H_l$ and $x_l$-tokens from $H_{i,j}$. 
    Further, we push the $n$-tokens received by $s_{i,j}^l$ to $A_{i,j}^l$ and $n$-tokens received by $r_{i,j}^l$ to $D_{i,j}^l$. 
    The resulting configuration is a valid vertex cover. 
    Finally, let $S' \subseteq V(H)$ be the solution obtained from the above token sliding steps. 
    More precisely,
    \begin{eqnarray*}
        S' &=& \bigcup_{i \in [\kappa]}\big\{\{t_i, p_{i, x_i}\} \cup (Q_{i} \setminus Q_{i, x_i})\big\} \cup \bigcup_{\iljk} \big\{\{t_{i,j}, p_{e_{i,j}}\} \cup  (Q_{i,j} \setminus Q_{e_{i,j}})\big\}\\
        && \cup \bigcup_{\iljk, l \in \{i,j\}} (\{\Tilde{s}_{i,j}^l, \hat{s}_{i,j}^l, \Tilde{r}_{i,j}^l, \hat{r}_{i,j}^l\}\cup A_{i,j}^l \cup D_{i,j}^l) .
    \end{eqnarray*}
    
    It is clear that the set $S'$ is a vertex cover in $H$. 
    Next we count the number of token steps used to obtain $S'$ from $\varS$. 
    In each vertex-block, we spend $2(\kappa-1)n+2$ steps to push tokens towards the connectors. 
    Similarly, at each edge-block, we spend $(4n+2)$ steps. 
    At each connectors, we spend $2n$ steps. 
    Therefore, we spend $\kappa\cdot\big(2(\kappa-1)n+2\big) + \binom{\kappa}{2}\cdot(4n+2) + 2\binom{\kappa}{2} \cdot 2n = (12n+2)\binom{\kappa}{2} + 2\kappa = \budget$. 
    Hence, $(H, \varS, \budget)$ is a yes-instance of VC-D problem.
\end{proof}

\begin{lemma}
\label{lem:VC-D-fvs-reverse}
    If $(H, \varS, \budget)$ is a yes-instance of the VC-D problem, then $(G, \kappa)$ is a yes-instance of the \mcc~problem. 
\end{lemma}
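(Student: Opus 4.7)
The plan is to exploit the tightness of the budget $\budget=(12n+2)\kctwo+2\kappa$: I will show that any discovery sequence of length at most $\budget$ ending at a vertex cover of $H$ spends exactly the number of slides used in the forward construction of \Cref{lem:VC-D-fvs-forward} inside each block, and from this rigid structure a multicolored clique in $G$ can be read off.

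First, I would pin down which initial tokens cannot move in any slide-minimal solution. Each pendant attached to $\Tilde{t}_i$, $\Tilde{t}_{i,j}$, $\Tilde{s}_{i,j}^l$, $\hat{s}_{i,j}^l$, $\Tilde{r}_{i,j}^l$, and $\hat{r}_{i,j}^l$ has degree one, so moving a token onto such a pendant is strictly wasteful. Hence each of these ``doubled'' vertices must itself remain in the cover, and since it is the unique neighbor of its pendant no other token can transit through it. Consequently, the only migration out of a vertex block $H_i$ funnels through $t_i$ and the $Q_{i,\cdot}$-vertices, and analogously for each edge block, while tokens entering the interior of a connector can only pass through $\Tilde{s}_{i,j}^l$, $\hat{s}_{i,j}^l$, $\Tilde{r}_{i,j}^l$, or $\hat{r}_{i,j}^l$ and do so at a cost of one slide each.

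Second, I would match the forward-direction counts with matching lower bounds. In $H_i$, whenever a token leaves $Q_{i,x}$, a token must be slid onto $p_{i,x}$ to cover the newly exposed edges $p_{i,x}q_{i,x}^{j,y}$. Because each of the $\kappa-1$ connectors attached to $H_i$ requires $n$ tokens in order to cover the classes $A_{i,j}^l$ and $D_{i,j}^l$, at least $(\kappa-1)n$ tokens must leave the $Q$-region of $H_i$, costing $2(\kappa-1)n$ slides (one onto a $\Tilde{s}$ or $\Tilde{r}$ vertex, one onward to $s$ or $r$), plus $2$ further slides for the chain $\Tilde{t}_i\to t_i\to p_{i,x_i}$ needed to cover the $p$-edges exposed by the departing $Q$-tokens. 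Analogous lower bounds of $4n+2$ per edge block and $2n$ per connector sum to exactly $\budget$, so every one of these slides is actually performed in any valid sequence and no slide can be wasted elsewhere.

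Third, I would extract the clique. The two ``extra'' slides per vertex block force the single token leaving $t_i$ to reach exactly one $p$-vertex, namely $p_{i,x_i}$, which we designate as the vertex selected for color $i$; similarly each edge block selects a unique $p_{e_{i,j}}$. The main obstacle, and the most delicate part of the argument, is showing that these local selections are \emph{consistent} across the connectors. For the connector $\conn_{i,j}^l$, the $n$ tokens that must accumulate on $s_{i,j}^l$ can only enter through $\Tilde{s}_{i,j}^l$ (whose neighbors in $H_l$ lie inside $Q_{i,x_l}^j$ and provide $x_l$ tokens) and through $\hat{s}_{i,j}^l$ (whose neighbors in $H_{i,j}$ provide $n-x_l'$ tokens, where $x_l'$ is the position encoding the endpoint of the chosen $e_{i,j}$ on side $l$). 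Since these counts must total exactly $n$, we obtain $x_l=x_l'$, and the symmetric bookkeeping at $r_{i,j}^l$ (using the complementary $n-x_l$ and $x_l'$ tokens) confirms the same equality. Applying this at both connectors of each pair $\iljk$ yields $e_{i,j}=u_{i,x_i}u_{j,x_j}\in E(G)$, so $\{u_{i,x_i} : i\in [\kappa]\}$ is a multicolored $\kappa$-clique in $G$.
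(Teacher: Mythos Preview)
Your overall strategy matches the paper's: show the budget is tight, deduce that exactly one $p$-vertex per block receives a token, and then read off a clique from the connector arithmetic. Your consistency argument in the third step (the $x_l+(n-x_l')=n$ count through $s_{i,j}^l$) is essentially the paper's argument and is correct once the rigid structure is established.

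The gap is in your second step. The per-block lower bounds you state --- $2(\kappa-1)n+2$ per vertex-block, $4n+2$ per edge-block, $2n$ per connector --- are not valid as \emph{independent} lower bounds. In particular, the sentence ``each of the $\kappa-1$ connectors attached to $H_i$ requires $n$ tokens'' presupposes that every connector draws exactly $n$ tokens from the vertex-block side and $n$ from the edge-block side. A priori nothing forbids a connector from receiving all $2n$ of its tokens from the edge-block (or from the vertex-block using several different $Q_{i,x}$ sets), in which case some block might contribute nothing and need no ``$+2$'' repair at all. So your block-wise decomposition is the pattern realised at optimum, not a lower bound one can sum.

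The paper fixes this with a \emph{global} count. First, every connector induces a matching of size $2n+4$ while holding only four tokens, so $2n$ tokens must enter each connector, and every such token travels at least three steps from a $Q$-vertex through $\Tilde{s}/\hat{s}/\Tilde{r}/\hat{r}$ and $s/r$ into $A/D$; this already consumes $12n\kctwo$ slides regardless of which side supplies them. The remaining $2\kappa+2\kctwo$ slides must produce $4n\kctwo$ released tokens, and since a single two-step move onto some $p_{i,x}$ releases at most $n(\kappa-1)$ tokens in a vertex-block (respectively $2n$ in an edge-block), one is forced to spend exactly one such two-step move per block. This ``release-rate'' bookkeeping is the missing ingredient; once you add it, the rest of your argument goes through.

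A minor correction to your first paragraph: $\Tilde{t}_i$ and $\Tilde{t}_{i,j}$ are themselves the pendants (of $t_i$ and $t_{i,j}$); they do not carry additional degree-one neighbours. The pendants you need for the ``cannot move'' argument are those on $\Tilde{s}_{i,j}^l,\hat{s}_{i,j}^l,\Tilde{r}_{i,j}^l,\hat{r}_{i,j}^l$ only.
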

\begin{proof}
    Let $S^*$ be a feasible solution for the instance $(H, \varS, \budget)$ of the VC-D problem. 
    At each connector $\conn_{i,j}^l$ for $\iljk$ and $l \in \{i,j\}$, at least $2n$ tokens need to be slid from either vertex-blocks or edge blocks.
    Because, each $H[\conn_{i,j}^l]$ contains a matching of size $2n+4$, but it has only four tokens in the initial configuration. 
    It is clear that every token must move at least 3 steps to reach the sets $A_{*,*}^*$ and $D_{*,*}^*$ in order to cover the uncovered edges. 
    This saturates a budget of $4n\cdot 2\kctwo = 12n\kctwo$. 
    Therefore, we left with exactly $2\kappa+2\kctwo$ budget to adjust the tokens on the vertex blocks and edge blocks. 
    For any $\iljk$, let $q_{i,x}^{j,z}$ for some integers $x,z \in [n]$ be a vertex that looses the token where the token is moved to some vertex in a connector. 
    Since the edge $p_{i,x}q_{i,x}^{j,z}$ become uncovered, we need to slide a token to $q_{i,x}^{j,z}$ or $p_{i,x}$. 
    This cost at least two token step. 
    By construction of the vertex-block $H_i$, by sliding a token (for two steps) to the vertex $p_{i,x}$ for some $x \in [n]$, one can release at most $n(\kappa-1)$ tokens from the neighboring set $Q_{i,x}$. 
    Similarly, on a edge-block $H_{i,j}$ for some $\iljk$, by sliding a token (for two steps) to the vertex $p_e$ for some $e \in E_{i,j}$, one can release at most $2n$ tokens from the neighboring set $Q_e$. 
    This implies that by sliding at most $2\kappa$ tokens on the vertex-blocks, one can release at most $\kappa\cdot n(\kappa-1) = 2n\kctwo$ token from the vertex-blocks. 
    Similarly, by sliding at most $2\kctwo$ tokens on the edge-blocks, one can release at most $2n\kctwo$ tokens from the edge-blocks. 
    Therefore, we need to slide exactly one token (for two steps) in each vertex-block and each edge-block. 
    
    For each $i\in [\kappa]$, let $p_{i,x_i}$ for some $x_i \in [n]$ be the vertex in $H_i$ that gets token in $S^*$ and releases all the tokens in $Q_{i,x_i}$. 
    Similarly, for each $\iljk$, let $p_e$ for some $e=u_{i,z_i}u_{i,z_j} \in E_{i,j}$ with $z_i,z_j \in [n]$ be the vertex in $H_{i,j}$ that gets token in $S^*$ and releases all tokens in $Q_{e}$. 
    Consider the connector $\conn_{i,j}^i$. 
    The set $Q_{i,x_i}^j$ pushes $x_i$ tokens to $s_{i,j}^i$ and $n-x_i$ tokens to $r_{i,j}^i$. 
    The set $Q_e$ pushes $z_i$ tokens to $r_{i,j}^i$ and $n-z_i$ tokens to $s_{i,j}^i$. 
    The number of tokens passed through $s_{i,j}^i$ to $A_{i,j}^i$ is $x_i + (n - z_i)$. 
    Since $A_{i,j}^i$ need $n$ tokens, it is mandatory that $x_i=z_i$. 
    This equality should hold for every $i$. 
    Therefore, for each $\iljk$, there exist an edge $u_{i,x_i}u_{j,x_j}$. 
    Hence $(G,\kappa)$ is an yes-instance of the \mcc~problem.     
\end{proof}
The proofs of Lemmas~\ref{lem:VC-D-fvs-bound}, \ref{lem:VC-D-fvs-forward} and \ref{lem:VC-D-fvs-reverse} complete the proof of Theorem~\ref{thm:VC-D-fvs}. 

\section{Independent Set Discovery}
\label{sec:is}
Fellows et al.~\cite{fellows2023solution} showed that \textsc{IS-D} is in $\FPT$ with respect to parameter $k + b$ on nowhere dense classes of graphs.
We show in this section that \textsc{IS-D} has a polynomial kernel with respect to parameter $k$ on nowhere dense classes of graphs and does not admit a polynomial kernel with respect to the parameter $\budget + pw$, where $pw$ is the pathwidth of the input graph, unless $\NP \subseteq \cp$. 

\begin{definition} 
For any instance $\Imc = (G, S, \budget$) of a \textsc{$\Qmc$-Discovery} problem for some vertex (resp.\ edge) selection problem $\Qmc$, we call a vertex $v \in V(G) \setminus S$ (resp.\ $e \in E(G) \setminus S$) \emph{irrelevant} with respect to $s \in S$ if there exists a configuration $C_\ell$ such that $\ell \leq b$, $C_\ell$ is a solution for $\Qmc$, and the token on $s$ is not on~$v$ (resp.\ $e$) in $C_\ell$.
\end{definition}

The kernelization algorithm for nowhere dense graphs uses \Cref{thm:quasi-wide-bounds}, along with other structural properties of the input graph, to form a ``sunflower'' and find an irrelevant vertex. It then removes from the graph some of the vertices that are irrelevant with respect to every token. 
A \emph{sunflower} with $p$ \emph{petals} and a \emph{core} $Y$ is a family of sets $F_1$, $\ldots$, $F_p$ such that $F_i \cap F_j = Y$ for all $i \neq j$; the sets $F_i \setminus Y$ are petals and we require none of them to be empty~\cite{erdos1960intersection}. 

\begin{lemma}\label{lem:IS-always-at-distance-3k}
Let $(G, S, \budget)$ be an instance of \textsc{IS-D} where $|S|=k$, and let $G'$ be the subgraph of~$G$ induced by the vertices of $\bigcup_{s \in S,i \in [3k]} V(s,i) \text{ } \cup \text{ } S$. Then $(G', S, \budget)$ is equivalent to $(G, S, \budget)$.
\end{lemma}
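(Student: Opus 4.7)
The plan is to prove the two implications separately. The forward direction, that a yes-answer for $(G', S, \budget)$ implies one for $(G, S, \budget)$, is immediate: $G'$ is an induced subgraph of $G$ containing $S$, so every slide in $G'$ is a slide in $G$, and every independent set in $G'$ is an independent set in $G$. Any discovery sequence witnessing $(G', S, \budget)$ therefore witnesses $(G, S, \budget)$ as well.

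For the converse, my approach is to start from a yes-witness $(\sigma, I)$ for $(G, S, \budget)$ chosen to minimize the number of vertices of $I$ lying outside $V(G')$, and then to argue this minimum is $0$. Suppose toward contradiction that some $v \in I$ satisfies $d(S, v) > 3k$. Let $s \in S$ be the origin of the token ending at $v$ in $\sigma$, so $d(s, v) > 3k$, and fix a shortest $s$--$v$ path $P = u_0 u_1 \ldots u_m$ in $G$, with $m > 3k$. The combinatorial core is a pigeonhole step: since $P$ is a shortest path, for any vertex $w$ the indices $i$ with $u_i \in N[w]$ span at most three consecutive values (otherwise two such $u_i$ would lie at graph-distance at most $2$ yet at path-distance more than $2$, contradicting the shortest-path property). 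Applying this to each of the $k-1$ vertices of $I \setminus \{v\}$, at most $3(k-1) = 3k - 3$ of the $3k - 2$ indices in $\{1, 2, \ldots, 3k-2\}$ are ``bad''. Hence some good index $j \leq 3k-2$ exists, and $I' := (I \setminus \{v\}) \cup \{u_j\}$ is an independent set of size $k$ with $u_j \in V(G')$, since $u_j \in V(s, j)$ with $j \in [3k]$.

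The main obstacle will be to exhibit a discovery sequence of length at most $\budget$ from $S$ to $I'$, which gives the desired contradiction with the choice of $I$. The plan is to modify $\sigma$ by rerouting the $s$-token: instead of following its original walk of length at least $m > 3k$ to $v$, it travels along $P$ and stops at $u_j$ in at most $j \leq 3k - 2$ slides, saving at least three slides compared to $\sigma$. The delicate point is ensuring validity under token sliding, since other tokens' moves in $\sigma$ may no longer be compatible with the $s$-token's altered position. I would handle this by scheduling the $s$-token's new moves last: first execute the trajectories of all other tokens (so they reach their destinations in $I \setminus \{v\}$), then slide the $s$-token along $P$ to $u_j$, which is unobstructed at its final step because $u_j$ is neither in nor adjacent to $I \setminus \{v\}$; the at least three spare slides absorb any detour needed to avoid a blocking token earlier on $P$.

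Once $I \subseteq V(G')$ is established, the last step is to exhibit a discovery sequence from $S$ to $I$ staying inside $G'$: fix any matching $\pi \colon S \to I$ realised by $\sigma$ and route each token from $s$ to $\pi(s)$ along a shortest path in $G$. Each such path has length at most $3k$, so every intermediate vertex lies in $V(G')$, yielding a discovery sequence in $G'$ of length at most $\budget$ and proving that $(G', S, \budget)$ is a yes-instance.
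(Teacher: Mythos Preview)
Your core argument---the pigeonhole over BFS levels around the token's origin to find a closer replacement target preserving independence---is exactly the paper's approach; the paper's proof is essentially your middle paragraph stated more tersely (it does not discuss realizing the modified discovery sequence at all, so your third paragraph is already more careful than the paper).

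There is one genuine gap in your last paragraph. You minimize $|I \setminus V(G')|$, so a token is redirected only when its target $v$ satisfies $d(S,v) > 3k$. But a non-redirected token may have a target $v \in V(G')$ because $d(s',v) \le 3k$ for some \emph{other} $s' \in S$, while $d(s,v) > 3k$ for its own origin $s$. In that case your claim ``each such path has length at most $3k$'' is unjustified, and the shortest $s$--$\pi(s)$ path may leave $V(G')$. The fix is simple and is what the paper does: redirect whenever the target is farther than $3k$ from the token's \emph{own} origin. Your pigeonhole argument on the shortest $s$--$v$ path already applies verbatim under this stronger criterion (it only uses $d(s,v)>3k$, not $d(S,v)>3k$), and after all such redirections every token's shortest path to its target lies in $V(G')$, making the last step go through.
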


\begin{proof}
We show that in any solution to $(G, S, \budget)$, if a token on a vertex $s \in S$ moves to a vertex $u \in C_\ell$ such that $d(s, u) > 3k$, it can instead move towards a vertex $v \in V(H)$ such that $d(s, v) \le 3k$, while keeping the rest of the solution unchanged. 
First, the vertices in $C_\ell \setminus \{u\}$ can appear in at most $k - 1$ of the $3k$ sets $V(s,i)$ for $i \in [3k]$ and every such vertex that appears in a set $V(s,\mathfrak{i})$ for a specific $\mathfrak{i} \in [3k]$ can be a neighbor of at most all the vertices in $V(s,\mathfrak{i}-1)$ and $V(s,\mathfrak{i}+1)$. 
This implies that the token on $s$ cannot move towards any vertex of at most $3k - 3$ of the $3k$ sets $V(s,i)$ for $i \in [3k]$ (as these contain tokens and thus might result in adjacent tokens) and thus there exists a vertex $v$ which the token on $s$ can move to while maintaining an independent set in $C_\ell \setminus \{u\} \cup \{v\}$. 
\end{proof}

\begin{lemma}\label{lem:one-sunflower-petal}
Let $(G, S, \budget)$ be an instance of \textsc{IS-D} where $|S|=k$, and let $\mathcal{V} = \{v_1, v_2, \ldots, v_t\}$ be a set of vertices of $G \setminus S$ such that for a given token on a vertex $s \in S$, $d(s, v_i) = d(s, v_j)$ for $i \neq j \in [t]$. If $\mathcal{A} = \{N[v_1], \ldots, N[v_t]\}$ contains a sunflower with $k + 1$ petals, then any vertex whose closed neighborhood corresponds to one of those petals is irrelevant with respect to $s$. 
\end{lemma}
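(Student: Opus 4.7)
The plan is to take a hypothetical discovery sequence $C_0 = S \vdash C_1 \vdash \ldots \vdash C_\ell$ of length $\ell \leq \budget$ ending at an independent set $C_\ell$ in which the token originating at $s$ sits on $v_i$ (a vertex whose closed neighborhood is one of the $k+1$ sunflower petals), and modify it so that the same token ends on a different petal-vertex. Write $Y$ for the sunflower core and $P_{i_1}, \ldots, P_{i_{k+1}}$ for the petals, so $N[v_{i_m}] = Y \cup P_{i_m}$ and the petals are pairwise disjoint.

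The heart of the argument is a pigeonhole. First, no token of $C_\ell \setminus \{v_i\}$ can lie in $Y$, since such a token would belong to $N[v_i]$ and therefore be adjacent to $v_i$, contradicting the independence of $C_\ell$. Consequently each of the $k-1$ tokens of $C_\ell \setminus \{v_i\}$ lies in at most one petal, so at most $k-1$ of the $k+1$ petals are blocked by another token. Hence at least two petals are unblocked, and in particular we can pick $i' \neq i$ such that $P_{i'}$ contains no token of $C_\ell$. Setting $C_\ell' := (C_\ell \setminus \{v_i\}) \cup \{v_{i'}\}$ then yields an independent set, because no token of $C_\ell \setminus \{v_i\}$ lies in $N[v_{i'}] = Y \cup P_{i'}$.

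It then remains to produce a valid discovery sequence of length at most $\budget$ from $S$ to $C_\ell'$. Since $d(s,v_i) = d(s,v_{i'})$ by hypothesis, a shortest $s$-$v_{i'}$ path has exactly the same length as a shortest $s$-$v_i$ path, and the original sequence spends at least $d(s,v_i)$ slides on the $s$-token. I plan to replace the slides of the $s$-token in the original sequence by slides along a shortest $s$-$v_{i'}$ path, keeping every other slide intact; since the rerouted walk is no longer than the original, the modified sequence has length at most $\ell \leq \budget$ and ends at the independent set $C_\ell'$, certifying that $v_i$ is irrelevant with respect to $s$.

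The main obstacle is the sequencing in the last step, because the rerouted path of the $s$-token may pass through vertices momentarily occupied in the original sequence (for instance, by other tokens still on positions of $S$). I expect to handle this by a straightforward re-sequencing argument: since only the $s$-token changes its final position and all other tokens have identical final positions in $C_\ell$ and $C_\ell'$, one can order the rerouted $s$-slides with respect to the unchanged slides so that every slide targets an unoccupied vertex. The combinatorial pigeonhole is the substantive content; the sliding bookkeeping is routine but requires care.
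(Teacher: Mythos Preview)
Your argument mirrors the paper's proof: the same pigeonhole on $k+1$ petals versus the $k-1$ other tokens of $C_\ell$, and the same appeal to $d(s,v_i)=d(s,v_{i'})$ to keep the budget. The paper is in fact terser on your flagged ``obstacle,'' simply asserting that the number of slides remains constant; the clean way to discharge it (implicit throughout the paper's hardness and kernel arguments) is that for unlabeled tokens the minimum number of slides between two configurations equals the minimum-cost bipartite matching under graph distance, so $C_\ell'$ is reachable from $S$ in at most $\ell$ slides without any need to interleave specific rerouted paths.
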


\begin{proof}
Let $v_{del}$ be one such vertex whose closed neighborhood corresponds to one of the $k + 1$ petals, and consider a solution to $(G, S, \budget)$ in which the token on $s$ is on the vertex $v_{del}$ in $C_\ell$.
First, note that no vertex of $C_\ell \setminus \{v_{del}\}$ is in the core of the sunflower since that would contradict the fact that $v_{del} \in C_\ell$ as $v_{del}$ is a neighbor of every vertex in the core.
Second, since the sunflower has~$k$ remaining petals (besides the one corresponding to the closed neighborhood of $v_{del}$) and $|C_\ell \setminus \{v_{del}\}| = k - 1$, there must remain one vertex (denoted $v_{rep}$) whose closed neighborhood corresponds to one of the remaining $k$ petals and such that $v_{rep}$ is not in the closed neighborhood of any of the vertices in $C_\ell \setminus \{v_{del}\}$.
Thus, $C_\ell \setminus \{v_{del}\} \cup \{v_{rep}\}$ forms an independent set. 
Additionally, since $d(s, v_{del}) = d(s, v_{rep})$, the number of slides in the solution remains constant.    
As a result, $v_{del}$ is irrelevant with respect to $s$.
\end{proof}

\begin{theorem}\label{thm:nowheredense-IS-k}
\textsc{IS-D} has a polynomial kernel with respect to parameter $k$ on nowhere dense graphs. 
\end{theorem}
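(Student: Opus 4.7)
The plan is to first restrict $G$ to the vertices close to the tokens using \Cref{lem:IS-always-at-distance-3k}, and then shrink each distance sphere $V(s,i)$ by iteratively detecting irrelevant vertices via \Cref{lem:one-sunflower-petal}, where the sunflower is extracted from the quasi-wideness of nowhere dense classes guaranteed by \Cref{thm:quasi-wide-bounds}.

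After invoking \Cref{lem:IS-always-at-distance-3k}, it suffices to bound $|V(s,i)|$ by a polynomial in $k$ for each $s \in S$ and $i \in [3k]$. Fix such a pair $(s,i)$ and set $A := V(s,i) \setminus S$. Whenever $|A| \ge N_r(m)$ for $r = 3$ and $m := (k+1) \cdot 2^{x_r}$, \Cref{thm:quasi-wide-bounds} produces a set $X \subseteq V(G)$ with $|X| \le x_r$ and a $3$-independent set $B \subseteq A \setminus X$ in $G - X$ of size at least $m$. Since $|X|$ is a constant, pigeonholing $B$ by the value $N_G(v) \cap X$ yields, for some fixed $Y \subseteq X$, a subset $B_Y \subseteq B$ of size at least $k+1$ whose members all satisfy $N_G(v) \cap X = Y$. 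A key structural step is to check that $\{N[v] : v \in B_Y\}$ is a sunflower with core $Y$: any putative common neighbor outside $X$ would witness a pair at distance $2$ in $G - X$, contradicting $3$-independence, while any common neighbor inside $X$ lies in $Y$ by construction.

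With this sunflower at hand, \Cref{lem:one-sunflower-petal} supplies a vertex $v_{\mathrm{del}} \in B_Y$ that is irrelevant with respect to $s$. After I argue that removing $v_{\mathrm{del}}$ is safe, each round shrinks $|V(s,i)|$ by one, and iterating until termination leaves $|V(s,i)| \le N_r((k+1) \cdot 2^{x_r}) + k$, a polynomial in $k$ since $N_r$ is polynomial and $x_r$ is constant. Summing over the $3k^2$ pairs $(s,i)$ plus the $k$ token vertices yields a polynomial kernel, and each round runs in polynomial time thanks to the algorithmic guarantee of \Cref{thm:quasi-wide-bounds}.

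The main obstacle is bridging the gap between being \emph{irrelevant with respect to $s$}, which only asserts the existence of a solution where the token originally at $s$ is not on $v_{\mathrm{del}}$, and being safe to \emph{delete}, which demands a solution in which no token at all occupies $v_{\mathrm{del}}$. The alternative solution built inside \Cref{lem:one-sunflower-petal} does vacate $v_{\mathrm{del}}$ whenever $s$'s token is the one sitting on it, since at most one token can occupy any vertex; the subtle case is when some other token $s' \neq s$ lands on $v_{\mathrm{del}}$. The plan is to resolve this either by enlarging the sunflower so that a replacement is available regardless of which token occupies $v_{\mathrm{del}}$, or by iterating the reduction and reapplying \Cref{lem:one-sunflower-petal} with different choices of $s$ so that the finally removed vertex is simultaneously irrelevant with respect to every token. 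Either way the polynomial bound survives, but this bookkeeping deserves the most careful treatment in the full write-up.
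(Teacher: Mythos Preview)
Your approach matches the paper's: quasi-wideness, pigeonhole by $N(\cdot)\cap X$, sunflower of closed neighborhoods, then \Cref{lem:one-sunflower-petal}. One cosmetic difference: the paper uses $r=2$ rather than $r=3$, since $2$-independence in $G-X$ already forces any common neighbor of two petal centers into $X$.

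The paper resolves your ``main obstacle'' along the lines of your second proposed fix, and rather cleanly: it never deletes a vertex from the graph. It maintains the sets $V(s,i)$ separately and only removes an irrelevant-with-respect-to-$s$ vertex from the \emph{set} $V(s,i)$; the kernel is then $G$ restricted to $\bigcup_{s,i} V'(s,i)\cup S$ (plus shortest-path vertices, see below). Thus a vertex is excluded from the kernel only if it was removed from \emph{every} $V(s,i)$ containing it, hence is irrelevant with respect to every token within distance $3k$ (and with respect to the remaining tokens by \Cref{lem:IS-always-at-distance-3k}). Since the replacement inside the proof of \Cref{lem:one-sunflower-petal} moves only the one token and leaves all others fixed, the alternative solution has no token on $v_{\mathrm{del}}$ whatsoever; iterating this replacement (process the removals in order, updating the single affected token each time) pushes every token's final position into the surviving sets. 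Your first proposed fix, enlarging the sunflower, does \emph{not} work as stated: all petal centers lie in $V(s,i)$ at distance exactly $i$ from $s$, but you have no control over their distance to a different token $s'$, so redirecting $s'$'s token to a petal center may blow the budget.

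One ingredient your outline omits: for each surviving $v\in V'(s,i)$ the paper also adds the vertices of a shortest $s$--$v$ path to the kernel. Without these, the reduced graph need not contain the sliding paths realising the discovery sequence, and equivalence would fail. This costs only an extra factor of $3k$ in the vertex count, so the polynomial bound survives.
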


\begin{proof}
Let $(G, S, \budget)$ be an instance of \textsc{IS-D} where $G$ is nowhere dense. 
Without loss of generality, we assume the graph $G$ to be connected.
For each vertex $s \in S$ and integer $i \in [3k]$, we compute~$V(s,i)$.
We maintain the invariant that we remove from $V(s,i)$ for each $s \in S$ and $i \in [3k]$, irrelevant vertices with respect to $s$ (note that a vertex can appear in multiple sets $V(s,i)$).

We remove an irrelevant vertex with respect to a vertex $s \in S$ from $V(s,i)$ for an integer $i \in [3k]$ as follows. 
If $|V(s,i)| > N_2(2^{x_2} \cdot (k + 1))$, where $N_2$ and $x_2$ are as per \Cref{thm:quasi-wide-bounds} (here $V(s,i)$ plays the role of the set $A$), we can compute sets $X, B \subseteq V(s,i)$ such that $|X| \le x_2$, $|B| \ge 2^{x_2} \cdot (k + 1)$ and $B$ is $2$-independent in $G - X$. 
Let $\mathcal{B}' = \{B'_1, B'_2, \ldots \}$ be a family of sets that partitions the vertices in $B$ such that for any two vertices $u, v \in B$, $u, v \in B'_j$ if and only if $N[u] \cap X = N[v] \cap X$.
Since $|B| \ge 2^{x_2} \cdot (k + 1)$ and $|X| \le x_2$, at least one set $B_{\mathfrak{j}} \in \mathcal{B}$, for a specific~$\mathfrak{j}$, contains at least $k + 1$ vertices of $B$. 
All vertices in $B_\mathfrak{j}$ have the same neighborhood in $X$ and they are $2$-independent~$G-X$ (\ie, no vertex from outside of $X$ can be in the closed neighborhood of two vertices in~$B_{\mathfrak{j}}$); thus their closed neighborhoods form a sunflower with at least $k + 1$ petals and a core that is contained in $X$ (\Cref{fig:IS-nowhere-dense}).
By \Cref{lem:one-sunflower-petal}, one vertex of $B_\mathfrak{j}$ is irrelevant with respect to $s$ and can be removed from~$V(s,i)$.
We can repeatedly apply \Cref{thm:quasi-wide-bounds} on the set 
$V(s,i)$ until $|V(s,i)| \le N_2(2^{x_2} \cdot (k + 1))$.

We form the kernel $(G', S, \budget)$ of the original instance $(G, S, \budget)$ as follows. 
We set $V(G') = \bigcup_{s \in S,i \in [3k]} V(s,i) \text{ } \cup \text{ } S$.
By \Cref{lem:IS-always-at-distance-3k}, any vertex $v \in V(G)$ such $d(s,v) > 3k$ for every $s \in S$ is irrelevant with respect to every $s \in S$ and not required in the kernel $(G', S, \budget)$.
For each vertex $v \in V(s,i)$, for $s \in S$ and $i \in [3k]$, we add to $V(G')$ at most $i$ vertices that are on the shortest path from $s$ to $v$, if such vertices are not already present in $V(G')$. 
$G'$ is the subgraph of $G$ induced by the vertices in $V(G')$.
By the end of this process, $|V(G')| \le k + [9k^3 \cdot N_2(2^{x_2} \cdot (k + 1))]$, as for each $s \in S$ and $i \in [3k]$, $V(s,i) \le N_2(2^{x_2} \cdot (k + 1))$ and for each vertex in the latter sets, we added to~$V(G')$ at most $3k - 1$ vertices that are on a shortest path from that vertex to the vertex $s$.
$(G', S, \budget)$ is a kernel as only vertices that are irrelevant with respect to every token in $S$ might not be in $V(G')$ and all vertices needed to move tokens from vertices in $S$ towards an independent set using only $b$ slides are present in $V(G')$.
\end{proof}

\begin{figure}[H]
\centering
\begin{tikzpicture}[scale=0.5]

% Draw the sets as ovals
\draw[thick, draw=black, fill=teal!20, opacity=0.5, rotate around={30:(0,0)}] (0,0) ellipse (2 and 0.9);
\draw[thick, draw=black, fill=red!20, opacity=0.5, rotate around={-30:(2,0)}] (2,0) ellipse (2 and 0.9);
\draw[thick, draw=black, fill=orange!20, opacity=0.5] (1,1.6) ellipse (0.9 and 2);

% Draw the intersection area
\begin{scope}
\clip[rotate around={-30:(2,0)}] (2,0) ellipse (2 and 0.9);
\clip (1,1.6) ellipse (0.9 and 2);
\fill[pattern=north east lines] (-1,-1) rectangle (4,4);
\end{scope}

% Draw the intersection area
\begin{scope}
\clip[rotate around={30:(0,0)}] (0,0) ellipse (2 and 0.9);
\clip (1,1.6) ellipse (0.9 and 2);
\fill[pattern=north east lines] (-1,-1) rectangle (4,4);
\end{scope}

% Draw vertices outside the intersection
\node[draw, circle, fill=black, inner sep=1pt] at (-0.7, 0) {};
\node[draw, circle, fill=black, inner sep=1pt] at (-1, -1) {};
\node[draw, circle, fill=black, inner sep=1pt] at (2, 0.5) {};
\node[draw, circle, fill=black, inner sep=1pt] at (3, -1) {};
\node[draw, circle, fill=black, inner sep=1pt] at (1, 3) {};
\node[draw, circle, fill=black, inner sep=1pt] at (0.4, 2.2) {};
\node[draw, circle, fill=teal, inner sep=1pt, opacity=0.6] at (0.7, 0.8) {};
\node[draw, circle, fill=teal, inner sep=1pt, opacity=0.6] at (1.2, 0.4) {};
\node[draw, circle, fill=teal, inner sep=1pt, opacity=0.6] at (1, 0) {};
\end{tikzpicture}
    \caption{\footnotesize An example of a sunflower (with pink, beige and blue petals) formed by the closed neighborhoods of the vertices in $B_\mathfrak{j}$ of \Cref{thm:nowheredense-IS-k}. The vertices in $B_\mathfrak{j}$ are $2$-independent in $G-X$ and they have the same closed neighborhood in $X$ (the blue colored vertices).}
    \label{fig:IS-nowhere-dense}
\end{figure}
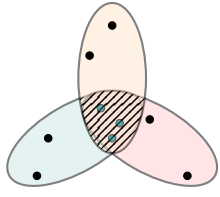

\begin{theorem}\label{thm:IS-D-pathwidth}
\textsc{IS-D} is \XNLP-hard with respect to parameter pathwidth. 
\end{theorem}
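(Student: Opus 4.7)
The plan is to establish \XNLP-hardness of \textsc{IS-D} with respect to pathwidth via a pl-reduction from \textsc{MMO}, directly recycling the construction already developed for \textsc{VC-D} in Theorem~\ref{thm:VC-D-pathwidth}. The key observation is the classical complementarity: $C \subseteq V(G)$ is an independent set of $G$ if and only if $V(G) \setminus C$ is a vertex cover of $G$. Moreover, a token slide from $u$ to $v$ along an edge $uv \in E(G)$ in a configuration $S$ corresponds to a token slide from $v$ to $u$ along the same edge in the complementary configuration $V(G) \setminus S$. Hence any discovery sequence for \textsc{VC-D} of length $\ell$ from $S$ to a vertex cover $C$ bijects to a discovery sequence for \textsc{IS-D} of the same length from $V(\Tilde{G}_H) \setminus S$ to the independent set $V(\Tilde{G}_H) \setminus C$.

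Concretely, I would build the same graph $\Tilde{G}_H$ and path decomposition $\mathcal{P}_{\Tilde{G}_H}$ as in the proof of Theorem~\ref{thm:VC-D-pathwidth}, invoking the log-space transducer of Lemma~\ref{lem:bounded-pathwidth-G-H}; since the graph is unchanged, the bound $pw(\Tilde{G}_H) \le pw(H) + 7$ carries over. I would then set the initial configuration for \textsc{IS-D} to be $S' := V(\Tilde{G}_H) \setminus S$, where $S$ is exactly the configuration from Theorem~\ref{thm:VC-D-pathwidth}, and retain the same budget $\budget = m + 3rn$. The complement can be written to the output tape in log space by iterating over the vertices of $\Tilde{G}_H$ and checking, using only counters and local predicates over the gadget structure, whether the current vertex belongs to the structurally-defined set $S$.

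Correctness then follows immediately from Lemmas~\ref{lem:hardness-VS-pathwidth-forward} and~\ref{lem:hardness-VC-pathwidth-backward}: $(\Tilde{G}_H, \mathcal{P}_{\Tilde{G}_H}, S', \budget)$ is a yes-instance of \textsc{IS-D} if and only if $(\Tilde{G}_H, \mathcal{P}_{\Tilde{G}_H}, S, \budget)$ is a yes-instance of \textsc{VC-D}, which in turn holds if and only if $(H, \mathcal{P}_H, \sigma, r)$ is a yes-instance of \textsc{MMO}. Combined with the \XNLP-hardness of \textsc{MMO} parameterized by pathwidth (given a path decomposition realising it), this yields the theorem.

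The main obstacle is bookkeeping rather than logical: one must verify that membership in the union defining $S$ (namely $C$, $B$, $Y$, the vertices $e^u, e^v$, the representatives $w_v$, the supplier $s$, and the donor endpoints $d_1^i, d_3^i$) can be decided by a log-space transducer reading only the \textsc{MMO} instance. Since each test corresponds to inspecting the gadget type and indices of a vertex against a fixed finite list of structural patterns, this adds no asymptotic space overhead and the resulting reduction is genuinely a pl-reduction. A useful sanity check is that parameterizing by pathwidth (rather than by $k$) is precisely what makes the complementation approach viable: the number of tokens blows up from $|S|$ to $|V(\Tilde{G}_H)| - |S|$, but the pathwidth of the underlying graph is unaffected.
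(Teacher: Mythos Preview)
Your complementation argument is correct and takes a genuinely different route from the paper. The paper does \emph{not} recycle the \textsc{VC-D} construction; instead it gives a direct reduction from \textsc{MMO} on the smaller graph $G_H$ (no supplier gadget, none of the subdivisions introduced in Theorem~\ref{thm:VC-D-pathwidth}), with initial configuration $S = A \cup A^+ \cup B \cup B^+ \cup Y \cup X^+$ and budget $\budget = m + 3\bm{\sigma}$, and establishes both directions from scratch via Lemmas~\ref{lem:hardness-IS-pathwidth-forward} and~\ref{lem:hardness-IS-pathwidth-backward}. Your argument is shorter and more conceptual: the bijection between slides $u \to v$ in $S$ and slides $v \to u$ in $V(\Tilde{G}_H)\setminus S$ transfers discovery sequences verbatim, so correctness is inherited wholesale from Lemmas~\ref{lem:hardness-VS-pathwidth-forward} and~\ref{lem:hardness-VC-pathwidth-backward} with no new combinatorial analysis. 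What the paper's direct construction buys is a leaner instance (fewer vertices, fewer tokens, budget $m+3\bm{\sigma}$ instead of $m+3rn$), and, more importantly, a self-contained gadget that is then plugged into the or-cross-composition of Theorem~\ref{thm:cross_composition-IS-bpw}; your approach would still work there too, by complementing the \textsc{VC-D} composition of Theorem~\ref{thm:cross_composition-VC-bpw}, but the paper keeps the two problems' compositions structurally parallel rather than derived from one another.
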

\noindent 

As stated in \Cref{sec:foundational}, we present an \textsf{FPT}-reduction from \textsc{MMO}. 
Let $(H, \mathcal{P}_H, \sigma, r)$ be an instance of \textsc{MMO} where $H$ is a bounded pathwidth graph, $|V(H)| = n$, $|E(H)| = m$, $\sigma: E(H) \rightarrow \mathbb{Z}_+$ such that $\sum_{e \in E(H)} \sigma(e) = \bm{\sigma}$ and $r \in \mathbb{Z}_{+}$ (integers are given in unary).
We construct an instance $(G_H, \mathcal{P}_{G_H}, S, \budget)$ of \textsc{IS-D} where $G_H$ is exactly as described in \Cref{sec:foundational} (under the graph $G_H$ heading). 
See \Cref{fig:IS-pathwidth-reduction}.
From \Cref{lem:bounded-pathwidth-G-H}, $G_H$ is of bounded pathwidth.
We set $S = A \text{ } \cup \text{ } A^+ \text{ } \cup \text{ } B \text{ } \cup \text{ } B^+ \text{ } \cup \text{ } Y \text{ } \cup \text{ } X^+$ and $\budget = m + 3 \bm{\sigma}$.
Given that all integers are given in unary, the construction of the graph $G_H$, or its path decomposition (as described in \Cref{lem:bounded-pathwidth-G-H}), and as a consequence the reduction, take time polynomial in the size of the input instance. 
Additionally, by \Cref{cor:reduction-space}, this reduction is a pl-reduction.
We claim that $(H, \mathcal{P}_H, \sigma, r)$ is a yes-instance of \textsc{MMO} if and only if $(G_H, \mathcal{P}_{G_H}, S, \budget)$ is a yes-instance of \textsc{IS-D}.
\begin{figure}[ht]
    \centering
    \begin{tikzpicture}
    \fill[blue!10] (-0.75,2.75) -- (0.75,2.75) -- (0.75,-0.25) -- (-0.75,-0.25) -- cycle; 
    \fill[blue!10] (0.5,0.1) -- (1.25,0.9) -- (1.25,0) -- (0.5,0) -- cycle; 
    \fill[blue!10] (0.5,-0.1) -- (1.25,-0.9) -- (1.25,0) -- (0.5, 0) -- cycle; 
    \fill[blue!10] (-0.75,-2) -- (0.75,-2) -- (0.75,-3.25) -- (-0.75,-3.25) -- cycle; 
    \fill[blue!10] (0.5,-2.9) -- (1.25,-2) -- (1.25,-3) -- (0.5,-3) -- cycle; 
    \fill[blue!10] (0.5,-3.1) -- (1.25,-4) -- (1.25,-3) -- (0.5, -3) -- cycle; 
    \fill[green!10] (8,1) -- (7.5,2.5) -- (8.5,2.5) -- (8,1) -- cycle;
    \fill[green!10] (8,1) -- (6,0) -- (6,3) -- (8,1) -- cycle;
    \fill[green!10] (8,1) -- (10,-0.25) -- (10,2.5) -- (8,1) -- cycle;
    \fill[green!10] (6,0) -- (6,3) -- (4.75,3) -- (4.75,0) -- cycle;
    \fill[green!10] (8,-1.5) -- (7.5,0) -- (8.5,0) -- (8,-1.5) -- cycle;
    \fill[green!10] (8,-1.5) -- (6,-1.75) -- (6,-1.25) -- (8,-1.5) -- cycle;
    \fill[green!10] (8,-1.5) -- (10,0) -- (10,-3) -- (8,-1.5) -- cycle;
    \fill[green!10] (6,-1.75) -- (6,-1.25) -- (4.75,-1.25) -- (4.75,-1.75) -- cycle;
    \fill[green!10] (8,-4.7) -- (7.5,-3.2) -- (8.5,-3.2) -- (8,-4.7) -- cycle;
    \fill[green!10] (8,-4.7) -- (6,-6.25) -- (6,-2.75) -- (8,-4.7) -- cycle;
    \fill[green!10] (8,-4.7) -- (10,-2.5) -- (10,-5.75) -- (8,-4.7) -- cycle;
    \fill[green!10] (6,-6.25) -- (6,-2.75) -- (4.75,-2.75) -- (4.75,-6.25) -- cycle;
    \draw[black] (8,1) -- (9.5,0.25);
    \draw[black] (8,1) -- (9.5,1);
    \draw[black] (8,1) -- (9.5,1.75);
    \draw[black] (8,1) -- (8,2);
    \draw[black] (8,1) -- (6,0.25);
    \draw[black] (8,1) -- (6,2.25);
    \draw[black] (8,1) -- (6,1.25);
    \draw[black] (8,-1.5) -- (9.5,-2.5);
    \draw[black] (8,-1.5) -- (9.5,-1.5);
    \draw[black] (8,-1.5) -- (9.5,-0.5);
    \draw[black] (8,-1.5) -- (6,-1.5);
    \draw[black] (8,-1.5) -- (8,-0.5);
    \draw[black] (8,-4.7) -- (9.5,-5.2);
    \draw[black] (8,-4.7) -- (9.5,-4.2);
    \draw[black] (8,-4.7) -- (9.5,-3.2);
    \draw[black] (8,-4.7) -- (8,-3.75);
    \draw[black] (8,-4.7)-- (6,-6);
    \draw[black] (8,-4.7) -- (6,-5.2);
    \draw[black] (8,-4.7)-- (6,-4.2);
    \draw[black] (8,-4.7) -- (6,-3.4);
    \draw[yellow] (1,-2.5) .. controls (2,-2.25) .. (6,-1.5);
    \draw[yellow] (1,-3.5) .. controls (2.5,-5) and  (5,-5) .. (6,-6);    
    \draw[red] (0.5,-2.25) .. controls (1.5,-1.25) .. (5,-1.5);
    \draw[red] (0.5,-2.25) .. controls (2,-4.7) and  (5,-6) .. (5,-6);  
    \draw[red] (0.5,0.75) .. controls (1,1) .. (5,0.25);
    \draw[red] (0.5,1.5) .. controls (1,2.5) and  (2,2) .. (5,1.25);
    \draw[red] (0.5,2.25) .. controls (1,3) and  (2,3.5) .. (5,2.25);  
    \node[circle, fill=black, inner sep=1pt, draw=black, fill=white, label=below:{\scalebox{0.5}{$w_u$}}] at (8,1) {};
    \node[circle, fill=black, inner sep=1pt, draw=black, fill=white, label=below:{\scalebox{0.5}{$w_w$}}] at (8,-1.5) {};
    \node[circle, fill=black, inner sep=1pt, draw=black, fill=white, label=below:{\scalebox{0.5}{$w_v$}}] at (8,-4.7) {};
    \node[circle, fill=black, inner sep=1pt, draw=black, fill=white, label=right:{\scalebox{0.5}{$x_u^1$}}] at (9.5,0.25) {};
    \node[circle, fill=black, inner sep=1pt, draw=black, fill=white, label=right:{\scalebox{0.5}{$x_u^2$}}] at (9.5,1) {};
    \node[circle, fill=black, inner sep=1pt, draw=black, fill=white, label=right:{\scalebox{0.5}{$x_u^3$}}] at (9.5,1.75) {};
    \node[circle, fill=black, inner sep=1pt, draw=black, fill=white, label=right:{\scalebox{0.5}{$x_w^1$}}] at (9.5,-2.5) {};
    \node[circle, fill=black, inner sep=1pt, draw=black, fill=white, label=right:{\scalebox{0.5}{$x_w^2$}}] at (9.5,-1.5) {};    
    \node[circle, fill=black, inner sep=1pt, draw=black, fill=white, label=right:{\scalebox{0.5}{$x_w^3$}}] at (9.5,-0.5) {};
    \node[circle, fill=black, inner sep=1pt, draw=black, fill=white, label=right:{\scalebox{0.5}{$x_v^1$}}] at (9.5,-5.2) {};
    \node[circle, fill=black, inner sep=1pt, draw=black, fill=white, label=right:{\scalebox{0.5}{$x_v^2$}}] at (9.5,-4.2) {};
    \node[circle, fill=black, inner sep=1pt, draw=black, fill=white, label=right:{\scalebox{0.5}{$x_v^3$}}] at (9.5,-3.2) {};
    \node[circle, fill=black, inner sep=1pt, draw=black, fill=black, label=above right:{\scalebox{0.5}{$x_u^{4}$}}] at (8,2) {};
    \node[circle, fill=black, inner sep=1pt, draw=black, fill=black, label=above right:{\scalebox{0.5}{$x_w^{4}$}}] at (8,-0.5) {};
    \node[circle, fill=black, inner sep=1pt, draw=black, fill=black, label=above right:{\scalebox{0.5}{$x_v^{4}$}}] at (8,-3.75) {};
    \draw[black] (5,-1.5) -- (6,-1.5) node[circle, draw=black, fill=black, inner sep=1pt, label=above:{\scalebox{0.5}{$y_{e_2}^{w(1)}$}}, pos=1]{} node[circle, draw=black, fill=white, inner sep=1pt, label=above:{\scalebox{0.5}{$z_{e_2}^{w(1)}$}}, pos=0]{};  
    \draw[black] (5,1.25) -- (6,1.25) node[circle, draw=black, fill=black, inner sep=1pt, label=above:{\scalebox{0.5}{$y_{e_1}^{u(2)}$}}, pos=1]{} node[circle, draw=black, fill=white, inner sep=1pt, label=above:{\scalebox{0.5}{$z_{e_1}^{u(2)}$}}, pos=0]{};
    \draw[black] (5,0.25) -- (6,0.25) node[circle, draw=black, fill=black, inner sep=1pt, label=above:{\scalebox{0.5}{$y_{e_1}^{u(1)}$}}, pos=1]{} node[circle, draw=black, fill=white, inner sep=1pt, label=above:{\scalebox{0.5}{$z_{e_1}^{u(1)}$}}, pos=0]{};
    \draw[black] (5,2.25) -- (6,2.25) node[circle, draw=black, fill=black, inner sep=1pt, label=above:{\scalebox{0.5}{$y_{e_1}^{u(3)}$}}, pos=1]{} node[circle, draw=black, fill=white, inner sep=1pt, label=above:{\scalebox{0.5}{$z_{e_1}^{v(3)}$}}, pos=0]{};   
    \draw[black] (5,-3.4) -- (6,-3.4) node[circle, draw=black, fill=black, inner sep=1pt, label=above:{\scalebox{0.5}{$y_{e_1}^{v(2)}$}}, pos=1]{} node[circle, draw=black, fill=white, inner sep=1pt, label=above:{\scalebox{0.5}{$z_{e_1}^{v(2)}$}}, pos=0]{};
    \draw[black] (5,-4.2) -- (6,-4.2) node[circle, draw=black, fill=black, inner sep=1pt, label=above:{\scalebox{0.5}{$y_{e_1}^{v(3)}$}}, pos=1]{} node[circle, draw=black, fill=white, inner sep=1pt, label=above:{\scalebox{0.5}{$z_{e_1}^{v(3)}$}}, pos=0]{};
    \draw[black] (5,-5.2) -- (6,-5.2) node[circle, draw=black, fill=black, inner sep=1pt, label=above:{\scalebox{0.5}{$y_{e_1}^{v(1)}$}}, pos=1]{} node[circle, draw=black, fill=white, inner sep=1pt, label=above:{\scalebox{0.5}{$z_{e_1}^{v(1)}$}}, pos=0]{};
    \draw[black] (5,-6) -- (6,-6) node[circle, draw=black, fill=black, inner sep=1pt, label=above:{\scalebox{0.5}{$y_{e_2}^{v(1)}$}}, pos=1]{} node[circle, draw=black, fill=white, inner sep=1pt, label=above:{\scalebox{0.5}{$z_{e_2}^{v(1)}$}}, pos=0]{};
    \draw[black] (-0.5,0.75) -- (0.5,0.75) node[circle, draw=black, fill=black, inner sep=1pt, label=above:{\scalebox{0.5}{$a^1_{e_1}$}}, pos=0]{} node[circle, draw=black, fill=black, inner sep=1pt, label=above:{\scalebox{0.5}{$b^1_{e_1}$}}, pos=1]{};
    \draw[black] (-0.5,1.5) -- (0.5,1.5) node[circle, draw=black, fill=black, inner sep=1pt, label=above:{\scalebox{0.5}{$a^2_{e_1}$}}, pos=0]{} node[circle, draw=black, fill=black, inner sep=1pt, label=above:{\scalebox{0.5}{$b^2_{e_1}$}}, pos=1]{};
    \draw[black] (-0.5,2.25) -- (0.5,2.25) node[circle, draw=black, fill=black, inner sep=1pt, label=above:{\scalebox{0.5}{$a^3_{e_1}$}}, pos=0]{} node[circle, draw=black, fill=black, inner sep=1pt, label=above:{\scalebox{0.5}{$b^3_{e_1}$}}, pos=1]{};
    \draw[black] (1,0.5) -- (1, -0.5);
    \draw[black] (0.5,0) -- (1,0.5) node[circle, draw=black, fill=white, inner sep=1pt, label=right:{\scalebox{0.5}{$e_1^v$}}, pos=1]{};
    \draw[black] (0.5,0) -- (1,-0.5) node[circle, draw=black, fill=white, inner sep=1pt, label=right:{\scalebox{0.5}{$e_1^u$}}, pos=1]{};
    \draw[black] (-0.5,0) -- (0.5,0) node[circle, draw=black, fill=black, inner sep=1pt, label=above:{\scalebox{0.5}{$a^4_{e_1}$}}, pos=0]{} node[circle, draw=black, fill=black, inner sep=1pt, label=above:{\scalebox{0.5}{$b^4_{e_1}$}}, pos=1]{};
    \draw[black] (-0.5,-2.25) -- (0.5,-2.25) node[circle, draw=black, fill=black, inner sep=1pt, label=above:{\scalebox{0.5}{$a^1_{e_2}$}}, pos=0]{} node[circle, draw=black, fill=black, inner sep=1pt, label=above:{\scalebox{0.5}{$b^1_{e_2}$}}, pos=1]{};
    \draw[black] (1,-2.5) -- (1, -3.5);
    \draw[black] (0.5,-3) -- (1,-2.5) node[circle, draw=black, fill=white, inner sep=1pt, label=right:{\scalebox{0.5}{$e_2^w$}}, pos=1]{};
    \draw[black] (0.5,-3) -- (1,-3.5) node[circle, draw=black, fill=white, inner sep=1pt, label=right:{\scalebox{0.5}{$e_2^u$}}, pos=1]{};
    \draw[black] (-0.5,-3) -- (0.5,-3) node[circle, draw=black, fill=black, inner sep=1pt, label={above:{\scalebox{0.5}{$a^2_{e_2}$}}}, pos=0]{} node[circle, draw=black, fill=black, inner sep=1pt, label=above:{\scalebox{0.5}{$b^2_{e_2}$}}, pos=1]{};
    \end{tikzpicture}
    \caption{\footnotesize Parts of the graph $G_H$ constructed by the reduction of \Cref{thm:IS-D-pathwidth} given an instance $(H, \mathcal{P}_H, \sigma, r)$, where $H$ has three vertices $u, v$ and $w$, and two edges $e_1=uv$ and $e_2=uw$, and $r = 3$. Additionally, $\sigma(e_1) = 3$ and $\sigma(e_2) = 1$. For clarity, the edges between the vertices in $B_{e_1}$ and $Z_{e_1}^u$ are missing. The same applies for the edges between $e_1^v$ and the vertices of $Y_{e_1}^v$ and the edges between $e_1^u$ and the vertices of $Y_{e_1}^u$. 
    Red and yellow edges are used to highlight the different types of edges used to connect the subgraphs $G_{e_1}, G_{e_2}, G_{u}, G_{v}$ and $G_{w}$ of $G_H$, vertices in black are in $S$ and those in white are not.}
    \label{fig:IS-pathwidth-reduction}
\end{figure}
\begin{lemma}\label{lem:hardness-IS-pathwidth-forward}
If $(H, \mathcal{P}_H, \sigma, r)$ is a yes-instance of \textsc{MMO}, then $(G_H, \mathcal{P}_{G_H}, S, \budget)$ is a yes-instance of \textsc{IS-D}.
\end{lemma}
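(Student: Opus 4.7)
The plan is to use the feasible orientation $\lambda$ of $H$ to construct an explicit discovery sequence of length exactly $m+3\bm{\sigma}=\budget$ that ends in an independent set of $G_H$. Observe first that the only conflicts in $S$ are the edges $a_e^i b_e^i$ for $e\in E(H)$ and $i\in[\sigma(e)+1]$: every other edge of $G_H$ has at least one endpoint in $V(G_H)\setminus S$ (namely in the sets $X\setminus X^+$, $Z$, $\{w_v : v\in V(H)\}$, or $\{e^u,e^v : e\in E(H)\}$). So there are $\bm{\sigma}+m$ conflicts to resolve.

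For each edge $e=uv\in E(H)$ with $\lambda(e)=(v,u)$ I would perform the following slides, grouped into three stages.
\emph{Stage 1 (one slide per edge, $m$ total):} slide the token on $b_e^{\sigma(e)+1}$ to $e^v$, which is allowed since $e^v$ is a token-free neighbor. This resolves the conflict $a_e^{\sigma(e)+1}b_e^{\sigma(e)+1}$ but creates $\sigma(e)$ temporary conflicts $e^v y_e^{v(i)}$.
\emph{Stage 2 (two slides per out-weight unit at $v$, $2\bm{\sigma}$ total):} process the vertices $v\in V(H)$ one at a time. Because $\lambda$ is feasible, the number of pairs $(e,i)$ with $\lambda(e)=(v,\cdot)$ and $i\in[\sigma(e)]$ is at most $r$, so we can injectively assign them to the targets $x_v^1,\ldots,x_v^r$. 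For each such pair $(e,i)$ successively slide $y_e^{v(i)}\to w_v$ and then $w_v\to x_v^{j}$ for the assigned $j$. Each individual slide is legal because $w_v$ is always emptied before the next $y$-token is pushed onto it, and the target $x_v^{j}$ is initially token-free and used only once.
\emph{Stage 3 (one slide per remaining $ab$-conflict, $\bm{\sigma}$ total):} for each pair $(e,i)$ with $\lambda(e)=(v,u)$ and $i\in[\sigma(e)]$ slide $b_e^i\to z_e^{v(i)}$; this is legal since after Stage 2 the neighbor $z_e^{v(i)}$ is token-free, and it resolves the conflict $a_e^i b_e^i$.

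The length is $m+2\bm{\sigma}+\bm{\sigma}=\budget$. It remains to verify that the final configuration $C_\ell$ is an independent set. I would check edge-by-edge: the $ab$ edges are all resolved (Stages 1 and 3); the edges $b_e^{\sigma(e)+1}e^u$ and $b_e^{\sigma(e)+1}e^v$ are fine since $b_e^{\sigma(e)+1}$ is now empty and at most one of $e^u,e^v$ (the one chosen in Stage 1) holds a token; the edges $e^v y_e^{v(i)}$ are resolved because all $y_e^{v(i)}$ with $\lambda(e)=(v,\cdot)$ have been evacuated in Stage 2, while for the other orientation $e^u$ remains empty; the edges $y_e^{v(i)}z_e^{v(i)}$ and $b_e^i z_e^{\cdot(i)}$ are fine because whenever $z_e^{v(i)}$ carries a new token the adjacent $y_e^{v(i)}$ and $b_e^i$ have been vacated, and $z_e^{u(i)}$ never receives a token; the edges $w_v x_v^{j}$ and $w_v y_e^{v(i)}$ are fine because $w_v$ is empty in the end; and the edges involving $x_v^{r+1}$ are fine since its only neighbor is $w_v$.

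The only subtle point I expect is the bookkeeping in Stage 2: one must convince oneself that the identity $\sum_{v}(\text{out-weight of }v)=\bm{\sigma}$ makes the total slide count match the budget exactly and that the $x_v^j$ targets suffice, both of which follow directly from feasibility of $\lambda$. Everything else is a routine verification that each slide is legal and that no uncovered conflict remains.
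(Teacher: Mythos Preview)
Your proof is correct and follows essentially the same approach as the paper: you perform the same three types of slides (moving $b_e^{\sigma(e)+1}$ to $e^v$, routing each $y_e^{v(i)}$ through $w_v$ to a free $x_v^j$, and sliding each $b_e^i$ to $z_e^{v(i)}$), with the only difference being the order in which you execute them and that you group Stage~2 by vertex rather than by edge. Since only the final configuration needs to be independent, the reordering is immaterial, and your edge-by-edge verification of independence is, if anything, more thorough than the paper's.
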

\begin{proof}
Let $\lambda: E(H) \rightarrow V(H) \times V(H)$ be an orientation of the graph $H$ such that for each $v \in V(H)$, the total weight of the edges directed out of $v$ is at most $r$.
In $(H, \mathcal{P}_H, \sigma, r)$, the vertices in~$A$ and~$B$ contain tokens. 
The same applies for the vertices in $A^+$ and $B^+$.
To fix that, for each edge $e \in E(H)$ such that $\lambda(e) = (v, u)$:
\begin{enumerate}[itemsep=0pt]
    \item we slide, for each $i \in [\sigma(e)]$, the token on $b_e^i$ to $z_e^{v(i)}$ (this consumes $\sigma(e)$ slides),
    \item we move, for each $i \in [\sigma(e)]$, the token on $y_e^{v(i)}$ to any free vertex of $X_v$ (this consumes $2\sigma(e)$ slides),
    \item we slide the token on $b_e^{\sigma(e)+1}$ to $e^v$ (this consumes 1 slide).
\end{enumerate}

This constitutes $m + 3\bm{\sigma}$ slides and we get an independent set in~$G_H$. 
Step 2 above is possible (i.e.\ a token-free vertex exists in $X^v$) since $\lambda$ is an orientation of the graph~$H$ such that for each $v \in V(H)$, the total weight of the edges directed out of $v$ is at most $r$.
Step 3 is possible for each edge $e \in E(H)$ since in Step 2 all tokens were removed from the vertices in $Y_e^v$.
\end{proof}

\begin{lemma}\label{lem:hardness-IS-pathwidth-backward}
If  $(G_H, \mathcal{P}_{G_H}, S, \budget)$ is a yes-instance of  \textsc{IS-D}, then $(H, \mathcal{P}_H, \sigma, r)$ is a yes-instance of \textsc{MMO}.
\end{lemma}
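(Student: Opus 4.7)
The plan is to analyze a discovery sequence of minimum length for $(G_H, \mathcal{P}_{G_H}, S, \budget)$ and read off an orientation $\lambda$ of $H$ from the movements of the tokens on $B^+$. Let $C_0 = S, C_1, \ldots, C_\ell$ be such a sequence with $\ell \le \budget = m + 3\bm{\sigma}$ and $C_\ell$ an independent set. The initial conflicts in $C_0$ are exactly the $\sigma(e)+1$ adjacent pairs $\{(a_e^i, b_e^i) : i \in [\sigma(e)+1]\}$ inside each edge gadget $G_e$, so resolving them requires at least $m + \bm{\sigma}$ slides, leaving at most $2\bm{\sigma}$ additional slides.

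For the ``top'' pair $(a_e^{\sigma(e)+1}, b_e^{\sigma(e)+1})$ I would observe that, since $a_e^{\sigma(e)+1}$ has $b_e^{\sigma(e)+1}$ as its unique neighbor, the only way to separate the pair with a single slide is to move the $b_e^{\sigma(e)+1}$ token to $e^u$ or $e^v$; any alternative (routing through $a_e^{\sigma(e)+1}$ first) costs an extra slide and hence is forbidden under the tight budget. I then define $\lambda(e)=(v,u)$ if the token lands at $e^v$ and $\lambda(e)=(u,v)$ otherwise. Consequently $e^v$ is occupied in $C_\ell$, and since $e^v$ is adjacent to every vertex of $Y_e^v$ (each carrying a token in $C_0$), all $\sigma(e)$ tokens on $Y_e^v$ must eventually vacate their starting positions.

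Next, for each $i \in [\sigma(e)]$, the pair $(a_e^i, b_e^i)$ must be separated by moving $b_e^i$'s token, whose only non-$a$ neighbors are $z_e^{u(i)}$ and $z_e^{v(i)}$. I would then carry out a case analysis: moving $b_e^i$ to $z_e^{u(i)}$ creates a new conflict at $y_e^{u(i)}$, which must then be evacuated along with $y_e^{v(i)}$ (needed to free $e^v$), at a cost of $1 + 2 + 1 = 4$ slides for the index $i$; in contrast, moving $b_e^i$ to $z_e^{v(i)}$ (matching $\lambda$) creates a conflict at $y_e^{v(i)}$, whose token can be routed via $w_v$ to some $x_v^j$, $j \in [r]$, using $1 + 2 = 3$ slides total and leaving $y_e^{u(i)}$ untouched. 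Any other destination for $y_e^{v(i)}$'s token (through $e^v$, or stopping at $w_v$, which is adjacent to the untouched token on $x_v^{r+1}$) either introduces an unresolvable conflict in $C_\ell$ or uses still more slides. Hence under the tight budget each edge $e$ consumes exactly $1 + 3\sigma(e)$ slides in the prescribed pattern, summing to $m+3\bm{\sigma}=\budget$.

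Finally I would combine the per-edge count with the independence of $C_\ell$. The vertices of $X_v$ are pairwise non-adjacent leaves of $w_v$, and every edge $e$ with $\lambda(e)=(v,?)$ deposits $\sigma(e)$ distinct tokens into $X_v$; since $|X_v|=r$, this forces
\[
\sum_{e \in E(H),\; \lambda(e)=(v,?)} \sigma(e) \;\le\; r \qquad \text{for every } v\in V(H),
\]
which is exactly the MMO feasibility condition for $\lambda$. The main obstacle I expect is the rigorous per-index accounting in the third step: ruling out exotic slide patterns that move tokens in $A^+$, $X^+$ or between different edge gadgets. I would handle this by a localisation argument, showing that a slide leaving a gadget $G_e$ or $G_v$ must be compensated by at least one incoming slide, so the sum of slides contained in each gadget respects a lower bound that already saturates the global budget.
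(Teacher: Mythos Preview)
Your approach is correct and is essentially the same argument as the paper's, just organized differently. Both proofs exploit the tightness of the budget $m+3\bm\sigma$ to pin down the slide pattern and extract an orientation. The paper counts by token type---$m$ slides for the $B^+$ tokens, $\bm\sigma$ for the $B$ tokens, $\bm\sigma$ for the first move of each affected $Y$-token, and $\bm\sigma$ for its second move (showing the only surviving destination is some $x_v^j$)---and defines $\lambda$ only at the end from which of $e^u,e^v$ lies in $C_\ell$. You instead fix $\lambda$ immediately from the single slide of $b_e^{\sigma(e)+1}$ and then run the $3$-versus-$4$ per-index dichotomy to force $b_e^i$ to the matching side; this is the same accounting repackaged per edge.

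One small presentational point: you invoke the tight budget to forbid a second slide of $b_e^{\sigma(e)+1}$ \emph{before} the full lower bound $m+3\bm\sigma$ is in place. To avoid the apparent circularity, first establish the lower bound $1+3\sigma(e)$ per edge (your third paragraph already does this work, since even the cheaper ``matching'' case costs at least $3$ per index), sum to $m+3\bm\sigma=\budget$, deduce global equality, and only then read off the exact pattern and define $\lambda$. The localisation concern you flag is precisely what the paper handles by observing that in a minimal sequence the $X^+$-tokens never move and that any $Y$-token routed through $e^u$ (or back into another $y$-vertex from $w_u$) forces a further slide on a token whose adjacent $z$- or $y$-vertex is still occupied; your proposed compensation argument works along the same lines.
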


\begin{proof}
The minimum number of slides used inside any induced subgraph $G_e$ for an edge $uv = e \in E(H)$ is one and it can only be achieved by sliding the token on $b_e^{\sigma(e)+1}$ to one of either $e^u$ or~$e^v$.
Thus, at least $m$ slides are required inside the \textsc{MMO}-edge-gadgets and the budget remaining is~$3\bm{\sigma}$. 
Additionally, each token on a vertex $b^i_e$ in~$B_e$, for an edge $uv=e \in E(H)$ and an integer $i \in [\sigma(e)]$ must slide to either $z^{u(i)}_e$ or $z^{v(i)}_e$, consuming $\bm{\sigma}$ slides. 
Since a solution that moves the token on~$a_e^{\sigma(e)+1}$ but not the token on $b_e^{\sigma(e)+1}$ is not minimal, we can safely assume that the described $m + \bm{\sigma}$ slides are executed in any minimal solution.

In the same solutions, each token on a vertex $z^{u(i)}_e$ for an edge $uv= e \in E(H)$ and an integer $i \in [\sigma(e)]$ requires the token on $y^{u(i)}_e$ to slide to either $e^u$ or $w_u$, utilizing as a result $\bm{\sigma}$ other slides. 
A token that slides from $y^{u(i)}_e$ to the vertex $w_u$ must slide again at least once, since any independent set that is achieved through the minimal number of slides would never require the sliding of the tokens on the vertices in $X^+$ (the token that moves to the vertex $w_u$ can be moved, using one less slide, to the vertex the token on $x_u^{r+1}$ moves to).
Since $G_e^{sel}$ can contain at most $2$ tokens, a token on $y^{u(i)}_e$ that slides to the vertex $e^u$ must either slide again at least once to a vertex, denoted $y_e^{u(i_1)}$ (for an integer $i_1 \in [\sigma(e)]$) in $Y_e^u$, or require another token on a vertex in $G_e^{sel}$ to slide at least once to either a vertex, denoted $y_e^{u(i_2)}$ (for an integer $i_2 \in [\sigma(e)]$) in $Y_e^u$, or a vertex, denoted $y_e^{v(i_2)}$ (for an integer $i_2 \in [\sigma(e)]$) in $Y_e^v$, while the token initially on $y^{u(i)}_e$ stays on $e^u$.
Given that at most $\bm{\sigma}$ slides remain in any minimal solution, and that each of the $\bm{\sigma}$ tokens initially on vertices in $Y$ that moved to either vertices of the form ${e_1}^{u_1}$ or $w_{u_1}$, for an edge $e_1 \in E(H)$ incident to a vertex $u_1 \in V(H)$, uses or requires at least one additional slide, each one such token can use or require exactly one additional slide.
If the token on $y^{u(i)}_e$ slides to $w_u$, then either in exactly one more slide it can move to a free vertex in $X_u$, or it can slide back to a vertex, denoted $y_{e_2}^{u(i_3)}$ (for an edge $e_2$ adjacent to $u$ in $H$ and an integer $i_3 \in [\sigma(e_1)]$) in $Y^u$.
However, either $y_{e_2}^{u(i_3)}$ (resp.\ $y_e^{u(i_1)}$, $y_e^{u(i_2)}$, or $y_e^{v(i_2)}$) or its adjacent vertex, denoted $z_{e_2}^{u(i_3)}$ in $Z^u$ (resp.\ $z_e^{u(i_1)}$ in $Z_e^u$, $z_e^{u(i_2)}$ in $Z_e^u$, or $z_e^{v(i_2)}$ in $Z_e^v$), contains a token, thus requiring at least one other additional slide, which is impossible. 
As a result, it can only be the case that a token on $y^{u(i)}_e$ slides to $w_u$ and then in exactly one more slide it moves to a free vertex in $X_u$.

For any edge $uv = e \in E(H)$, if $e^v \in C_\ell$ (resp.\ $e^u \in C_\ell$), then no vertex of $Y_e^v$ (resp.\ $Y_e^u$) appears in $C_\ell$ and the tokens on the vertices of $Y_e^v$ (resp.\ $Y_e^u$) have been moved to some of the free vertices of $X_v$ (resp.\ $X_u$). 
Given the latter, we produce an orientation $\lambda$ to $H$, where $\lambda(e) = (v, u)$ (resp.\ $\lambda(e) = (u, v)$) if $e^v \in C_\ell$ (resp.\ $e^u \in C_\ell$).
Since $|X_v| = |X_u| \le r$, $\lambda$ is such that the total weight directed out of any vertex $v \in V(H)$ is at most $r$.
\end{proof}
The proofs of Lemmas~\ref{lem:hardness-IS-pathwidth-forward} and \ref{lem:hardness-IS-pathwidth-backward} complete the proof of Theorem~\ref{thm:IS-D-pathwidth}. 
\\

By making some of the vertices of our \textsc{MMO}-instance-selector adjacent to some of the vertices in $G_{H_1},  \ldots, G_{H_t}$ constructed following the reduction of \Cref{thm:IS-D-pathwidth} for $t$ \textsc{MMO} input instances $H_1, \ldots, H_t$, we prove the following.
\begin{theorem}\label{thm:cross_composition-IS-bpw}
There exists an or-cross-composition from \textsc{MMO} into \textsc{IS-D} on bounded pathwidth graphs with respect to $\budget$. Consequently, \textsc{IS-D} does not admit a polynomial kernel with respect to $\budget + pw$, where $pw$ denotes the pathwidth of the input graphs, unless $\NP \subseteq \cp$.
\end{theorem}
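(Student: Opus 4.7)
The plan is to adapt the pl-reduction of \Cref{thm:IS-D-pathwidth} into an or-cross-composition, mimicking the blueprint of \Cref{thm:cross_composition-VC-bpw}. Given $t$ instances $(H_j,\mathcal{P}_{H_j},\sigma_j,r_j)$ from one equivalence class of $\mathcal{R}$ (same $n$, $m$, $\bm{\sigma}$, $r$), I would form $G_t$ by taking $t$ disjoint copies $G_{H_j}$ of the reduction graph of \Cref{thm:IS-D-pathwidth}, adding the \textsc{MMO}-instance-selector of \Cref{sec:foundational}, and adding edges between the selector and the copies as follows: make the weights-hub vertex $h$ adjacent to all of $X=\bigcup_{j\in[t]}X_j$, make the orientations-quay $q$ adjacent to all vertices $e^u,e^v$ across all copies, and make $\textsc{Select}_j$ adjacent to a carefully chosen set of vertices inside $G_{H_j}$ (specifically, vertices whose incident edges would otherwise leave the final IS unattainable unless instance $j$ is ``activated''). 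The initial configuration $S$ keeps the per-copy tokens of \Cref{thm:IS-D-pathwidth} (namely $A_j\cup A_j^+\cup B_j\cup B_j^+\cup Y_j\cup X_j^+$) and places tokens on $\textsc{Unselect}_j$ for every $j$, together with tokens on $g^1,\dots,g^{\bm{\sigma}}$ and $p^1,\dots,p^m$. The budget will be $\budget = m+3\bm{\sigma}+O(1)$, with the $O(1)$ surplus exactly matching one selector activation plus one slide at $h$ and one at $q$.

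For the forward direction, given a feasible orientation $\lambda$ of some $H_\mathfrak{j}$, I would first perform the selector activation by sliding the token on $\textsc{Unselect}_\mathfrak{j}$ to $\textsc{Select}_\mathfrak{j}$, then execute the exact $m+3\bm{\sigma}$ slides prescribed in \Cref{lem:hardness-IS-pathwidth-forward} inside $G_{H_\mathfrak{j}}$. Edges from $h$ and $q$ to $X_{j'}$ and $\{e^u,e^v\}_{e\in E(H_{j'})}$ for $j'\neq \mathfrak{j}$ are all satisfied because we neither alter tokens in those copies nor place tokens onto $h$ or $q$. For the backward direction, I would argue that the $m+\bm{\sigma}$ adjacent token pairs $a^i_e b^i_e$ inherent to each $G_{H_j}$ force at least $m+\bm{\sigma}$ slides per affected copy, so within budget $m+3\bm{\sigma}+O(1)$ only one copy $G_{H_\mathfrak{j}}$ can be ``processed''; the remaining slides then mirror the accounting of \Cref{lem:hardness-IS-pathwidth-backward} and must correspond to a valid orientation of $H_\mathfrak{j}$. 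The role of $\textsc{Select}_j$ is to make ``ignoring'' copy $j$ cheaper than half-processing it: since $\textsc{Select}_j$ adjacent tokens in $G_{H_j}$ remain independent iff those vertices are token-free (which is the default), the only copy whose tokens may be displaced without extra cost is the one whose $\textsc{Select}$ vertex is now occupied.

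The main obstacle is calibrating the adjacencies incident to $h$, $q$, and each $\textsc{Select}_j$ so that (i) the initial $S$ is adjacency-free except for the $t(m+\bm{\sigma})$ pairs $a^i_eb^i_e$ inside the copies, (ii) activating two copies strictly exceeds the budget, and (iii) the selector cannot be used as a shortcut to avoid the full orientation encoding inside the chosen $G_{H_\mathfrak{j}}$. The cleanest way to achieve this is to force $h$ and $q$ to retain their initial tokens in every minimal solution (by attaching $2m+5\bm{\sigma}+2$ pendant vertices to each of $h$ and $q$ exactly as in \Cref{thm:cross_composition-VC-bpw}), so that any slide involving vertices in $X_{j'}$ or $\{e^u,e^v\}$ for an unselected $j'$ immediately creates an adjacency with $h$ or $q$; the budget slack $O(1)$ is too small to repair this.

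Finally, by \Cref{lem:bounded-pathwidth-G} the graph $G_t$ (an augmented subdivision in the sense of \Cref{sec:foundational}) has pathwidth at most $\max_{j\in[t]}pw(H_j)+10$, and by \Cref{cor:composition-space} the construction is carried out by a log-space transducer; hence $\budget+pw$ is polynomial in $\max_j|x_j|$ independently of $t$. Since \textsc{MMO} is \NP-hard, \Cref{thm:no-poly-kernel-theorem} yields the claimed kernel lower bound for \textsc{IS-D} parameterized by $\budget+pw$ unless $\NP\subseteq\cp$.
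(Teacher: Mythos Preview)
Your proposal contains a fundamental gap that makes the forward direction fail. You keep, in every copy $G_{H_j}$, the full token set $A_j\cup A_j^+\cup B_j\cup B_j^+\cup Y_j\cup X_j^+$ from the single-instance reduction. But in that reduction the pairs $(a^i_e,b^i_e)$ and $(a^{\sigma(e)+1}_e,b^{\sigma(e)+1}_e)$ are \emph{adjacent} token pairs --- this is precisely what drives the $m+3\bm{\sigma}$ slides in \Cref{lem:hardness-IS-pathwidth-forward}. With $t$ copies present, your initial configuration therefore contains $t(m+\bm{\sigma})$ such adjacent pairs, \emph{all} of which must be destroyed to reach an independent set in $G_t$. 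Since your budget $m+3\bm{\sigma}+O(1)$ is independent of $t$, the constructed instance is a no-instance regardless of whether some $H_\mathfrak{j}$ admits a feasible orientation. Your backward reasoning that ``only one copy can be processed'' is correct but beside the point: the \emph{unprocessed} copies are not independent either, and ``ignoring'' them is not free.

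The paper's construction avoids exactly this by \emph{not} placing tokens on $A_j$ or $A_j^+$ in any copy. The per-copy starting set is only $B_j\cup B_j^+\cup Y_j\cup X_j^+$, which is already independent inside $G_{H_j}$. All adjacent token pairs in the global starting configuration are concentrated in the selector: the $\bm{\sigma}+m$ pairs $(f^i,g^i)$ and $(o^i,p^i)$ (plus the adjacencies with $h$ and $q$, which also carry tokens). The hub $h$ is made adjacent to $A$ (not to $X$ as you propose) and $q$ to $A^+$ (not to the $e^u,e^v$ vertices), so that resolving the selector defects forces the $f^i$- and $o^i$-tokens to migrate through $h,q$ into $A_\mathfrak{j}$ and $A_\mathfrak{j}^+$ of a single chosen copy, recreating the single-instance starting configuration only there. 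The selector edge runs from $\textsc{Unselect}_j$ (not $\textsc{Select}_j$) to $V(G_{H_j})\setminus S$, so that pushing a token into copy $\mathfrak{j}$ creates one adjacency with $\textsc{Unselect}_\mathfrak{j}$ that is cured by a single slide. The resulting budget is $3m+5\bm{\sigma}+1$, not $m+3\bm{\sigma}+O(1)$, and no pendant vertices at $h$ or $q$ are used.
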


\begin{proof}
As stated in \Cref{sec:foundational}, we can assume that we are given a family of $t$ \textsc{MMO} instances $(H_j, \mathcal{P}_{H_j}, \sigma_j, r_j)$, where $H_j$ is a bounded pathwidth graph with path decomposition $\mathcal{P}_{H_j}$, $|V(H_j)| = n$, $|E(H_j)| = m$, $\sigma_j: E_j \rightarrow \mathbb{Z}_+$ is a weight function such that $\sum_{e_j \in E(H_j)} \sigma_j(e_j) = \bm{\sigma}$ and $r_j = r \in \mathbb{Z}_+$ (integers are given in unary).
The construction of the instance $(G_t, \mathcal{P}_{G_t}, S, \budget)$ of \textsc{IS-D} is twofold. 

For each instance $H_j$ for $j \in [t]$, we add to $G_t$ the graph $G_{H_j}$ as per the reduction in \Cref{thm:IS-D-pathwidth}. 
We refer to the sets $A$, $B$, $X$, and $X^+$, subsets of vertices of a subgraph $G_{H_j}$ of $G_t$, by $A_j$, $B_j$, $X_j$, and $X_j^+$, respectively. 
Subsequently, we let $A = \cup_{j \in [t]} A_j$, $B = \cup_{j \in [t]} B_j$, $X = \cup_{j \in [t]} X_j$, and $X^+ = \cup_{j \in [t]} X^+_j$.
We add the \textsc{MMO}-instance-selector described in \Cref{sec:foundational} (under the \textsc{MMO}-instance-selector heading) and connect it to the rest of $G_t$ as follows (see \Cref{fig:IS-pathwidth-composition}). 
We make for each $j \in [t]$, the vertex $\textsc{\footnotesize Unselect}_j$ adjacent to the vertices in $V(G_{H_j}) \setminus S$, where $S$ is as defined later.
We make the vertex $h$ adjacent to the vertices in $A$ and the vertex $q$ adjacent to the vertices in $A^+$.
The result is the original graph $G_t$ appearing in \Cref{sec:foundational}.
By \Cref{lem:bounded-pathwidth-G}, $G_t$ is of bounded pathwidth.
Now, we set
\begin{equation*}
S =  B \text{ } \cup \text{ } B^+ \text{ } \cup \text{ } X^+ \text{ } \cup \text{ } Y \text{ } \cup 
\bigcup_{j \in [t]} \textsc{\footnotesize Unselect}_j
\text{ } \cup \bigcup_{i \in [\bm{\sigma}]} \text{ } (f^i \cup g^i) 
\text{ } \cup \bigcup_{i \in [m]} \text{ } (o^i \cup p^i) 
\text{ } \cup \text{ } h  \text{ } \cup \text{ } q
\end{equation*}
and $\budget = 3m + 5\bm{\sigma} + 1$.
\afterpage{
\begin{figure}[H]
    \centering
    \begin{tikzpicture}[scale=0.5]
    \fill[blue!10] (-0.75,2.75) -- (0.75,2.75) -- (0.75,-0.25) -- (-0.75,-0.25) -- cycle; 
    \fill[blue!10] (0.5,0.1) -- (1.25,0.9) -- (1.25,0) -- (0.5,0) -- cycle; 
    \fill[blue!10] (0.5,-0.1) -- (1.25,-0.9) -- (1.25,0) -- (0.5, 0) -- cycle; 
    \fill[blue!10] (-0.75,-2) -- (0.75,-2) -- (0.75,-3.25) -- (-0.75,-3.25) -- cycle; 
    \fill[blue!10] (0.5,-2.9) -- (1.25,-2) -- (1.25,-3) -- (0.5,-3) -- cycle; 
    \fill[blue!10] (0.5,-3.1) -- (1.25,-4) -- (1.25,-3) -- (0.5, -3) -- cycle; 
    \fill[green!10] (8,1) -- (7.5,2) -- (8.5,2) -- (8,1) -- cycle;
    \fill[green!10] (8,1) -- (6,0) -- (6,3) -- (8,1) -- cycle;
    \fill[green!10] (8,1) -- (10,-0.25) -- (10,2.5) -- (8,1) -- cycle;
    \fill[green!10] (6,0) -- (6,3) -- (4.75,3) -- (4.75,0) -- cycle;
    \fill[green!10] (8,-1.5) -- (7.5,-0.5) -- (8.5,-0.5) -- (8,-1.5) -- cycle;
    \fill[green!10] (8,-1.5) -- (6,-1.75) -- (6,-1.25) -- (8,-1.5) -- cycle;
    \fill[green!10] (8,-1.5) -- (10,0) -- (10,-3) -- (8,-1.5) -- cycle;
    \fill[green!10] (6,-1.75) -- (6,-1.25) -- (4.75,-1.25) -- (4.75,-1.75) -- cycle;
    \fill[green!10] (8,-4.7) -- (7.5,-3.7) -- (8.5,-3.7) -- (8,-4.7) -- cycle;
    \fill[green!10] (8,-4.7) -- (6,-6.25) -- (6,-2.75) -- (8,-4.7) -- cycle;
    \fill[green!10] (8,-4.7) -- (10,-2.5) -- (10,-5.75) -- (8,-4.7) -- cycle;
    \fill[green!10] (6,-6.25) -- (6,-2.75) -- (4.75,-2.75) -- (4.75,-6.25) -- cycle;
    \fill[yellow!10] (-3.5,-10) -- (-4.5,-10) -- (-4.5,-9) -- (-3.5,-9) -- cycle;
    \fill[yellow!10] (-5,-5) -- (-6.25,-6.25) -- (-6.25,-4.25) -- cycle;
    \fill[yellow!10] (-6.25,-6.25) -- (-7.25,-6.25) -- (-7.25, -4.25) -- (-6.25,-4.25) -- cycle;
    \fill[yellow!10] (-5,0) --  (-7,2.5) -- (-7,-2.25) -- cycle;
    \fill[yellow!10] (-7,2.5) -- (-8,2.5) -- (-8,-2.25) -- (-7,-2.25) -- cycle; 
    \draw[teal] (-0.5,0.75) -- (-5, 0);
    \draw[teal] (-0.5,1.5) -- (-5, 0);
    \draw[teal] (-0.5,2.25) -- (-5, 0);
    \draw[teal] (-0.5,-2.25) -- (-5, 0);  
    \draw[orange] (-0.5,0) .. controls (-1.5, -0.5) .. (-5,-5);
    \draw[orange] (-0.5,-3) .. controls (-1,-4) .. (-5,-5);
    \draw[pink] (-4,-9) -- (1,-3.5);
    \draw[pink] (-4,-9) .. controls (4,-6) .. (8,-4.7);
    \draw[pink] (-4,-9) -- (5,-5.2);
    \draw[black] (8,1) -- (9.5,0.25);
    \draw[black] (8,1) -- (9.5,1);
    \draw[black] (8,1) -- (9.5,1.75);
    \draw[black] (8,1) -- (8,2);
    \draw[black] (8,1) -- (6,0.25);
    \draw[black] (8,1) -- (6,2.25);
    \draw[black] (8,1) -- (6,1.25);
    \draw[black] (8,-1.5) -- (9.5,-2.5);
    \draw[black] (8,-1.5) -- (9.5,-1.5);
    \draw[black] (8,-1.5) -- (9.5,-0.5);
    \draw[black] (8,-1.5) -- (6,-1.5);
    \draw[black] (8,-1.5) -- (8,-0.5);
    \draw[black] (8,-4.7) -- (9.5,-5.2);
    \draw[black] (8,-4.7) -- (9.5,-4.2);
    \draw[black] (8,-4.7) -- (9.5,-3.2);
    \draw[black] (8,-4.7) -- (8,-3.75);
    \draw[black] (8,-4.7)-- (6,-6);
    \draw[black] (8,-4.7) -- (6,-5.2);
    \draw[black] (8,-4.7)-- (6,-4.2);
    \draw[black] (8,-4.7) -- (6,-3.4);
    \draw[yellow] (1,-2.5) .. controls (2,-2.25) .. (6,-1.5);
    \draw[yellow] (1,-3.5) .. controls (2.5,-5) and  (5,-5) .. (6,-6);    
    \draw[red] (0.5,-2.25) .. controls (1.5,-1.25) .. (5,-1.5);
    \draw[red] (0.5,-2.25) .. controls (2,-4.7) and  (5,-6) .. (5,-6);  
    \draw[red] (0.5,0.75) .. controls (1,1) .. (5,0.25);
    \draw[red] (0.5,1.5) .. controls (1,2.5) and  (2,2) .. (5,1.25);
    \draw[red] (0.5,2.25) .. controls (1,3) and  (2,3.5) .. (5,2.25);  
    \draw[black] (-6,-4.25) -- (-5,-5);
    \draw[black] (-6,-5.75) -- (-5, -5);
    \draw[black] (-7,0.75) -- (-5,0);
    \draw[black] (-7,-0.75) -- (-5,0);
    \draw[black] (-7,2) -- (-5,0);
    \draw[black] (-7,-2) -- (-5,0);
    \node[circle, fill=black, inner sep=0.5pt, draw=black, fill=white, label=below:{\scalebox{0.3}{$w_u$}}] at (8,1) {};
    \node[circle, fill=black, inner sep=0.5pt, draw=black, fill=white, label=below:{\scalebox{0.3}{$w_w$}}] at (8,-1.5) {};
    \node[circle, fill=black, inner sep=0.5pt, draw=black, fill=white, label=below:{\scalebox{0.3}{$w_v$}}] at (8,-4.7) {};
    \node[circle, fill=black, inner sep=0.5pt, draw=black, fill=white] at (9.5,0.25) {};
    \node[circle, fill=black, inner sep=0.5pt, draw=black, fill=white] at (9.5,1) {};
    \node[circle, fill=black, inner sep=0.5pt, draw=black, fill=white] at (9.5,1.75) {};
    \node[circle, fill=black, inner sep=0.5pt, draw=black, fill=white] at (9.5,-2.5) {};
    \node[circle, fill=black, inner sep=0.5pt, draw=black, fill=white] at (9.5,-1.5) {};    
    \node[circle, fill=black, inner sep=0.5pt, draw=black, fill=white] at (9.5,-0.5) {};
    \node[circle, fill=black, inner sep=0.5pt, draw=black, fill=white] at (9.5,-5.2) {};
    \node[circle, fill=black, inner sep=0.5pt, draw=black, fill=white] at (9.5,-4.2) {};
    \node[circle, fill=black, inner sep=0.5pt, draw=black, fill=white] at (9.5,-3.2) {};
    \node[circle, fill=black, inner sep=0.5pt, draw=black, fill=black] at (8,2) {};
    \node[circle, fill=black, inner sep=0.5pt, draw=black, fill=black] at (8,-0.5) {};
    \node[circle, fill=black, inner sep=0.5pt, draw=black, fill=black] at (8,-3.75) {};
    \draw[black] (5,-1.5) -- (6,-1.5) node[circle, draw=black, fill=black, inner sep=0.5pt, pos=1]{} node[circle, draw=black, fill=white, inner sep=0.5pt, pos=0]{};  
    \draw[black] (5,1.25) -- (6,1.25) node[circle, draw=black, fill=black, inner sep=0.5pt, pos=1]{} node[circle, draw=black, fill=white, inner sep=0.5pt, pos=0]{};
    \draw[black] (5,0.25) -- (6,0.25) node[circle, draw=black, fill=black, inner sep=0.5pt, pos=1]{} node[circle, draw=black, fill=white, inner sep=0.5pt, pos=0]{};
    \draw[black] (5,2.25) -- (6,2.25) node[circle, draw=black, fill=black, inner sep=0.5pt, pos=1]{} node[circle, draw=black, fill=white, inner sep=0.5pt, pos=0]{};   
    \draw[black] (5,-3.4) -- (6,-3.4) node[circle, draw=black, fill=black, inner sep=0.5pt, pos=1]{} node[circle, draw=black, fill=white, inner sep=0.5pt, pos=0]{};
    \draw[black] (5,-4.2) -- (6,-4.2) node[circle, draw=black, fill=black, inner sep=0.5pt, pos=1]{} node[circle, draw=black, fill=white, inner sep=0.5pt, pos=0]{};
    \draw[black] (5,-5.2) -- (6,-5.2) node[circle, draw=black, fill=black, inner sep=0.5pt, pos=1]{} node[circle, draw=black, fill=white, inner sep=0.5pt, pos=0]{};
    \draw[black] (5,-6) -- (6,-6) node[circle, draw=black, fill=black, inner sep=0.5pt, pos=1]{} node[circle, draw=black, fill=white, inner sep=0.5pt, pos=0]{};
    \draw[black] (-0.5,0.75) -- (0.5,0.75) node[circle, draw=black, fill=white, inner sep=0.5pt, pos=0]{} node[circle, draw=black, fill=black, inner sep=0.5pt, pos=1]{};
    \draw[black] (-0.5,1.5) -- (0.5,1.5) node[circle, draw=black, fill=white, inner sep=0.5pt, pos=0]{} node[circle, draw=black, fill=black, inner sep=0.5pt, pos=1]{};
    \draw[black] (-0.5,2.25) -- (0.5,2.25) node[circle, draw=black, fill=white, inner sep=0.5pt, pos=0]{} node[circle, draw=black, fill=black, inner sep=0.5pt, pos=1]{};
    \draw[black] (1,0.5) -- (1, -0.5);
    \draw[black] (0.5,0) -- (1,0.5) node[circle, draw=black, fill=white, inner sep=0.5pt, label=right:{\scalebox{0.3}{$e_1^v$}}, pos=1]{};
    \draw[black] (0.5,0) -- (1,-0.5) node[circle, draw=black, fill=white, inner sep=0.5pt, label=right:{\scalebox{0.3}{$e_1^u$}}, pos=1]{};
    \draw[black] (-0.5,0) -- (0.5,0) node[circle, draw=black, fill=white, inner sep=0.5pt, pos=0]{} node[circle, draw=black, fill=black, inner sep=0.5pt, pos=1]{};
    \draw[black] (-0.5,-2.25) -- (0.5,-2.25) node[circle, draw=black, fill=white, inner sep=0.5pt, pos=0]{} node[circle, draw=black, fill=black, inner sep=0.5pt, pos=1]{};
    \draw[black] (1,-2.5) -- (1, -3.5);
    \draw[black] (0.5,-3) -- (1,-2.5) node[circle, draw=black, fill=white, inner sep=0.5pt, label=right:{\scalebox{0.3}{$e_2^w$}}, pos=1]{};
    \draw[black] (0.5,-3) -- (1,-3.5) node[circle, draw=black, fill=white, inner sep=0.5pt, label=right:{\scalebox{0.3}{$e_2^u$}}, pos=1]{};
    \draw[black] (-0.5,-3) -- (0.5,-3) node[circle, draw=black, fill=white, inner sep=0.5pt, pos=0]{} node[circle, draw=black, fill=black, inner sep=0.5pt, pos=1]{};
    \node[circle, draw=black, fill=black, inner sep=1pt, label=above:\scalebox{0.5}{$h$}] at (-5, 0) {};
    \node[circle, draw=black, fill=black, inner sep=1pt, label=above:\scalebox{0.5}{$q$}] at (-5, -5) {};
    \draw[black] (-6.5,-4.25) -- (-6,-4.25) node[circle, draw=black, fill=black, inner sep=0.5pt, pos=1, label=above:\scalebox{0.3}{$o^2$}]{} node[circle, draw=black, fill=black, inner sep=0.5pt, pos=0, label=above:\scalebox{0.3}{$p^2$}]{};
    \draw[black] (-6.5,-5.75) -- (-6, -5.75) node[circle, draw=black, fill=black, inner sep=0.5pt, pos=1, label=above:\scalebox{0.3}{$o^1$}]{} node[circle, draw=black, fill=black, inner sep=0.5pt, pos=0, label=above:\scalebox{0.3}{$p^1$}]{};
    \draw[black] (-7,0.75) -- (-8,0.75) node[circle, draw=black, fill=black, inner sep=0.5pt, pos=1, label=above:\scalebox{0.3}{$f^3$}]{} node[circle, draw=black, fill=black, inner sep=0.5pt, pos=0, label=above:\scalebox{0.3}{$g^3$}]{};
    \draw[black] (-7,-0.75) -- (-8,-0.75) node[circle, draw=black, fill=black, inner sep=0.5pt, pos=1, label=above:\scalebox{0.3}{$f^2$}]{} node[circle, draw=black, fill=black, inner sep=0.5pt, pos=0, label=above:\scalebox{0.3}{$g^2$}]{};
    \draw[black] (-7,2) -- (-8,2) node[circle, draw=black, fill=black, inner sep=0.5pt, pos=1, label=above:\scalebox{0.3}{$f^4$}]{} node[circle, draw=black, fill=black, inner sep=0.5pt, pos=0, label=above:\scalebox{0.3}{$g^4$}]{};
    \draw[black] (-7,-2) -- (-8,-2) node[circle, draw=black, fill=black, inner sep=0.5pt, pos=1, label=above:\scalebox{0.3}{$f^1$}]{} node[circle, draw=black, fill=black, inner sep=0.5pt, pos=0, label=above:\scalebox{0.3}{$g^1$}]{};
    \draw[black] (-4,-10) -- (-4,-9) node[circle, draw=black, fill=black, inner sep=0.5pt, pos=1, label=above:\scalebox{0.3}{$\textsc{Unselect}_j$}]{} node[circle, draw=black, fill=white, inner sep=0.5pt, pos=0, label=below:\scalebox{0.3}{$\textsc{Select}_j$}]{};
    \end{tikzpicture}
    \caption{\footnotesize Orange, pink, and green edges highlighting the different types of edges between an \textsc{MMO}-instance-selector of the composition in \Cref{thm:cross_composition-IS-bpw} and a subgraph $G_{H_j}$ for a $j \in [t]$ of the same. $H_j$ has three vertices $u$, $v$, and $w$ and two edges $e_1=uv$ and $e_2=uw$, and $r=3$. Additionally, $\sigma_j(e_1) = 3$ and $\sigma_j(e_2) = 1$. For clarity, not all edges inside $G_{H_j}$ are drawn, nor are all the pink edges depicted. Vertices in black are in $S$ and those in white are not.}
    \label{fig:IS-pathwidth-composition}
\end{figure}}
Given that all integers are given in unary, the construction of the graph $G_t$, or its path decomposition (as described in \Cref{lem:bounded-pathwidth-G}), and as a consequence the reduction take time polynomial in the size of the input instances. 
Additionally, by \Cref{cor:composition-space}, this composition is a pl-reduction.
We claim that $(G_t, \mathcal{P}_{G_t}, S, \budget)$ is a yes-instance of \textsc{IS-D} if and only if for some integer $\mathfrak{j} \in [t]$, $(H_\mathfrak{j}, \mathcal{P}_{H_\mathfrak{j}}, \sigma_\mathfrak{j}, r_\mathfrak{j})$ is a yes-instance of \textsc{MMO}.
\newline

\begin{claim}
If for some $\mathfrak{j} \in [t]$, $(H_\mathfrak{j}, \mathcal{P}_{H_\mathfrak{j}}, \sigma_\mathfrak{j}, r_\mathfrak{j})$ is a yes-instance of \textsc{MMO}, then $(G_t, \mathcal{P}_{G_t}, S, \budget)$ is a yes-instance of \textsc{IS-D}.
\end{claim}
\begin{claimproof}
Let $(H_\mathfrak{j}, \mathcal{P}_{H_\mathfrak{j}}, \sigma_\mathfrak{j}, r_\mathfrak{j})$ be a yes-instance of \textsc{MMO} and let $\lambda$ be a feasible orientation of $H_\mathfrak{j}$ such that for each $v \in V(H_\mathfrak{j})$, the total weight of the edges directed out of $v$ is at most $r$.
In $G_t$, the tokens on $f^i$ and $g^i$ are adjacent for each $i \in \bm{\sigma}$, and the tokens on $o^i$ and $p^i$ are adjacent for each $i \in [m]$.
First, we slide the token on $\textsc{\footnotesize Unselect}_\mathfrak{j}$ onto the vertex $\textsc{\footnotesize Select}_\mathfrak{j}$. 
Using $2\bm{\sigma} + 2m$ slides, we move the token on the vertex $f_i$ for each $i \in [\bm{\sigma}]$ towards a token-free vertex in $A_\mathfrak{j}$ and we move the token on the vertex $o_i$ for each $i \in [m]$ towards a token-free vertex in $A^+_\mathfrak{j}$.
This achieves a configuration of the tokens on $G_{H_\mathfrak{j}}$ that is similar to the starting configuration of the tokens on the graph $G_H$ of the reduction of \Cref{thm:IS-D-pathwidth}.
From \Cref{lem:hardness-IS-pathwidth-forward}, given a feasible orientation of the \textsc{MMO} instance $(H_\mathfrak{j}, \mathcal{P}_{H_\mathfrak{j}}, \sigma_\mathfrak{j}, r_\mathfrak{j})$, we can achieve a configuration of the tokens that constitutes an independent set in $G_{H_\mathfrak{j}}$ in $m + 3\bm{\sigma}$ slides. 
This totals $\budget$ slides and achieves a configuration of the tokens that constitutes an independent set in $G_t$.    
\end{claimproof}

\begin{claim}
If $(G_t, \mathcal{P}_{G_t}, S, \budget)$ is a yes-instance of \textsc{IS-D}, then there exists an integer $\mathfrak{j} \in [t]$, such that $(H_\mathfrak{j}, \mathcal{P}_{H_\mathfrak{j}}, \sigma_\mathfrak{j}, r_\mathfrak{j})$ is a yes-instance of \textsc{MMO}.
\end{claim}

\begin{claimproof}
In any solution that uses the minimal number of slides, at least $2m + 2\sigma$ slides are needed to get the tokens on $f^1, \ldots, f^{\bm{\sigma}}$ and $o^1, \ldots, o^m$ to vertices in $A$ and $A^+$, respectively. 
Note that a solution that moves the token on $g^i$ for some integer $i \in [\bm{\sigma}]$ but not the token on $f^i$ or that moves the token on $p^i$ for some integer $i \in [m]$ but not the token on $o^i$ is not minimal. 
When a token is moved from one of the vertices $f^1, \ldots, f^{\bm{\sigma}}, o^1, \ldots, o^m$ to one of the vertices in $V(G_{H_j})$ for an integer $j \in [t]$, which is bound to happen, at least one slide is needed so that the token on $\textsc{\footnotesize Unselect}_j$ is not adjacent to any other token and one slide can be achieved by sliding the token on $\textsc{\footnotesize Unselect}_j$ to $\textsc{\footnotesize Select}_j$.
Note that a solution that uses the minimal number of slides and slides the token on $\textsc{\footnotesize Unselect}_j$ for any $j \in [t]$ to any vertex $v \in V(G_t) \setminus \{\textsc{\footnotesize Select}_j\}$ (and possibly slides another token to $\textsc{\footnotesize Select}_j$) can safely be replaced by a solution that performs the same number of slides and the same slides except that it slides the token on $\textsc{\footnotesize Unselect}_j$ to $\textsc{\footnotesize Select}_j$ (and possibly the other token to $v$).
Thus, we consider minimal solutions that slide $\textsc{\footnotesize Unselect}_j$ for any $j \in [t]$ only to $\textsc{\footnotesize Select}_j$.

Assume that for some integer $j \in [t]$, $\textsc{\footnotesize Unselect}_j$ has moved to $\textsc{\footnotesize Select}_j$.
In a solution that uses the minimal number of slides, if a token is the first to move from one of the vertices $o^1, \ldots, o^m$ to a vertex $a_e^{\sigma_j(e)+1}$ for an edge $uv=e \in E(H_j)$, it will require the token on $b_e^{\sigma_j(e)+1}$ to slide to one of $e^{u}$ or $e^{v}$.
Each other token that moves from one of the vertices $o^1, \ldots, o^m$ to $a_e^{\sigma(e)+1}$ will require in the same solution at least $3$ additional slides as one token must exit the induced subgraph $G_e^{sel}$ and the two remaining tokens must not be on adjacent vertices. 
Thus, if we assume that the $m$ tokens on $o^1, \ldots, o^m$ move to distinct vertices in $A^+$, they require at least $m$ additional slides.

Assume that for some integer $j_1 \in [t]$, $\textsc{\footnotesize Unselect}_{j_1}$ has moved to $\textsc{\footnotesize Select}_{j_1}$.
In a solution that uses the minimal number of slides, if a token is the first to move from one of the vertices $f^1, \ldots, f^{\bm{\sigma}}$ to a vertex $a^i_{e_1}$ for an edge $u_1v_1= e_1 \in E(H_{j_1})$ and $i \in [\sigma_{j_1}(e_1)]$, it will require the token on $b^i_{e_1}$ to slide to one of $z^{u_1(i)}_{e_1}$, or $z^{v_1(i)}_{e_1}$.
W.l.o.g., assume the token on $b^i_{e_1}$ slides to $z^{u_1(i)}_{e_1}$.
In the same solution, the token on $z^{u_1(i)}_{e_1}$ will in turn require the token on $y^{u_1(i)}_{e_1}$ to slide to either $w_{u_1}$ or $e_1^{u_1}$.
Each other token that moves from one of the vertices $f^1, \ldots, f^{\bm{\sigma}}$ to $a^i_{e_1}$ will require in the same solution at least $4$ additional slides to leave the path $a^i_{e_1} b^i_{e_1} z_{e_1}^{u_1(i)} y_{e_1}^{u_1(i)}$ which cannot accommodate more tokens.
Given that the remaining budget is at most $3\bm{\sigma}$, a second token sliding to the vertex $a^i_{e_1}$ is only possible if for some token initially on a vertex in $Y$ that moved 
to a vertex $w_{u_2}$ or a vertex $e_2^{u_2}$, for a vertex $u_2 \in V(H_{j_2})$, edge $u_2v_2=e_2 \in E(H_{j_2})$, and integer $j_2 \in [t]$, does not slide or require another token to slide.
In any solution that uses the minimal number of slides, the tokens on the vertices in $X^+$ do not need to be moved (since we can move any token that moves into the adjacent representative vertices to the vertices the tokens on the vertices of $X^+$ were being moved to).
Thus, a token on $w_{u_2}$ must slide again.
Similarly, a token on $e_2^{u_2}$ must either slide again, or require another token (on the vertex in $B^+$) to slide.
Thus, with a remaining budget of at most $3\bm{\sigma}$, no token can afford to move from $f^1, \ldots, f^{\bm{\sigma}}$ to $a^i_{e_1}$ once another token has already moved to the same vertex.
In other words, given $\budget$, it must be the case that the $\bm{\sigma}$ tokens on $f^1, \ldots, f^{\bm{\sigma}}$ slide towards distinct vertices in $A$.

Additionally, each such token will require at least $3$ slides.
Consequently, the $m$ tokens on $o^1, \ldots, o^m$ must also slide to distinct vertices in $A^+$.
Given that one slide remains for moving the token on one vertex $\textsc{\footnotesize Unselect}_\mathfrak{j}$ to $\textsc{\footnotesize Select}_\mathfrak{j}$ for an integer $\mathfrak{j} \in [t]$, it must be the case that the tokens on $f^1, \ldots, f^{\bm{\sigma}}$ (resp.\ $o^1, \ldots, o^m$) move to distinct vertices in $A_\mathfrak{j}$ (resp.\ $A_\mathfrak{j}^+$).
This achieves a configuration of the tokens on $G_{H_\mathfrak{j}}$ that is similar to the starting configuration of the tokens inside $G_H$ of the reduction of \Cref{thm:IS-D-pathwidth}. From \Cref{lem:hardness-IS-pathwidth-backward}, with a remaining budget of $m + 3\bm{\sigma}$, we get that $(H_\mathfrak{j}, \mathcal{P}_{H_\mathfrak{j}}, \sigma_\mathfrak{j}, r_\mathfrak{j})$ is a yes-instance of \textsc{MMO}. 
\end{claimproof}
This concludes the proof the theorem.
\end{proof}
Next we consider the \emph{fvs} parameterization of the IS-D problem. 
\begin{theorem}
    \label{thm:IS-D-fvs}
    The IS-D problem is W[1]-hard for the parameter \emph{fvs} of the input graph. 
\end{theorem}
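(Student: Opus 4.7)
The plan is to present a parameterized reduction from \mcc to \textsc{IS-D}, in the same spirit as the reductions in Theorems~\ref{thm:DS-D-fvs} and~\ref{thm:VC-D-fvs}, but adapted so that the final configuration is an independent set rather than a vertex cover. Given an instance $(G,\kappa)$ of \mcc with color classes $V_1,\dots,V_\kappa$ of size $n$ and edge sets $E_{i,j}$ for $\iljk$, I would build an \textsc{IS-D} instance $(H,\varS,\budget)$ whose feedback vertex set number is $\Oof(\kctwo)$ and that is a yes-instance iff $G$ admits a multi-colored $\kappa$-clique.

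The high-level architecture of $H$ would reuse three familiar gadgets: a \emph{vertex-block} $H_i$ for each color $i\in[\kappa]$ with a ``selection'' vertex $p_{i,x}$ for each $x\in[n]$ and an associated fan $Q_{i,x}$; an \emph{edge-block} $H_{i,j}$ for each pair $i<j$ with a selection vertex $p_e$ and fan $Q_e$ per edge $e\in E_{i,j}$; and a \emph{connector} $\conn_{i,j}^l$ on each side $l\in\{i,j\}$ playing the same role as in the \textsc{VC-D} proof. The crucial change is that the independent-set constraint has to play the ``bookkeeping'' role that edge covering played for \textsc{VC-D}. Concretely, inside each block I would place the initial tokens on \emph{pairs of adjacent vertices}, so $\varS$ is not independent; the only way to destroy each conflict within the budget is to slide one of the two adjacent tokens into the selection vertex $p_{i,x_i}$ (respectively $p_{e_{i,j}}$), which in turn releases the tokens in the fan $Q_{i,x_i}$ (respectively $Q_{e_{i,j}}$) to move toward the connectors.

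I would then redesign the connector so that the counting constraint ``$x_i+(n-z_i)=n$'' used in Lemma~\ref{lem:VC-D-fvs-reverse} is enforced by \emph{independence bottlenecks} rather than by a matching that needs covering. A natural way is to replace each of the matching edges $A_{i,j}^l$\,--\,$B_{i,j}^l$ and $C_{i,j}^l$\,--\,$D_{i,j}^l$ by a short path seeded with two adjacent tokens, so that separating them forces a token to be pushed from the vertex-block or edge-block side. Choosing $\budget$ tightly (of the form $\alpha n\kctwo + \beta\kappa$ for small constants $\alpha,\beta$) ensures that exactly one vertex per vertex-block and exactly one edge per edge-block can be selected, and that the connector can ``absorb'' the released tokens only if the number coming from the block $H_i$ plus the number coming from $H_{i,j}$ sums to $n$ on both the $s$-side and the $r$-side, forcing $x_i=z_i$ for every $\iljk$, hence a clique. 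The \emph{fvs} bound would follow as in Lemma~\ref{lem:VC-D-fvs-bound}: removing the $\Oof(\kctwo)$ hub vertices of the connectors (the analogues of $\Tilde{s}_{i,j}^l,\hat{s}_{i,j}^l,\Tilde{r}_{i,j}^l,\hat{r}_{i,j}^l$) leaves a forest.

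Correctness would then split into two lemmas mirroring Lemmas~\ref{lem:VC-D-fvs-forward} and~\ref{lem:VC-D-fvs-reverse}. Forward: given a $\kappa$-clique with chosen vertices $u_{i,x_i}$ and edges $e_{i,j}=u_{i,x_i}u_{j,x_j}$, I exhibit an explicit discovery sequence that first separates the seeded conflict pairs in each block by sliding one token onto the relevant selection vertex, then routes the freed tokens through the connectors into positions that form an independent set of $H$, accounting for the tight budget. Reverse: the budget forces exactly one selection per block and the connector arithmetic forces consistency between vertex and edge selections. The main obstacle I expect is the second step of the redesign, namely crafting the independence bottlenecks in the connectors so that they count tokens exactly (rather than merely upper-bounding them, which would leak slack and destroy the budget-tight argument); this is where the VC-D construction was clean because covering an edge uses exactly one token, whereas in the IS-D setting one has to ensure that ``refusing to host'' one of two adjacent tokens is equally rigid, which I would achieve by surrounding each destination vertex with pre-placed tokens so that any extra arrival creates an unresolvable conflict.
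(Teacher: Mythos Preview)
Your high-level plan---reduce from \mcc{} with vertex-blocks, edge-blocks, and connectors whose hub vertices bound the \emph{fvs}, and a tight budget forcing one selection per block---is the paper's architecture. The concrete gap is the connector redesign. Seeding two adjacent tokens on a short path inside the connector does \emph{not} ``force a token to be pushed from the vertex-block or edge-block side'': such a conflict is resolved locally by sliding one endpoint along the path, no external token required. For \textsc{IS-D} the connector must supply \emph{capacity}, not demand: exactly $n$ pairwise non-adjacent landing slots on the $s$-side and $n$ on the $r$-side, so that $x_l+(n-z_l)\le n$ and $(n-x_l)+z_l\le n$ together force $x_l=z_l$. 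The paper does precisely this---it deletes $B_{i,j}^l$ and $C_{i,j}^l$ altogether and keeps $A_{i,j}^l, D_{i,j}^l$ as bare leaves of $s_{i,j}^l, r_{i,j}^l$. Your ``pre-placed tokens around each destination'' idea only reintroduces conflicts that can again be discharged locally, so it does not enforce the count.

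It is also worth noting that the paper's block mechanism is heavier than the one you sketch. Rather than a single $t_i/\hat t_i$ conflict per block, the paper places a token on every fan vertex $q\in Q_i$ \emph{and} on a fresh pendant $b(q)$, and adds a sink hub $\Tilde t_i$ adjacent to all of $b(Q_i)$ with exactly $n(n-1)(\kappa-1)$ pendant leaves $\Tilde T_i$. All $n^2(\kappa-1)$ fan conflicts must be resolved; the sink can absorb at most $n(n-1)(\kappa-1)$ of them, so at least $n(\kappa-1)$ tokens are forced to the connectors, and the one remaining slide for $t_i$ (against its pendant $\hat t_i$) pins down which fan supplies them. This yields a budget of order $n^2\kctwo+nm$ (not $\alpha n\kctwo+\beta\kappa$) and an \emph{fvs} of $5\kctwo+\kappa$ (the $\Tilde t$-hubs join the feedback vertex set). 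Your lighter block mechanism---seed only $t_i/\hat t_i$ and let $t_i\to p_{i,x_i}$ cascade into clearing $Q_{i,x_i}$---may be salvageable once the connector is fixed as above, but you have not supplied the reverse-direction argument ruling out detours (e.g.\ tokens from $Q_{i,x_i}$ passing through $s_{i,j}^i$ into vacated slots of another fan rather than into $A_{i,j}^i$); the paper's symmetric sink construction sidesteps that analysis.
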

We present a parameterized reduction from the \mcc~problem. 
We utilize the reduction given in Theorem~\ref{thm:DS-D-fvs}, and apply some changes over the constructed graph to obtain reduced instance for the IS-D instance. 
Consider the graph $H$ constructed in the proof of Theorem~\ref{thm:DS-D-fvs}. 
For each $i \in [\kappa]$, we add a vertex $\Tilde{t}_i$ with $n(n-1)(\kappa-1)$ pendent neighbors (call the set as $T_i$) in the vertex-block $H_i$. 
For each vertex $v$ in $Q_i$, we add a pendent neighbor $b(v)$. 
For any set $B \subseteq Q_i$, we used to refer $b(B)$ as the set of all pendent neighbors of the vertices in $B$. 
The vertex $\Tilde{t}_{i}$ is made adjacent to all the vertices in $b(Q_i)$. 
We add a pendent neighbor $\hat{t}_i$ to $t_i$. 
For each $\iljk$, we add a vertex $\Tilde{t}_{i,j}$ with $2n|E_{i,j}|-2n$ pendent neighbors (call the set as $T_{i,j}$) in the edge-block~$H_{i,j}$. 
For each vertex $v$ in $Q_{i,j}$, we add a pendent neighbor $c(v)$. 
For any set $C \subseteq Q_{i,j}$, we used to refer~$c(C)$ as the set of all pendent neighbors of the vertices in $C$. 
The vertex $\Tilde{t}_{i,j}$ is made adjacent to all the vertices in $c(Q_{i,j})$. 
We add a pendent neighbor $\hat{t}_{i,j}$ to $t_{i,j}$. 
For each $\iljk$ and $l \in \{i,j\}$, remove the vertex subsets $B_{i,j}^l$ and $C_{i,j}^l$, and the edges incident on them in the connector~$\conn_{i,j}^l$. 
We add a pendent neighbor $\Tilde{s}_{i,j}^l$ to $s_{i,j}^l$ and $\Tilde{r}_{i,j}^l$ to $r_{i,j}^l$. 
An illustration of a connector connecting a vertex-block and an edge-block is given in Figure~\ref{fig:IS-D-fvs-illustration}. 
This completes the construction of graph $H$ for the IS-D instance. 
Next we describe the initial configuration $\varS$ as follows:
\[\varS = \bigcup_{i \in [\kappa], {x \in [n]}} (Q_{i,x}\cup b(Q_{i,x}) \cup \bigcup_{e \in E} (Q_e \cup c(Q_e)) \cup \bigcup_{i \in [\kappa]}\{t_i, \hat{t}_{i}\}  \cup \bigcup_{\iljk}\{t_{i,j}, \hat{t}_{i,j}\}.\]
Finally, we set $\budget = (4n^2+1)\kctwo+4nm+\kappa$ and the reduced IS-D instance is $(H, \varS, \budget)$. 

\begin{figure}
    \tikzset{decorate sep/.style 2 args=
{decorate,decoration={shape backgrounds,shape=circle,shape size=#1,shape sep=#2}}}
\centering
    \begin{tikzpicture}
    \coordinate (ti) at (3,0);
    \coordinate (tti) at (-1,0);
    \coordinate (pix) at (2,0);
    \coordinate (qix1) at (1,1.5);
    \coordinate (qixx) at (1,0.5);
    \coordinate (qixx1) at (1,-0.5);
    \coordinate (qixn) at (1,-1.5);
    \coordinate (siij) at (6,2.5);
    \coordinate (riij) at (6, -2.5);
    \coordinate (aiji1) at (5, 1.5);
    \coordinate (aijin) at (7, 1.5);
    \coordinate (diji1) at (5, -1.5);
    \coordinate (dijin) at (7, -1.5);
    \coordinate (tij) at (9, 0);
    \coordinate (ttij) at (13, 0);
    \coordinate (pe) at (10, 0);
    \coordinate (qez1) at (11,1.5);
    \coordinate (qezz) at (11, 0.5);
    \coordinate (qezz1) at (11, -0.5);
    \coordinate (qezn) at (11, -1.5);
    \coordinate (qew1) at (11, -2.5);
    \coordinate (qewn) at (11, -3.5);
    \draw[black, thick] (ti) -- (pix);
    \draw[black, thick] (ti) -- ($(ti)+(0, 1)$);
    \draw[black, thick] (tti) -- ($(tti)+(0.75, 0)$);
    \draw[black, thick] (tti) -- ($(tti)+(-0.75, 0)$);
    \draw[black, thick] (pix) -- ($(pix)+(-0.5, 0)$);
    \draw[black, thick] (tij) -- (pe);
    \draw[black, thick] (tij) -- ($(tij)+(0, 1)$);
    \draw[black, thick] (ttij) -- ($(ttij)+(-0.75, 0)$);
    \draw[black, thick] (ttij) -- ($(ttij)+(0.75, 0)$);
    \draw[black, thick] (pe) -- ($(pe)+(0.5, 0)$);
    \draw[black, thick] (siij) -- (aiji1);
    \draw[black, thick] (siij) -- (aijin);
    \draw[black, thick] (riij) -- (diji1);
    \draw[black, thick] (riij) -- (dijin);
    \draw[black, thick] (siij) to[bend right=30] (qix1);
    \draw[black, thick] (siij) to[bend right=30] (qixx);
    \draw[black, thick] (riij) to[bend left=30] (qixx1);
    \draw[black, thick] (riij) to[bend left=30] (qixn);
    \draw[black, thick] (siij) to[bend left=30] (qez1);
    \draw[black, thick] (siij) to[bend left=30] (qezz);
    \draw[black, thick] (riij) to[bend right=30] (qezz1);
    \draw[black, thick] (riij) to[bend right=30] (qezn);
    \draw[black, thick] ($(qix1)+(-1,0)$) -- (qix1);
    \draw[black, thick] ($(qixx)+(-1,0)$) -- (qixx);
    \draw[black, thick] ($(qixx1)+(-1,0)$) -- (qixx1);
    \draw[black, thick] ($(qixn)+(-1,0)$) -- (qixn);
    \draw[black, thick] ($(qez1)+(1,0)$) -- (qez1);
    \draw[black, thick] ($(qezz)+(1,0)$) -- (qezz);
    \draw[black, thick] ($(qezz1)+(1,0)$) -- (qezz1);
    \draw[black, thick] ($(qezn)+(1,0)$) -- (qezn);
    \draw[black, thick] ($(qew1)+(1,0)$) -- (qew1);
    \draw[black, thick] ($(qewn)+(1,0)$) -- (qewn);
    \fill[white, draw=black, thick] (ti) circle (0.1cm) node[below,blue] {$t_i$};
    \fill[white, draw=black, thick] ($(ti) + (0, 1)$) circle (0.1cm);
    \fill[white, draw=black, thick] (tti) circle (0.1cm) node[below,blue] {$\Tilde{t}_i$};
    \fill[white, draw=black, thick] (tij) circle (0.1cm) node[below,blue] {$t_{i,j}$};
    \fill[white, draw=black, thick] ($(tij) + (0, 1)$) circle (0.1cm);
    \fill[white, draw=black, thick] (ttij) circle (0.1cm) node[below,blue] {$\Tilde{t}_{i,j}$};
    \fill[white, draw=black, thick] (siij) circle (0.1cm) node[above,blue] {$s_{i,j}^i$};
    \fill[white, draw=black, thick] (riij) circle (0.1cm) node[below,blue] {$r_{i,j}^i$};
    \fill[white, draw=black, thick] (pix) circle (0.1cm) node[below,blue] {$p_{i,x}$};
    \fill[white, draw=black, thick] (qix1) circle (0.1cm) node[above,blue] {$q_{i,x}^{j,1}$};
    \fill[white, draw=black, thick] (qixx) circle (0.1cm) node[below,blue] {$q_{i,x}^{j,x}$};
    \fill[white, draw=black, thick] (qixx1) circle (0.1cm);
    \fill[white, draw=black, thick] (qixn) circle (0.1cm) node[below,blue] {$q_{i,x}^{j,n}$};
    \fill[white, draw=black, thick] (pe) circle (0.1cm) node[below,blue] {$p_e$};    
    \fill[white, draw=black, thick] (qez1) circle (0.1cm) node[above,blue] {$q_e^1$};
    \fill[white, draw=black, thick] (qezz) circle (0.1cm) node[below,blue] {$q_e^{n-z}$};
    \fill[white, draw=black, thick] (qezz1) circle (0.1cm);
    \fill[white, draw=black, thick] (qezn) circle (0.1cm) node[below,blue] {$q_e^n$};
    \fill[white, draw=black, thick] (qew1) circle (0.1cm); 
    \fill[white, draw=black, thick] (qewn) circle (0.1cm); 
    \fill[white, draw=black, thick] (aiji1) circle (0.1cm);
    \fill[white, draw=black, thick] (aijin) circle (0.1cm);
    \fill[white, draw=black, thick] (diji1) circle (0.1cm);
    \fill[white, draw=black, thick] (dijin) circle (0.1cm);
    \draw[black, thick] ($(tti) + (-1,0)$) ellipse (0.25 and 0.75) node[blue] {$\Tilde{T}_i$};
    \draw[black, thick] ($(ttij) + (1,0)$) ellipse (0.25 and 0.75) node[blue] {$\Tilde{T}_{i,j}$};
    \fill[white, draw=black, thick] ($(qix1)+(-1,0)$) circle (0.1cm);
    \fill[white, draw=black, thick] ($(qixx)+(-1,0)$) circle (0.1cm);
    \fill[white, draw=black, thick] ($(qixx1)+(-1,0)$) circle (0.1cm);
    \fill[white, draw=black, thick] ($(qixn)+(-1,0)$) circle (0.1cm);
    \fill[white, draw=black, thick] ($(qez1)+(1,0)$) circle (0.1cm);
    \fill[white, draw=black, thick] ($(qezz)+(1,0)$) circle (0.1cm);
    \fill[white, draw=black, thick] ($(qezz1)+(1,0)$) circle (0.1cm);
    \fill[white, draw=black, thick] ($(qezn)+(1,0)$) circle (0.1cm);
    \fill[white, draw=black, thick] ($(qew1)+(1,0)$) circle (0.1cm);
    \fill[white, draw=black, thick] ($(qewn)+(1,0)$) circle (0.1cm);
    \fill[red] (ti) circle (0.05cm);
    \fill[red] ($(ti) + (0, 1)$) circle (0.05cm);
    \fill[red] (tij) circle (0.05cm);
    \fill[red] ($(tij) + (0, 1)$) circle (0.05cm);
    \fill[red] (qix1) circle (0.05cm);
    \fill[red] (qixx) circle (0.05cm);
    \fill[red] (qixx1) circle (0.05cm);
    \fill[red] (qixn) circle (0.05cm);
    \fill[red] (qez1) circle (0.05cm);
    \fill[red] (qezz) circle (0.05cm);
    \fill[red] (qezz1) circle (0.05cm);
    \fill[red] (qezn) circle (0.05cm);
    \fill[red] (qew1) circle (0.05cm);
    \fill[red] (qewn) circle (0.05cm);
    \fill[red] ($(qix1)+(-1,0)$) circle (0.05cm);
    \fill[red] ($(qixx)+(-1,0)$) circle (0.05cm);
    \fill[red] ($(qixx1)+(-1,0)$) circle (0.05cm);
    \fill[red] ($(qixn)+(-1,0)$) circle (0.05cm);
    \fill[red] ($(qez1)+(1,0)$) circle (0.05cm);
    \fill[red] ($(qezz)+(1,0)$) circle (0.05cm);
    \fill[red] ($(qezz1)+(1,0)$) circle (0.05cm);
    \fill[red] ($(qezn)+(1,0)$) circle (0.05cm);
    \fill[red] ($(qew1)+(1,0)$) circle (0.05cm);
    \fill[red] ($(qewn)+(1,0)$) circle (0.05cm);
    \draw[black,rounded corners] (4.75, 1.25) rectangle (7.25, 1.75) node[midway] {$A_{i,j}^i$};
    \draw[black,rounded corners] (4.75, -1.75) rectangle (7.25, -1.25) node[midway] {$D_{i,j}^i$};
    \draw[gray, rounded corners=10pt] (0.5, -2.5) rectangle (1.5, 2.5);
    \draw[gray, rounded corners=5pt] (-0.25, -2.5) rectangle (0.25, 2.5);
    \draw[gray, thick, dashed, rounded corners=15pt] (-0.5, -3.15) rectangle (2.5, 3.15);
    \draw[gray, thick, dashed, rounded corners=15pt] (-2.5, -3.5) rectangle (3.5, 3.5) node[midway, black, above=3.5cm] {$H_i$};
    \draw[gray, rounded corners=10pt] (10.5, 2.5) rectangle (11.5, -4);
    \draw[gray, rounded corners=5pt] (11.75, 2.5) rectangle (12.25, -4);
    \draw[gray, thick, dashed, rounded corners=15pt] (8.5, 3.5) rectangle (14.5, -5) node[midway, black, above=4.25cm] {$H_{i,j}$};
    \draw[gray, thick, dashed, rounded corners=15pt] (4.5, -3.5) rectangle (7.5, 3.5) node[black, midway, above=3.5cm] {$C_{i,j}^i$};
    \draw[decorate sep={0.5mm}{1.75mm},fill] (1,1.25) -- (1,0.7);
    \draw[decorate sep={0.5mm}{1.75mm},fill] (1,-0.75) -- (1,-1.3);
    \draw[decorate sep={0.5mm}{1.75mm},fill] (11,1.25) -- (11,0.7);
    \draw[decorate sep={0.5mm}{1.75mm},fill] (11,-0.75) -- (11,-1.3);
    \draw[decorate sep={0.5mm}{1.75mm},fill] (11,-2.75) -- (11,-3.3);
    \draw[decorate sep={0.5mm}{1.75mm},fill] (1,2.7) -- (1,3.1);
    \draw[decorate sep={0.5mm}{1.75mm},fill] (1,-2.7) -- (1,-3.1);
    \end{tikzpicture}
    \caption{\footnotesize An illustration of the reduction of Theorem~\ref{thm:IS-D-fvs}. For $\iljk$, the vertex-block $H_i$ and the edge-block $H_{i,j}$ are connected to the connector $C_{i,j}^i$. The initial configuration is denoted by vertices with red circle. }
    \label{fig:IS-D-fvs-illustration}
\end{figure}

\begin{lemma}
    \label{lem:IS-D-fvs-bound}
    The \emph{fvs} of the graph $H$ is at most $5\kctwo+\kappa$. 
\end{lemma}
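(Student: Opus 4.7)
My plan is to exhibit an explicit feedback vertex set $F \subseteq V(H)$ of size $5\kctwo + \kappa$ and then argue that $H - F$ is acyclic. The natural candidate is
\[
F \;=\; \bigcup_{i \in [\kappa]} \{\Tilde{t}_i\} \;\cup\; \bigcup_{\iljk} \{\Tilde{t}_{i,j}\} \;\cup\; \bigcup_{\iljk,\, l \in \{i,j\}} \{s_{i,j}^l,\, r_{i,j}^l\},
\]
which has exactly $\kappa + \kctwo + 4\kctwo = 5\kctwo + \kappa$ elements. The intuition is that the only sources of cycles are (i) the ``bouquets'' created at each $\Tilde{t}_i$ and $\Tilde{t}_{i,j}$ through the pendant sets $b(Q_i)$ and $c(Q_{i,j})$, and (ii) the bridges between blocks, which pass exclusively through the connector vertices $s_{i,j}^l, r_{i,j}^l$.

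After pinning down $F$, I would analyse the structure of $H - F$ block by block. For a vertex-block $H_i$, removing $\Tilde{t}_i$ detaches every pendant in $T_i$ and turns each $b(v)$ for $v \in Q_i$ into a simple pendant attached only to $v$; what remains is the star centered at $t_i$ with arms $t_i \text{--} p_{i,x} \text{--} q_{i,x}^{j,z}$, each $q$ further carrying the pendant $b(q)$, plus the pendant $\hat{t}_i$. This is clearly a tree. An entirely analogous argument handles each edge-block $H_{i,j} - \Tilde{t}_{i,j}$. For a connector $\conn_{i,j}^l$, once $s_{i,j}^l$ and $r_{i,j}^l$ are removed the sets $A_{i,j}^l$ and $D_{i,j}^l$ become isolated (their only neighbors inside the connector were $s$ and $r$, and $B_{i,j}^l, C_{i,j}^l$ were already deleted in the IS-D reduction), and the two pendants $\Tilde{s}_{i,j}^l, \Tilde{r}_{i,j}^l$ vanish together with $s$ and $r$.

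Finally, I would argue that the blocks of $H - F$ are pairwise disconnected: the only edges between a vertex-block $H_l$ (or edge-block $H_{i,j}$) and the rest of the graph pass through a connector vertex $s_{i,j}^l$ or $r_{i,j}^l$, all of which lie in $F$. Consequently $H - F$ is the disjoint union of the trees obtained from each block together with isolated vertices from the connectors, hence a forest, and $\mathit{fvs}(H) \le |F| = 5\kctwo + \kappa$.

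The main obstacle is bookkeeping rather than mathematical depth: one must carefully verify that every type of edge introduced in the construction (pendants $T_i$, $T_{i,j}$, $b(\cdot)$, $c(\cdot)$, $\hat{t}_i$, $\hat{t}_{i,j}$, $\Tilde{s}^l_{i,j}$, $\Tilde{r}^l_{i,j}$, the edges into $A$ and $D$, and the cross-block edges inherited from the DS-D reduction) is either destroyed by the removal of $F$ or hangs off a tree as a pendant. I expect this to be straightforward once the bouquets at $\Tilde{t}_i$ and the ``bridging'' role of $s^l_{i,j}, r^l_{i,j}$ are isolated as the only cycle-creating features.
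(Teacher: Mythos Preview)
Your proposal is correct and uses exactly the same feedback vertex set as the paper's proof; the paper simply states $F = \{s_{i,j}^l, r_{i,j}^l \mid \iljk, l\in\{i,j\}\} \cup \{\Tilde{t}_{i,j} \mid \iljk\} \cup \{\Tilde{t}_i \mid i \in [\kappa]\}$ and asserts that $H-F$ is a forest, whereas you supply the block-by-block verification. One minor wording quibble: the pendants $\Tilde{s}_{i,j}^l,\Tilde{r}_{i,j}^l$ do not ``vanish'' with $s_{i,j}^l,r_{i,j}^l$ but become isolated, which is of course fine for acyclicity.
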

\begin{proof}
    Let $F = \{s_{i,j}^l, r_{i,j}^l \mid \iljk, l\in\{i,j\}\} \cup \{ \Tilde{t}_{i,j} \mid \iljk\} \cup \{\Tilde{t}_i \mid i \in [\kappa]\}$. 
    Removal of $F$ from $H$ results a forest. 
    Therefore, the \emph{fvs} of $H$ is at most $|F| = 5\binom{\kappa}{2} + \kappa$. 
\end{proof}
Next we prove the correctness of the reduction.
\begin{lemma}
\label{lem:IS-D-fvs-forward}
    If $(G, \kappa)$ is a yes-instance of the \mcc~problem, then $(H, \varS, \budget)$ is a yes-instance of the \textsc{IS-D} problem. 
\end{lemma}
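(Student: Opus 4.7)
The plan is to exhibit, for every $\kappa$-clique $\{u_{1,x_1},\ldots,u_{\kappa,x_\kappa}\}$ of $G$ with clique edges $e_{i,j}=u_{i,x_i}u_{j,x_j}$, a discovery sequence of length exactly $\budget$ that turns $\varS$ into an independent set of $H$. Observe first which adjacencies must be destroyed in $H[\varS]$: the pairs $(t_i,\hat t_i)$ and $(t_{i,j},\hat t_{i,j})$, every pair $(v,b(v))$ for $v\in Q_i$, and every pair $(v,c(v))$ for $v\in Q_{i,j}$. Each such pair requires at least one slide, and a careful accounting shows the budget is tight.

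In each vertex-block $H_i$ I first slide $t_i\to p_{i,x_i}$, which breaks the $(t_i,\hat t_i)$ pair, and then for every $j\neq i$ I route the $n$ tokens of $Q_{i,x_i}^j$ through the connector $\conn_{i,j}^i$: the $x_i$ tokens at $q_{i,x_i}^{j,z}$ with $z\le x_i$ go one at a time through $s_{i,j}^i$ into distinct vertices of $A_{i,j}^i$ (two slides each), and the $n-x_i$ tokens with $z>x_i$ go analogously through $r_{i,j}^i$ into $D_{i,j}^i$. This costs $1+2n(\kappa-1)$ slides and leaves exactly $n-x_i$ free vertices in $A_{i,j}^i$ and $x_i$ free vertices in $D_{i,j}^i$ for the edge-block side to fill later. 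The remaining $n(n-1)(\kappa-1)$ pairs $(v,b(v))$ with $v\in Q_i\setminus Q_{i,x_i}$ cannot all be evacuated, since the connectors have no room for them, so for each one I slide $b(v)\to\Tilde{t}_i\to$ a fresh pendant of $T_i$; because $|T_i|=n(n-1)(\kappa-1)$ matches exactly, this uses $2n(n-1)(\kappa-1)$ further slides.

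Symmetrically, in each edge-block $H_{i,j}$ I slide $t_{i,j}\to p_{e_{i,j}}$ and route the $2n$ tokens of $Q_{e_{i,j}}$ through the two connectors so as to completely fill $A_{i,j}^l$ and $D_{i,j}^l$ for $l\in\{i,j\}$: $n-x_l$ tokens take $q\to s_{i,j}^l\to A_{i,j}^l$ and $x_l$ tokens take $q\to r_{i,j}^l\to D_{i,j}^l$, for a total of $4n$ slides. For the $2n(|E_{i,j}|-1)$ surviving pairs $(v,c(v))$ with $v\in Q_{i,j}\setminus Q_{e_{i,j}}$ I slide $c(v)\to\Tilde{t}_{i,j}\to$ a fresh pendant of $T_{i,j}$, which exactly exhausts $T_{i,j}$ and uses $4n(|E_{i,j}|-1)$ further slides. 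Summing over all blocks gives $\kappa+4n^2\kctwo+\kctwo+4nm=\budget$ slides exactly.

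The main thing to verify is that the resulting configuration is independent. Because every chain $q\to s_{i,j}^l\to A$ (resp.\ $\to r_{i,j}^l\to D$) is executed one token at a time, each sweep vertex $s_{i,j}^l$ and $r_{i,j}^l$ ends up token-free; the $A_{i,j}^l$ and $D_{i,j}^l$ vertices form independent sets whose only other neighbors are the private pendants $\Tilde{s}_{i,j}^l,\Tilde{r}_{i,j}^l$, which are never touched; tokens that remained on some $q\in Q_i\setminus Q_{i,x_i}$ or $q\in Q_{i,j}\setminus Q_{e_{i,j}}$ now see only an empty $p$, an empty $b(q)$ or $c(q)$, and empty $s,r$; and every token dumped in $T_i$ or $T_{i,j}$ is a pendant whose unique neighbor $\Tilde{t}_i$ or $\Tilde{t}_{i,j}$ is empty. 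The delicate ingredient is that the construction is engineered so that the number of pairs to break equals precisely the available empty room ($|T_i|$, $|T_{i,j}|$, and the joint capacity of $A\cup D$), which is what makes the budget tight and dictates that exactly the coordinate $x_i$ of the chosen MCC vertex be sent through $s_{i,j}^l$ from the vertex-side while the complementary $n-x_i$ must come from the edge-side; this matching constraint is what the reverse direction will exploit.
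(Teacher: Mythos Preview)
Your proof is correct and follows essentially the same approach as the paper: slide $t_i\to p_{i,x_i}$, route the tokens of $Q_{i,x_i}$ (resp.\ $Q_{e_{i,j}}$) two steps each into $A_{i,j}^l\cup D_{i,j}^l$, and evacuate the remaining $b(\cdot)$ (resp.\ $c(\cdot)$) tokens two steps each into $T_i$ (resp.\ $T_{i,j}$), with the same budget count. One small slip: the vertices of $A_{i,j}^l$ and $D_{i,j}^l$ are adjacent to $s_{i,j}^l$ and $r_{i,j}^l$, not to the pendants $\Tilde{s}_{i,j}^l,\Tilde{r}_{i,j}^l$; your independence argument still goes through since $s_{i,j}^l$ and $r_{i,j}^l$ end up token-free.
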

\begin{proof}
    Let $C = \subseteq V(G)$ be a $\kappa$-clique in $G$. 
    For each $i \in [\kappa]$, let $u_{i,x_i}$ be the vertex in $C \cap V_i$ for some $x_i \in [n]$. 
    For each $\iljk$, let $e_{i,j} = u_{i,x_i}u_{j,x_j}$.
    For each $i \in [\kappa]$, we slide the token on $t_i$ to $p_{i, x_i}$. 

    \pagebreak
    For each $x \in [n]$, 
    \begin{itemize}
        \item if $x = x_i$, then for each $j \not= i \in [\kappa]$, we slide $x_i$-tokens in $Q_{i,x_i}^j$ towards $s^i_{i,j}$ and $n-x_i$-tokens in $Q_{i,x_i}^j$ towards $r_{i,j}^i$.
        \item if $x \not= x_i$, then we slide all $n(\kappa-1)$ tokens in $b(Q_{i,x})$ to $\Tilde{T}_i$.
    \end{itemize}
    
    For each $\iljk$, we slide the token on $t_{i,j}$ to $p_{e_{i,j}}$. 
    For each $e \in E_{i,j}$, if $e=u_{i,x_i}u_{j, x_j}$, then we slide 
    \begin{itemize}
        \item $n-x_i$ tokens in $Q_{e_{i,j}}$ to $s_{i,j}^i$,
        \item $x_i$ tokens in $Q_{e_{i,j}}$ to $r_{i,j}^i$,
        \item $n-x_j$ tokens in $Q_{e_{i,j}}$ to $s_{i,j}^j$, and
        \item $x_j$ tokens in $Q_{e_{i,j}}$ to $r_{i,j}^j$. 
    \end{itemize}
    
    Otherwise, we slide all $2n$ tokens in $c(Q_e)$ to $\Tilde{T}_{i,j}$. 
    For each $\iljk$, and for each $l \in {i,j}$, $s_{i,j}^l$ receives $x_l$-tokens from $H_l$ and $n-x_l$-tokens from $H_{i,j}$.
    Similarly, $r_{i,j}^l$ receives $n-x_l$-tokens from $H_l$ and $x_l$-tokens from $H_{i,j}$. 
    Further, we push the $n$-tokens received by $s_{i,j}^l$ to $A_{i,j}^l$ and $n$-tokens received by $r_{i,j}^l$ to $D_{i,j}^l$. 
    The resulting configuration is a valid independent set. 
    Finally, let $S' \subseteq V(H)$ be the solution obtained from the above token sliding steps. 
    It is clear that the set $S'$ is an  independent set in~$H$. 
    Next we count the number of token steps used to obtain~$S'$ from $\varS$. 
    In each vertex-block, we spend $2(\kappa-1)n^2+1$ steps to push tokens towards the connectors and $\Tilde{T}_i$. 
    Similarly, at each edge-block, we spend $4n|E_{i,j}|+1$ steps. 
    Therefore, we spend $\kappa\cdot\big(2(\kappa-1)n^2+1\big) + 4nm+\kctwo = (4n^2+1)\kctwo + 4nm + \kappa= \budget$. 
    Hence, $(H, \varS, \budget)$ is a yes-instance of IS-D problem.
\end{proof}

\begin{lemma}
\label{lem:IS-D-fvs-reverse}
    If $(H, \varS, \budget)$ is a yes-instance of the IS-D problem, then $(G, \kappa)$ is a yes-instance of the \mcc~problem. 
\end{lemma}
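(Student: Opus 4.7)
The plan is to show that the tight budget $\budget$ forces every feasible discovery sequence to (essentially) follow the strategy used in \Cref{lem:IS-D-fvs-forward}, and that the choices it encodes yield a multicoloured $\kappa$-clique in $G$.

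I would first establish a block-wise lower bound on the number of slides. Inside $H_i$, each of the $n^2(\kappa-1)$ pendant conflicts $\{v,b(v)\}$ with $v\in Q_i$, together with $\{t_i,\hat{t}_i\}$, must be resolved. Since $b(v)$ has only the two neighbours $v$ and $\Tilde{t}_i$, evacuating $b(v)$ costs at least two slides (through $\Tilde{t}_i$ into a pendant of $T_i$, since $\Tilde{t}_i$ can host at most one token in the final IS). A $Q_i$-token that leaves $H_i$ also costs at least two slides, since its only exits are through $s_{i,j}^i$ or $r_{i,j}^i$, which must themselves be vacated in the final IS (their pendants $\Tilde{s}_{i,j}^i, \Tilde{r}_{i,j}^i$ already carry tokens). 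Combining the capacity $|T_i|=n(n-1)(\kappa-1)$ with the fact that a maximum IS of $H_i$ cannot accommodate all $2n^2(\kappa-1)+2$ initial tokens (so at least $n(\kappa-1)$ must leave) gives $2n^2(\kappa-1)+1$ as the minimum number of slides inside $H_i$, the $+1$ being $t_i \to p_{i,x_i}$. A symmetric argument in each $H_{i,j}$ yields $4n|E_{i,j}|+1$, and summing matches $\budget$ exactly, so every inequality must be tight.

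Next I would extract the selections. Tightness forces in each $H_i$ the token of $t_i$ to land on a unique $p_{i,x_i}$, thereby vacating all of $Q_{i,x_i}$, while the shadow batches $b(Q_{i,x})$ for $x\ne x_i$ saturate $T_i$ through $\Tilde{t}_i$; symmetrically in each $H_{i,j}$ a unique edge $e_{i,j}\in E_{i,j}$ is selected, with $Q_{e_{i,j}}$ being the batch that leaves. The $4n\kctwo$ leaving tokens match exactly the total capacity $\sum_{i<j,\,l\in\{i,j\}}(|A_{i,j}^l|+|D_{i,j}^l|)$, so every $A_{i,j}^l$ and every $D_{i,j}^l$ is fully occupied in the final IS. Writing $e_{i,j}=u_{i,z_i}u_{j,z_j}$, a flow balance at $\conn_{i,j}^i$ then reads $x_i+(n-z_i)=|A_{i,j}^i|=n$, forcing $x_i=z_i$; the same argument at $\conn_{i,j}^j$ gives $x_j=z_j$. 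Hence $e_{i,j}=u_{i,x_i}u_{j,x_j}\in E(G)$ for every $\iljk$, so $\{u_{i,x_i}: i\in[\kappa]\}$ is a multicoloured $\kappa$-clique.

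The hard part will be the tightness step: genuinely ruling out cleverer re-routings that might look cheaper, such as parking tokens on $s_{i,j}^l$, $r_{i,j}^l$, $\Tilde{t}_i$ or $\Tilde{t}_{i,j}$ at the end of the sequence, shifting the token from $\hat{t}_i$ rather than $t_i$, or evacuating a $b(v)$ in fewer than two slides. Each such alternative has to be excluded by a careful structural argument combining the pendant-gadget topology with the final independent-set constraint; once the two-slides-per-evacuation lower bound is in place, the remaining structural claims follow by counting alone.
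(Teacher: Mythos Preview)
Your plan is correct and mirrors the paper's proof: establish the per-block lower bounds $2n^2(\kappa-1)$ and $4n|E_{i,j}|$ from the $Q_i/b(Q_i)$ and $Q_{i,j}/c(Q_{i,j})$ conflicts, observe these together with one slide for each $t_i$ and $t_{i,j}$ exactly exhaust $\budget$, deduce that each $t_i$ (resp.\ $t_{i,j}$) moves once to some $p_{i,x_i}$ (resp.\ $p_{e_{i,j}}$), and then balance flow at each connector to force $x_l=z_l$, yielding the clique.

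One small factual slip to fix in the write-up: you justify the two-slide cost through $s_{i,j}^l,r_{i,j}^l$ by saying their pendants $\Tilde{s}_{i,j}^l,\Tilde{r}_{i,j}^l$ ``already carry tokens'', but in the IS-D construction these pendants are \emph{not} in the initial configuration $\varS$. The correct reason (which the paper uses implicitly and which you yourself list under ``the hard part'') is a capacity argument: at most one token can park at $s_{i,j}^l$ in the final IS, and doing so would block all of $A_{i,j}^l$, so the global count still forces the two-slide bound.
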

\begin{proof}
    Let $S^*$ be a feasible solution for the instance $(H, \varS, \budget)$ of the IS-D problem. 
    In each vertex-block $H_i$, we need to slide at least $n^2(\kappa-1)$ tokens as the vertices in both sets $Q_i$ and $b(Q_i)$ have tokens in the initial configuration. 
    We can accommodate at most $n(n-1)(\kappa-1)$ tokens at the vertices in $\Tilde{T}_i$. 
    Therefore, at least $n(\kappa-1)$ token should be pushed towards connectors. 
    Each such tokens must slide at least two steps to find a free vertex. 
    Therefore, we need at least $2n^2(\kappa-1)$ token steps to settle the tokens on $Q_i$ and $b(Q_i)$.     
    Similarly, at each edge-block $H_{i,j}$, we need to slide at least $2n|E_{i,j}|$ tokens as the vertices in the sets $Q_{i,j}$ and $c(Q_{i,j})$ have tokens in the initial configuration. 
    We can accommodate at most $2n(|E_{i,j}| - 1)$ tokens at the vertices in $\Tilde{T}_{i,j}$. 
    Therefore, at least $2n$ tokens should be pushed towards connectors. 
    Each such tokens must slide at least two steps to find a free vertex. 
    Therefore, we need at least $4n|E_{i,j}|$ token steps to settle the tokens on~$Q_{i,j}$ and $c(Q_{i,j})$.     
    This saturates a budget of $2n^2(\kappa-1)\kappa + 4nm = 4n^2\kctwo + 4nm$. 
    We left with a budget of at most $\kctwo + k$. 
    In each vertex-block $H_i$, we still need to fix the tokens on $t_i$ and $\hat{t}_i$. 
    We can use at most one token step to fix this. 
    Therefore, the token on $t_i$ should move to a neighbor $p_{i,x}$ for some $x \in [n]$. 
    Similarly, the token on $t_{i,j}$ for each $\iljk$, should move to a neighbor $p_e$ for some $e \in E_{i,j}$. 
    
    For each $i\in [\kappa]$, let $p_{i,x_i}$ for some $x_i \in [n]$ be the vertex in $H_i$ that gets token in $S^*$ and releases all the tokens in $Q_{i,x_i}$. 
    Similarly, for each $\iljk$, let $p_e$ for some $e=u_{i,z_i}u_{i,z_j} \in E_{i,j}$ with $z_i,z_j \in [n]$ be the vertex in $H_{i,j}$ that gets token in $S^*$ and releases all tokens in $Q_{e}$. 
    Consider the connector $\conn_{i,j}^i$. 
    The set $Q_{i,x_i}^j$ pushes $x_i$ tokens to $s_{i,j}^i$ and $n-x_i$ tokens to $r_{i,j}^i$. 
    The set $Q_e$ pushes $z_i$ tokens to $r_{i,j}^i$ and $n-z_i$ tokens to $s_{i,j}^i$. 
    The number of tokens passed through $s_{i,j}^i$ to $A_{i,j}^i$ is $x_i + (n - z_i)$. 
    Since $A_{i,j}^i$ need $n$ tokens, it is mandatory that $x_i=z_i$. 
    This equality should hold for every $i$. 
    Therefore, for each $\iljk$, there exist an edge $u_{i,x_i}u_{j,x_j}$. 
    Hence $(G,\kappa)$ is an yes-instance of the \mcc~problem.     
\end{proof}
The proofs of Lemmas~\ref{lem:IS-D-fvs-bound}, \ref{lem:IS-D-fvs-forward} and \ref{lem:IS-D-fvs-reverse} complete the proof of Theorem~\ref{thm:IS-D-fvs}.

\section{Dominating Set Discovery}
\label{sec:ds}
\textsc{DS-D} was shown to be $\W[1]$-hard with respect to parameter $k + \budget$ on general graphs and with respect to parameter $\budget$ on 2-degenerate graphs. 
On the positive side, however, it is in $\FPT$ for parameter $k$ on biclique-free graphs as well as with respect to parameter $\budget$ on nowhere dense classes of graphs~\cite{fellows2023solution}. 
We show in this section that the problem has polynomial kernels with respect to parameter $k$ on biclique-free classes.
Additionally, via a slight modification to the proofs of \Cref{thm:VC-D-pathwidth,thm:cross_composition-VC-bpw}, we show that \textsc{DS-D} is \XNLP-hard with respect to parameter $pw$ and does not have a polynomial kernel with respect to the parameter $\budget + pw$ where $pw$ is the pathwidth of the input graph unless $\NP \subseteq \cp$.

\begin{theorem}\label{thm:DS-K-biclique-semi-ladder}
Let $\Cc$ be a biclique-free class of graphs. Then \textsc{DS-D} has a polynomial kernel on~$\Cc$ with respect to parameter $k$. 
\end{theorem}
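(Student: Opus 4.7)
The plan is to mimic the kernelization strategy used for \textsc{IS-D} in \Cref{thm:nowheredense-IS-k} by locating a polynomial-sized set of ``relevant'' vertices and discarding the remainder, but with the additional subtlety that in the dominating-set setting we cannot simply delete a vertex of $V(G)$: every vertex must still be dominated in the final configuration. I will therefore pass through an annotated variant of \textsc{DS-D} in which each vertex carries a flag indicating whether it must be dominated, and design reduction rules that operate separately on the ``landing positions'' of the tokens and on the ``needs-to-be-dominated'' set of vertices.

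The first step is a distance-bounding lemma in the style of \Cref{lem:IS-always-at-distance-3k}: in any shortest discovery sequence producing a dominating set of size $k$, each token $s\in S$ has a final position at distance at most $f(k,d)$ from $s$, where $\Cc$ is $K_{d,d}$-free and $f$ is a polynomial. The intuition is that in a $K_{d,d}$-free graph there is a polynomial bound on the number of distinct ``types'' of closed neighborhoods among candidate dominators, so beyond a bounded distance from $S$ the role of a far-away vertex can always be replicated by one closer to $S$. Vertices at distance greater than $f(k,d)$ from every $s\in S$ that are already dominated by $S$ itself can then be safely discarded.

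The second step is a sunflower-style reduction, in the spirit of \Cref{lem:one-sunflower-petal} but tailored to domination. For each $s\in S$ and each distance layer $V(s,i)$ with $i\le f(k,d)$, I will group the vertices in the layer according to the sub-neighborhoods of size at most $d-1$ that they share in the union of all candidate dominators. The $K_{d,d}$-freeness of $G$ implies that the number of such classes is polynomially bounded in $k$, and within each class only a bounded number of representatives need be kept. Vertices that are only needed as landing positions and whose role is redundantly represented by another vertex of the same class can be removed, while vertices needed only to be dominated can be compressed using a Philip--Raman--Sikdar-style reduction for annotated \textsc{Dominating Set} on $K_{d,d}$-free graphs.

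The main obstacle will be reconciling these two reduction rules consistently: removing a vertex $v$ that serves only as a landing position may inadvertently leave some vertex $u\in N(v)$ without a dominator in the reduced instance, and conversely, dropping $u$ from the must-dominate set might forbid a discovery sequence that uses $u$ as a waypoint. I plan to address this by processing vertices in a carefully chosen order, preserving the invariant that every vertex is either dominated by $S$ from the start, dominated by a retained landing candidate, or no longer required to be dominated at all; and by preserving distance information relevant to the budget $b$ via short path gadgets between surviving vertices. Combining the two reductions should yield a kernel of size $k^{O(d)}$, establishing the claimed polynomial bound.
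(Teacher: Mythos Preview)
Your proposal heads in a workable direction but is substantially more convoluted than the paper's argument, and the part you yourself flag as ``the main obstacle'' is exactly where the paper's approach avoids the difficulty altogether. The paper does not try to reduce landing positions and domination requirements in parallel. Instead it first computes a polynomial-size \emph{$k$-domination core} $C\subseteq V(G)$ (\Cref{lem:core}, which is precisely the Philip--Raman--Sikdar machinery you allude to): any $k$-set dominating $C$ already dominates all of $G$. This single step dissolves your annotated-problem bookkeeping, since once $C$ is fixed, only the core must be dominated and no flags are needed. The paper then partitions $V(G)$ into \emph{projection classes} $\{v:N(v)\cap C=X\}$; bounded VC-dimension in biclique-free classes bounds the number of classes polynomially. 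For each class and each token it keeps one distance-realizing representative together with a shortest path to it, and the kernel is the induced subgraph on these vertices. The budget bound comes from the observation that a graph with a $k$-dominating set has diameter at most $3k+2$, so $b\le 3k^2+2k$.

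Two concrete issues with your plan as written: (i) your layerwise sunflower argument for domination is underspecified---unlike the IS case, a petal vertex being unoccupied is not enough for replacement, since its \emph{neighbourhood} must match on the relevant set, and you have not said what that set is before you have a core; (ii) ``preserving distance information via short path gadgets'' is dangerous here, because adding gadget vertices can take the graph outside $\Cc$, and the kernel must remain in the class. The paper sidesteps both by taking an \emph{induced subgraph} (no gadgets) and letting the domination core absorb all the ``what must be dominated'' logic up front.
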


For the proof of \Cref{thm:DS-K-biclique-semi-ladder} we use the concept of \emph{$k$-domination cores}, which were introduced by Dawar
and Kreutzer to approach domination type problems~\cite{DawarK09}. 

\begin{definition}
    Let $G$ be a graph and $k\geq 1$. A set $C\subseteq V(G)$ is a \emph{$k$-domination core} if every set of size at most $k$ that dominates $C$ also dominates $G$.
\end{definition}

Bounded size domination cores do not exist for general graphs, however, they do exist for
many important graph classes, see e.g.~\cite{kreutzer2018polynomial,philip2009solving}, most generally for semi-ladder free graphs~\cite{FabianskiPST19}. 
For our construction of polynomial kernels it is important that biclique-free classes admit polynomial domination cores. For semi-ladder free graphs no polynomial cores are known and the proof of existence only yields an fpt and no polynomial-time algorithm. 
Note that the notion of cores does not appear explicitly in~\cite{philip2009solving}, however, it is easily observed that the set of black vertices in the auxiliary RWB-dominating set problem considered in that work yields a $k$-domination core. 

\begin{lemma}[follows from \cite{philip2009solving}]\label{lem:core}
    Let $\Cc$ be a biclique-free class of graphs. Then there exists a polynomial time algorithm that given $G\in \Cc$ and $k\in \N$ decides that $G$ cannot be dominated by~$k$~vertices or computes a $k$-comination core $C\subseteq V(G)$ of size polynomial in $k$. 
\end{lemma}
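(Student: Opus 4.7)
The plan is to extract a polynomial-size $k$-domination core by iteratively shrinking a candidate core $C\subseteq V(G)$, starting from $C=V(G)$, until no further shrinking is possible; the main ingredient is a whitening rule powered by a sunflower argument that uses $K_{d,d}$-freeness to keep $|C|$ polynomial in $k$. Throughout the procedure I maintain the invariant that every $k$-set dominating the current $C$ also dominates $V(G)$. This invariant holds trivially at the start and, as long as each shrinking step only removes vertices that are \emph{dominated-by-default} (meaning every $k$-set dominating $C\setminus\{v\}$ also dominates $v$), it is preserved throughout.

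Before the main loop I would apply standard preprocessing: if some vertex $v$ has more than $k$ private neighbors with respect to $V(G)$ (neighbors that only $v$ can dominate under the current candidate), then $v$ is forced into every $k$-dominating set; remove $v$ and its closed neighborhood from $C$, and decrement $k$. If this forces more than $k$ vertices, or if the residual graph cannot be dominated by the remaining budget, reject the instance. These rules run in polynomial time and never violate the invariant.

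The heart of the argument is the whitening rule. Given the current $C$, I would search for a ``sunflower'' of closed neighborhoods $\{N[u] : u\in C'\}$ for some $C'\subseteq C$ of size $k+1$, with sunflower core $Y$. Exploiting $K_{d,d}$-freeness, any $d$ members of $C'$ have at most $d-1$ common neighbors, so $|Y|<d$. Fix any $v\in C'$: any $k$-element set $S$ that dominates the remaining $k$ petal vertices either contains a vertex from each petal (thus hitting $N[v]$ only inside $Y$, where it still dominates $v$ because $Y\subseteq N[v]$), or contains a vertex of $Y$ (again dominating $v$), or contains $v$ itself. In all cases $v$ is dominated-by-default, so $v$ may be removed from $C$. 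This is exactly the reduction rule underlying the RWB-Dominating Set kernel of~\cite{philip2009solving}, rephrased as a whitening operation on a core.

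The remaining task is the size bound. By a Kővári--Sós--Turán/Zarankiewicz-style counting, once $|C|$ exceeds a threshold of the form $k^{O(d)}$, a $(k{+}1)$-petal sunflower must exist and can be found in polynomial time; conversely, if no such sunflower exists then $|C|\le k^{O(d)}$. Hence the loop terminates in polynomial time with a $k$-domination core of size polynomial in $k$ (with the degree of the polynomial depending only on the biclique-freeness parameter $d$ of $\mathcal{C}$). The main obstacle is to make the sunflower-counting step quantitatively tight so that the resulting bound is genuinely polynomial in $k$ rather than merely $f(k)$; this is precisely the delicate part handled in~\cite{philip2009solving} via Zarankiewicz bounds, and I would appeal to that calculation rather than redo it, thereby closing the proof.
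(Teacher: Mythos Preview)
The paper does not prove this lemma; it simply cites~\cite{philip2009solving} and observes that the surviving ``black'' vertices in their RWB-Dominating Set kernel constitute a $k$-domination core. Your proposal goes further and sketches an argument, which is welcome, but the sunflower step contains a genuine error.

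In your whitening rule you take $|C'|=k+1$, remove one vertex $v$, and claim that any $k$-set $S$ dominating the remaining $k$ petal vertices also dominates $v$. Your case~(a) asserts that if $S$ places one vertex in each of the $k$ remaining petals then $S$ still hits $N[v]$ via $Y$; this is false. With $S\cap Y=\varnothing$ and $v\notin S$, the $k$ dominators sit in the $k$ pairwise-disjoint petals of the \emph{other} vertices, and nothing forces any of them into $N[v]$. Concretely, for $k=1$ take the path $x\text{--}u\text{--}w\text{--}v$: here $C'=\{u,v\}$, $Y=\{w\}$, and the $1$-set $S=\{x\}$ dominates $u$ but not $v$. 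The fix is to require $|C'|=k+2$: after removing $v$ there remain $k+1$ petal vertices with disjoint petals, so a $k$-set cannot dominate all of them from petals alone, forcing $S\cap Y\neq\varnothing$ and hence $v\in N[Y]\subseteq N[S]$.

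A second, softer issue: the classical sunflower lemma needs bounded-size sets, but closed neighborhoods in a $K_{d,d}$-free graph can be arbitrarily large. You wave at a Zarankiewicz argument to guarantee a $(k{+}1)$-petal sunflower once $|C|>k^{O(d)}$, but that is precisely the nontrivial part. Philip et al.\ do not extract sunflowers; their rules use K\H{o}v\'ari--S\'os--Tur\'an counting directly to locate small ``common dominator'' sets playing the role of your core $Y$. Deferring that calculation to the citation is fine, but then your sketch is really just rephrasing their reduction, and the off-by-one above should be corrected so the rephrasing is sound.
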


\begin{proof}[Proof of \Cref{thm:DS-K-biclique-semi-ladder}]
Let $(G, S, \budget)$ be an instance of \textsc{DS-D}, where $G\in \Cc$ and $|S|=k$. 
We first compute a domination core $C\subseteq V(G)$ 
of size polynomial in $k$, which is possible by \Cref{lem:core}. 
We then compute the projection classes of all vertices towards $C$, where we classify two vertices $u,v\in V(G)$ as equivalent if and only if $N(u)\cap C=N(v)\cap C$. 
The number of
projection classes is polynomially bounded in $|C|$, as biclique-free classes have bounded VC-dimension. 
As $|C|$ is polynomially bounded we derive a polynomial bound also for the number of projection classes. 
\newcommand{\dist}{\mathrm{dist}}
For a set $M\subseteq V(G)$ and a vertex $v\in V(G)$ we define $d(v,M)=\min_{w\in M}d(v,w)$. 
For every projection class $X$ we now fix a minimal set $R_X$ of representative vertices such that for each token~$t$ on a vertex $v_t$ the set $R_X$ contains a vertex $v_{t,X}$ such that $d(v_t,v_{t,X})=d(t,X)$. 
Note that $R_X$ contains at most $k$ vertices and that such a set can be computed in polynomial time by simple breadth-first searches. 
We define $W\subseteq V(G)$ as the union of the vertices of $C$, the vertices of~$S$, and the vertices of a shortest path between $v_t$ and the vertex $v_{t,X}$ for each $v_t\in S$ and projection class $X$. We define the kernel as $(G[M], S,b)$. 

First we prove that $(G[M],S,\budget)$ is an equivalent instance. 
First assume that $(G,S,\budget)$ is a positive instance. 
Let $C_0\vdash C_1\vdash \ldots \vdash C_\ell$ for $\ell\leq C_\budget$ be a discovery sequence. 
We may assume that in each step a token moves on a shortest path to its final destination in $C_\budget$. 
As $C_\budget$ is a dominating set, it dominates in particular the core $C$, say token $t$ is moved to a vertex of projection class $X_t$. 
Then we obtain an equivalent discovery sequence where the token $t$ is moved to $v_{t,X}$ instead. 
The same sequence exists in $G[M]$ and ends in a set of size at most $k$ that dominates $C$. 
Hence, it also dominates $G[M]$, which shows that also $(G[M],S,\budget)$ is a positive instance. 
Conversely, a discovery sequence in $G[M]$ to a dominating set of $G[M]$ leads to a dominating set of $C$ in $G[M]$, which exists exactly like this in $G$. By definition of a $k$-domination core, we also discover a dominating set in $G$. 

Finally, it remains to show that $G[M]$ has size bounded by a polynomial in $k$. 
As we argued already, we have a polynomial size core $C$ and at most polynomially many projection classes. 
From each class we keep at most $k$ representative vertices. 
It remains to show that $\budget$ can be upper bounded by a polynomial in $k$. 
This is easily derived from the fact that a graph with a dominating set of size~$k$ can have diameter at most $3k+2$, as a shortest path with $3k+3$ vertices cannot be dominated by~$k$~vertices. 
Hence, every token arrives in its final position after at most $3k+2$ steps and we can assume that $b\leq 3k^2+2k$. 
\end{proof}

\begin{theorem}\label{thm:DS-D-pathwidth}
\textsc{DS-D} is \XNLP-hard with respect to parameter pathwidth.   
\end{theorem}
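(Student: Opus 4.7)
The plan is to mirror the pl-reduction from \textsc{MMO} used for \textsc{VC-D} in \Cref{thm:VC-D-pathwidth}, replacing the vertex-cover enforcement mechanism with a domination enforcement mechanism, and basing the construction on an augmented subdivision of $\Tilde{G}_H$ so that the pathwidth bound given by \Cref{cor:aug-G-H-pathwidth} applies. Given an \textsc{MMO} instance $(H,\mathcal{P}_H,\sigma,r)$, I would construct an instance $(\tilde{G}_H^{+},\mathcal{P},S,\budget)$ of \textsc{DS-D} as follows. Start from the graph built in \Cref{thm:VC-D-pathwidth} (\ie, subdivide the edges $a_e^i b_e^i$ with new middle vertices $c_e^i$, subdivide each edge $w_v x_v^i$ with a new middle vertex $c(x_v^i)$, attach the edges between $B$--$Z$ and between $\{e^u,e^v\}$--$Y$, and attach the supplier gadget $G_s$ with $s$ adjacent to every vertex of $X$), and then augment by adding a fresh dominator vertex $d$ with pendant $d'$, making $d$ adjacent to every vertex whose neighborhood we want to be able to dominate for free. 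Concretely, $d$ is made adjacent to everything in $\tilde{G}_H$ \emph{except} the vertices that will play the role of ``choice witnesses'' (the $c_e^i$, the $c(x_v^i)$, the vertices of $X$, the $z_e^{v(i)}$ and the $a_e^{\sigma(e)+1}$-type endpoints). The pendant $d'$ forces a token on $d$ throughout, so $d$ dominates all of its neighbors for free and the problem reduces to dominating the remaining ``witness'' vertices.

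The initial configuration $S$ will place tokens at exactly the same places as in the \textsc{VC-D} reduction (with an additional token on $d$ dedicated to the dominator pendant): tokens on $A, B, Y$, on $e^u$ and $e^v$ for each edge $uv=e$, on each $w_v$, on $s$, and on each $d_1^i, d_3^i$ of the donor paths of $G_s$. The budget is again $\budget=m+3rn$. The role of a slide is then to move a token onto a witness vertex that $d$ does not dominate, and each witness corresponds to exactly the same combinatorial obligation as in \Cref{thm:VC-D-pathwidth}: covering an $a_e^{\sigma(e)+1} b_e^{\sigma(e)+1}$ edge becomes dominating the endpoint $a_e^{\sigma(e)+1}$; covering a $c(x_v^i) x_v^i$ edge becomes dominating $c(x_v^i)$; covering a $c_e^i$ vertex becomes dominating it; and the supplier/donor-path mechanism once again contributes exactly the $rn-\bm{\sigma}$ ``free'' dominations required to make the counting tight.

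With the gadgets set up this way, the two directions of the correctness proof translate almost verbatim from \Cref{lem:hardness-VS-pathwidth-forward} and \Cref{lem:hardness-VC-pathwidth-backward}. In the forward direction, from an orientation $\lambda$ with outdegree weight $\le r$ at every vertex, I perform precisely the same $3\bm{\sigma}+m$ slides (moving $y_e^{v(i)}$ onto a free $c(x_v^i)$, $b_e^i$ onto $z_e^{v(i)}$, $e^u$ onto $b_e^{\sigma(e)+1}$), plus the $3(rn-\bm{\sigma})$ donor-path slides, and verify that the resulting configuration dominates every witness vertex. In the backward direction, I use the identical tightness analysis: the dominator $d$ cannot dominate the witnesses, each $G_e^{sel}$ can house at most two tokens, the donor paths can contribute at most $rn-\bm{\sigma}$ ``easy'' dominations into $X$, and so the remaining $\bm{\sigma}$ witnesses in $c(X)$ must each be dominated via distinct $Y$-tokens, which in turn forces, for every edge $uv=e$, that all $\sigma(e)$ tokens be taken from the $Y$-side corresponding to a single endpoint, yielding a valid orientation.

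The main technical obstacle is choosing the neighborhood of $d$ precisely enough: $d$ must dominate all auxiliary vertices (the $b_e^i$, the $y_e^{v(i)}$, the $w_v$, the $e^u$, the $s$, the donor vertices, etc.)\ so that no slides are wasted dominating them, but it must not dominate the witness vertices, or else the budget becomes slack and the orientation constraint at each vertex of $H$ is lost. I would verify this locally gadget-by-gadget and then invoke \Cref{cor:aug-G-H-pathwidth} to conclude $pw(\tilde{G}_H^{+})\le pw(H)+9$, together with \Cref{cor:reduction-space} to conclude that the whole transformation runs on a log-space transducer, so that the reduction is a pl-reduction and \XNLP-hardness of \textsc{MMO} under pathwidth transfers to \textsc{DS-D}.
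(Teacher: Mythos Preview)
Your approach has a genuine structural gap: with the token placement you inherit from the \textsc{VC-D} reduction, almost all of your designated ``witness'' vertices are already dominated in the initial configuration, so there is nothing left to force the orientation constraints.

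Concretely, take any $c(x_v^i)$. In the \textsc{VC-D} graph this is the subdivision vertex on the edge $w_vx_v^i$, hence adjacent to $w_v$; you place a token on $w_v$, so $c(x_v^i)$ is dominated from the start. Likewise every $x_v^i$ is adjacent to $s$, which also carries a token. Similarly each $z_e^{v(i)}$ is adjacent to both $y_e^{v(i)}$ and $b_e^i$, both of which carry tokens. Thus the entire $c(X)$/$X$/$Z$ side of the construction imposes no obligation at all, and the budget $m+3rn$ is no longer tight; any instance becomes a yes-instance after the $m$ slides that dominate the $a_e^{\sigma(e)+1}$'s. The difficulty is not only the choice of $N(d)$: domination has radius~$1$, so a token on a hub vertex like $w_v$ or $s$ dominates all of its subdivision neighbours, which is exactly what the \textsc{VC-D} gadget relies on \emph{not} happening.

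The paper's proof fixes precisely this. Tokens are \emph{not} placed on the $w_v$'s; instead a token sits on $c(x_v^{r+1})$, which dominates $w_v$ and $x_v^{r+1}$ but none of the $c(x_v^i)$ for $i\in[r]$. Several edges are subdivided further (the $a_e^ib_e^i$ twice, the $w_vy_e^{v(i)}$ once, the connections $e^v$--$Y_e^v$ become paths of length~$2$, the donor paths are subdivided twice), and the dominator $d$ is adjacent only to the small set $c(Y)$, not to ``everything''. These extra subdivisions restore the distance slack so that each undominated witness again forces a specific token movement, and the budget becomes $2m+4rn$ rather than $m+3rn$. Your high-level plan (``mirror \textsc{VC-D}, add a dominator, invoke \Cref{cor:aug-G-H-pathwidth} and \Cref{cor:reduction-space}'') is the right shape, but the gadget cannot be reused verbatim; you need the additional subdivisions and the relocated tokens to make the witnesses genuinely undominated.
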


As stated in \Cref{sec:foundational}, we present a pl-reduction from \textsc{MMO}. 
Let $(H, \mathcal{P}_H, \sigma, r)$ be an instance of \textsc{MMO} where $H$ is a bounded pathwidth graph with path decomposition $\mathcal{P}_H$, $|V(H)| = n$, $|E(H)| = m$, $\sigma: E(H) \rightarrow \mathbb{Z}_+$ such that $\sum_{e \in E(H)} \sigma(e) = \bm{\sigma}$ and $r \in \mathbb{Z}_{+}$ (integers are given in unary).
We construct an instance $(\Tilde{G}_H, \mathcal{P}_{\Tilde{G}_H}, S, \budget)$ of \textsc{DS-D} as follows (see \Cref{fig:DS-pathwidth-reduction}).
\newline
\newline
We form the graph $\Tilde{G}_H$ as outlined below:
\begin{enumerate}[itemsep=0pt, label=(\alph*)]
    \item We subdivide twice each edge $a^i_eb^i_e$ for each $i \in [\sigma(e)]$ for each edge $e \in E(H)$, and once each edge $a^{\sigma(e)+1}_eb^{\sigma(e)+1}_e$, of a subgraph $G_e$ (which is the \textsc{MMO}-edge-$e$ described in \Cref{sec:foundational}), and add it to $\Tilde{G}_H$.
    We denote the introduced vertices between $a^i_e$ and $b^i_e$ by, in order, $c^i_e$ and $c'^i_e$.
    We denote the introduced vertex from a subdivision of an edge $a^{\sigma(e)+1}_e b^{\sigma(e)+1}_e$ by~$c^{\sigma(e)+1}_e$.
    We let $C_e = \cup_{i \in [\sigma(e)]} \text{ } \text{ } c_e^i$, $C = \cup_{e \in E(H)} C_e$, $C'_e = \cup_{i \in [\sigma(e)]} \text{ } \text{ } c'^i_e$, $C' = \cup_{e \in E(H)} C'_e$, and $C^+ = \cup_{e \in E(H)} c_e^{\sigma(e)+1}$.
    \item We subdivide each edge $w_v x^{v(i)}$ for $i \in [r+1]$ for each vertex $v \in V(H)$, and each edge $w_v y_e^{v(i)}$ for $uv=e \in E(H)$ and $i \in [\sigma(e)]$, of a subgraph $G_v$ (which is the \textsc{MMO}-vertex-$v$ described in \Cref{sec:foundational}), and add it to $\Tilde{G}_H$.
    We denote the introduced vertex from a subdivision of an edge~$w_v x^{v(i)}$ by $c(x^{v(i)})$ and the introduced vertex from a subdivision of an edge~$w_v y_e^{v(i)}$ by~$c(y_e^{v(i)})$.
    We let $c(X_v) = \cup_{i \in [r]} \text{ } \text{ } c(x^{v(i)})$,
    $c(X) = \cup_{v \in V(H)} \text{ } \text{ } c(X_v)$,
    $c(X^+) = \cup_{v \in V(H)} \text{ } \text{ } c(x^{v(r+1)})$, 
    $c(Y_e^v) = \cup_{i \in [\sigma(e)]} \text{ } \text{ } c(y_e^{v(i)})$, $c(Y^v) = \cup_{e \in E(H)} \text{ } \text{ } c(Y_e^v)$, and 
    $c(Y) = \cup_{v \in V(H)} \text{ } \text{ } c(Y_e^v)$.
    \item We make each vertex $b^i_e$ adjacent to the vertices $z_e^{v(i)}$ and $z_e^{u(i)}$, for each $uv=e \in E(H)$ and $i \in [\sigma(e)]$.
    \item We connect each vertex $e^v$ and vertex in $Y_e^v$, for each edge $uv=e \in E(H)$, via paths of length $2$.
    We denote the vertex between $e^v$ and $y_e^{v(i)}$ for $i \in [\sigma(e)]$ by $c'(y_e^{v(i)})$.
    We let $c'(Y_e^v) = \cup_{i \in [\sigma(e)]} \text{ } \text{ } c'(y_e^{v(i)})$, $c'(Y^v) = \cup_{e \in E(H)} \text{ } \text{ } c'(Y_e^v)$, and $c'(Y) = \cup_{v \in V(H)} c'(Y^v)$.
    \item We subdivide the edges $d^i_1d^i_2$, and $d^i_2d^i_3$ for $i \in [rn - \bm{\sigma}]$ of the supplier gadget $G_s$ (described under the supplier gadget and the graph $\Tilde{G}_H$ heading in \Cref{sec:foundational}) twice, and denote the introduced vertices by $d^i_{1^+}$ (for the vertex adjacent to $d^i_1$), $d^i_{2^-}$ (for the vertex adjacent to $d^i_2$ and $d^i_{1^+}$), $d^i_{2^+}$ (for the other vertex adjacent to $d^i_2$), and $d^i_{3^-}$ (for the vertex adjacent to $d^i_3$). 
    We denote the subgraph resulting from subdividing the edges of $G_s$ by the \emph{subdivision of }$G_s$.
    \item We add the subdivision of $G_s$ to $\Tilde{G}_H$ and make the vertex $s$ adjacent to all vertices in $X$.
    \item We add the edge $dd'$ to $\Tilde{G}_H$ and make the dominator vertex $d$ adjacent to all vertices in $c(Y)$.
\end{enumerate}
\begin{figure}
    \centering
    \begin{tikzpicture}
    \fill[green!10] (8,1) -- (7.5,3) -- (8.5,3) -- (8,1) -- cycle;
    \fill[green!10] (8,1) -- (6,0) -- (6,3) -- (8,1) -- cycle;
    \fill[green!10] (8,1) -- (10,-0.25) -- (10,2.5) -- (8,1) -- cycle;
    \fill[green!10] (6,0) -- (6,3) -- (4.75,3) -- (4.75,0) -- cycle;
    \fill[green!10] (8,-1.5) -- (7.5,0.5) -- (8.5,0.5) -- (8,-1.5) -- cycle;
    \fill[green!10] (8,-1.5) -- (6,-1.75) -- (6,-1.25) -- (8,-1.5) -- cycle;
    \fill[green!10] (8,-1.5) -- (10,0) -- (10,-3) -- (8,-1.5) -- cycle;
    \fill[green!10] (6,-1.75) -- (6,-1.25) -- (4.75,-1.25) -- (4.75,-1.75) -- cycle;
    \fill[green!10] (8,-4.7) -- (7.5,-2.7) -- (8.5,-2.7) -- (8,-4.7) -- cycle;
    \fill[green!10] (8,-4.7) -- (6,-6.25) -- (6,-2.75) -- (8,-4.7) -- cycle;
    \fill[green!10] (8,-4.7) -- (10,-2.5) -- (10,-5.75) -- (8,-4.7) -- cycle;
    \fill[green!10] (6,-6.25) -- (6,-2.75) -- (4.75,-2.75) -- (4.75,-6.25) -- cycle;   
    \fill[red!10] (5,-10) -- (4.5,-9) -- (5.5,-9) -- (5,-10) -- cycle;
    \fill[red!10] (5,-10) -- (3,-8) -- (3,-12) -- (5,-10) -- cycle;
    \fill[red!10] (3,-8) -- (-2,-8) -- (-2,-12) -- (3,-12) -- cycle;
    \fill[blue!10] (-0.75,3.75) -- (0.75,3.75) -- (0.75,-0.25) -- (-0.75,-0.25) -- cycle; % Left vertical area
    \fill[blue!10] (0.5,0.1) -- (1.25,0.9) -- (1.25,0) -- (0.5,0) -- cycle; % top right triangle
    \fill[blue!10] (0.5,-0.1) -- (1.25,-0.9) -- (1.25,0) -- (0.5, 0) -- cycle; % bottom right triangle
    \fill[blue!10] (-0.75,-1.25) -- (0.75,-1.25) -- (0.75,-3.25) -- (-0.75,-3.25) -- cycle; % Left vertical area
    \fill[blue!10] (0.5,-2.9) -- (1.25,-2) -- (1.25,-3) -- (0.5,-3) -- cycle; % top right triangle
    \fill[blue!10] (0.5,-3.1) -- (1.25,-4) -- (1.25,-3) -- (0.5, -3) -- cycle; % bottom right triangle
    \draw[black] (8,1) -- (9.5,0.25);
    \draw[black] (8,1) -- (9.5,1);
    \draw[black] (8,1) -- (9.5,1.75);  
    \draw[black] (8,1) -- (8,2.5);
    \draw[black] (8,1) -- (6,0.25);
    \draw[black] (8,1) -- (6,2.25);
    \draw[black] (8,1) -- (6,1.25);
    \draw[black] (8,-1.5) -- (9.5,-2.5);
    \draw[black] (8,-1.5) -- (9.5,-1.5);
    \draw[black] (8,-1.5) -- (9.5,-0.5);
    \draw[black] (8,-1.5) -- (8,0);
    \draw[black] (8,-1.5) -- (6,-1.5);
    \draw[black] (8,-4.7) -- (9.5,-5.2);
    \draw[black] (8,-4.7) -- (9.5,-4.2);
    \draw[black] (8,-4.7) -- (9.5,-3.2);
    \draw[black] (8,-4.7) -- (8,-3.2);
    \draw[black] (8,-4.7)-- (6,-6);
    \draw[black] (8,-4.7) -- (6,-5.2);
    \draw[black] (8,-4.7)-- (6,-4.2);
    \draw[black] (8,-4.7) -- (6,-3.4);
    \draw[orange] (1,-2.5) .. controls (1.5,-2) .. (6,-1.5);
    \draw[orange] (1,-3.5) .. controls (2.5,-5) and  (5,-5) .. (6,-6);    
    \draw[red] (0.5,-2) .. controls (2,-0.75) .. (5,-1.5);
    \draw[red] (0.5,-2) .. controls (1.7,-4.7) and  (5,-6) .. (5,-6);  
    \draw[red] (0.5,3) -- (5,2.25);
    \draw[red] (0.5,2) -- (5,1.25);
    \draw[red] (0.5,1) -- (5,0.25);  
    \draw[blue] (5,-10) .. controls (8,-9) .. (9.5,-5.2);
    \draw[blue] (5,-10) .. controls (10,-9.25) and (11,-5) .. (9.5,-4.2);
    \draw[blue] (5,-10) .. controls (10.25,-10.25) and (11.25,-5) .. (9.5,-3.2);
    \draw[dotted,teal,line width=0.5mm] (7,1.65) -- (7.5,1.65);
    \draw[dotted,teal,line width=0.5mm] (7,-1.5) -- (7.5,-2);
    \draw[dotted,teal,line width=0.5mm] (7,0.6) -- (7.5,0.6);
    \draw[dotted,teal,line width=0.5mm] (7,1.145) -- (6.5,1);
    \draw[dotted,teal,line width=0.5mm] (7,-4.05) -- (7.5,-4.05);
    \draw[dotted,teal,line width=0.5mm] (7,-4.95) -- (6.5,-4.95);
    \draw[dotted,teal,line width=0.5mm] (7,-4.45) -- (6.5,-4.6);
    \draw[dotted,teal,line width=0.5mm] (7,-5.35) -- (7.5,-5.35);
    \node[circle, fill=black, inner sep=1pt, draw=black, fill=white, label=above right:{\scalebox{0.5}{$c'(y_{e_2}^{w(1)})$}}] at (3,-1.83) {};
    \node[circle, fill=black, inner sep=1pt, draw=black, fill=white, label=above right:{\scalebox{0.5}{$c'(y_{e_1}^{u(1)})$}}] at (3,-4.7) {};
    \node[circle, fill=black, inner sep=1pt, draw=black, fill=white, label=below:{\scalebox{0.5}{$w_v$}}] at (8,1) {};
    \node[circle, fill=black, inner sep=1pt, draw=black, fill=white, label=below:{\scalebox{0.5}{$w_w$}}] at (8,-1.5) {};
    \node[circle, fill=black, inner sep=1pt, draw=black, fill=white, label=below:{\scalebox{0.5}{$w_u$}}] at (8,-4.7) {};
    \node[circle, fill=black, inner sep=1pt, draw=black, fill=white, label=right:{\scalebox{0.5}{$x_v^1$}}] at (9.5,0.25) {};
    \node[circle, fill=black, inner sep=1pt, draw=black, fill=white, label=above:{\scalebox{0.5}{$c(x_v^1)$}}] at (9,0.5) {};
    \node[circle, fill=black, inner sep=1pt, draw=black, fill=white, label=right:{\scalebox{0.5}{$x_v^2$}}] at (9.5,1) {};
    \node[circle, fill=black, inner sep=1pt, draw=black, fill=white, label=above:{\scalebox{0.5}{$c(x_v^2)$}}] at (9,1) {};
    \node[circle, fill=black, inner sep=1pt, draw=black, fill=white, label=right:{\scalebox{0.5}{$x_v^3$}}] at (9.5,1.75) {};
    \node[circle, fill=black, inner sep=1pt, draw=black, fill=white, label=above:{\scalebox{0.5}{$c(x_v^3)$}}] at (9,1.5) {};
    \node[circle, fill=black, inner sep=1pt, draw=black, fill=white, label=right:{\scalebox{0.5}{$x_w^1$}}] at (9.5,-2.5) {};
    \node[circle, fill=black, inner sep=1pt, draw=black, fill=white, label=above:{\scalebox{0.5}{$c(x_w^1)$}}] at (9,-2.15) {};
    \node[circle, fill=black, inner sep=1pt, draw=black, fill=white, label=right:{\scalebox{0.5}{$x_w^2$}}] at (9.5,-1.5) {};
    \node[circle, fill=black, inner sep=1pt, draw=black, fill=white, label=above:{\scalebox{0.5}{$c(x_w^2)$}}] at (9,-1.5) {};
    \node[circle, fill=black, inner sep=1pt, draw=black, fill=white, label=right:{\scalebox{0.5}{$x_w^3$}}] at (9.5,-0.5) {};
    \node[circle, fill=black, inner sep=1pt, draw=black, fill=white, label=above:{\scalebox{0.5}{$c(x_w^3)$}}] at (9,-0.8) {};
    \node[circle, fill=black, inner sep=1pt, draw=black, fill=white, label=right:{\scalebox{0.5}{$x_u^1$}}] at (9.5,-5.2) {};
    \node[circle, fill=black, inner sep=1pt, draw=black, fill=white, label=above:{\scalebox{0.5}{$c(x_u^1)$}}] at (8.9,-5) {};
    \node[circle, fill=black, inner sep=1pt, draw=black, fill=white, label=right:{\scalebox{0.5}{$x_u^2$}}] at (9.5,-4.2) {};
    \node[circle, fill=black, inner sep=1pt, draw=black, fill=white, label=above:{\scalebox{0.5}{$c(x_u^2)$}}] at (8.9,-4.4) {};
    \node[circle, fill=black, inner sep=1pt, draw=black, fill=white, label=right:{\scalebox{0.5}{$x_u^3$}}] at (9.5,-3.2) {};
    \node[circle, fill=black, inner sep=1pt, draw=black, fill=white, label=above:{\scalebox{0.5}{$c(x_u^3)$}}] at (8.9,-3.8) {};
    \node[circle, fill=black, inner sep=1pt, draw=black, fill=white, label=above right:{\scalebox{0.5}{$x_v^{4}$}}] at (8,2.5) {};
    \node[circle, fill=black, inner sep=1pt, draw=black, fill=black, label=right:{\scalebox{0.5}{$c(x_v^{4})$}}] at (8,1.75) {};   
    \node[circle, fill=black, inner sep=1pt, draw=black, fill=white, label=above right:{\scalebox{0.5}{$x_w^{4}$}}] at (8,0) {};
    \node[circle, fill=black, inner sep=1pt, draw=black, fill=black, label=right:{\scalebox{0.5}{$c(x_w^{4})$}}] at (8,-0.75) {};
    \node[circle, fill=black, inner sep=1pt, draw=black, fill=white, label=above right:{\scalebox{0.5}{$x_u^{4}$}}] at (8,-3.2) {};
    \node[circle, fill=black, inner sep=1pt, draw=black, fill=black, label=right:{\scalebox{0.5}{$c(x_u^4)$}}] at (8,-3.95) {};
    \draw[black] (5,-1.5) -- (6,-1.5) node[circle, draw=black, fill=black, inner sep=1pt, label=above:{\scalebox{0.5}{$y_{e_2}^{w(1)}$}}, pos=1]{} node[circle, draw=black, fill=white, inner sep=1pt, label=above:{\scalebox{0.5}{$z_{e_2}^{w(1)}$}}, pos=0]{};
    \node[circle, fill=black, inner sep=1pt, draw=black, fill=white, label=above:{\scalebox{0.5}{$c(y_{e_2}^{w(1)})$}}] at (7,-1.5) {};
    \draw[black] (5,1.25) -- (6,1.25) node[circle, draw=black, fill=black, inner sep=1pt, label=above:{\scalebox{0.5}{$y_{e_1}^{v(2)}$}}, pos=1]{} node[circle, draw=black, fill=white, inner sep=1pt, label=above:{\scalebox{0.5}{$z_{e_1}^{v(2)}$}}, pos=0]{};
    \node[circle, fill=black, inner sep=1pt, draw=black, fill=white] () at (7,1.145) {};
    \draw[black] (5,0.25) -- (6,0.25) node[circle, draw=black, fill=black, inner sep=1pt, label=above:{\scalebox{0.5}{$y_{e_1}^{v(1)}$}}, pos=1]{} node[circle, draw=black, fill=white, inner sep=1pt, label=above:{\scalebox{0.5}{$z_{e_1}^{v(1)}$}}, pos=0]{};
    \node[circle, fill=black, inner sep=1pt, draw=black, fill=white, label=below:{\scalebox{0.5}{$c(y_{e_1}^{v(1)})$}}] at (7,0.6) {};
    \draw[black] (5,2.25) -- (6,2.25) node[circle, draw=black, fill=black, inner sep=1pt, label=above:{\scalebox{0.5}{$y_{e_1}^{v(3)}$}}, pos=1]{} node[circle, draw=black, fill=white, inner sep=1pt, label=above:{\scalebox{0.5}{$z_{e_1}^{v(3)}$}}, pos=0]{};
    \node[circle, fill=black, inner sep=1pt, draw=black, fill=white, label=above:{\scalebox{0.5}{$c(y_{e_1}^{v(3)})$}}]at (7,1.65) {};
    \draw[black] (5,-3.4) -- (6,-3.4) node[circle, draw=black, fill=black, inner sep=1pt, label=above:{\scalebox{0.5}{$y_{e_1}^{u(2)}$}}, pos=1]{} node[circle, draw=black, fill=white, inner sep=1pt, label=above:{\scalebox{0.5}{$z_{e_1}^{u(2)}$}}, pos=0]{};
    \node[circle, fill=black, inner sep=1pt, draw=black, fill=white, label=above:{\scalebox{0.5}{$c(y_{e_1}^{u(2)})$}}]at (7,-4.05) {};
    \draw[black] (5,-4.2) -- (6,-4.2) node[circle, draw=black, fill=black, inner sep=1pt, label=above:{\scalebox{0.5}{$y_{e_1}^{u(3)}$}}, pos=1]{} node[circle, draw=black, fill=white, inner sep=1pt, label=above:{\scalebox{0.5}{$z_{e_1}^{u(3)}$}}, pos=0]{};
    \node[circle, fill=black, inner sep=1pt, draw=black, fill=white]at (7,-4.45) {};
    \draw[black] (5,-5.2) -- (6,-5.2) node[circle, draw=black, fill=black, inner sep=1pt, label=above:{\scalebox{0.5}{$y_{e_1}^{u(1)}$}}, pos=1]{} node[circle, draw=black, fill=white, inner sep=1pt, label=above:{\scalebox{0.5}{$z_{e_1}^{u(1)}$}}, pos=0]{};
    \node[circle, fill=black, inner sep=1pt, draw=black, fill=white] at (7,-4.95) {};
    \draw[black] (5,-6) -- (6,-6) node[circle, draw=black, fill=black, inner sep=1pt, label=above:{\scalebox{0.5}{$y_{e_2}^{u(1)}$}}, pos=1]{} node[circle, draw=black, fill=white, inner sep=1pt, label=above:{\scalebox{0.5}{$z_{e_2}^{u(1)}$}}, pos=0]{};
    \node[circle, fill=black, inner sep=1pt, draw=black, fill=white, label=below:{\scalebox{0.5}{$c(y_{e_2}^{u(1)})$}}] at (7,-5.35) {};
    \draw[black] (-0.5,3) -- (0.5,3) node[circle, draw=black, fill=white, inner sep=1pt, label=above:{\scalebox{0.5}{$a^1_{e_1}$}}, pos=0]{} node[circle, draw=black, fill=black, inner sep=1pt, label=above:{\scalebox{0.5}{$b^1_{e_1}$}}, pos=1]{};
    \node[draw, circle, fill=black, inner sep=1pt, label=above:{\scalebox{0.5}{$c^1_{e_1}$}}] at (-0.2,3) {};
    \node[draw, circle, fill=white, inner sep=1pt, label=above:{\scalebox{0.5}{$c'^1_{e_1}$}}] at (0.2,3) {};
    \draw[black] (-0.5,2) -- (0.5,2) node[circle, draw=black, fill=white, inner sep=1pt, label=above:{\scalebox{0.5}{$a^2_{e_1}$}}, pos=0]{} node[circle, draw=black, fill=black, inner sep=1pt, label=above:{\scalebox{0.5}{$b^2_{e_1}$}}, pos=1]{};
    \node[draw, circle, fill=black, inner sep=1pt, label=above:{\scalebox{0.5}{$c^2_{e_1}$}}] at (-0.2,2) {};
    \node[draw, circle, fill=white, inner sep=1pt, label=above:{\scalebox{0.5}{$c'^2_{e_1}$}}] at (0.2,2) {};
    \draw[black] (-0.5,1) -- (0.5,1) node[circle, draw=black, fill=white, inner sep=1pt, label=above:{\scalebox{0.5}{$a^3_{e_1}$}}, pos=0]{} node[circle, draw=black, fill=black, inner sep=1pt, label=above:{\scalebox{0.5}{$b^3_{e_1}$}}, pos=1]{};
    \node[draw, circle, fill=black, inner sep=1pt, label=above:{\scalebox{0.5}{$c^3_{e_1}$}}] at (-0.2,1) {};
    \node[draw, circle, fill=white, inner sep=1pt, label=above:{\scalebox{0.5}{$c'^3_{e_1}$}}] at (0.2,1) {};
    \draw[black] (1,0.5) -- (1, -0.5);
    \draw[black] (0.5,0) -- (1,0.5) node[circle, draw=black, fill=black, inner sep=1pt, label=right:{\scalebox{0.5}{$e_1^v$}}, pos=1]{};
    \draw[black] (0.5,0) -- (1,-0.5) node[circle, draw=black, fill=black, inner sep=1pt, label=right:{\scalebox{0.5}{$e_1^u$}}, pos=1]{};
    \draw[black] (-0.5,0) -- (0.5,0) node[circle, draw=black, fill=white, inner sep=1pt, label=above:{\scalebox{0.5}{$a^4_{e_1}$}}, pos=0]{} node[circle, draw=black, fill=white, inner sep=1pt, label=above:{\scalebox{0.5}{$b^4_{e_1}$}}, pos=1]{};
    \node[draw, circle, fill=white, inner sep=1pt, label=above:{\scalebox{0.5}{$c^4_{e_1}$}}] at (0,0) {};
    \draw[black] (-0.5,-2) -- (0.5,-2) node[circle, draw=black, fill=white, inner sep=1pt, label=above:{\scalebox{0.5}{$a^1_{e_2}$}}, pos=0]{} node[circle, draw=black, fill=black, inner sep=1pt, label=above:{\scalebox{0.5}{$b^1_{e_2}$}}, pos=1]{};
    \node[draw, circle, fill=black, inner sep=1pt, label=above:{\scalebox{0.5}{$c^1_{e_2}$}}] at (-0.2,-2) {};
    \node[draw, circle, fill=white, inner sep=1pt, label=above:{\scalebox{0.5}{$c'^1_{e_2}$}}] at (0.2,-2) {};
    \draw[black] (1,-2.5) -- (1, -3.5);
    \draw[black] (0.5,-3) -- (1,-2.5) node[circle, draw=black, fill=black, inner sep=1pt, label=right:{\scalebox{0.5}{$e_2^w$}}, pos=1]{};
    \draw[black] (0.5,-3) -- (1,-3.5) node[circle, draw=black, fill=black, inner sep=1pt, label=right:{\scalebox{0.5}{$e_2^u$}}, pos=1]{};
    \draw[black] (-0.5,-3) -- (0.5,-3) node[circle, draw=black, fill=white, inner sep=1pt, label=above:{\scalebox{0.5}{$a^2_{e_2}$}}, pos=0]{} node[circle, draw=black, fill=white, inner sep=1pt, label=above:{\scalebox{0.5}{$b^2_{e_2}$}}, pos=1]{};
    \node[draw, circle, fill=white, inner sep=1pt, label=above:{\scalebox{0.5}{$c^2_{e_2}$}}] at (0,-3) {};
    \draw[black] (5,-10) -- (5,-9.5);
    \draw[black] (5,-10)-- (3,-11.5);
    \draw[black] (5,-10)-- (3,-10.75);
    \draw[black] (5,-10) -- (3,-10);
    \draw[black] (5,-10) -- (3,-9.25);
    \draw[black] (5,-10) -- (3,-8.5);
    \draw[black] (2.2,-11.5) -- (3,-11.5);
    \draw[black] (2.2,-10.75) -- (3,-10.75);
    \draw[black] (2.2,-10) -- (3,-10);
    \draw[black] (2.2,-9.25) -- (3,-9.25); 
    \draw[black] (2.2,-8.5) -- (3,-8.5); 
    \draw[black] (1.4,-11.5) -- (2.2,-11.5);
    \draw[black] (1.4,-10.75) -- (2.2,-10.75);
    \draw[black] (1.4,-10) -- (2.2,-10);
    \draw[black] (1.4,-9.25) -- (2.2,-9.25); 
    \draw[black] (1.4,-8.5) -- (2.2,-8.5);
    \draw[black] (0.6,-11.5) -- (1.4,-11.5);
    \draw[black] (0.6,-10.75) -- (1.4,-10.75);
    \draw[black] (0.6,-10) -- (1.4,-10);
    \draw[black] (0.6,-9.25) -- (1.4,-9.25); 
    \draw[black] (0.6,-8.5) -- (1.4,-8.5); 
    \draw[black] (-0.2,-11.5) -- (0.6,-11.5);
    \draw[black] (-0.2,-10.75) -- (0.6,-10.75);
    \draw[black] (-0.2,-10) -- (0.6,-10);
    \draw[black] (-0.2,-9.25) -- (0.6,-9.25); 
    \draw[black] (-0.2,-8.5) -- (0.6,-8.5); 
    \draw[black] (-1,-11.5) -- (-0.2,-11.5);
    \draw[black] (-1,-10.75) -- (-0.2,-10.75);
    \draw[black] (-1,-10) -- (-0.2,-10);
    \draw[black] (-1,-9.25) -- (-0.2,-9.25); 
    \draw[black] (-1,-8.5) -- (-0.2,-8.5);
    \draw[black] (-1.8,-11.5) -- (-1,-11.5);
    \draw[black] (-1.8,-10.75) -- (-1,-10.75);
    \draw[black] (-1.8,-10) -- (-1,-10);
    \draw[black] (-1.8,-9.25) -- (-1,-9.25); 
    \draw[black] (-1.8,-8.5) -- (-1,-8.5);
    \node[circle, fill=black, inner sep=1pt, draw=black, fill=black, label=below right:{\scalebox{0.5}{$s$}}] at (5,-10) {};
    \node[circle, fill=black, inner sep=1pt, draw=black, fill=white, label=above right:{\scalebox{0.5}{$d_1^6$}}] at (5,-9.5) {}; 
    \node[circle, draw=black, fill=white, inner sep=1pt, label=above:{\scalebox{0.5}{$d^1_{1^+}$}}] at (2.2,-11.5) {};
    \node[circle, draw=black, fill=black, inner sep=1pt, label=above:{\scalebox{0.5}{$d^1_1$}}] at (3,-11.5) {};
    \node[circle, draw=black, fill=white, inner sep=1pt, label=above:{\scalebox{0.5}{$d^2_{1^+}$}}] at (2.2,-10.75) {};
    \node[circle, draw=black, fill=black, inner sep=1pt, label=above:{\scalebox{0.5}{$d^2_1$}}] at (3,-10.75) {};
    \node[circle, draw=black, fill=white, inner sep=1pt, label=above:{\scalebox{0.5}{$d^3_{1^+}$}}] at (2.2,-10) {};
    \node[circle, draw=black, fill=black, inner sep=1pt, label=above:{\scalebox{0.5}{$d^3_1$}}] at (3,-10) {};
    \node[circle, draw=black, fill=white, inner sep=1pt, label=above:{\scalebox{0.5}{$d^4_{1^+}$}}] at (2.2,-9.25) {};
    \node[circle, draw=black, fill=black, inner sep=1pt, label=above:{\scalebox{0.5}{$d^4_1$}}] at (3,-9.25) {};
    \node[circle, draw=black, fill=white, inner sep=1pt, label=above:{\scalebox{0.5}{$d^5_{1^+}$}}] at (2.2,-8.5) {};
    \node[circle, draw=black, fill=black, inner sep=1pt, label=above:{\scalebox{0.5}{$d^5_1$}}] at (3,-8.5) {};
    \node[circle, draw=black, fill=white, inner sep=1pt, label=above:{\scalebox{0.5}{$d^1_{2^-}$}}] at (1.4,-11.5) {};
    \node[circle, draw=black, fill=white, inner sep=1pt, label=above:{\scalebox{0.5}{$d^2_{2^-}$}}] at (1.4,-10.75) {};
    \node[circle, draw=black, fill=white, inner sep=1pt, label=above:{\scalebox{0.5}{$d^3_{2^-}$}}] at (1.4,-10) {};
    \node[circle, draw=black, fill=white, inner sep=1pt, label=above:{\scalebox{0.5}{$d^4_{2^-}$}}] at (1.4,-9.25) {};
    \node[circle, draw=black, fill=white, inner sep=1pt, label=above:{\scalebox{0.5}{$d^5_{2^-}$}}] at (1.4,-8.5) {};
    \node[circle, draw=black, fill=black, inner sep=1pt, label=above:{\scalebox{0.5}{$d^1_2$}}] at (0.6,-11.5) {};
    \node[circle, draw=black, fill=black, inner sep=1pt, label=above:{\scalebox{0.5}{$d^2_2$}}] at (0.6,-10.75) {};
    \node[circle, draw=black, fill=black, inner sep=1pt, label=above:{\scalebox{0.5}{$d^3_2$}}] at (0.6,-10) {};
    \node[circle, draw=black, fill=black, inner sep=1pt, label=above:{\scalebox{0.5}{$d^4_2$}}] at (0.6,-9.25) {};
    \node[circle, draw=black, fill=black, inner sep=1pt, label=above:{\scalebox{0.5}{$d^5_2$}}] at (0.6,-8.5) {};
    \node[circle, draw=black, fill=white, inner sep=1pt, label=above:{\scalebox{0.5}{$d^1_{2^+}$}}] at (-0.2,-11.5) {};
    \node[circle, draw=black, fill=white, inner sep=1pt, label=above:{\scalebox{0.5}{$d^2_{2^+}$}}] at (-0.2,-10.75) {};
    \node[circle, draw=black, fill=white, inner sep=1pt, label=above:{\scalebox{0.5}{$d^3_{2^+}$}}] at (-0.2,-10) {};
    \node[circle, draw=black, fill=white, inner sep=1pt, label=above:{\scalebox{0.5}{$d^4_{2^+}$}}] at (-0.2,-9.25) {};
    \node[circle, draw=black, fill=white, inner sep=1pt, label=above:{\scalebox{0.5}{$d^5_{2^+}$}}] at (-0.2,-8.5) {};
    \node[circle, draw=black, fill=white, inner sep=1pt, label=above:{\scalebox{0.5}{$d^1_{3^-}$}}] at (-1,-11.5) {};
    \node[circle, draw=black, fill=white, inner sep=1pt, label=above:{\scalebox{0.5}{$d^2_{3^-}$}}] at (-1,-10.75) {};
    \node[circle, draw=black, fill=white, inner sep=1pt, label=above:{\scalebox{0.5}{$d^3_{3^-}$}}] at (-1,-10) {};
    \node[circle, draw=black, fill=white, inner sep=1pt, label=above:{\scalebox{0.5}{$d^4_{3^-}$}}] at (-1,-9.25) {};
    \node[circle, draw=black, fill=white, inner sep=1pt, label=above:{\scalebox{0.5}{$d^5_{3^-}$}}] at (-1,-8.5) {};
    \node[circle, draw=black, fill=black, inner sep=1pt, label=above:{\scalebox{0.5}{$d^1_{3}$}}] at (-1.8,-11.5) {};
    \node[circle, draw=black, fill=black, inner sep=1pt, label=above:{\scalebox{0.5}{$d^2_{3}$}}] at (-1.8,-10.75) {};
    \node[circle, draw=black, fill=black, inner sep=1pt, label=above:{\scalebox{0.5}{$d^3_3$}}] at (-1.8,-10) {};
    \node[circle, draw=black, fill=black, inner sep=1pt, label=above:{\scalebox{0.5}{$d^4_3$}}] at (-1.8,-9.25) {};
    \node[circle, draw=black, fill=black, inner sep=1pt, label=above:{\scalebox{0.5}{$d^5_3$}}] at (-1.8,-8.5) {};
    
    \end{tikzpicture}
    \caption{\footnotesize Parts of the graph $\Tilde{G}_H$ constructed by the reduction of Theorem~\ref{thm:DS-D-pathwidth} given an instance $(H, \mathcal{P}_H, \sigma, r)$, where $H$ has three vertices $u, v$, and $w$, and two edges $e_1=uv$ and $e_2=uw$, and $r = 3$. Additionally, $\sigma(e_1) = 3$ and $\sigma(e_2) = 1$. For clarity, the edges between the vertices in $B_{e_1}$ and $Z_{e_1}^u$ are missing. The same applies for the paths between $e_1^v$ and the vertices of $Y_{e_1}^v$, the paths between $e_1^u$ and the vertices of $Y_{e_1}^u$ as well as some of the edges between~$s$ and the vertices in $X$. Dotted lines incident to the vertices in $c(Y)$ represent edges to the vertex $d$ which is not shown in the figure. Red, yellow, and blue edges are used to highlight the different types of edges used to connect the subgraphs $G_{e_1}$, $G_{e_2}$, $G_u$, $G_v$, $G_w$ and $G_s$ of $\Tilde{G}_H$, vertices in black are in S and those in white are not.}
    \label{fig:DS-pathwidth-reduction}
\end{figure}
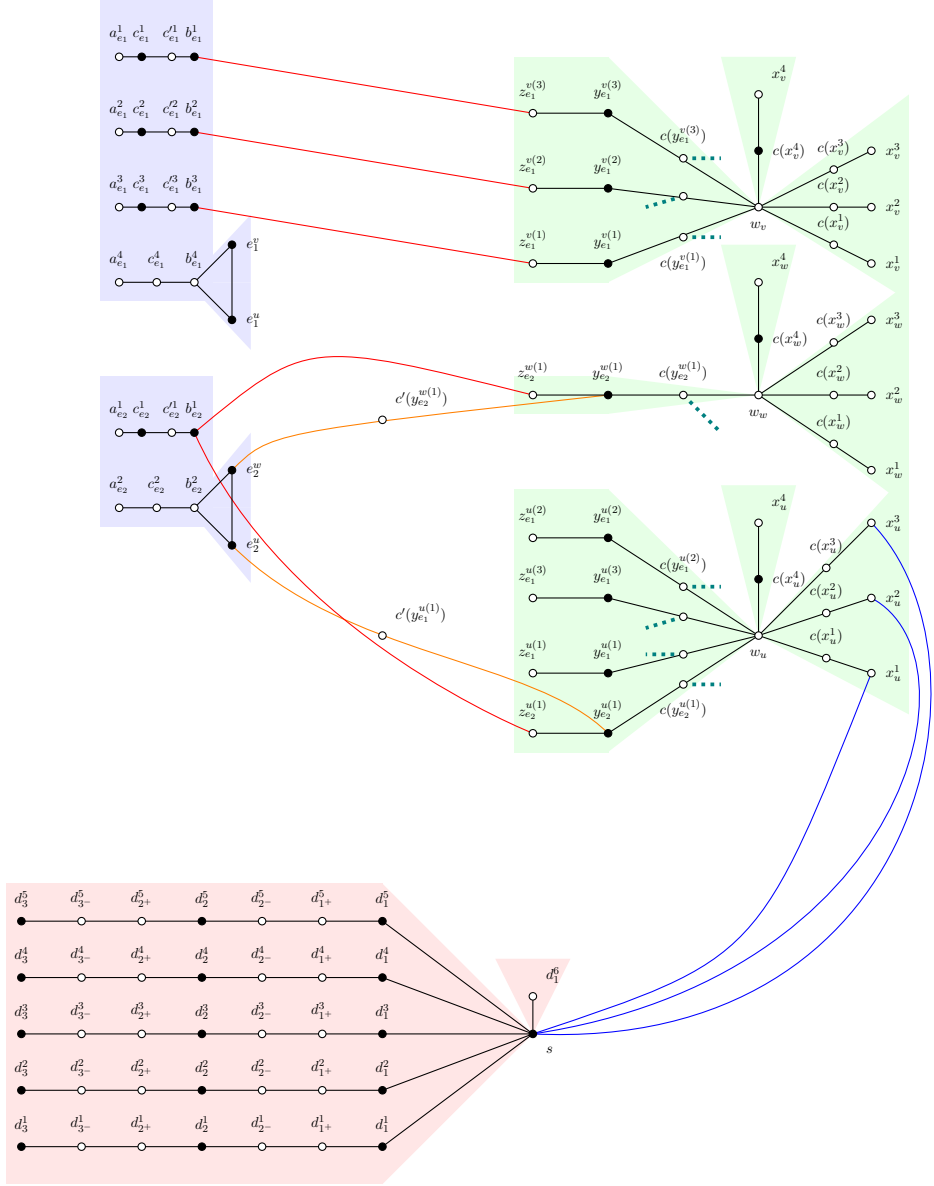

By \Cref{cor:aug-G-H-pathwidth}, $\Tilde{G}_H$ has bounded pathwidth (as an augmented subdivision of the original graph~$\Tilde{G}_H$ constructed in \Cref{sec:foundational}).
We set $S = C \text{ } \cup \text{ } B \text{ } \cup \text{ } Y \text{ } \cup \text{ } c(X^+)  \text{ } \cup \text{ } \bigcup_{uv=e \in E(H)} (e^u \cup e^v) \text{ } \cup $ $\bigcup_{i \in [rn -  \bm{\sigma}]} (d_1^i \cup d_2^i \cup d_3^i) \text{ } \cup \text{ } s  \text{ } \cup \text{ } d$ and $\budget = 2m + 4rn$.
Given that all integers are given in unary, the construction of the graph $\Tilde{G}_H$, or its path decomposition (as described in the discussion for \Cref{cor:aug-G-H-pathwidth}), and as a consequence the reduction, take time polynomial in the size of the input instance. Additionally, by \Cref{cor:reduction-space}, this reduction is a pl-reduction. 
We claim that $(H, \mathcal{P}_H, w, r)$ is a yes-instance of \textsc{MMO} if and only if $(\Tilde{G}_H, \mathcal{P}_{\Tilde{G}_H}, S, \budget)$ is a yes-instance of \textsc{DS-D}.

\begin{lemma}\label{lem:hardness-DS-pathwidth-forward}
If $(H, \mathcal{P}_H, \sigma, r)$ is a yes-instance of \textsc{MMO}, then $(\Tilde{G}_H, \mathcal{P}_{\Tilde{G}_H}, S, \budget)$ is a yes-instance of \textsc{DS-D}.
\end{lemma}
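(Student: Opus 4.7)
The plan is to construct an explicit discovery sequence of length exactly $\budget = 2m + 4rn$, reusing and extending the token-movement strategies from Lemmas~\ref{lem:hardness-VS-pathwidth-forward} and \ref{lem:hardness-IS-pathwidth-forward} to handle the additional subdivisions introduced in steps~(a), (b), (d), (e) of the construction and the supplier/dominator gadgets. I start by fixing a feasible orientation $\lambda$ of $H$ with out-weight at most $r$ at every vertex, supplied by the MMO solution.

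In the first stage, for each edge $uv = e \in E(H)$ with $\lambda(e) = (v,u)$, I perform $2 + 4\sigma(e)$ slides: first slide $e^u \to b_e^{\sigma(e)+1} \to c_e^{\sigma(e)+1}$, placing a token on $c_e^{\sigma(e)+1}$ that dominates the three vertices $a_e^{\sigma(e)+1}, c_e^{\sigma(e)+1}, b_e^{\sigma(e)+1}$ (the vacated $e^u$ is still covered by the stationary $e^v$); then, for each $i \in [\sigma(e)]$, pair a slide $b_e^i \to z_e^{v(i)}$ with a three-slide move $y_e^{v(i)} \to c(y_e^{v(i)}) \to w_v \to c(x^{v(i')})$, where $i' \in [r]$ is chosen so that $c(x^{v(i')})$ is still token-free. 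Since $\sum_{e : \lambda(e) = (v,\cdot)} \sigma(e) \le r$, such a fresh index $i'$ always exists for each successive $v$-move. This stage consumes $2m + 4\bm{\sigma}$ slides and populates $\bm{\sigma}$ of the $rn$ vertices of $c(X)$ with tokens.

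In the second stage I dominate the remaining $rn - \bm{\sigma}$ vertices of $c(X)$ via tokens on their $X$-neighbours, using the supplier and the $rn - \bm{\sigma}$ donor paths. First slide $s$ to some $x_0 \in X$ whose subdivision partner in $c(X)$ is still undominated; then, for $j = 1, \dots, rn - \bm{\sigma} - 1$, slide $d_1^j \to s$ followed by $s \to x_j$ for a fresh such $x_j$; finally slide $d_1^{rn-\bm{\sigma}} \to s$ to refill $s$, which is needed because the pendant $d_1^{rn - \bm{\sigma}+1}$ has $s$ as its only neighbour. In parallel, for each $j \in [rn-\bm{\sigma}]$, slide $d_2^j \to d_{2^-}^j$ and $d_3^j \to d_{3^-}^j$ to repair the donor path after the extraction of $d_1^j$, so that $d_{1^+}^j$ and $d_{2^+}^j$ remain dominated. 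This stage uses $2(rn - \bm{\sigma}) + 2(rn-\bm{\sigma}) = 4(rn - \bm{\sigma})$ slides; adding the previous stage gives $2m + 4\bm{\sigma} + 4(rn - \bm{\sigma}) = 2m + 4rn = \budget$.

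What remains is a largely mechanical verification that the final configuration dominates $\Tilde{G}_H$: inside each MMO-edge gadget the unmoved $C$-tokens dominate $A \cup C \cup C'$, the new $c_e^{\sigma(e)+1}$-token handles the triple at index $\sigma(e)+1$, and the moved $z$-tokens plus the untouched $y_e^{u(i)}$ dominate $B$ and $Z$; inside each MMO-vertex gadget the stationary $c(X^+)$-tokens dominate $w_v$ and $x^{v(r+1)}$, the $y$-move targets together with the supplier/donor $X$-tokens dominate all of $c(X_v)$, and $d$ dominates $c(Y)$; the supplier gadget is dominated by $s$ (refilled at the end), the pairs $d_{2^-}^j, d_{3^-}^j$, and the tokens newly placed in $X$. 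The subtlest point, and the one I expect to be most delicate, is the forced pairing of each $y_e^{v(i)}$-move with the slide $b_e^i \to z_e^{v(i)}$: without this, the emptied $y_e^{v(i)}$ would be left undominated in the final configuration. A similar forced pairing is the last donor-path slide $d_1^{rn-\bm{\sigma}} \to s$, which must refill $s$ rather than feed another $X$-token in order to keep $d_1^{rn-\bm{\sigma}+1}$ dominated.
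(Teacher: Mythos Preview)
Your proof is correct and follows the same two-stage strategy as the paper: the edge-gadget moves and the supplier/donor moves are identical, and the slide count matches exactly. You are in fact more careful than the paper about the ordering in stage~2---the paper writes only ``moving the token on $d_1^i$ to a token-free vertex in $X$'' and leaves implicit that $s$ must first be vacated and then refilled at the end, which you spell out via the explicit ferry through~$s$; this is a clarification rather than a different approach.
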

\begin{proof}
Let $\lambda: E(H) \rightarrow V(H) \times V(H)$ be an orientation of the graph $H$ such that for each $v \in V(H)$, the total weight of the edges directed out of $v$ is at most $r$.
In $\Tilde{G}_H$, the vertices in $c(X)$, $A^+$, and~$C^+$ are not dominated.
To fix that, for each edge $uv=e \in E(H)$ such that $\lambda(e) = (v, u)$:
\begin{itemize}[itemsep=0pt]
    \item we move, for each $i \in [\sigma(e)]$, the token on $y_e^{v(i)}$ to any free vertex of $c(X_v)$ and the token on $b_e^{i}$ to $z_e^{v(i)}$ (this consume $4\sigma(e)$ slides),
    \item we slide the token on $e^u$ to $c_e^{\sigma(e)+1}$, hence dominating $a_e^{\sigma(e)+1}$ and $c_e^{\sigma(e)+1}$ (this consumes $2$ slides).
\end{itemize}

This constitutes $4\bm{\sigma} + 2m$ slides.
We dominate the $rn - \bm{\sigma}$ remaining non-dominated vertices in~$c(X)$, using $4$ slides per $D^i$ path for $i \in [rn - \bm{\sigma}]$ (by sliding the token on $d^i_3$ to $d^i_{3^-}$, the token on~$d^i_2$ to~$d^i_{2^-}$ and moving the token on $d^i_1$ to a token-free vertex in $X$ that neighbors a non-dominated vertex in $c(X)$).
\end{proof}

\begin{lemma}\label{lem:hardness-DS-pathwidth-backward}
If $(\Tilde{G}_H, \mathcal{P}_{\Tilde{G}_H}, S, \budget)$ is a yes-instance of \textsc{DS-D}, then $(H, \mathcal{P}_H, \sigma, r)$ is a yes-instance of \textsc{MMO}.
\end{lemma}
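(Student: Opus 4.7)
The proof mirrors that of \Cref{lem:hardness-VC-pathwidth-backward}: the plan is to establish that the budget $\budget = 2m + 4rn$ is tight under a sequence of forced moves, and then read off a feasible orientation of $H$ from the resulting canonical discovery sequence.

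First, I would identify the two families of leaves in $\Tilde{G}_H$ that must be dominated: the vertices $a_e^{\sigma(e)+1}$ for $e \in E(H)$, whose unique neighbor is $c_e^{\sigma(e)+1}$, and the vertices $x^{v(i)}$ for $v \in V(H)$ and $i \in [r+1]$, whose unique neighbor is $c(x^{v(i)})$. The initial tokens on $c(X^+)$ already dominate $x^{v(r+1)}$, and in a minimum-length solution they must stay put, since moving any of them to $w_v$ would strand $x^{v(r+1)}$ and force additional slides to restore its domination. For each edge $e$, dominating $a_e^{\sigma(e)+1}$ requires at least two slides, and the cheapest realization is to push a token from $e^u$ or $e^v$ via $b_e^{\sigma(e)+1}$ onto $c_e^{\sigma(e)+1}$, contributing at least $2m$ slides overall.

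Second, I would bound the cost of dominating the remaining $rn$ leaves $x^{v(i)}$ with $i \in [r]$, each of which needs a token on either $x^{v(i)}$ or $c(x^{v(i)})$. The only feasible sources are (a) a vertex $y_e^{v(j)} \in Y^v$, which traverses the path $y_e^{v(j)} \to c(y_e^{v(j)}) \to w_v \to c(x^{v(i)})$ in three slides and additionally triggers one slide to re-dominate $y_e^{v(j)}$ (cheapest: $b_e^j \to z_e^{v(j)}$), or (b) a donor-path endpoint $d_1^i$, which reaches $X$ via $s$ in two slides and additionally triggers $d_2^i \to d_{2^-}^i$ and $d_3^i \to d_{3^-}^i$ to restore domination of $d_{1^+}^i$ and $d_{2^+}^i$. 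Either route costs at least four slides per leaf handled, yielding a lower bound of $4rn$; together with the $2m$ from the previous step, this exhausts $\budget$ exactly.

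Third, because the bound is tight, the discovery sequence must behave canonically: every donor path is activated and contributes exactly one token to $X$, exactly $\bm{\sigma}$ tokens migrate from $Y$ onto $c(X)$, every auxiliary $b$-slide is used, and no extraneous movement occurs. The orientation extraction then relies on the following asymmetry. For each edge $e = uv$, exactly one of $e^u, e^v$ is displaced to $c_e^{\sigma(e)+1}$; say $e^u$. Then no $y_e^{u(i)}$ can be relocated in a tight solution, because the connecting vertex $c'(y_e^{u(i)})$ would become undominated (both its neighbors $e^u$ and $y_e^{u(i)}$ would be empty), and repairing this would require at least one additional slide that the budget does not afford. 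Consequently, all $\sigma(e)$ of the $y$-tokens associated with $e$ originate from $Y_e^v$, which we record by setting $\lambda(e) = (v, u)$.

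Finally, since the $\sigma(e)$ tokens from $Y_e^v$ land on distinct vertices of $c(X_v)$, and $|c(X_v)| = r$, the inequality $\sum_{e : \lambda(e) = (v, \cdot)} \sigma(e) \le r$ holds for every $v \in V(H)$, so $\lambda$ is a feasible orientation for the \textsc{MMO} instance. The main obstacle lies in step three: one must rigorously exclude clever minimum-length solutions that mix sides within a single edge by paying for alternative repairs at $c'(y_e^{u(i)})$ or by rerouting through $e^v$ in unconventional ways. The tight budget, combined with the fact that each $b_e^i$ can serve at most one of $z_e^{u(i)}, z_e^{v(i)}$, and the uniqueness of the tokens on $e^u, e^v$, are the key ingredients that rule these alternatives out.
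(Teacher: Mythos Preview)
Your outline follows the paper's closely, but there is a factual slip that cascades into a gap in the lower bound. The vertices $x^{v(i)}$ with $i\in[r]$ are \emph{not} leaves: step~(f) of the construction makes the supplier vertex $s$ adjacent to all of $X$, so each such $x^{v(i)}$ has neighbours $c(x^{v(i)})$ and $s$, and since $s\in S$ it is already dominated in the initial configuration (only $x^{v(r+1)}$ is a true pendant). The vertices that actually lack domination are the $rn$ vertices $c(x^{v(i)})$ for $i\in[r]$, whose neighbours are $x^{v(i)}$ and $w_v$; this is the set the paper analyses.

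Once one works with the correct set, the ``four slides per leaf handled'' charge does not stand on its own: the $r$ vertices $c(x^{v(1)}),\dots,c(x^{v(r)})$ share the common neighbour $w_v$, so a single token left at $w_v$---reached after only the first two slides of the very route $y_e^{v(j)}\to c(y_e^{v(j)})\to w_v\to c(x^{v(i)})$ you describe---would dominate all $r$ of them simultaneously, and a naive per-vertex summation cannot then yield $4rn$. What actually blocks this shortcut are the constraints you defer to step three: parking a token at $w_v$ still leaves $y_e^{v(j)}$ undominated (forcing the $b\to z$ slide you note), and the vertex $c'(y_e^{v(j)})$ becomes stranded whenever the token on $e^v$ has already departed for $c_e^{\sigma(e)+1}$. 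These couplings, together with the budget, are what force one distinct $y$-token per target in $c(X)$; they must be invoked inside the lower-bound count, not postponed to the orientation-extraction step. The paper's proof is itself terse here, but it organises the argument around ``one compensating slide per displaced $y$-token'' and the $c'(y)$ constraint rather than around independent leaf charges.
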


\begin{proof}
First, note that for a vertex $a_e^{\sigma(e)+1}$, where $uv=e \in E(H)$ to be dominated with a minimal number of slides, the token on the vertex $e^u$ or the token on the vertex $e^v$ must move to $c_e^{\sigma(e)+1}$ (note that any other token on the vertices of the graph must pass through either $e^u$ or $e^v$ to get to~$c_e^{\sigma(e)+1}$, thus we can safely assume that the token already on either of $e^u$ or $e^v$ is the token that slides to $c_e^{\sigma(e)+1}$).
This consumes at least $2m$ slides, leaving $4rn$ slides.

No dominating set formed with a minimal number of slides would need to make the token on a vertex in $c(X^+)$ or the tokens on either of the vertices $s$ or $d$ slide (as this token must always be replaced by another to dominate the vertices in $X^+$, or the vertex $d_1^{rn-\sigma+1}$, or the vertex $d'$, respectively, with a minimal number of slides, thus we can always assume that the token has not been moved).
Thus, a pair of vertices in $c(X_v)$ and $X_v$ for a vertex $v \in V(H)$ can be dominated by either moving the token on a vertex $d_1^i$ for an integer $i \in [rn - \bm{\sigma}]$ towards the vertex in $X^v$, or moving a token from a vertex $y_e^{v(i_1)}$ for an edge $uv=e \in E(H)$ and an integer $i_1 \in [\sigma(e)]$, towards the non-dominated vertex in $c(X_v)$.
If the token on $d_1^i$ moves towards the vertex in $X_v$, the token on~$d_2^i$ must slide to $d_{2^-}^i$ and, the token on~$d_3^i$ must slide to $d_{3^-}^i$, so that $d^i_{1^+}$ is dominated. 
If a token on~$y_e^{v(i_1)}$ moves towards the vertex in $c(X_v)$, it must be the case that another token has moved to either the vertex $z_e^{v(i_1)}$ or, the vertex $c(y_e^{v(i_1)})$ or, the vertex $c'(y_e^{v(i_1)})$ or, to $y_e^{v(i_1)}$ itself (to dominate~$y_e^{v(i_1)}$).
This however requires at least one slide per such a token (as no vertex that dominates more than one vertex in $Y$ exists).

Thus, if a vertex in $c(X)$ is dominated by moving a token from one vertex~$d_1^i$ for an integer $i \in [rn - \bm{\sigma}]$ towards the vertex in $X$, it does not consume more slides than moving a token from a vertex in~$Y$ towards the vertex in $c(X)$.
Given that at most $rn - \bm{\sigma}$ vertices can be dominated using tokens from the donor paths (as $rn - \bm{\sigma} +1$ tokens are needed to dominate the vertices in $G_s$), each of the at least $\bm{\sigma}$ remaining vertices in $c(X)$ must be dominated by moving a token from a vertex in~$Y$ towards a vertex in $c(X)$.
Additionally, each of the remaining vertices in $c(X)$ will require at least one additional slide (besides the two slides needed to move a token from $Y$) and thus, tokens on distinct vertices in~$Y$ must be used to dominate the vertices in $c(X)$, as the remaining at most $\bm{\sigma}$ slides do not allow to get any token not initially on a vertex in $Y$ to a vertex in $Y$.
If the token on the vertex $e^v$, slides to $c'(y_e^{v(i_1)})$, it will require at least one more slide as $e^u$ will not be dominated.
Thus, the token on the vertex $b_e^{i_1}$ slides to $z_e^{v(i_1)}$ when the token on $y_e^{v(i_1)}$ moves towards a vertex in~$c(X^v)$.

This totals $\budget$ slides.
For each vertex that is token-free in $Y$ after the $\budget$ slides are consumed, the adjacent vertices in $c'(Y)$ must be adjacent to a vertex of the form $e_2^{u_2}$ with a token, for an edge $u_2v_2 = e_2 \in E(H)$ (so that they are dominated). 
This implies that for each edge $u_2v_2= e_2 \in E(H)$, at most $\sigma(e_2)$ tokens can move to $c(X)$ from tokens on the vertices of the sets $Y^{v_2}_{e_2}$ and $Y^{u_2}_{e_2}$, and from only one of those sets, as only one of $e_2^{u_2}$ and $e_2^{v_2}$ has a token. 
To dominate the $\bm{\sigma}$ remaining non-dominated vertices in $c(X)$, each edge $u_2v_2=e_2 \in E(H)$ must allow $\sigma(e_2)$ tokens to move from either vertices in $Y^{v_2}_{e_2}$ and $Y^{u_2}_{e_2}$ and from at most one.
This gives a feasible orientation for the instance $(H, \mathcal{P}_H, \sigma, r)$ as any of $c(X_u)$ or $c(X_v)$ can receive at most $r$ tokens.
\end{proof}
The proofs of Lemmas~\ref{lem:hardness-DS-pathwidth-forward} and \ref{lem:hardness-DS-pathwidth-backward} complete the proof of Theorem~\ref{thm:DS-D-pathwidth}.

\begin{theorem}\label{thm:cross_composition-DS-bpw}
There exists an or-cross-composition from \textsc{MMO} into \textsc{DS-D} on bounded pathwidth graphs and where the parameter is $\budget$. Consequently, \textsc{DS-D} does not admit a polynomial kernel with respect to $\budget + pw$, where $pw$ denotes the pathwidth of the input graphs, unless $\NP \subseteq \cp$. 
\end{theorem}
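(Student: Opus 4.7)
The plan is to adapt the or-cross-composition template of \Cref{thm:cross_composition-VC-bpw,thm:cross_composition-IS-bpw} to the dominating-set setting, re-using the pathwidth-preserving gadget machinery of \Cref{sec:foundational} and the reduction of \Cref{thm:DS-D-pathwidth}. Given $t$ equivalent \textsc{MMO} instances $(H_j,\Pmc_{H_j},\sigma_j,r_j)$ with the same parameters $n,m,\bm{\sigma},r$, I will build the graph $\hat G_t$ already prepared by \Cref{cor:cor-bounded-pathwidth-G'-t}: for each $j\in[t]$ take a copy of the augmented subdivision $\Tilde G_{H_j}$ constructed in \Cref{thm:DS-D-pathwidth} (with its $C_j$, $c(X_j)$, $c(Y_j)$, $c'(Y_j)$, subdivided supplier gadget, and dominator edge $dd'$), and attach a single \textsc{MMO}-instance-selector from \Cref{sec:foundational}. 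All $t$ copies will share one common dominator vertex $d$ (made adjacent to every vertex of $\bigcup_j c(Y_j)$) and one common supplier vertex $s$ (made adjacent to every vertex of $\bigcup_j X_j$). By \Cref{cor:cor-bounded-pathwidth-G'-t,cor:composition-space} both $\hat G_t$ and a path decomposition of width $\max_j pw(H_j)+\Oof(1)$ can be produced by a log-space transducer in polynomial time, so the composition is a pl-reduction and meets \Cref{def:or-cross-composition}.

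Next I would specify how the selector attaches to each copy and fix the initial configuration $S$ and budget $\budget$. Mimicking the VC-D/IS-D compositions, I would attach extra pendant vertices to the weights-hub $h$ and the orientations-quay $q$ (exactly $2m+5\bm{\sigma}+2$ of them, say) and make $h$ adjacent to all vertices of $\bigcup_j c(X_j)$ and $q$ adjacent to all vertices of $\bigcup_j C^+_j$; for each $j$ the vertex $\textsc{\footnotesize Select}_j$ is joined to the vertices that carry tokens in the $j$-th copy (so that sliding the single $\textsc{\footnotesize Unselect}_j$ token onto $\textsc{\footnotesize Select}_j$ both dominates a whole sub-block and costs exactly one move). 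The initial configuration $S$ is taken to be the natural union of the token sets from each $\Tilde G_{H_j}$ (that is, $C_j\cup B_j\cup Y_j\cup c(X_j^+)\cup\bigcup_{e\in E(H_j)}\{e^u,e^v\}$ and the supplier-path tokens) together with $\{h,q,d\}\cup\bigcup_j\textsc{\footnotesize Unselect}_j$. The budget is set to $\budget=2m+4rn+(\text{small constant})$, i.e., exactly the per-instance budget from \Cref{thm:DS-D-pathwidth} plus the single slide needed to flip one selector and the slides needed to dominate the $f^ig^i$ and $o^ip^i$ pairs via tokens pushed from a single copy. Since $\budget$ depends only on $n,m,r,\bm{\sigma}$ and $\max_j pw(H_j)$, not on $t$, requirement~2 of \Cref{def:or-cross-composition} is automatic.

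For correctness I would then prove the two directions in the familiar style. For the forward direction, pick a yes-instance $H_{\mathfrak j}$ with a valid orientation $\lambda$; slide $\textsc{\footnotesize Unselect}_{\mathfrak j}\vdash\textsc{\footnotesize Select}_{\mathfrak j}$, use the routine from \Cref{lem:hardness-DS-pathwidth-forward} inside the $\mathfrak j$-th copy to dominate $c(X_{\mathfrak j})$, $A_{\mathfrak j}^+$, and $C_{\mathfrak j}^+$, and push the freed tokens from $X_{\mathfrak j}$ and the $e^u/e^v$-vertices of the $\mathfrak j$-th copy along the hub/quay edges to cover the pairs $f^ig^i$ and $o^ip^i$. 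The non-selected copies contribute no extra cost because their stationary tokens (together with $h,q,d$, the $\textsc{\footnotesize Unselect}$-tokens, and the supplier) already dominate every vertex in the untouched copies. For the reverse direction, the standard tight-budget argument forces: (i) no token on $s,d,h,q$, on any $w_v$, any vertex of $c(X^+)$ or any $C_j$ ever moves (each is the unique cheap dominator of a pendant/path it carries); (ii) covering the $\bm{\sigma}+m$ selector pairs $f^ig^i,o^ip^i$ costs at least $2(\bm{\sigma}+m)$ slides and can only be achieved by pushing tokens out of one single copy, because crossing copies would require flipping at least two selectors; (iii) the flip of $\textsc{\footnotesize Unselect}_{\mathfrak j}\vdash\textsc{\footnotesize Select}_{\mathfrak j}$ consumes the one remaining unit of slack, so the chosen copy cannot receive help from any other copy; (iv) the residual budget inside that copy is exactly $2m+4rn$, forcing the configuration of \Cref{lem:hardness-DS-pathwidth-backward} and hence a valid orientation of $H_{\mathfrak j}$.

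The main obstacle will be step (ii)--(iii) of the backward direction: the DS-D gadget has many more "cheap" dominators than the VC-D/IS-D gadgets (every subdivision vertex can dominate its two neighbors), so I must carefully count how many slides are actually spent inside the supplier paths, in each $G_e^{sel}$, and on the $c(Y)$--$d$ edges, and then verify that pushing any token from copy $j$ into copy $j'\neq \mathfrak j$ (or flipping a second selector) strictly exceeds the budget. I expect this to work by the same pigeonhole analysis as in \Cref{thm:cross_composition-VC-bpw}, using the fact that the single dominator $d$ plus the stationary $c(X^+)$-tokens of the inactive copies already dominate everything in those copies, so no savings can come from moving tokens out of an inactive copy. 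Once this counting goes through, \Cref{thm:no-poly-kernel-theorem} together with the \NP-hardness of \textsc{MMO} yields the claimed kernel lower bound.
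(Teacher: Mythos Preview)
Your high-level template (one $G_{H_j}$ per instance plus a single \textsc{MMO}-instance-selector, bounded pathwidth via \Cref{cor:cor-bounded-pathwidth-G'-t}) is right, but the concrete construction you sketch does not work, and it differs substantially from what the paper actually does.

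\textbf{The forward direction fails.} You claim that in the non-selected copies ``their stationary tokens \dots\ already dominate every vertex''. This is false for the vertices of $A^+_{j}$ (the $a_e^{\sigma_j(e)+1}$'s): each such vertex is adjacent only to $c_e^{\sigma_j(e)+1}\in C^+_j$, and under your configuration (the token set of \Cref{thm:DS-D-pathwidth} plus $\{h,q,d\}$) neither $a_e^{\sigma_j(e)+1}$ nor $c_e^{\sigma_j(e)+1}$ carries a token. Making $q$ adjacent to $C^+_j$ dominates $C^+_j$, not $A^+_j$. So every inactive copy has $m$ undominated vertices, and no constant add-on to the budget can fix $t-1$ copies.

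\textbf{The budget accounting is also off.} In \Cref{thm:DS-D-pathwidth} there are \emph{no} tokens on $X$, so there are no ``freed tokens from $X_{\mathfrak j}$'' to push to the $f^i$'s. Running the \Cref{lem:hardness-DS-pathwidth-forward} routine already consumes the full $2m+4rn$; covering the $\bm{\sigma}+m$ selector pairs $f^ig^i$, $o^ip^i$ needs another $\Omega(\bm{\sigma}+m)$ slides, not a ``small constant''.

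\textbf{What the paper does instead.} The paper \emph{drops the supplier gadget entirely} and changes the initial configuration: tokens sit on $X_j$ and on $A^+_j$ (so $A^+_j$ is self-dominated in every copy, and the supplier-free $c(X_j)$ is dominated by its neighbour in $X_j$). The hub $h$ is made adjacent to $X$ (not $c(X)$) and $q$ to the $e^u,e^v$ vertices (not $C^+$); neither $h$ nor $q$ carries a token and there are no pendants on them. Each copy has its own dominator edge $d_jd'_j$. Finally, $\textsc{\footnotesize Select}_j$ is joined to $S\cap V(G_{H_j})$ by paths of length~$2$ (so that a single flip to $\textsc{\footnotesize Select}_j$ dominates all these intermediate vertices). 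The resulting budget is $\budget=2m+6\bm{\sigma}+1$: one flip, $2m$ slides to route $e^u/e^v$-tokens through $q$ to the $o^i$'s, and per weight unit a $b\!\to\!z$ slide, a $3$-slide move $y\!\to\!c(X)$, and a $2$-slide move $X\!\to\!f^i$. With this setup the only initially undominated vertices are $f^i,g^i,o^i,p^i$, which is precisely what makes the tight-budget backward argument go through.
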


\begin{proof}
As stated in \Cref{sec:foundational}, we can assume that we are given a family of $t$ \textsc{MMO} instances $(H_j, \mathcal{P}_{H_j}, \sigma_j, r_j)$, where $H_j$ is a bounded pathwidth graph with path decomposition $\mathcal{P}_{H_j}$, $|V(H_j)| = n$, $|E(H_j)| = m$, $\sigma_j: E_j \rightarrow \mathbb{Z}_+$ is a weight function such that $\sum_{e_j \in E(H_j)} \sigma_j(e_j) = \bm{\sigma}$ and $r_j = r \in \mathbb{Z}_+$ (integers are given in unary).
The construction of the instance $(\hat{G}_t, \mathcal{P}_{G_t}, S, \budget)$ of \textsc{DS-D} is twofold. 

For each instance $H_j$ for $j \in [t]$, we add to $\hat{G}_t$ the graph $G_{H_j}$ formed as per the construction in \Cref{thm:DS-D-pathwidth}, but without the supplier gadget.
We refer to the sets $A$, $B$, $X$, $X^+$, $C$, $C'$, $C^+$, $Y$, $c(X)$, $c(X^+)$, $c(Y)$, and $c'(Y)$, subsets of vertices of a subgraph $G_{H_j}$ of $\hat{G}_t$, by $A_j$, $B_j$, $X_j$, $X_j^+$, $C_j$, $C_j'$, $C_j^+$, $Y_j$, $c(X_j)$, $c(X_j^+)$, $c(Y_j)$, and $c'(Y_j)$, respectively. 
Similarly, we refer to the vertices $d$ and $d'$ of a subgraph $G_{H_j}$ of $G$, by $d_j$ and $d'_j$, respectively.
Subsequently, we let $A = \cup_{j \in [t]} A_j$, $B = \cup_{j \in [t]} B_j$, $X = \cup_{j \in [t]} X_j$, and so on.
We add the \textsc{MMO}-instance-selector (described in \Cref{sec:foundational}) and connect it to the rest of $\hat{G}_t$ as follows (see \Cref{fig:DS-pathwidth-composition}). 
We connect, for each $j \in [t]$, the vertex $\textsc{\footnotesize Select}_j$ to the vertices in $V(G_{H_j}) \cap S$, where $S$ is as defined later, via paths of length $2$.
We make the vertex $h$ adjacent to each vertex in $X$ and the vertex $q$ adjacent to each vertex of $e^u$ and $e^v$ for each edge $uv=e \in E(H_j)$ for each $j \in [t]$.
By \Cref{cor:cor-bounded-pathwidth-G'-t}, $\hat{G}_t$ is of bounded pathwidth.
Now, we set
\begin{equation*}
\begin{split}
    S = C \text{ } \cup \text{ } B \text{ } \cup \text{ } A^+ \text{ } \cup \text{ } Y \text{ } \cup \text{ } X \text{ } \cup \text{ } c(X^+)  \text{ } \cup \text{ } d \text{ } \cup \text{ } \bigcup_{j \in [t]} \textsc{\footnotesize Unselect}_j \text{ } \cup  \bigcup_{\substack{j \in [t] \\ uv=e \in E(H_j)}} (e^u \text{ } \cup \text{ } e^v)
\end{split}
\end{equation*}
and $\budget = 2m + 6\bm{\sigma} + 1$. 
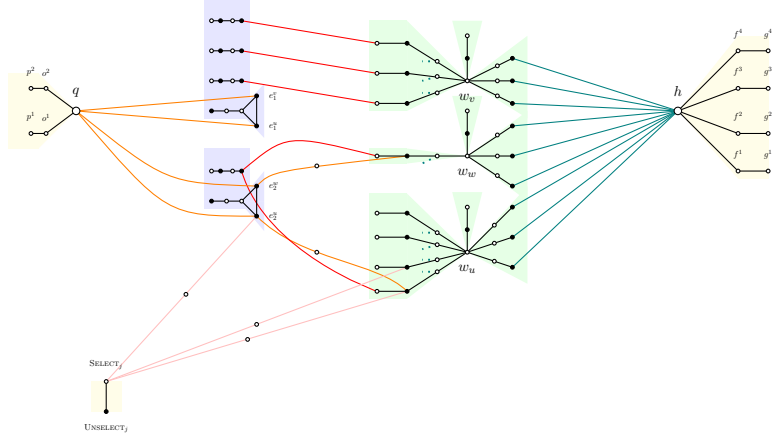
\begin{figure}[H]
    \centering
    \begin{tikzpicture}[scale=0.4]
    \fill[green!10] (8,1) -- (7.5,3) -- (8.5,3) -- (8,1) -- cycle;
    \fill[green!10] (8,1) -- (6,0) -- (6,3) -- (8,1) -- cycle;
    \fill[green!10] (8,1) -- (10,-0.25) -- (10,2.5) -- (8,1) -- cycle;
    \fill[green!10] (6,0) -- (6,3) -- (4.75,3) -- (4.75,0) -- cycle;
    \fill[green!10] (8,-1.5) -- (7.5,0.5) -- (8.5,0.5) -- (8,-1.5) -- cycle;
    \fill[green!10] (8,-1.5) -- (6,-1.75) -- (6,-1.25) -- (8,-1.5) -- cycle;
    \fill[green!10] (8,-1.5) -- (10,0) -- (10,-3) -- (8,-1.5) -- cycle;
    \fill[green!10] (6,-1.75) -- (6,-1.25) -- (4.75,-1.25) -- (4.75,-1.75) -- cycle;
    \fill[green!10] (8,-4.7) -- (7.5,-2.7) -- (8.5,-2.7) -- (8,-4.7) -- cycle;
    \fill[green!10] (8,-4.7) -- (6,-6.25) -- (6,-2.75) -- (8,-4.7) -- cycle;
    \fill[green!10] (8,-4.7) -- (10,-2.5) -- (10,-5.75) -- (8,-4.7) -- cycle;
    \fill[green!10] (6,-6.25) -- (6,-2.75) -- (4.75,-2.75) -- (4.75,-6.25) -- cycle;
    \fill[blue!10] (-0.75,3.75) -- (0.75,3.75) -- (0.75,-0.25) -- (-0.75,-0.25) -- cycle; 
    \fill[blue!10] (0.5,0.1) -- (1.25,0.9) -- (1.25,0) -- (0.5,0) -- cycle; 
    \fill[blue!10] (0.5,-0.1) -- (1.25,-0.9) -- (1.25,0) -- (0.5, 0) -- cycle;
    \fill[blue!10] (-0.75,-1.25) -- (0.75,-1.25) -- (0.75,-3.25) -- (-0.75,-3.25) -- cycle; 
    \fill[blue!10] (0.5,-2.9) -- (1.25,-2) -- (1.25,-3) -- (0.5,-3) -- cycle;
    \fill[blue!10] (0.5,-3.1) -- (1.25,-4) -- (1.25,-3) -- (0.5, -3) -- cycle; 
    \draw[black] (8,1) -- (9.5,0.25);
    \draw[black] (8,1) -- (9.5,1);
    \draw[black] (8,1) -- (9.5,1.75);
    \draw[black] (8,1) -- (8,2.5);
    \draw[black] (8,1) -- (6,0.25);
    \draw[black] (8,1) -- (6,2.25);
    \draw[black] (8,1) -- (6,1.25);
    \draw[black] (8,-1.5) -- (9.5,-2.5);
    \draw[black] (8,-1.5) -- (9.5,-1.5);
    \draw[black] (8,-1.5) -- (9.5,-0.5);
    \draw[black] (8,-1.5) -- (8,0);
    \draw[black] (8,-1.5) -- (6,-1.5);
    \draw[black] (8,-4.7) -- (9.5,-5.2);
    \draw[black] (8,-4.7) -- (9.5,-4.2);
    \draw[black] (8,-4.7) -- (9.5,-3.2);
    \draw[black] (8,-4.7) -- (8,-3.2);
    \draw[black] (8,-4.7)-- (6,-6);
    \draw[black] (8,-4.7) -- (6,-5.2);
    \draw[black] (8,-4.7)-- (6,-4.2);
    \draw[black] (8,-4.7) -- (6,-3.4);
    \draw[orange] (1,-2.5) .. controls (1.5,-2) .. (6,-1.5);
    \draw[orange] (1,-3.5) .. controls (2.5,-5) and  (5,-5) .. (6,-6);    
    \draw[red] (0.5,-2) .. controls (2,-0.75) .. (5,-1.5);
    \draw[red] (0.5,-2) .. controls (1.3,-4.7) and  (5,-6) .. (5,-6);  
    \draw[red] (0.5,3) -- (5,2.25);
    \draw[red] (0.5,2) -- (5,1.25);
    \draw[red] (0.5,1) -- (5,0.25);  
    \draw[dotted,teal,line width=0.25mm] (7,1.65) -- (6.5,1.65);
    \draw[dotted,teal,line width=0.25mm] (7,-1.5) -- (6.5,-1.75);
    \draw[dotted,teal,line width=0.25mm] (7,0.6) -- (6.5,0.6);
    \draw[dotted,teal,line width=0.25mm] (7,1.145) -- (6.5,1);
    \draw[dotted,teal,line width=0.25mm] (7,-4.05) -- (6.5,-4.05);
    \draw[dotted,teal,line width=0.25mm] (7,-4.95) -- (6.5,-4.95);
    \draw[dotted,teal,line width=0.25mm] (7,-4.45) -- (6.5,-4.6);
    \draw[dotted,teal,line width=0.25mm] (7,-5.35) -- (6.5,-5.35);
    \fill[yellow!10] (-3.5,-10) -- (-4.5,-10) -- (-4.5,-9) -- (-3.5,-9) -- cycle;
    \fill[yellow!10] (-5,0) -- (-6.25,-1.25) -- (-6.25,1.25) -- cycle;
    \fill[yellow!10] (-6.25,-1.25) -- (-7.25,-1.25) -- (-7.25, 1.25) -- (-6.25,1.25) -- cycle;
    \fill[yellow!10] (15,0) --  (17,2.5) -- (17,-2.25) -- cycle;
    \fill[yellow!10] (17,2.5) -- (18,2.5) -- (18,-2.25) -- (17,-2.25) -- cycle; 
    \draw[teal] (9.5,0.25) -- (15, 0);
    \draw[teal] (9.5,1) -- (15, 0);
    \draw[teal] (9.5,1.75) -- (15, 0);
    \draw[teal] (9.5,-2.5) -- (15, 0);
    \draw[teal] (9.5,-1.5) -- (15, 0);
    \draw[teal] (9.5,-0.5) -- (15, 0);
    \draw[teal] (9.5,-5.2) -- (15, 0);
    \draw[teal] (9.5,-4.2) -- (15, 0);
    \draw[teal] (9.5,-3.2) -- (15, 0);
    \draw[orange] (1,0.5) -- (-5,0);
    \draw[orange] (1,-0.5) -- (-5,0);
    \draw[orange] (1,-2.5) .. controls (-2,-2.4) .. (-5,0);
    \draw[orange] (1,-3.5) .. controls (-2,-3.5) .. (-5,0);
    \draw[pink] (-4,-9) -- (1,-3.5);
    \draw[pink] (-4,-9) -- (6,-5.2);
    \draw[pink] (-4,-9) -- (6,-6);
    \draw[black] (-6,0.75) -- (-5,0);
    \draw[black] (-6,-0.75) -- (-5, 0);
    \draw[black] (17,0.75) -- (15,0);
    \draw[black] (17,-0.75) -- (15,0);
    \draw[black] (17,2) -- (15,0);
    \draw[black] (17,-2) -- (15,0);
    \node[circle, fill=black, inner sep=0.5pt, draw=black, fill=white] at (3,-1.83) {};
    \node[circle, fill=black, inner sep=0.5pt, draw=black, fill=white] at (3,-4.7) {};
    \node[circle, fill=black, inner sep=0.5pt, draw=black, fill=white, label=below:{\scalebox{0.5}{$w_v$}}] (center1) at (8,1) {};
    \node[circle, fill=black, inner sep=0.5pt, draw=black, fill=white, label=below:{\scalebox{0.5}{$w_w$}}] (center2) at (8,-1.5) {};
    \node[circle, fill=black, inner sep=0.5pt, draw=black, fill=white, label=below:{\scalebox{0.5}{$w_u$}}] (center3) at (8,-4.7) {};
    \node[circle, fill=black, inner sep=0.5pt, draw=black, fill=black] at (9.5,0.25) {};
    \node[circle, fill=black, inner sep=0.5pt, draw=black, fill=white] at (9,0.5) {};
    \node[circle, fill=black, inner sep=0.5pt, draw=black, fill=black] at (9.5,1) {};
    \node[circle, fill=black, inner sep=0.5pt, draw=black, fill=white] at (9,1) {};
    \node[circle, fill=black, inner sep=0.5pt, draw=black, fill=black] at (9.5,1.75) {};
    \node[circle, fill=black, inner sep=0.5pt, draw=black, fill=white] at (9,1.5) {};
    \node[circle, fill=black, inner sep=0.5pt, draw=black, fill=black] at (9.5,-2.5) {};
    \node[circle, fill=black, inner sep=0.5pt, draw=black, fill=white] at (9,-2.15) {};
    \node[circle, fill=black, inner sep=0.5pt, draw=black, fill=black] at (9.5,-1.5) {};
    \node[circle, fill=black, inner sep=0.5pt, draw=black, fill=white] at (9,-1.5) {};
    \node[circle, fill=black, inner sep=0.5pt, draw=black, fill=black] at (9.5,-0.5) {};
    \node[circle, fill=black, inner sep=0.5pt, draw=black, fill=white] at (9,-0.8) {};
    \node[circle, fill=black, inner sep=0.5pt, draw=black, fill=black] at (9.5,-5.2) {};
    \node[circle, fill=black, inner sep=0.5pt, draw=black, fill=white] at (8.9,-5) {};
    \node[circle, fill=black, inner sep=0.5pt, draw=black, fill=black] at (9.5,-4.2) {};
    \node[circle, fill=black, inner sep=0.5pt, draw=black, fill=white] at (8.9,-4.4) {};
    \node[circle, fill=black, inner sep=0.5pt, draw=black, fill=black] at (9.5,-3.2) {};
    \node[circle, fill=black, inner sep=0.5pt, draw=black, fill=white] at (8.9,-3.8) {};
    \node[circle, fill=black, inner sep=0.5pt, draw=black, fill=white] at (8,2.5) {};
    \node[circle, fill=black, inner sep=0.5pt, draw=black, fill=black] at (8,1.75) {};
    \node[circle, fill=black, inner sep=0.5pt, draw=black, fill=white] at (8,0) {};
    \node[circle, fill=black, inner sep=0.5pt, draw=black, fill=black] at (8,-0.75) {};
    \node[circle, fill=black, inner sep=0.5pt, draw=black, fill=white] at (8,-3.2) {};
    \node[circle, fill=black, inner sep=0.5pt, draw=black, fill=black] at (8,-3.95) {};
    \draw[black] (5,-1.5) -- (6,-1.5) node[circle, draw=black, fill=black, inner sep=0.5pt, pos=1]{} node[circle, draw=black, fill=white, inner sep=0.5pt, pos=0]{};
    \node[circle, fill=black, inner sep=0.5pt, draw=black, fill=white] at (7,-1.5) {};
    \draw[black] (5,1.25) -- (6,1.25) node[circle, draw=black, fill=black, inner sep=0.5pt, pos=1]{} node[circle, draw=black, fill=white, inner sep=0.5pt, pos=0]{};
    \node[circle, fill=black, inner sep=0.5pt, draw=black, fill=white] at (7,1.145) {};
    \draw[black] (5,0.25) -- (6,0.25) node[circle, draw=black, fill=black, inner sep=0.5pt, pos=1]{} node[circle, draw=black, fill=white, inner sep=0.5pt, pos=0]{};
    \node[circle, fill=black, inner sep=0.5pt, draw=black, fill=white] at (7,0.6) {};
    \draw[black] (5,2.25) -- (6,2.25) node[circle, draw=black, fill=black, inner sep=0.5pt, pos=1]{} node[circle, draw=black, fill=white, inner sep=0.5pt, pos=0]{};
    \node[circle, fill=black, inner sep=0.5pt, draw=black, fill=white] at (7,1.65) {};
    \draw[black] (5,-3.4) -- (6,-3.4) node[circle, draw=black, fill=black, inner sep=0.5pt, pos=1]{} node[circle, draw=black, fill=white, inner sep=0.5pt, pos=0]{};
    \node[circle, fill=black, inner sep=0.5pt, draw=black, fill=white] at (7,-4.05) {};
    \draw[black] (5,-4.2) -- (6,-4.2) node[circle, draw=black, fill=black, inner sep=0.5pt, pos=1]{} node[circle, draw=black, fill=white, inner sep=0.5pt, pos=0]{};
    \node[circle, fill=black, inner sep=0.5pt, draw=black, fill=white] at (7,-4.45) {};
    \draw[black] (5,-5.2) -- (6,-5.2) node[circle, draw=black, fill=black, inner sep=0.5pt, pos=1]{} node[circle, draw=black, fill=white, inner sep=0.5pt, pos=0]{};
    \node[circle, fill=black, inner sep=0.5pt, draw=black, fill=white] at (7,-4.95) {};
    \draw[black] (5,-6) -- (6,-6) node[circle, draw=black, fill=black, inner sep=0.5pt, pos=1]{} node[circle, draw=black, fill=white, inner sep=0.5pt, pos=0]{};
    \node[circle, fill=black, inner sep=0.5pt, draw=black, fill=white] at (7,-5.35) {};
    \draw[black] (-0.5,3) -- (0.5,3) node[circle, draw=black, fill=white, inner sep=0.5pt, pos=0]{} node[circle, draw=black, fill=black, inner sep=0.5pt, pos=1]{};
    \node[draw, circle, fill=black, inner sep=0.5pt] at (-0.2,3) {};
    \node[draw, circle, fill=white, inner sep=0.5pt] at (0.2,3) {};
    \draw[black] (-0.5,2) -- (0.5,2) node[circle, draw=black, fill=white, inner sep=0.5pt, pos=0]{} node[circle, draw=black, fill=black, inner sep=0.5pt, pos=1]{};
    \node[draw, circle, fill=black, inner sep=0.5pt] at (-0.2,2) {};
    \node[draw, circle, fill=white, inner sep=0.5pt] at (0.2,2) {};
    \draw[black] (-0.5,1) -- (0.5,1) node[circle, draw=black, fill=white, inner sep=0.5pt, pos=0]{} node[circle, draw=black, fill=black, inner sep=0.5pt, pos=1]{};
    \node[draw, circle, fill=black, inner sep=0.5pt] at (-0.2,1) {};
    \node[draw, circle, fill=white, inner sep=0.5pt] at (0.2,1) {};
    \draw[black] (1,0.5) -- (1, -0.5);
    \draw[black] (0.5,0) -- (1,0.5) node[circle, draw=black, fill=black, inner sep=0.5pt, label=right:{\scalebox{0.3}{$e_1^v$}}, pos=1]{};
    \draw[black] (0.5,0) -- (1,-0.5) node[circle, draw=black, fill=black, inner sep=0.5pt, label=right:{\scalebox{0.3}{$e_1^u$}}, pos=1]{};
    \draw[black] (-0.5,0) -- (0.5,0) node[circle, draw=black, fill=black, inner sep=0.5pt, pos=0]{} node[circle, draw=black, fill=white, inner sep=0.5pt, pos=1]{};
    \node[draw, circle, fill=white, inner sep=0.5pt] at (0,0) {};
    \draw[black] (-0.5,-2) -- (0.5,-2) node[circle, draw=black, fill=white, inner sep=0.5pt, pos=0]{} node[circle, draw=black, fill=black, inner sep=0.5pt, pos=1]{};
    \node[draw, circle, fill=black, inner sep=0.5pt] at (-0.2,-2) {};
    \node[draw, circle, fill=white, inner sep=0.5pt] at (0.2,-2) {};
    \draw[black] (1,-2.5) -- (1, -3.5);
    \draw[black] (0.5,-3) -- (1,-2.5) node[circle, draw=black, fill=black, inner sep=0.5pt, label=right:{\scalebox{0.3}{$e_2^w$}}, pos=1]{};
    \draw[black] (0.5,-3) -- (1,-3.5) node[circle, draw=black, fill=black, inner sep=0.5pt, label=right:{\scalebox{0.3}{$e_2^u$}}, pos=1]{};
    \draw[black] (-0.5,-3) -- (0.5,-3) node[circle, draw=black, fill=black, inner sep=0.5pt, pos=0]{} node[circle, draw=black, fill=white, inner sep=0.5pt, pos=1]{};
    \node[draw, circle, fill=white, inner sep=0.5pt] at (0,-3) {};
    \draw[black] (-6.5,0.75) -- (-6,0.75) node[circle, draw=black, fill=white, inner sep=0.5pt, pos=1, label=above:\scalebox{0.3}{$o^2$}]{} node[circle, draw=black, fill=white, inner sep=0.5pt, pos=0, label=above:\scalebox{0.3}{$p^2$}]{};
    \draw[black] (-6.5,-0.75) -- (-6, -0.75) node[circle, draw=black, fill=white, inner sep=0.5pt, pos=1, label=above:\scalebox{0.3}{$o^1$}]{} node[circle, draw=black, fill=white, inner sep=0.5pt, pos=0, label=above:\scalebox{0.3}{$p^1$}]{};
    \draw[black] (17,0.75) -- (18,0.75) node[circle, draw=black, fill=white, inner sep=0.5pt, pos=0, label=above:\scalebox{0.3}{$f^3$}]{} node[circle, draw=black, fill=white, inner sep=0.5pt, pos=1, label=above:\scalebox{0.3}{$g^3$}]{};
    \draw[black] (17,-0.75) -- (18,-0.75) node[circle, draw=black, fill=white, inner sep=0.5pt, pos=0, label=above:\scalebox{0.3}{$f^2$}]{} node[circle, draw=black, fill=white, inner sep=0.5pt, pos=1, label=above:\scalebox{0.3}{$g^2$}]{};
    \draw[black] (17,2) -- (18,2) node[circle, draw=black, fill=white, inner sep=0.5pt, pos=0, label=above:\scalebox{0.3}{$f^4$}]{} node[circle, draw=black, fill=white, inner sep=0.5pt, pos=1, label=above:\scalebox{0.3}{$g^4$}]{};
    \draw[black] (17,-2) -- (18,-2) node[circle, draw=black, fill=white, inner sep=0.5pt, pos=0, label=above:\scalebox{0.3}{$f^1$}]{} node[circle, draw=black, fill=white, inner sep=0.5pt, pos=1, label=above:\scalebox{0.3}{$g^1$}]{};
    \draw[black] (-4,-9) -- (-4,-10) node[circle, draw=black, fill=white, inner sep=0.5pt, pos=0, label=above:\scalebox{0.3}{$\textsc{Select}_j$}]{} node[circle, draw=black, fill=black, inner sep=0.5pt, pos=1, label=below:\scalebox{0.3}{$\textsc{Unselect}_j$}]{};
    \node[draw, circle, fill=white, inner sep=0.5pt] at (1,-7.1) {};
    \node[draw, circle, fill=white, inner sep=0.5pt] at (-1.35,-6.1) {};
    \node[draw, circle, fill=white, inner sep=0.5pt] at (0.7,-7.6) {};
    \node[circle, draw=black, fill=white, inner sep=1pt, label=above:\scalebox{0.5}{$h$}] at (15, 0) {};
    \node[circle, draw=black, fill=white, inner sep=1pt, label=above:\scalebox{0.5}{$q$}] at (-5, 0) {};
    \end{tikzpicture}
    \caption{\footnotesize Orange, pink, and green edges highlighting the different types of edges between an \textsc{MMO}-instance-selector of the composition in \Cref{thm:cross_composition-DS-bpw} and a subgraph $G_{H_j}$ for a $j \in [t]$ of the same. $H_j$ has three vertices $u$, $v$, and $w$ and two edges $e_1=uv$ and $e_2=uw$, and $r=3$. Additionally, $\sigma_j(e_1) = 3$ and $\sigma_j(e_2) = 1$. For clarity, not all edges inside $G_{H_j}$ are drawn, nor are all the pink edges depicted. Vertices in black are in $S$ and those in white are not. Dotted lines incident to the vertices in $c(Y)$ represent edges to the vertex $d_j$ which is not shown in the figure.}
    \label{fig:DS-pathwidth-composition}
\end{figure}
Given that all integers are given in unary, the construction of the graph $\hat{G}_t$ or its path decomposition (as described in the discussion for \Cref{cor:cor-bounded-pathwidth-G'-t}), and as a consequence the reduction take time polynomial in the size of the input instances. Additionally, by \Cref{cor:composition-space}, this composition is a pl-reduction. We claim that $(\hat{G}_t, \mathcal{P}_{\hat{G}_t}, S, \budget)$ is a yes-instance of \textsc{DS-D} if and only if for some integer $\mathfrak{j} \in [t]$, $(H_\mathfrak{j}, \mathcal{P}_{H_\mathfrak{j}}, \sigma_\mathfrak{j}, r_\mathfrak{j})$ is a yes-instance of \textsc{MMO}.
\\
\begin{claim}
If for some $\mathfrak{j} \in [t]$, $(H_\mathfrak{j}, \mathcal{P}_{H_\mathfrak{j}}, \sigma_\mathfrak{j}, r_\mathfrak{j})$ is a yes-instance of \textsc{MMO}, then $(\hat{G}_t, \mathcal{P}_{\hat{G}_t}, S, \budget)$ is a yes-instance of \textsc{DS-D}.
\end{claim}

\begin{claimproof}
Let $(H_\mathfrak{j}, \mathcal{P}_{H_\mathfrak{j}}, \sigma_\mathfrak{j}, r_\mathfrak{j})$ be a yes-instance of \textsc{MMO} and let $\lambda$ be a feasible orientation of $H_\mathfrak{j}$ such that for each $v \in V(H_\mathfrak{j})$, the total weight of the edges directed out of $v$ is at most $r$.
In $\hat{G}_t$, the vertices $f^1, \ldots, f^{\bm{\sigma}}, o^1, \ldots, o^m$ and their neighbors are non-dominated.
First, we slide the token on $\textsc{\footnotesize Unselect}_\mathfrak{j}$ to $\textsc{\footnotesize Select}_\mathfrak{j}$. 
Using $2m$ slides, we move for each edge $e \in E(H_\mathfrak{j})$ the token on $e^u$ (resp. $e^v$) if $\lambda(e) = (v, u)$ (resp. $\lambda(e) = (v, u)$), towards a token-free vertex in $o^1, \ldots, o^m$.
We additionally slide each token on a vertex $b^i_e$ for $i \in [\sigma_\mathfrak{j}(e)]$ to the vertex $z_e^{v(i)}$ (resp. $z_e^{u(i)}$), move the token on $y_e^{v(i)}$ (resp. $y_e^{u(i)}$) towards a token-free vertex in $c(X_v)$ (resp. $c(X_u)$) and consequently, move the token on the adjacent vertex in $X_v$ (resp. $X_u$) towards a token-free vertex in $f^1, \ldots, f^{\bm{\sigma}}$.
The total number of slides performed is $\budget$ and they achieve a configuration for the tokens that dominates all of $\hat{G}_t$.    
\end{claimproof}

\begin{claim}
If $(\hat{G}_t, \mathcal{P}_{\hat{G}_t}, S, \budget)$ is a yes-instance of \textsc{DS-D}, then there exists an integer $\mathfrak{j} \in [t]$, such that $(H_\mathfrak{j}, \mathcal{P}_{H_\mathfrak{j}}, \sigma_\mathfrak{j}, r_\mathfrak{j})$ is a yes-instance of \textsc{MMO}.
\end{claim}

\begin{claimproof}
In any solution that uses the minimal number of slides, the tokens on the vertices $d_j$ for each $j \in [t]$, and on the vertices in $c(X^+)$ do not need to be moved (as these
tokens must be replaced by others to dominate the vertices $d'_j$ for each $j \in [t]$, or the vertices in $X^+$, thus we can assume these tokens remain stationary).
In the same solution, we can similarly assume that a token on one of $\textsc{Unselect}_j$ and $\textsc{Select}_j$ for each $j \in [t]$ remains on either one of those vertices.
To dominate $g^1, \ldots, g^{\bm{\sigma}}, p^1, \ldots, p^m$, at least $2m + 2\bm{\sigma}$ slides are needed to get tokens from one or more of the vertices in $X$ onto the vertices $f^1, \ldots, f^{\bm{\sigma}}$, and from one of more of the vertices of the form $e^u$ for an edge $e \in E(H_j)$ incident to a vertex $u \in V(H_j)$ for an integer $j \in [t]$, onto the vertices $o^1, \ldots, o^m$.
If a token is moved out of a subgraph $G_{H_j}$ (for an integer $j \in [t]$) of $\hat{G}_t$, which is bound to happen to get tokens onto the vertices $f^1, \ldots, f^{\bm{\sigma}}, o^1, \ldots, o^m$, at least one slide is needed to dominate the vertex between now token-free vertices in $G_{H_j}$ and $\textsc{\footnotesize Select}_j$ and exactly one slide can only be achieved by sliding the token on $\textsc{\footnotesize Unselect}_j$ to $\textsc{\footnotesize Select}_j$ (since otherwise a token has to move from one of the vertices of a subgraph $G_{H_{j'}}$ for $j' \neq j \in [t]$ into $G_{H_j}$, and this requires more than one slide).

W.l.o.g. assume a token on a vertex, denoted $x^i_v$, in $X$, for a vertex $v \in V(H_\mathfrak{j})$, and integers $\mathfrak{j} \in [t]$ and $i \in [r]$, is moved to one of the vertices $f^1, \ldots, f^{\bm{\sigma}}$, then at least $3$ slides are needed to move a token into either $x^i_v$ or $c(x^i_v)$ (since the tokens on $c(X^+)$ are assumed to be stationary and a token moving from any other vertex, except $x^i_v$, in $X$ into $x^i_v$ can replace the token on $x^i_v$ in moving into one of the vertices  $f^1, \ldots, f^{\bm{\sigma}}$).
In a solution that uses the minimal number of slides, $3$ slides can only be achieved by moving a token on a vertex, denoted $y_e^{v(i_1)}$, in $Y^v$, for some edge $e \in E(H_\mathfrak{j})$ adjacent to $v$ and some integer $i_1 \in [\sigma_\mathfrak{j}(e)]$, to $c(x^i_v)$.
Additionally, $4$ slides can only be achieved by moving a token on the same vertex to $x^i_v$.
Since in a solution that uses the minimal number of slides a token on one of $\textsc{Unselect}_\mathfrak{j}$ and $\textsc{Select}_\mathfrak{j}$ is assumed to remain on either of those vertices,
and a token in $B$ can slide at most one slide to a vertex in $Z$, if a token on $y_e^{v(i_1)}$ is moved to a vertex in $c(X)$ (or $X$), either a token has to move to the vertex $z_e^{v(i_1)}$ (while a token has to be on the vertex~$e^v$), or a token has to slide from the vertex $e^v$ to $c'(y_e^{v(i_1)})$ or to $y_e^{v(i_1)}$ itself.
Moving the token on $e^v$ to $y_e^{v(i_1)}$ requires two slides.

Given the budget and the fact that $\bm{\sigma}$ tokens in any solution must move from $X$ onto the vertices $f^1, \ldots, f^{\bm{\sigma}}$, tokens must move from distinct vertices in $X$ onto the vertices $f^1, \ldots, f^{\bm{\sigma}}$ and from distinct vertices of the form $e^u$ for an edge $e \in E(H_j)$ incident to a vertex $u \in V(H_j)$ for an integer $j \in [t]$, onto the vertices $o^1, \ldots, o^m$. 
Additionally, given the budget, tokens in the same solution must move onto $f^1, \ldots, f^{\bm{\sigma}}$ from only the vertices in $X_{\mathfrak{j}}$ and onto $o^1, \ldots, o^m$ from only the vertices of the form $e_1^{u}$, for an edge $uw=e_1 \in E(H_\mathfrak{j})$ (note that one token sliding to $\textsc{\footnotesize Select}_{\mathfrak{j}}$ from $\textsc{\footnotesize Unselect}_{\mathfrak{j}}$ will dominate all vertices on the paths between $\textsc{\footnotesize Select}_{\mathfrak{j}}$ and $V(G_{H_\mathfrak{j}})$).
In the same solution, if a token moves from the vertex $e_1^{u}$ onto one of the vertices $o^1, \ldots, o^m$, the token on $e_1^{w}$ remains stationary as the budget does not allow for another token to move into either one of the vertices $e_1^{u}$ and $e_1^{w}$ (to additionally dominate $b_{e_1}^{\sigma_\mathfrak{j}(e_1)+1}$).
To fill all of $o^1, \ldots, o^m$ with tokens, exactly one token must move from $G_{e_2}^{sel}$ onto the vertices $o^1, \ldots, o^m$ for each edge $e_2 \in E(H_\mathfrak{j})$.
The latter implies that the token on~$e^v$ does not move to $c'(y_e^{v(i_1)})$ and given the budget that the token on $b_e^{i_1}$ slides to $z_e^{v(i_1)}$.

W.l.o.g. assume that the token on $e^v$ does not move to one of $o^1, \ldots, o^m$, then at most the $\sigma(e)$ tokens on the vertices of $Y_e^v$ can be sent to $c(X_v)$.
This implies that for each edge in $H$, at most its weight in tokens can move to $c(X)$ from and to exactly one of the vertex gadgets corresponding to the vertices incident to that edge in $H$.
Given that $\bm{\sigma}$ tokens are needed on the vertices of $c(X)$, it must be the case that for each edge, all its weight in tokens must move to $c(X)$.
This gives a feasible orientation for $(H_\mathfrak{j}, \mathcal{P}_{H_\mathfrak{j}}, \sigma_\mathfrak{j}, r_\mathfrak{j})$, since for each $v \in V(H_\mathfrak{j})$, we have at most $r$ vertices in $c(X_v)$.  
\end{claimproof}
This concludes the proof of the theorem.
\end{proof}
Next, we consider the DS-D problem with respect to the parameter \emph{fvs}. 
\begin{theorem}
\label{thm:DS-D-fvs}
The \textsc{DS-D} problem is $\W[1]$-hard for the parameter \emph{fvs} of the input graph. 
\end{theorem}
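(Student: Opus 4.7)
The plan is to present a parameterized reduction from the \mcc{} problem, which is known to be $\W[1]$-hard with respect to the number of colors $\kappa$. Given an instance $(G,\kappa)$ of \mcc{} with color classes $V_1,\dots,V_\kappa$ of size $n$ and bichromatic edge sets $E_{i,j}\subseteq V_i\times V_j$, I will build a graph $H$ out of three families of gadgets: a \emph{vertex-block} $H_i$ for each color $i\in[\kappa]$, an \emph{edge-block} $H_{i,j}$ for each pair $\iljk$, and a \emph{connector} $\conn_{i,j}^l$ for each pair $\iljk$ and each $l\in\{i,j\}$ that links $H_l$ with $H_{i,j}$. The construction is the one referenced (and modified) in the proofs of Theorems~\ref{thm:VC-D-fvs} and \ref{thm:IS-D-fvs}, so the same structural vocabulary (vertices $t_i$, $p_{i,x}$, $q_{i,x}^{j,z}$, $t_{i,j}$, $p_e$, $q_e^z$, $s_{i,j}^l$, $r_{i,j}^l$ and sets $A_{i,j}^l,B_{i,j}^l,C_{i,j}^l,D_{i,j}^l$) will be reused here.

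The key design principle is that choosing a vertex $u_{i,x_i}\in V_i\cap C$ for a would-be clique $C$ should correspond to sliding a token from $t_i$ onto the unique neighbor $p_{i,x_i}$, thereby releasing the $n(\kappa-1)$ tokens on $Q_{i,x_i}$ into the connectors; choosing an edge $e\in E_{i,j}$ corresponds analogously to moving the token from $t_{i,j}$ onto $p_e$ and releasing the $2n$ tokens on $Q_e$. The connector $\conn_{i,j}^l$ is the consistency gadget: it is built so that $s_{i,j}^l$ needs to forward exactly $n$ tokens onto $A_{i,j}^l$ and $r_{i,j}^l$ needs to forward exactly $n$ tokens onto $D_{i,j}^l$ in order to dominate the non-dominated vertices inside the connector; the released tokens from $Q_{i,x_i}^j$ contribute $x_i$ tokens to $s_{i,j}^i$ and $n-x_i$ to $r_{i,j}^i$, while those from $Q_e$ for $e=u_{i,z_i}u_{j,z_j}$ contribute $n-z_i$ to $s_{i,j}^i$ and $z_i$ to $r_{i,j}^i$, so the $n$-quota at $s_{i,j}^i$ is met iff $x_i=z_i$, which is exactly the clique compatibility condition on both endpoints of every bichromatic edge. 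The budget $\budget$ will be set to the total number of token slides used by the intended clique-encoding strategy, so that no budget remains for any deviation from it.

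Next, I will verify the reduction in the usual two directions. For the forward direction, given a $\kappa$-clique I explicitly slide (i) each $t_i$ onto the corresponding $p_{i,x_i}$, (ii) push the freed tokens in $Q_{i,x_i}^j$ and $Q_e$ through $s_{i,j}^l$ and $r_{i,j}^l$ into $A_{i,j}^l$ and $D_{i,j}^l$, and check that the resulting configuration dominates $H$ within the budget. For the reverse direction, the budget is tight enough that any feasible discovery sequence must (a) slide exactly one token onto a single $p_{i,x_i}$ in each vertex-block and onto a single $p_e$ in each edge-block, and (b) saturate both $A_{i,j}^l$ and $D_{i,j}^l$ with $n$ tokens each via the unique path through the connector; the arithmetic at $s_{i,j}^l$ and $r_{i,j}^l$ then forces $x_l=z_l$ for both endpoints of the chosen edge, which yields a $\kappa$-clique in $G$. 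This part will mirror the reasoning in Lemmas~\ref{lem:VC-D-fvs-reverse} and \ref{lem:IS-D-fvs-reverse}, adapted to domination rather than covering/independence.

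Finally, I will bound $\mathit{fvs}(H)$ by a function of $\kappa$ alone. The blocks $H_i$ and $H_{i,j}$ are already trees (stars of depth two), so all cycles are confined to the connectors. Within each connector $\conn_{i,j}^l$, removing a constant-size set of ``hub'' vertices (concretely, the four vertices $s_{i,j}^l,r_{i,j}^l$ together with any auxiliary vertices forced to close cycles with $A_{i,j}^l,\dots,D_{i,j}^l$) leaves a forest; summing over the $2\kctwo$ connectors gives $\mathit{fvs}(H)=\Oof(\kappa^2)$, which is a function of $\kappa$ only, completing the parameterized reduction. The main technical obstacle I anticipate is the accounting in the reverse direction: showing that the budget is so tight that no token can be ``wasted'' and that the connector arithmetic truly forces equality $x_l=z_l$, as opposed to admitting some feasible redistribution of tokens across different $p_{i,x}$ choices. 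I expect to handle this by a careful case analysis that mimics the argument in Lemma~\ref{lem:IS-D-fvs-reverse}, using that each connector has a unique matching between its source vertices and the target sets $A_{i,j}^l,D_{i,j}^l$, so surplus or deficit tokens at either $s_{i,j}^l$ or $r_{i,j}^l$ cannot be compensated elsewhere.
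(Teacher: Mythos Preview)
Your proposal is correct and follows essentially the same approach as the paper: a parameterized reduction from \mcc{} via vertex-blocks, edge-blocks, and connectors with the exact arithmetic you describe, budget $(8n+1)\kctwo+\kappa$, and the feedback vertex set $\{s_{i,j}^l,r_{i,j}^l:\iljk,\,l\in\{i,j\}\}$ of size $4\kctwo$. The only cosmetic difference is that in the paper this theorem is the \emph{base} construction (the VC-D and IS-D versions are modifications of it, not the other way around), so you would present the gadgets explicitly here rather than ``reuse'' them from Theorems~\ref{thm:VC-D-fvs} and~\ref{thm:IS-D-fvs}.
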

We present a parameterized reduction from the \mcc~problem, which is known to be \wone-hard~\cite{cygan2015parameterized} with respect to the solution size $\kappa$. 
In the \mcc~problem, we are given a graph $G$, and an integer $\kappa$, where $V(G)$ is partitioned into $\kappa$ independent sets $\{V_1,\ldots V_{\kappa}\}$. 
The goal is to decide whether there exists a clique of size $\kappa$.

Let $(G, \kappa)$ be an instance of the \mcc~problem. 
The edge set $E(G)$ can be partitioned into $\kctwo$ sets $\{E_{i,j} = \{uv : u \in V_i, v \in V_j\} \mid \iljk\}$. 
Without loss of generality, we assume that for each $i \in [\kappa]$, $|V_i| = n$. 
Otherwise, add isolated vertices in respective subsets. 
We usually use $n$ to denote the number of vertices in the input graph. 
However, we use $n$ to denote the number of vertices in each color class. 
For each $i \in [\kappa]$, let $V_i = \{u_{i,\ell} \mid \ell \in [n]\}$. 

For an instance $(G, \kappa)$ of the \mcc~problem, the reduction outputs an instance $(H, \varS, \budget)$ of the DS-D problem. 
The graph $H$ has an induced subgraph $H_i$ for each $i \in [\kappa]$ and an induced subgraph $H_{i,j}$ for each $\iljk$. 
We refer to these induced subgraphs as {\em edge-blocks} and {\em vertex-blocks}, respectively. 
Finally, the vertex-blocks and edge-blocks are connected by connectors. 

\smallskip
\noindent{\bf Vertex-block.} For each $i \in [\kappa]$, we construct a vertex-block $H_i$ as follows.
We start by adding a vertex $t_i$. 
For each $x \in [n]$, we add a star-tree rooted at $p_{i,x}$ with $n(\kappa-1)$ leaves $\{q_{i, x}^{j,1}, \ldots, q_{i, x}^{j,n} \mid j \not= i\}$. 
Each vertex $p_{i, x}$ is connected with $t_i$ by an edge. 
For each $x \in [n]$ and $j \not= i \in [\kappa]$, let $Q_{i, x}^j = \{q_{i, x}^{j,\ell} \mid \ell \in [n]\}$. 
For each $x \in [n]$, $\displaystyle Q_{i,x} = \bigcup_{j\not=i \in [\kappa]} Q_{i,x}^j$. 
Further, for each $i \in [\kappa]$, let $\displaystyle Q_i = \bigcup_{x \in [n]} Q_{i,x}$. 

\medskip
\noindent{\bf Edge-block.} For each $\iljk$, we construct a edge-block $H_{i,j}$ as follows.
We start by adding a vertex $t_{i,j}$. 
For each edge $e \in E_{i,j}$,%$e=u_{i,x}u_{j,y} \in E_{i,j}$ for some $x,y \in [n]$, 
we add a star-tree rooted at $p_e$ with $2n$ leaves $q_e^1, \ldots, q_e^{2n}$. 
Each vertex $p_e$ is connected with $t_i$ by an edge. 
For each $e \in E_{i,j}$, let $Q_e = \{q_e^\ell \mid \ell \in [2n]\}$. 
Further, for each $\iljk$, $\displaystyle Q_{i,j} = \bigcup_{e \in E_{i,j}} Q_e$.

\smallskip
\noindent{\bf Connector.} For each $\iljk$ and for each $l \in \{i, j\}$, we construct a connector $\conn^l_{i,j}$ as follows. 
Let $A_{i,j}^l = \{a_{i,j}^{l,1}, \ldots, a_{i,j}^{l, n}\}$, $B_{i,j}^l = \{b_{i,j}^{l,1}, \ldots, b_{i,j}^{l, n}\}$, $C_{i,j}^l = \{c_{i,j}^{l,1}, \ldots, c_{i,j}^{l, n}\}$ and $D_{i,j}^l = \{d_{i,j}^{l,1}, \ldots, d_{i,j}^{l, n}\}$. 
We add $4n+2$ vertices $(\{s_{i,j}^l, r_{i,j}^l\} \cup A_{i,j}^l \cup B_{i,j}^l \cup C_{i,j}^l \cup D_{i,j}^l)$. 
For each $x \in [n]$, we add the edges $s_{i,j}^l a_{i,j}^{l,x}$, $a_{i,j}^{l,x}b_{i,j}^{l,x}$, $r_{i,j}^l c_{i,j}^{l,x}$ and $c_{i,j}^{l,x}d_{i,j}^{l,x}$. 

For each $\iljk$ and $l \in \{i,j\}$, the vertex-block $H_l$ is connected with the connector $\conn_{i,j}^l$ as follows. 
Let $l' \not= l \in \{i,j\}$. 
For each $x, z \in [n]$, 
\begin{itemize}
    \item if $z \leq x$, then add an edge $q_{l,x}^{l', z} s_{i,j}^l$, and
    \item if $z > x$, then add an edge $q_{l,x}^{l',z} r_{i,j}^l$. 
\end{itemize}

For each $\iljk$, the edge-block $H_{i,j}$ is connected with the connectors $\conn_{i,j}^i$ and $\conn_{i,j}^j$ as follows. 
For each $e = u_{i,x} u_{j,y} \in E_{i,j}$ for some $x,y \in [n]$, and for each $z,w \in [n]$, 
\begin{itemize}
    \item if $z \leq x$, then add an edge $q_e^z r_{i,j}^i$, 
    \item if $z > x$, then add an edge $q_e^z s_{i,j}^i$, 
    \item if $w \leq y$, then add an edge $q_w^{n+w} r_{i,j}^j$, and
    \item if $w > y$, then add an edge $q_{l,x}^{n+w} s_{i,j}^j$. 
\end{itemize}
For a pair $(i,j)$ with $\iljk$, an illustration of a connector $\conn_{i,j}^i$ that connects $V_{H_{i}}$ and~$H_{i,j}$ is given in Figure~\ref{fig:DS-D-fvs-illustration}. 
This completes the construction of the graph $H$. 
Further, we set $\budget =(8n+1)\kctwo+\kappa$, and
we define the initial configuration $\varS$ as follows:
\[\varS = \bigcup_{i \in [\kappa], {x \in [n]}} Q_{i,x} \cup \bigcup_{e \in E} Q_e \cup \{t_i \mid i \in [\kappa]\}  \cup \{t_{i,j} \mid \iljk \}.\]

\begin{figure}
    \tikzset{decorate sep/.style 2 args=
{decorate,decoration={shape backgrounds,shape=circle,shape size=#1,shape sep=#2}}}
\centering
    \begin{tikzpicture}
    \coordinate (ti) at (0,0);
    \coordinate (pix) at (1,0);
    \coordinate (qix1) at (2,1.5);
    \coordinate (qixx) at (2,0.5);
    \coordinate (qixx1) at (2,-0.5);
    \coordinate (qixn) at (2,-1.5);
    \coordinate (siij) at (6,2.5);
    \coordinate (riij) at (6, -2.5);
    \coordinate (aiji1) at (5, 1.5);
    \coordinate (aijin) at (7, 1.5);
    \coordinate (biji1) at (5, 0.5);
    \coordinate (bijin) at (7, 0.5);
    \coordinate (ciji1) at (5, -0.5);
    \coordinate (cijin) at (7, -0.5);
    \coordinate (diji1) at (5, -1.5);
    \coordinate (dijin) at (7, -1.5);
    \coordinate (tij) at (12, 0);
    \coordinate (pe) at (11, 0);
    \coordinate (qez1) at (10,1.5);
    \coordinate (qezz) at (10, 0.5);
    \coordinate (qezz1) at (10, -0.5);
    \coordinate (qezn) at (10, -1.5);
    \coordinate (qew1) at (10, -2.5);
    \coordinate (qewn) at (10, -3.5);
    \draw[black, thick] (ti) -- (pix);
    \draw[black, thick] (pix) -- (qix1);
    \draw[black, thick] (pix) -- (qixx);
    \draw[black, thick] (pix) -- (qixx1);
    \draw[black, thick] (pix) -- (qixn);
    \draw[black, thick] (tij) -- (pe);
    \draw[black, thick] (pe) -- (qez1);
    \draw[black, thick] (pe) -- (qezz);
    \draw[black, thick] (pe) -- (qezz1);
    \draw[black, thick] (pe) -- (qezn);
    \draw[black, thick] (pe) -- (qew1);
    \draw[black, thick] (pe) -- (qewn);
    \draw[black, thick] (siij) -- (aiji1);
    \draw[black, thick] (siij) -- (aijin);
    \draw[black, thick] (riij) -- (diji1);
    \draw[black, thick] (riij) -- (dijin);
    \draw[black, thick] (aiji1) -- (biji1);
    \draw[black, thick] (aijin) -- (bijin);
    \draw[black, thick] (ciji1) -- (diji1);
    \draw[black, thick] (cijin) -- (dijin);
    \draw[black, thick] (siij) to[bend right=30] (qix1);
    \draw[black, thick] (siij) to[bend right=30] (qixx);
    \draw[black, thick] (riij) to[bend left=30] (qixx1);
    \draw[black, thick] (riij) to[bend left=30] (qixn);
    \draw[black, thick] (siij) to[bend left=30] (qez1);
    \draw[black, thick] (siij) to[bend left=30] (qezz);
    \draw[black, thick] (riij) to[bend right=30] (qezz1);
    \draw[black, thick] (riij) to[bend right=30] (qezn);
    \fill[white, draw=black, thick] (ti) circle (0.1cm) node[left,blue] {$t_i$};
    \fill[white, draw=black, thick] (tij) circle (0.1cm) node[right,blue] {$t_{i,j}$};
    \fill[white, draw=black, thick] (siij) circle (0.1cm) node[above,blue] {$s_{i,j}^i$};
    \fill[white, draw=black, thick] (riij) circle (0.1cm) node[below,blue] {$r_{i,j}^i$};
    \fill[white, draw=black, thick] (pix) circle (0.1cm) node[below,blue] {$p_{i,x}$};
    \fill[white, draw=black, thick] (qix1) circle (0.1cm) node[above,blue] {$q_{i,x}^{j,1}$};
    \fill[white, draw=black, thick] (qixx) circle (0.1cm) node[below,blue] {$q_{i,x}^{j,x}$};
    \fill[white, draw=black, thick] (qixx1) circle (0.1cm);
    \fill[white, draw=black, thick] (qixn) circle (0.1cm) node[below,blue] {$q_{i,x}^{j,n}$};
    \fill[white, draw=black, thick] (pe) circle (0.1cm) node[above,blue] {$p_e$};    
    \fill[white, draw=black, thick] (qez1) circle (0.1cm) node[above,blue] {$q_e^1$};
    \fill[white, draw=black, thick] (qezz) circle (0.1cm) node[below,blue] {$q_e^{n-z}$};
    \fill[white, draw=black, thick] (qezz1) circle (0.1cm);
    \fill[white, draw=black, thick] (qezn) circle (0.1cm) node[below,blue] {$q_e^n$};
    \fill[white, draw=black, thick] (qew1) circle (0.1cm); 
    \fill[white, draw=black, thick] (qewn) circle (0.1cm); 
    \fill[white, draw=black, thick] (aiji1) circle (0.1cm);
    \fill[white, draw=black, thick] (aijin) circle (0.1cm);
    \fill[white, draw=black, thick] (biji1) circle (0.1cm);
    \fill[white, draw=black, thick] (bijin) circle (0.1cm);
    \fill[white, draw=black, thick] (ciji1) circle (0.1cm);
    \fill[white, draw=black, thick] (cijin) circle (0.1cm);
    \fill[white, draw=black, thick] (diji1) circle (0.1cm);
    \fill[white, draw=black, thick] (dijin) circle (0.1cm);
    \fill[red] (ti) circle (0.05cm);
    \fill[red] (tij) circle (0.05cm);
    \fill[red] (qix1) circle (0.05cm);
    \fill[red] (qixx) circle (0.05cm);
    \fill[red] (qixx1) circle (0.05cm);
    \fill[red] (qixn) circle (0.05cm);
    \fill[red] (qez1) circle (0.05cm);
    \fill[red] (qezz) circle (0.05cm);
    \fill[red] (qezz1) circle (0.05cm);
    \fill[red] (qezn) circle (0.05cm);
    \fill[red] (qew1) circle (0.05cm);
    \fill[red] (qewn) circle (0.05cm);
    \draw[black,rounded corners] (4.75, 1.25) rectangle (7.25, 1.75) node[midway] {$A_{i,j}^i$};
    \draw[black,rounded corners] (4.75, 0.25) rectangle (7.25, 0.75) node[midway] {$B_{i,j}^i$};
    \draw[black,rounded corners] (4.75, -0.75) rectangle (7.25,-0.25) node[midway] {$C_{i,j}^i$};
    \draw[black,rounded corners] (4.75, -1.75) rectangle (7.25, -1.25) node[midway] {$D_{i,j}^i$};
    \draw[gray, thick, dashed, rounded corners=15pt] (1.5, -2.5) rectangle (2.5, 2.5);
    \draw[gray, thick, dashed, rounded corners=15pt] (0.6, -3.15) rectangle (2.75, 3.15);
    \draw[gray, thick, dashed, rounded corners=15pt] (-0.5, -3.5) rectangle (3.5, 3.5) node[midway, black, above=3.5cm] {$H_i$};
    \draw[gray, thick, dashed, rounded corners=15pt] (9.5, 2.5) rectangle (11.25, -4);
    \draw[gray, thick, dashed, rounded corners=15pt] (8.5, 3.5) rectangle (12.5, -5) node[midway, black, above=4.25cm] {$H_{i,j}$};
    \draw[gray, thick, dashed, rounded corners=15pt] (4.5, -3.5) rectangle (7.5, 3.5) node[black, midway, above=3.5cm] {$\conn_{i,j}^i$};
    \draw[decorate sep={0.5mm}{1.75mm},fill] (2,1.25) -- (2,0.7);
    \draw[decorate sep={0.5mm}{1.75mm},fill] (2,-0.75) -- (2,-1.3);
    \draw[decorate sep={0.5mm}{1.75mm},fill] (10,1.25) -- (10,0.7);
    \draw[decorate sep={0.5mm}{1.75mm},fill] (10,-0.75) -- (10,-1.3);
    \draw[decorate sep={0.5mm}{1.75mm},fill] (10,-2.75) -- (10,-3.3);
    \draw[decorate sep={0.5mm}{1.75mm},fill] (2,2.7) -- (2,3.1);
    \draw[decorate sep={0.5mm}{1.75mm},fill] (2,-2.7) -- (2,-3.1);
    \end{tikzpicture}
    \caption{\footnotesize An illustration of the reduction of Theorem~\ref{thm:DS-D-fvs}. For $\iljk$, the vertex-block $H_i$ and the edge-block $H_{i,j}$ are connected to the connector $\conn_{i,j}^i$. The initial configuration is denoted by vertices with red circle. }
    \label{fig:DS-D-fvs-illustration}
\end{figure}
\begin{lemma}
\label{lem:DS-D-fvs-bound}
    The \emph{fvs} of the graph $H$ is at most $4\binom{\kappa}{2}$.
\end{lemma}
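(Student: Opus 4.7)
The plan is to exhibit an explicit feedback vertex set of the claimed size. I would take
\[
F = \{s_{i,j}^l,\, r_{i,j}^l \,:\, \iljk,\; l \in \{i,j\}\},
\]
and observe that for each unordered pair $\iljk$ there are precisely two connectors $\conn_{i,j}^i$ and $\conn_{i,j}^j$, each contributing its two hub vertices $s_{i,j}^l$ and $r_{i,j}^l$. A direct count then gives $|F| = 4\binom{\kappa}{2}$, matching the upper bound we need to establish.

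The remainder of the argument is to verify that $H - F$ is acyclic. Here I would appeal to the way the gadgets were glued together: within each connector $\conn_{i,j}^l$, the only vertices adjacent to anything outside the connector are $s_{i,j}^l$ and $r_{i,j}^l$ (all other connector vertices are in the matchings $A_{i,j}^l\!-\!B_{i,j}^l$ and $C_{i,j}^l\!-\!D_{i,j}^l$, which have no inter-block edges), and connectors are the only gadgets providing edges between distinct blocks. Consequently, deleting $F$ splits $H$ into pairwise disjoint pieces: each vertex-block $H_i$, each edge-block $H_{i,j}$, and the residue $\conn_{i,j}^l \setminus \{s_{i,j}^l, r_{i,j}^l\}$ of each connector.

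Finally I would check that each piece is a forest. Each vertex-block $H_i$ is, by construction, the tree of depth two rooted at $t_i$, with children $\{p_{i,x} : x \in [n]\}$, each $p_{i,x}$ having leaves $Q_{i,x}$; similarly each edge-block $H_{i,j}$ is a tree of depth two rooted at $t_{i,j}$. The residue of each connector $\conn_{i,j}^l$ reduces to the disjoint union of the edges $a_{i,j}^{l,x} b_{i,j}^{l,x}$ and $c_{i,j}^{l,x} d_{i,j}^{l,x}$ for $x \in [n]$, which is trivially a forest. Since every component of $H - F$ is a tree, $H - F$ is a forest, so $\textit{fvs}(H) \leq |F| = 4\binom{\kappa}{2}$, proving the lemma. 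The construction was deliberately engineered so that every cycle must pass through some connector hub, so there is no real obstacle beyond this bookkeeping.
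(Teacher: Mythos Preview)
Your proposal is correct and follows exactly the same approach as the paper: you exhibit the identical feedback vertex set $F = \{s_{i,j}^l, r_{i,j}^l : \iljk,\ l \in \{i,j\}\}$ of size $4\binom{\kappa}{2}$ and argue that $H - F$ is a forest. The paper's own proof is a terse two-line version that simply names $F$ and asserts the conclusion, whereas you spell out why each remaining component (vertex-blocks, edge-blocks, connector residues) is a tree; your additional bookkeeping is sound and matches the construction.
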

\begin{proof}
    Let $F = \{s_{i,j}^i, r_{i,j}^i, s_{i,j}^j, r_{i,j}^j \mid \iljk\}$. Removal of $F$ from $H$ results a forest. 
    Therefore, the \emph{fvs} of $H$ is at most $|F| = 4\binom{\kappa}{2}$. 
\end{proof}
\begin{lemma}
\label{lem:DS-D-fvs-forward}
    If $(G, \kappa)$ is a yes-instance of the \mcc~problem, then $(H, \varS, \budget)$ is a yes-instance of the \textsc{DS-D} problem. 
\end{lemma}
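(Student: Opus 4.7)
The plan is to exhibit, given a $\kappa$-clique in $G$, an explicit sliding sequence of exactly $\budget$ steps from $\varS$ that reaches a dominating set of $H$. Write the chosen clique as $\{u_{i,x_i} : i \in [\kappa]\}$ with $u_{i,x_i} \in V_i$, and set $e_{i,j} = u_{i,x_i}u_{j,x_j}$ for each $\iljk$. These indices $x_i$ will tell every block which token-sink to feed, and the whole argument is a matter of routing tokens along the edges prescribed by the construction while keeping the budget tight.

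The sequence has three phases. First, for each $i \in [\kappa]$ I slide the token on $t_i$ onto its neighbor $p_{i,x_i}$, and for each $\iljk$ I slide the token on $t_{i,j}$ onto $p_{e_{i,j}}$; these $\kappa + \kctwo$ slides anchor the chosen vertex/edge in every block. Second, for each $\iljk$ and each $l \in \{i,j\}$ (with $l' = \{i,j\}\setminus\{l\}$) I empty $Q_{l,x_l}^{l'}$ and the $l$-half of $Q_{e_{i,j}}$ into the connector $\conn_{i,j}^l$. By the edge rule of the construction, $x_l$ tokens of $Q_{l,x_l}^{l'}$ are adjacent to $s_{i,j}^l$ and $n-x_l$ to $r_{i,j}^l$, whereas the $l$-half of $Q_{e_{i,j}}$ sends $n-x_l$ tokens to $s_{i,j}^l$ and $x_l$ to $r_{i,j}^l$. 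I route each of these $2n$ tokens by the two-step slide $q \to s_{i,j}^l \to a_{i,j}^{l,\ast}$ or $q \to r_{i,j}^l \to c_{i,j}^{l,\ast}$ onto distinct vertices of $A_{i,j}^l$, respectively $C_{i,j}^l$; processing them one-by-one keeps the single pivot vertex free between consecutive transits, so every slide is legal. This costs $4n$ slides per connector, hence $8n\kctwo$ in total, bringing the grand sum to $\kappa + \kctwo + 8n\kctwo = (8n+1)\kctwo + \kappa = \budget$.

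It remains to verify that the resulting configuration dominates $H$: the tokens newly placed on $p_{i,x_i}$ and $p_{e_{i,j}}$ dominate $t_i$, $t_{i,j}$ and the now-empty sets $Q_{i,x_i}$, $Q_{e_{i,j}}$; the untouched tokens in $Q_{i,x}$ for $x \neq x_i$ and in $Q_e$ for $e \neq e_{i,j}$ dominate the remaining $p_{i,x}$ and $p_e$; and the freshly seated $n$ tokens in each $A_{i,j}^l$ and $C_{i,j}^l$ dominate the pivots $s_{i,j}^l$, $r_{i,j}^l$ together with the degree-one pendants $b_{i,j}^{l,x}$ and $d_{i,j}^{l,x}$ (here it is essential that we have one token per index $x \in [n]$). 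The only point at which the clique property is genuinely used is the pivot-balance identity in the second phase: the contributions $(x_l, n-x_l)$ from the vertex-block $H_l$ and $(n-x_l, x_l)$ from the edge-block $H_{i,j}$ sum to exactly $n$ at each of $s_{i,j}^l$ and $r_{i,j}^l$ precisely because $e_{i,j}$ is incident to $u_{l,x_l}$; without the clique this matching would fail. Beyond this counting identity, the remaining bookkeeping is routine once the sequential scheduling through the pivots is fixed, so this is the only nontrivial obstacle in the proof.
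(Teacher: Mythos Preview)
Your proof is correct and follows the same approach as the paper: slide $t_i \to p_{i,x_i}$ and $t_{i,j} \to p_{e_{i,j}}$, then route exactly $n$ tokens through each pivot $s_{i,j}^l$, $r_{i,j}^l$ into the connector, using the clique property to make the vertex-block and edge-block contributions balance to $n$ at each pivot. The only cosmetic difference is that you seat the $r$-side tokens on $C_{i,j}^l$ (consistent with the text's edge $r_{i,j}^l c_{i,j}^{l,x}$) while the paper's proof seats them on $D_{i,j}^l$ (consistent with the figure, which instead draws $r_{i,j}^l d_{i,j}^{l,x}$); both choices dominate the connector within the same budget, so this is immaterial.
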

\begin{proof}
    Let $C = \subseteq V(G)$ be a $\kappa$-clique in $G$. 
    For each $i \in [\kappa]$, let $u_{i,x_i}$ be the vertex in $C \cap V_i$ for some $x_i \in [n]$. 
    For each $\iljk$, let $e_{i,j} = u_{i,x_i}u_{j,x_j}$.
    For each $i \in [\kappa]$, we slide the token on~$t_i$ to $p_{i,x_i}$. 
    Then, for each $j \not= i \in [\kappa]$, we slide $x_i$-tokens in $Q_{i,x_i}^j$ towards $s^i_{i,j}$ and $n-x_i$-tokens in~$Q_{i,x_i}^j$ towards $r_{i,j}^i$. 
    For each $\iljk$, we slide the token on $t_{i,j}$ to $p_{e_{i,j}}$. 
    
    \pagebreak
    Then, we slide 
    \begin{itemize}
        \item $n-x_i$ tokens in $Q_{e_{i,j}}$ to $s_{i,j}^i$,
        \item $x_i$ tokens in $Q_{e_{i,j}}$ to $r_{i,j}^i$,
        \item $n-x_j$ tokens in $Q_{e_{i,j}}$ to $s_{i,j}^j$, and
        \item $x_j$ tokens in $Q_{e_{i,j}}$ to $r_{i,j}^j$. 
    \end{itemize}
    
    For each $\iljk$, and for each $l \in {i,j}$, $s_{i,j}^l$ receives $x_l$-tokens from $H_l$ and $n-x_l$-tokens from~$H_{i,j}$.
    Similarly, $r_{i,j}^l$ receives $n-x_l$-tokens from $H_l$ and $x_l$-tokens from $H_{i,j}$. 
    Further, we push the $n$-tokens received by $s_{i,j}^l$ to $A_{i,j}^l$ and $n$-tokens received by $r_{i,j}^l$ to $D_{i,j}^l$. 
    The above token slides result the following. 
    For each $i \in [\kappa]$,
    \begin{itemize}
        \item $t_i$ is dominated by $p_{i,x_i}$,
        \item for each $j \not= i \in [\kappa]$, the vertices in $Q_{i, x_i}^{j}$ are dominated by $p_{i, x_i}$, and 
        \item for each $\ell \not= x_i \in [n]$, $p_{i, \ell}$ is dominated by $Q_{i,\ell}^j$ for any $j \not= i$. 
    \end{itemize}
    for each $\iljk$, 
    \begin{itemize}
        \item $t_{i,j}$ is dominated by $p_{e_{i,j}}$,
        \item the vertices in $Q_{e_{i,j}}$ are dominated by $p_{e_{i,j}}$, and 
        \item for each $e \not= e_{i,j} \in E_{i,j}$, $p_{e}$ is dominated by the vertices in $Q_{e}$.
    \end{itemize}
    Finally, let $S' \subseteq V(H)$ be the solution obtained from the above token sliding steps. 
    More precisely,
    $$S' = \bigcup_{i \in [\kappa]}\big\{\{p_{i, x_i}\} \cup (Q_{i} \setminus Q_{i, x_i})\big\} \cup \bigcup_{\iljk} \big\{\{p_{e_{i,j}}\} \cup  (Q_{i,j} \setminus Q_{e_{i,j}})\big\} \cup \bigcup_{\iljk, l \in \{i,j\}} (A_{i,j}^l \cup D_{i,j}^l).$$

    It is clear that the set $S'$ is a dominating set in $H$. 
    Next we count the number of token steps used to obtain $S'$ from $\varS$. 
    In each vertex-block, we spend $(\kappa-1)n+1$ steps to push tokens towards the connectors. 
    Similarly, at each edge-block, we spend $(2n+1)$ steps. 
    At each connectors, we spend $2n$ steps. 
    Therefore, we spend $\kappa\cdot\big((\kappa-1)n+1\big) + \binom{\kappa}{2}\cdot(2n+1) + 2\binom{\kappa}{2} \cdot 2n = (8n+1)\binom{\kappa}{2} + \kappa = \budget$. 
    Hence, $(H, \varS, \budget)$ is a yes-instance of DS-D problem. 
\end{proof}

\begin{lemma}
\label{lem:DS-D-fvs-reverse}
    If $(H, \varS, \budget)$ is a yes-instance of the DS-D problem, then $(G, \kappa)$ is a yes-instance of the \mcc~problem. 
\end{lemma}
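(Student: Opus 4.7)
The plan is a tight budget analysis: show that roughly $8n\kctwo$ slides are forced just to populate the connectors, that the residual $\kctwo + \kappa$ slides leave no slack beyond placing a single $p$-vertex per block into the dominating set, and then extract the clique from a counting identity at each connector.

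First I would establish the connector lower bound. In each $\conn_{i,j}^l$, every vertex of $B_{i,j}^l$ has its unique neighbor in $A_{i,j}^l$ (and similarly for the $C$-$D$ pairs), so the dominating set must meet each of the $2n$ matched pairs. Since $A_{i,j}^l \cup B_{i,j}^l \cup C_{i,j}^l \cup D_{i,j}^l \cup \{s_{i,j}^l, r_{i,j}^l\}$ starts empty, every such token must be slid in through $s_{i,j}^l$ or $r_{i,j}^l$ from a $q$-vertex, costing at least two slides (three if it is parked in $B$ or $C$). Summing over all $2\kctwo$ connectors forces at least $8n\kctwo$ slides, so the residual budget is at most $\kctwo + \kappa$.

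Next I would argue that the residual is both necessary and tight, forcing each block to commit to a single representative. Whenever $Q_{i,x}^j$ loses a token, the vacated position is adjacent only to $p_{i,x}$ and to one of $s_{i,j}^i, r_{i,j}^i$; in a tight solution neither $s$ nor $r$ retains a token (placing one there helps dominate neither $B$ nor $C$ and still costs slides), so $p_{i,x}$ must lie in the dominating set. The cheapest way to achieve this is to slide $t_i$ to some $p_{i,x_i}$ in one step, which simultaneously keeps $t_i$ dominated. Consequently each distinct index $x$ touched in $H_i$, and each distinct edge of $E_{i,j}$ whose $Q_e$ is touched, costs at least one residual slide; since there are exactly $\kappa + \kctwo$ such blocks and only $\kctwo + \kappa$ residual slides, each block touches exactly one $x_i$ or one $e_{i,j}$, and all of its connector-bound tokens originate in $Q_{i,x_i}$ or $Q_{e_{i,j}}$ respectively.

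The main obstacle, and the final step, is the counting identity at each connector that translates the chosen indices into edges of $G$. Writing $e_{i,j} = u_{i,z_i} u_{j,z_j}$, the incidence rules give that $Q_{i,x_i}^j$ supplies $x_i$ tokens to $s_{i,j}^i$ and $n - x_i$ to $r_{i,j}^i$, while the first half of $Q_{e_{i,j}}$ supplies $n - z_i$ tokens to $s_{i,j}^i$ and $z_i$ to $r_{i,j}^i$. Requiring $n$ tokens at each of $s_{i,j}^i$ and $r_{i,j}^i$ (to fill $A_{i,j}^i$ and $D_{i,j}^i$) gives $x_i + (n - z_i) \ge n$ and $(n - x_i) + z_i \ge n$, forcing $x_i = z_i$; an identical argument at $\conn_{i,j}^j$ yields $x_j = z_j$. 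Hence $u_{i,x_i} u_{j,x_j} \in E(G)$ for every $\iljk$, so $\{u_{1,x_1}, \dots, u_{\kappa,x_\kappa}\}$ forms a $\kappa$-clique and $(G,\kappa)$ is a yes-instance of the \mcc{} problem.
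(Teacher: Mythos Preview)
Your proposal is correct and follows essentially the same three-step structure as the paper's proof: the $8n\kctwo$ connector lower bound, the residual-budget argument forcing one $p$-vertex per block, and the counting identity $x_i + (n-z_i) = n$ at each connector. Your formulation of the last step as the pair of inequalities $x_i + (n-z_i) \ge n$ and $(n-x_i) + z_i \ge n$ is slightly more explicit than the paper's, and your observation that $s_{i,j}^l$ and $r_{i,j}^l$ cannot retain a token in a tight solution (since parking a token there still leaves all of $B$ or $C$ undominated) is a detail the paper glosses over, but the overall argument is the same.
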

\begin{proof}
    Let $S^*$ be a feasible solution for the instance $(H, \varS, \budget)$ of the DS-D problem. 
    At each connector~$\conn_{i,j}^l$ for $\iljk$ and $l \in \{i,j\}$, at least $2n$ tokens need to be slid from either vertex-blocks or edge blocks.
    It is clear that every token must move at least 2 steps to reach the sets $A_{*,*}^*$ and $D_{*,*}^*$ in order to dominate the vertices in the set $B_{*,*}^*$ and $C_{*,*}^*$, respectively. 
    This saturates a budget of $4n\cdot 2\kctwo = 8n\kctwo$. 
    Therefore, we left with exactly $\kappa+\kctwo$ budget to adjust the tokens on the vertex blocks and edge blocks. 
    For any $\iljk$, let $q_{i,x}^{j,z}$ for some integers $x,z \in [n]$ be a vertex that looses the token where the token is moved to some vertex in a connector. 
    Since none of its neighbors have token, we need to slide a token to the vertex or to it's neighbors. 
    This cost at least one token step. 
    By construction of the vertex-block $H_i$, by sliding a token to the vertex $p_{i,x}$ for some $x \in [n]$, one can release at most $n(\kappa-1)$ tokens from the neighboring set~$Q_{i,x}$. 
    Similarly, on an edge-block $H_{i,j}$ for some $\iljk$, by sliding a token to the vertex $p_e$ for some $e \in E_{i,j}$, one can release at most $2n$ tokens from the neighboring set $Q_e$. 
    This implies that by sliding at most $\kappa$ tokens on the vertex-blocks, one can release at most $\kappa\cdot n(\kappa-1) = 2n\kctwo$ token from the vertex-blocks. 
    Similarly, by sliding at most $\kctwo$ tokens on the edge-blocks, one can release at most $2n\kctwo$ tokens from the edge-blocks. 
    Therefore, we need to slide exactly one token in each vertex-block and each edge-block. 
    
    For each $i\in [\kappa]$, let $p_{i,x_i}$ for some $x_i \in [n]$ be the vertex in $H_i$ that gets token in $S^*$ and releases all the tokens in $Q_{i,x_i}$. 
    Similarly, for each $\iljk$, let $p_e$ for some $e=u_{i,z_i}u_{i,z_j} \in E_{i,j}$ with $z_i,z_j \in [n]$ be the vertex in $H_{i,j}$ that gets token in $S^*$ and releases all tokens in $Q_{e}$. 
    Consider the connector $\conn_{i,j}^i$. 
    The set $Q_{i,x_i}^j$ pushes $x_i$ tokens to $s_{i,j}^i$ and $n-x_i$ tokens to $r_{i,j}^i$. 
    The set $Q_e$ pushes $z_i$ tokens to $r_{i,j}^i$ and $n-z_i$ tokens to $s_{i,j}^i$. 
    The number of tokens passed through $s_{i,j}^i$ to $A_{i,j}^i$ is $x_i + (n - z_i)$. 
    Since $A_{i,j}^i$ need $n$ tokens, it is mandatory that $x_i=z_i$. 
    This equality should hold for every $i$. 
    Therefore, for each $\iljk$, there exist an edge $u_{i,x_i}u_{j,x_j}$. 
    Hence $(G,\kappa)$ is an yes-instance of the \mcc~problem.     
\end{proof}
The proofs of Lemmas~\ref{lem:DS-D-fvs-bound}, \ref{lem:DS-D-fvs-forward} and \ref{lem:DS-D-fvs-reverse} complete the proof of Theorem~\ref{thm:DS-D-fvs}. 

\section{Shortest Path Discovery}
\label{sec:sp}
Finally, we show that \textsc{SP-D} does not admit a polynomial kernel unless $\NP \subseteq \coNP/\poly$. The employed or-cross-composition is similar to the construction in the hardness proof of \textsc{SP-D} presented in~\cite[Theorem 4.2]{grobler2023solution}. 
We denote an instance of \textsc{SP-D} by $(G, S, b, a, b)$ to emphasize that the solution must be a shortest path between the vertices $a$ and $b$ in $V(G)$ (for consistency with the previous sections we do not speak of $s$-$t$-connectivity but use $t$ for the number of instances in the cross composition). 

\begin{theorem}\label{thm:SP-kb-general}
    There exists an or-cross-composition from \textsc{Hamiltonian Path} into \textsc{SP-D}, parameterized by $k + \budget$. Consequently, \textsc{SP-D} does not admit a polynomial kernel with respect to $k + \budget$, unless $\NP \subseteq \cp$.
\end{theorem}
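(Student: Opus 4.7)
The plan is to construct an or-cross-composition from the \NP-complete \textsc{$s$-$t$ Hamiltonian Path} problem into \textsc{SP-D} parameterized by $k + \budget$; the kernel lower bound then follows from \Cref{thm:no-poly-kernel-theorem}. As polynomial equivalence relation in the sense of \Cref{def:equivalence-relation} I would declare two Hamiltonian Path instances equivalent iff their input graphs have the same number of vertices, which partitions any finite set $S$ of instances into at most $\max_{x\in S}|x|$ classes.

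Given $t$ equivalent instances $(G_\ell, \sigma_\ell, \tau_\ell)_{\ell \in [t]}$ on $n$ vertices each, I would build an \textsc{SP-D} instance $(G^*, S^*, \budget^*, a^*, b^*)$ as follows. Introduce global portal vertices $a^*$ and $b^*$, together with symmetric binary selector trees of depth $\lceil \log_2 t \rceil$ attached to $a^*$ (routing to entry points $\sigma_1^*, \ldots, \sigma_t^*$) and to $b^*$ (routing from exit points $\tau_1^*, \ldots, \tau_t^*$), designed so that every $a^*$--$b^*$ path in $G^*$ enters exactly one pair $(\sigma_\ell^*, \tau_\ell^*)$. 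Between $\sigma_\ell^*$ and $\tau_\ell^*$ attach a modified copy $\hat G_\ell$ of $G_\ell$ equipped with vertex- and edge-gadgets following the hardness reduction of Grobler et al.~\cite[Theorem~4.2]{grobler2023solution}: extra tokens are placed to block the natural shortest route in such a way that the obstruction can be cleared, and the canonical shortest $a^*$--$b^*$ path through $\hat G_\ell$ realised, within the allotted slide budget iff $G_\ell$ admits a Hamiltonian $\sigma_\ell$--$\tau_\ell$ path. Finally, $S^*$ is a fixed standby configuration of $k$ tokens on an $a^*$--$b^*$ walk that is not yet a shortest path, and $\budget^* = \Oof(n + \log t)$ is chosen tight enough that only a legitimate Hamiltonian-driven rearrangement fits within the budget.

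The parameters satisfy $k + \budget^* = \Oof(n + \log t)$, polynomial in $\max_\ell |V(G_\ell)| + \log t$ as required by \Cref{def:or-cross-composition}, and $G^*$ is constructible in time polynomial in $\sum_\ell |G_\ell|$. For the equivalence, a Hamiltonian path in some $G_\ell$ supplies an explicit slide sequence of length at most $\budget^*$ ending on the canonical shortest $a^*$--$b^*$ path through $\hat G_\ell$; conversely, any discovery sequence within budget must first commit to a single copy $\hat G_\ell$, since the selector rigidity prevents any cross-copy rearrangement under the tight budget, and the induced motion inside $\hat G_\ell$ then traces out a Hamiltonian $\sigma_\ell$--$\tau_\ell$ path in $G_\ell$.

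The main obstacle is calibrating the interior gadgets of each $\hat G_\ell$ jointly with the selector so that three requirements hold simultaneously: (i) the shortest $a^*$--$b^*$ distance in $G^*$ equals a common value $L = \Oof(n + \log t)$ independent of $\ell$; (ii) moving the initial tokens onto a shortest path through a specific $\hat G_\ell$ costs exactly the slides required to perform a Hamiltonian traversal of $G_\ell$; and (iii) any deviation---whether splitting work between two copies $\hat G_\ell, \hat G_{\ell'}$ or using a non-Hamiltonian order inside one copy---exceeds $\budget^*$. Striking this balance while keeping $k + \budget^*$ polynomial in $n + \log t$ is the delicate technical core, which we expect to resolve by refining the structural gadgets underlying Grobler et al.'s hardness construction.
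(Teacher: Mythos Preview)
Your proposal is a plan, not a proof: the ``main obstacle''---calibrating the interior gadgets so that the budget forces a Hamiltonian traversal while ruling out cross-copy work---is explicitly deferred, and no concrete construction is given beyond a pointer to \cite{grobler2023solution}. The phrase about ``extra tokens placed to block the natural shortest route'' also reads oddly for \textsc{SP-D}: the tokens must themselves \emph{constitute} a shortest path, not clear an obstruction from one, so the mechanism has to be about which vertices each token can reach under the budget, not about blocking.

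The paper's construction is both complete and much simpler, and in particular uses no selector trees. For each $n$-vertex instance $G_j$ it builds a layered graph $H_j$ on vertices $(x,y)$, $x,y\in[n]$, with $(x,y)$ adjacent to $(x{+}1,y')$ iff $yy'\in E(G_j)$. One global vertex $a$ is joined directly to column~$1$ of every $H_j$ and one global $b$ to column~$n$ of every $H_j$; the starting tokens sit on $a$, $b$, and on $n$ fresh vertices $1,\dots,n$, where vertex $x$ is joined by a length-$n$ path to every $(x,y)$ across all $H_j$. With $\budget=n^2$ each of the $n$ moving tokens has exactly $n$ slides available, so token $x$ can only land on some $(x,y_x)$; since every shortest $a$--$b$ path stays inside a single $H_j$, all tokens land in the same copy, and the paper argues that the resulting sequence $y_1,\dots,y_n$ is a Hamiltonian path in $G_j$. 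Instance selection is thus implicit in the shortest-path structure, the parameter is $k+\budget=n^2+n+2$ with no $\log t$ term at all, and none of the three-way balancing you worry about is needed.
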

\begin{proof}
    Let $\Rmc$ be the polynomial equivalence relation whose equivalence classes are defined by graphs with the same number of vertices, that is, two graphs $G$ and $H$ are equivalent with respect to $\Rmc$ if and only if $|V(G)| = |V(H)|$.
    Let $G_1, \dots, G_t$ be a sequence of instances of \textsc{Hamiltonian Path}, where every $G_j$, $j \leq t$, is an $n$-vertex graph, say $V(G_j) = \{1, \dots, n\}$.
    For every $G_j$ we create a new graph $H_j$ that consists of $n^2$ vertices, say $(x,y)$ for $x, y \leq n$. For every $x < n$ and $y, y' \leq n$, we connect the vertex $(x,y)$ with the vertex $(x+1, y')$ if and only if $yy' \in E(G_j)$.
    
    We construct the following graph $G$. First, $G$ consists of a disjoint union of all $H_j$, $j \leq t$. Furthermore, we add two fresh vertices~$a$ and $b$, as well as $n$ fresh vertices (we simply call them $\{1, \dots, n\}$, too) to the vertex set of $G$.
    For every $y \leq n$ we connect the vertex $a$ with every vertex $(1, y)$ in every $H_j$. Also, for every $y \leq n$ we connect every vertex $(n, y)$ in every $H_j$ with $b$. Finally, for every $x \leq n$ we connect the vertex $x$ in $G$ with every vertex $(x, y)$ in every $H_j$ for all $y \leq n$ with a path of length $n$. This finishes the construction of $G$.
    Let $S = \{a,b,1, \dots, n\}$, hence $k = n + 2$ and $\budget = n^2$.
    Observe that the size of every $G_j$ is (given a suitable encoding) bounded by $n^2$. Hence, the parameter $k + b = n^2 + n + 2$ is bounded by a polynomial in $\max_{j=1}^t |G_j| + \log t$.
    We claim that $(G, S, b, a, b)$ is a yes-instance of \textsc{SP-D} if and only if at least one $G_j$ admits a Hamiltonian path.

    We begin with the backward direction, that is let $G_j$ be a Hamiltonian graph with Hamiltonian path $i_1 \dots i_n$. 
    Then we can move the token on vertex $x$ in $G$ 
    to $(x,i_x)$ in $H_j$ using $n$ slides for each token. This forms a shortest $a$-$b$-path in $G$ which is discovered with the budget $b = n^2$. 

    For the other direction assume that $(G, S, b, a, b)$ is a yes-instance of \textsc{Shortest Path Discovery} and observe that every shortest $a$-$b$-path in $G$ (which is of length $n+1$ and hence uses $n$ internal vertices) uses internal vertices from one $H_j$ only. 
    By the choice of the budget and the connections between vertices $x$ and $(x, y)$, every solution can only move the token from vertex $x$ to a vertex of the form $(x, y)$ for some $y \leq n$ in $H_j$. Let $a (1, y_1) (2, y_2) \dots (n, y_n) t$ be the discovered $a$-$b$-path in $G$. By construction, we have $y_i \neq y_{i'}$ for $i \neq i'$. Hence $y_1 \dots y_n$ is a Hamiltonian path in $G_j$.
\end{proof}

\section{Matching Discovery}
\label{sec:mat}

Grobler et al.~\cite{grobler2023solution} show that \textsc{Mat-D} is $\W[1]$-hard with respect to the parameter $\budget$ on $3$-degenerate graphs, yet it is in $\FPT$ with respect to parameter $k$ on general graphs. 
We show that, similarly to \textsc{VC-D}, \textsc{Mat-D} admits a polynomial kernel with respect to the parameter $k$.

In a manner akin to \Cref{thm:nowheredense-IS-k}, our kernelization algorithm for \textsc{Mat-D} with respect to the parameter $k$ will remove from the graph vertices that are irrelevant for every token.
Here however, to find irrelevant vertices or edges, we will make use of a classical result of Erd\H{o}s and Rado~\cite{erdos1960intersection} known in the literature as the \emph{sunflower lemma}. 
\begin{theorem}[\cite{erdos1960intersection}]
\label{lem:sunflower-lemma}
Let $\mathcal{A}$ be a family of sets (without duplicates) over a universe $\mathcal{U}$, such that each set in $\mathcal{A}$ has cardinality at most $d$. If $|\mathcal{A}| > d!(p-1)^d$, then $\mathcal{A}$ contains a sunflower with $p$ petals and such a sunflower can be computed in time polynomial in $|A|$, $|U|$, and $p$. 
\end{theorem}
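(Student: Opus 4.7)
The plan is to prove the sunflower lemma by induction on the maximum set size $d$, following the classical argument of Erd\H{o}s and Rado, and to observe that the induction is constructive and therefore yields a polynomial-time algorithm.

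For the base case $d=1$, every set in $\mathcal{A}$ has size at most $1$, so the non-empty members of $\mathcal{A}$ are singletons. If $|\mathcal{A}| > (p-1)^1 = p-1$, then $\mathcal{A}$ contains at least $p$ distinct sets, and any $p$ singletons (or any collection that includes the possible empty set together with $p-1$ singletons, handled by a slight but routine case split) form a sunflower with empty core and $p$ petals.

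For the inductive step, suppose the lemma holds for set size at most $d-1$ and let $|\mathcal{A}| > d!(p-1)^d$. First I would greedily pick a maximal subfamily $\mathcal{B} \subseteq \mathcal{A}$ of pairwise disjoint sets. If $|\mathcal{B}| \geq p$, then $\mathcal{B}$ itself is a sunflower with empty core and at least $p$ petals and we are done. Otherwise, let $Y = \bigcup \mathcal{B}$, so $|Y| \leq d(p-1)$. By the maximality of $\mathcal{B}$, every $A \in \mathcal{A}$ intersects $Y$. By the pigeonhole principle, some element $y \in Y$ lies in at least
\[
\frac{|\mathcal{A}|}{|Y|} \;>\; \frac{d!(p-1)^d}{d(p-1)} \;=\; (d-1)!(p-1)^{d-1}
\]
sets of $\mathcal{A}$. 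Collect these sets and delete $y$ from each to obtain a family $\mathcal{A}'$ of sets of size at most $d-1$ with $|\mathcal{A}'| > (d-1)!(p-1)^{d-1}$. By the inductive hypothesis, $\mathcal{A}'$ contains a sunflower with $p$ petals; re-inserting $y$ into each of its members yields a sunflower with $p$ petals in $\mathcal{A}$ (whose core simply contains $y$ in addition).

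For the algorithmic claim, I would observe that each step is constructive: building the maximal disjoint subfamily $\mathcal{B}$ can be done by a single greedy scan, computing $Y$ and the most frequent element $y \in Y$ takes time polynomial in $|\mathcal{A}| \cdot |\mathcal{U}|$, and the recursion decreases $d$ by one at each step, giving at most $d$ levels of recursion. Thus the total running time is polynomial in $|\mathcal{A}|$, $|\mathcal{U}|$, and $p$. The main subtlety — really the only non-mechanical part of the argument — is the branching on $|\mathcal{B}|$ versus $p$ in the inductive step and the correct pigeonhole bookkeeping to ensure the inductive hypothesis kicks in with the right threshold $(d-1)!(p-1)^{d-1}$; everything else is routine.
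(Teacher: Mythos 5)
The paper never proves this statement at all: it is imported as a black box from Erd\H{o}s and Rado (in the algorithmic formulation standard in the parameterized-complexity literature), so there is no in-paper argument to compare yours against. Your proposal is precisely that classical proof, and in essence it is correct: the dichotomy on a maximal pairwise-disjoint subfamily $\mathcal{B}$, the bound $|Y|\le d(p-1)$, the pigeonhole count $|\mathcal{A}|/|Y| > d!(p-1)^d/(d(p-1)) = (d-1)!(p-1)^{d-1}$, the lifting of the sunflower by re-inserting $y$ (which adds $y$ to the core and leaves every petal unchanged), and the algorithmic remark (one greedy scan per level, recursion depth at most $d$, hence time polynomial in $|\mathcal{A}|$, $|\mathcal{U}|$, $p$) are all as in Erd\H{o}s--Rado. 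One small check you omit and should state: deleting $y$ creates no duplicates, since $A_1\setminus\{y\}=A_2\setminus\{y\}$ together with $y\in A_1\cap A_2$ forces $A_1=A_2$; this matters because the hypothesis is for duplicate-free families and your inductive call relies on it.

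The one genuine blemish is the empty set, which you dismiss as ``a slight but routine case split''. Under the paper's definition a sunflower must have nonempty petals, so $\varnothing$ can never be a member of any sunflower; consequently the statement as transcribed (cardinality \emph{at most} $d$) is actually false in the degenerate case $\varnothing\in\mathcal{A}$: for $d=1$, $p=2$, $\mathcal{A}=\{\varnothing,\{x\}\}$ we have $|\mathcal{A}|>1!(p-1)^1$ but there is no sunflower with two petals, so no case split can rescue your base case there. The same corner resurfaces inside your induction: if $\{y\}\in\mathcal{A}$, then $\mathcal{A}'$ contains $\varnothing$, and you are invoking the inductive hypothesis exactly in the regime where it can fail (also, if $\varnothing\in\mathcal{A}$ it lands in every maximal disjoint subfamily, spoiling the ``$|\mathcal{B}|\ge p$ implies sunflower'' branch). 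The clean way out is to prove the lemma under the textbook hypothesis that every set has cardinality exactly $d$ (or at least to exclude $\varnothing$ and treat the possible creation of $\varnothing$ by the deletion step explicitly), in which case your argument is verbatim the classical proof; nothing is lost for this paper, since the lemma is applied only in \Cref{thm:Mat-k-general} to families of edges, i.e., $2$-element sets.
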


\begin{theorem}\label{thm:Mat-k-general}
\textsc{Mat-D} admits a kernel of size $\mathcal{O}(k^5)$.
\end{theorem}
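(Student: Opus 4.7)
The kernelization will follow the same blueprint as \Cref{thm:VC-parameter-k} and \Cref{thm:nowheredense-IS-k}: first argue that only a polynomially bounded region of $G$ around the initial tokens is relevant, and then apply the sunflower lemma (\Cref{lem:sunflower-lemma}) to trim each distance layer to a polynomial number of edges.

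The first step is to bound the search radius. Since the target is a matching of size $k$ consisting of $k$ edges on only $2k$ vertices, a rerouting argument shows that one may assume that in a shortest discovery sequence no single token slides more than $O(k)$ steps along incident edges. Intuitively, if a token needed to travel further, its movement path would pass through a region disjoint from the endpoints of every other token's target edge, and the token could have been assigned instead to a closer edge of the matching, contradicting minimality. Consequently, it suffices to keep only edges lying within edge-distance $O(k^2)$ of some $s \in S$, together with shortest paths realising these distances, and we may assume $\budget \leq O(k^2)$.

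The main reduction step treats each token-layer pair. For every $s \in S$ and every $i \in \{0, 1, \ldots, O(k^2)\}$, examine the layer $E(s, i)$ of edges at distance exactly $i$ from $s$. Call an edge $e \in E(s, i)$ \emph{blocked} by the remaining $k-1$ tokens' final positions if $V(e)$ contains an endpoint of another target edge; at most $2(k-1)$ vertices of $G$ can cause blocking. Apply \Cref{lem:sunflower-lemma} to the family $\{V(e) : e \in E(s, i)\}$ of 2-element sets with parameters $d = 2$ and $p = 2k$: if $|E(s, i)|$ exceeds $2!(2k-1)^2 = O(k^2)$, the family contains a sunflower with at least $2k$ petals. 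Since the common core has size $\leq 2$, the petals partition into at least $2k$ pairwise-disjoint pieces, and at most $2(k-1) < 2k$ of them can meet the blocking vertex set, so at least one petal edge $e^\star$ is free. An analogue of \Cref{lem:one-sunflower-petal} then shows that $e^\star$ is irrelevant with respect to $s$: any discovery sequence placing $s$ on $e^\star$ can be modified to place $s$ on a free petal edge at the same distance, keeping the matching property and the number of slides intact. We may therefore delete $e^\star$ from $G$.

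Iterating this reduction terminates in polynomial time. The resulting graph $G'$ has at most $k \cdot O(k^2) \cdot O(k^2) = O(k^5)$ edges (plus shortest paths realising the $O(k^2)$ layer distances, which are already absorbed in the layer counts). Equivalence of $(G', S, \budget)$ and $(G, S, \budget)$ follows by the familiar two-way argument: discovery sequences in $G'$ survive in $G$ because $G' \subseteq G$, and any shortest discovery sequence in $G$ can be rerouted edge by edge into one in $G'$ using the irrelevance property. The technically most delicate step will be the individual-token distance bound, because unlike the independence condition treated in \Cref{thm:nowheredense-IS-k}, the matching constraint couples tokens through shared endpoints rather than adjacencies; making the rerouting argument work requires a careful verification that swapping the target of an ``over-travelling'' token preserves the matching property and does not increase the total number of slides.
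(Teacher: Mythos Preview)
Your plan is the paper's argument: bound the per-token search radius, then apply the sunflower lemma (\Cref{lem:sunflower-lemma}) to each edge-layer $E(s,i)$ to keep $O(k^2)$ edges per layer, exactly as in \Cref{thm:nowheredense-IS-k} but with edges and their two-element endpoint sets in place of vertices and closed neighborhoods.

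The one point where your bookkeeping slips is the layer range. You correctly assert that no single token needs to slide more than $O(k)$ steps, but then you index layers by $i\in\{0,\ldots,O(k^2)\}$. The paper uses the per-token bound and takes $i\in[3k]$: the other $k-1$ final matching edges occupy at most $k-1$ of the layers $E(s,1),\ldots,E(s,3k)$, and each such edge can be incident only to edges in the two neighboring layers, so at most $3(k-1)$ layers are blocked and some layer within distance $3k$ contains a valid replacement (this is the edge analogue of \Cref{lem:IS-always-at-distance-3k}). With $3k$ layers the count is $k\cdot 3k\cdot O(k^2)$ layer edges, times an extra $O(k)$ for the shortest-path edges that must be added back explicitly; your claim that these paths are ``already absorbed in the layer counts'' is not justified once the layers have been pruned independently. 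Your present arithmetic reaches $O(k^5)$ only because two errors cancel: you use $O(k^2)$ layers (too many) but omit the path edges (too few). Fixing the layer range to $3k$ and counting the paths gives the paper's $O(k^5)$ for the right reason.
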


\begin{proof}
Let $(G, S, \budget)$ be an instance of \textsc{Mat-D}.
Without loss of generality, we assume the graph $G$ to be connected.
For each vertex $s \in S$, and integer $i \in [3k]$, we compute $E(s,i)$.
We maintain the invariant that we remove from $E(s,i)$ for each $s \in S$ and $i \in [3k]$, irrelevant vertices with respect to~$s$.

We remove an irrelevant edge with respect to a vertex $s \in S$ from $E(s,i)$ for an integer $i \in [3k]$ as follows. 
From the sunflower lemma (\Cref{lem:sunflower-lemma}), if $|E(s,i)| > 8k^2$, then it has a sunflower with $2k + 1$ petals that can be computed in polynomial time in $k$. 
We arbitrarily choose one edge $e$ corresponding to one petal of the sunflower and remove it from $E(s,i)$.
To see why $e$ is irrelevant with respect to $s$, assume that the token on $s$ slides to $e \in C_\ell$, where $C_\ell$ is a matching in $G$. 
The $2k - 2$ vertices of $C_\ell \setminus \{e\}$ can be incident to at most $2k - 2$ of the edges corresponding to the petals of the sunflower, leaving at least one petal with an edge $e_1$ that can replace $e$ in the matching $C_\ell$ in $G$.
Since also all edges in $E(s,i)$ are at the same distance $i$ from $s$, replacing $e$ by $e_1$ will not increase the number of slides needed to achieve $C_\ell \setminus \{e\} \cup \{e_1\}$.

We form the kernel $(G', S, \budget)$ of the original instance $(G, S, \budget)$ as follows.
First, note that for a token $s \in S$ and an edge $e \in E(H) \cap C_\ell$ such that $d(s,e) > 3k$, the edges in $C_\ell \setminus \{e\}$ can appear in at most $k-1$ of the $3k$ sets of edges $E(s,i)$ for $i \in [3k]$ and every such edge that appears in a set $E(s,\mathfrak{i})$ for a specific $\mathfrak{i} \in [3k]$ can be incident to at most all the edges in $E(s,\mathfrak{i}-1)$ and $E(s,\mathfrak{i}+1)$. 
This implies that the token on $s$ cannot move towards any edge of at most $3k - 3$ of the $3k$ sets $E(s,i)$ for $i \in [3k]$ (as these contain tokens and thus might result in incident tokens) and thus there exists an edge $e_1$ which the token on $s$ can move to while maintaining a matching in $C_\ell \setminus \{e\} \cup \{e_1\}$.
Thus, in any solution to $(G, S, \budget)$, if a token on an edge $s \in S$ moves to an edge $e \in C_\ell$ such that $d(s, e) > 3k$, it can instead move towards an edge $e_1 \in E(H)$ such that $d(s, e_1) \le 3k$, while keeping the rest of the solution unchanged. 
Consequently, we set $E(G') = \bigcup_{s \in S,i \in [3k]} E(s,i) \text{ } \cup \text{ } S$ and for each edge $e \in E(s,i)$, for $s \in S$ and $i \in [3k]$, we add to $E(G')$ at most $i$ edges that are on the shortest path from $s$ to $e$ (if such edges are not already in $E(G')$).
$G'$ is the subgraph of $G$ induced by the edges in $E(G')$.
By the end of this process, $|E(G')| \le k + 9k^3 \cdot 8k^2$, as for each $s \in S$ and $i \in [3k]$, $E(s,i) \le 8k^2$ and for each edge of the latter $3k^2$ sets of edges, we added to $E(G')$ at most $3k - 1$ edges that are on a shortest path from that edge to the edge $s$.
$(G', S, b)$ is a kernel as only edges that are irrelevant with respect to every token in $S$ might not be in $E(G')$ and all edges needed to move tokens from edges in $S$ towards a matching using only $b$ slides are present in $E(G')$.
\end{proof}

\section{Vertex Cut Discovery}
\label{sec:cut}
Grobler et al.~\cite{grobler2023solution} showed that \textsc{VCut-D} is $\W[1]$-hard with respect to parameter~$\budget$ on $2$-degenerate bipartite graphs but is in $\FPT$ with respect to the parameter $k$ on general graphs. 
We show that the problem admits no polynomial kernels unless $\NP \subseteq \cp$.
We denote an instance of \textsc{VCut-D} by $(G, S, b, a_1, b_1)$ to emphasize that the solution must be a vertex cut between $a_1$ and~$b_1$ in $V(G)$. 

Given a graph $H$ and an edge coloring $\phi: E(H) \rightarrow [c]$, we say $\phi$ is proper if, for all distinct edges $e, e_1 \in E(H)$, $\phi(e) \neq \phi(e_1)$ whenever $e$ and $e_1$ share a vertex.
We form our or-cross-composition from the \textsc{Rainbow Matching} problem, which is \NP-complete even on properly colored 2-regular graphs and where every $i \in [c]$ is used exactly twice in the coloring~\cite{van2013complexity}:
\\[1ex]
\textsc{Rainbow Matching:}
\newline
\textbf{Input}: Graph $H$, a proper edge coloring $\phi$ and an integer $\kappa$.
\newline
\textbf{Question}: Does $H$ have a rainbow matching, i.e., a matching whose edges have distinct colors, with at least $\kappa$ edges?
\begin{theorem}\label{thm:cross_composition-Vcut-k}
There exists an or-cross-composition from \textsc{Rainbow Matching} into \textsc{VCut-D} where the parameter is the number of tokens, $k$. Consequently, \textsc{VCut-D} does not admit a polynomial kernel with respect to $k$, unless $\NP \subseteq \cp$.
\end{theorem}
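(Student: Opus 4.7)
I would build, from $t$ equivalent \textsc{Rainbow Matching} instances, a single \textsc{VCut-D} instance $(G, S, b, a_1, b_1)$ such that a vertex cut separating $a_1$ from $b_1$ can be discovered from $S$ within $b$ slides if and only if at least one input instance admits a rainbow matching of size $\kappa$. The critical constraint is that $k = |S|$ must be bounded by a polynomial in $\max_j |x_j| + \log t$, i.e.\ it must depend on $n$ and $\log t$ only, not on $t$ itself.

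\textbf{Equivalence relation and shared skeleton.} Two instances $(H_j,\phi_j,\kappa_j)$ belong to the same class of $\Rmc$ precisely when they agree on $n$, $m$, $\kappa$, and $c$ (note that for $2$-regular properly coloured graphs with every colour used twice, $m=n$ and $c=n/2$ are determined by $n$). After normalising vertex labels to $[n]$ and colour labels to $[c]$, each instance is captured by a $c \times 2$ table recording, for every colour $i$, the two endpoints of the $i$-coloured edge $e_{i,j}$ of $H_j$. This yields only $n^{\mathcal{O}(1)}$ equivalence classes.

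\textbf{Construction of $G$.} The plan is to create two terminal vertices $a_1$ and $b_1$ together with a colour gadget $\Gamma_i$ for each $i \in [c]$, wired in parallel between $a_1$ and $b_1$ so that any $a_1$--$b_1$ cut must contain a ``blocker'' vertex from each of some $\kappa$ of the $\Gamma_i$'s. Inside $\Gamma_i$, I place, for every $j \in [t]$, a gadget representing the colour-$i$ edge $e_{i,j}$ of $H_j$: its two endpoints appear as two ``conflict vertices'' that are linked, across different colour gadgets, to every other edge of $H_j$ sharing that endpoint. The crucial point is that the conflict links are arranged so that two blockers chosen from $\Gamma_i$ and $\Gamma_{i'}$ can simultaneously cut $a_1$ from $b_1$ only if the corresponding edges $e_{i,j}$ and $e_{i',j'}$ come from the same instance ($j = j'$) and are vertex-disjoint in $H_j$.

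\textbf{Token placement and selector.} The initial set $S$ places $\kappa$ tokens on canonical ``source'' vertices, one for each to-be-cut colour slot, together with an \emph{instance selector} using $\mathcal{O}(\log t)$ tokens that encode, in binary, the index $\mathfrak{j} \in [t]$ of the chosen instance; the selector is wired by a bounded-degree binary-decision tree whose output edges gate the access of slot tokens to the instance-$\mathfrak{j}$ sub-gadget of each $\Gamma_i$. Thus $k = \kappa + \mathcal{O}(\log t) = \mathrm{poly}(n) + \mathcal{O}(\log t)$, meeting the composition requirement. The budget $b$ is set so that each slot token has just enough slides to travel into some $\Gamma_i$ and land on the blocker for $e_{i,\mathfrak{j}}$; an auxiliary ``supplier''-style gadget as in \Cref{sec:foundational} can be used to pay for the selector tokens without inflating $k$.

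\textbf{Correctness and the main obstacle.} Forward, given a rainbow matching $\{e_{i_1,\mathfrak{j}}, \dots, e_{i_\kappa,\mathfrak{j}}\}$ of $H_\mathfrak{j}$, I would set the selector to $\mathfrak{j}$ and slide each slot token along the unique shortest route into $\Gamma_{i_\ell}$ landing on the blocker of $e_{i_\ell,\mathfrak{j}}$; the resulting configuration is a vertex cut by design. Backward, the tightness of $b$ forces every token to move monotonically along a shortest route, so each slot token must end on a blocker in some $\Gamma_{i_\ell}$; the conflict wiring then forces pairwise vertex-disjointness and the selector forces a single common instance index, yielding a size-$\kappa$ rainbow matching. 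The main obstacle is precisely the joint enforcement of (i) pairwise vertex-disjointness across the $\kappa$ slots and (ii) single-instance consistency across all slots without inflating $k$; implementing (ii) via an $\mathcal{O}(\log t)$-token selector is the delicate part, and verifying that no ``cross-instance'' cut can be formed cheaply will require a careful analysis of shortest-path distances between blockers belonging to different instances, ensuring such mixed cuts strictly exceed the budget $b$.
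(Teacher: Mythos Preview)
Your proposal is a plan sketch, not a proof, and it has a structural gap that would prevent it from being completed as described. You organise the construction around one gadget $\Gamma_i$ per colour $i\in[c]$, ``wired in parallel'' between $a_1$ and $b_1$, and ask that a cut hit ``some $\kappa$'' of them. But if the $\Gamma_i$ are genuinely parallel $a_1$--$b_1$ connectors, every $a_1$--$b_1$ vertex cut must meet all $c$ of them, and with only $\kappa+\mathcal{O}(\log t)$ tokens (and typically $\kappa<c$) you cannot do that. If instead you intend $\kappa$ parallel ``slots'' each of which can be blocked through any colour, you have not built the mechanism that forces the $\kappa$ slots to use \emph{distinct} colours and \emph{vertex-disjoint} edges, nor the mechanism that forces them all to draw from the same instance. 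You yourself flag exactly these two points as ``the main obstacle'' and ``the delicate part'', and you do not resolve them; that is the gap.

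The paper's construction avoids this difficulty by not trying to be economical with tokens. It uses $\Theta(\kappa m)$ tokens (still polynomial in the instance size), arranged so that in each of $2(\kappa-1)$ ``layers'' of $m$ edge-vertices the tokens cover $m-1$ positions and leave exactly one edge free; the sequence of free edges across layers is forced, via carefully placed long paths, to be a rainbow matching. Instance selection is done by a binary tree rooted at $a_1$ with leaves $s_1,\dots,s_t$: for each depth $d$ there is one token $v^d$ adjacent to all depth-$d$ tree vertices, and a single slide of $v^d$ cuts one of the two subtrees at that depth, so $\log t$ slides isolate a unique leaf $s_{\mathfrak r}$. This is a \emph{cut} mechanism, not a movement-gating mechanism as you propose. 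Finally, every non-token-adjacent connection is a path of length $m^3+\log t>\budget$, so each token can slide at most once; this global ``rigidity'' trick is what makes the backward direction go through, and it is absent from your sketch.
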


\begin{proof}
By choosing an appropriate polynomial equivalence relation $\mathcal{R}$, we may assume that we are given a family of $t$ \textsc{Rainbow Matching} instances $(H_r, \phi_r, \kappa_r)$, where $H_r$ is a 2-regular graph, $|V(H_r)| = n$, $|E(H_r)| = m$, $\kappa_r = \kappa \in \mathbb{N}$, and $\phi_r : E(H_r) \rightarrow [c]$ is a mapping that properly colors $H_r$ and in which every $i \in [c]$ is used exactly twice.
We may duplicate some input instances so that $t = 2^s$ for some integer $s$. 
Note that this step at most doubles the number of input instances.
The construction of the instance $(G, S, \budget, a_1, b_1)$ of \textsc{VCut-D} is twofold. 

For each instance $(H_r, \phi_r, \kappa_r)$, we create $G_r$, formed of two vertices, $s_r$ and $t_r$, as well as $\kappa - 1$ sets $\{E^1_r, \ldots, E_r^{\kappa - 1}\}$ of $2m + 2$ vertices each.
A set $E^p_r$ for $p \in [\kappa - 1]$ contains $2m$ vertices, denoted \emph{edge-vertices}, that represent the edges in $H_r$ twice and two other vertices which are denoted by $s^p_r$ and $t^p_r$ (see \Cref{fig:vcut-k}).
We denote the edge-vertices in a set $E^p_r$ as $v_{e_h}^{p,r}(1)$ ($v_{e_h}^{p,r}(2)$) to refer to the first (second) vertex representing the same edge $e_h$ of $E(H_r)$ in $E^p_r$. 
We denote by $E^p_r(1)$ the set of all vertices $v_{e_h}^{p,r}(1)$, and by $E^p_r(2)$ the set of all vertices $v_{e_h}^{p,r}(2)$. 
In $G_r$, we connect:
\begin{itemize}[itemsep=0pt]
    \item through paths of length $m^3 + \log t$, $s_r$ to each of $s_r^p$ for $p \in [\kappa-1]$ and $t_r$ to each of $t_r^p$ for $p \in [\kappa-1]$,
    \item through paths of length $m^3 + \log t$, $s_r^p$ to all vertices $v_{e_h}^{p,r}(1)$ and $t_r^p$ to all vertices $v_{e_h}^{p,r}(2)$ for each $e_h \in E(H_r)$ and each $p \in [\kappa - 1]$, 
    \item through paths of length $m^3 + \log t$, all vertices $v_{e_h}^{p,r}(1)$ and $v_{e_g}^{q,r}(2)$ such that $\phi_r(e_h) = \phi_r(e_g)$ for each $p \le q \in [\kappa-1]$,
    \item through paths of length $m^3 + \log t$, $v_{e_h}^{p,r}(1)$ and $v_{e_g}^{q,r}(2)$, for each $p \le q \in [\kappa-1]$, whenever $e_h$ and $e_g$ are incident in $H_r$,
    \item through paths of length $m^3 + \log t$, $v_{e_h}^{p,r}(2)$ and $v_{e_g}^{q,r}(1)$, for each $p \in [\kappa - 2]$, $q = p + 1$, whenever $e_h \neq e_g$.
\end{itemize}

\afterpage{
\begin{figure}[H]
    \centering
    \begin{tikzpicture}
    [
      level distance=1cm,
      level 1/.style={sibling distance=4cm},
      level 2/.style={sibling distance=2cm},
      level 3/.style={sibling distance=1cm},
      edge from parent/.style={draw=black, dashed},
      every node/.style={circle, solid, draw=black, fill=white, inner sep=1.5pt},
      grow'=down
    ]
    
     % Draw 9 rectangles, each 3 aligned vertically, on the left of the tree
    \fill[yellow!10] (-10,-1.75) -- (-10,-0.75) -- (-4.25,-0.75) -- (-4.25,-1.75) -- cycle;
    \fill[red!10] (-10,-2.75) -- (-10,-1.75) -- (-4.25,-1.75) -- (-4.25,-2.75) -- cycle;
    \fill[blue!10] (-10,-3.75) -- (-10,-2.75) -- (-4.25,-2.75) -- (-4.25,-3.75) -- cycle;
    \draw[teal, line width=0.5mm] (-4.75,0.25) .. controls (-3,-1.25) .. (-4.75,-3);
    \draw[teal, line width=0.5mm] (-4.75,0.25) .. controls (-3.3,-0.25) .. (-4.75,-2);
    \draw[teal, line width=0.5mm] (-4.75,0.25) -- (-4.75,-1);
    \draw[green, line width=0.5mm] (-6.75,0.25) .. controls (-5,-1.25) .. (-6.75,-3);
    \draw[green, line width=0.5mm] (-6.75,0.25) .. controls (-5.3,-0.25) .. (-6.75,-2);
    \draw[green, line width=0.5mm] (-6.75,0.25) -- (-6.75,-1);
    \draw[red, line width=0.5mm] (-8.75,0.25) .. controls (-7,-1.25) .. (-8.75,-3);
    \draw[red, line width=0.5mm] (-8.75,0.25) .. controls (-7.3,-0.25) .. (-8.75,-2);
    \draw[red, line width=0.5mm] (-8.75,0.25) -- (-8.75,-1);
    \draw[teal] (-4.75,0.25) -- (-3.5,-3);
    \node[fill=none, draw=none] at (-9.5,-1.25) {...};
    \node[fill=none, draw=none] at (-9.5,-2.25) {...};
    \node[fill=none, draw=none] at (-9.5,-3.25) {...};
    
    % Draw each rectangle one by one and add text inside
    \draw[draw=blue] (-9,-1) rectangle ++(0.5,-0.5);
    \node[fill=none, draw=none, label=above:{\scalebox{0.4}{$E_1^2(1)$}}] at (-8.75,-1.75) {};
    \draw[draw=blue] (-9,-2) rectangle ++(0.5,-0.5);
    \node[fill=none, draw=none, label=above:{\scalebox{0.4}{$E_2^2(1)$}}] at (-8.75,-2.75) {};
    \draw[draw=blue] (-9,-3) rectangle ++(0.5,-0.5);
    \node[fill=none, draw=none, label=above:{\scalebox{0.4}{$E_3^2(1)$}}] at (-8.75,-3.75) {};
    \draw[draw=blue] (-7,-1) rectangle ++(0.5,-0.5);
    \node[fill=none, draw=none, label=above:{\scalebox{0.4}{$E_1^1(2)$}}] at (-6.75,-1.75) {};
    \draw[draw=blue] (-7,-2) rectangle ++(0.5,-0.5);
    \node[fill=none, draw=none, label=above:{\scalebox{0.4}{$E_2^1(2)$}}] at (-6.75,-2.75) {};
    \draw[draw=blue] (-7,-3) rectangle ++(0.5,-0.5);
    \node[fill=none, draw=none, label=above:{\scalebox{0.4}{$E_3^1(2)$}}] at (-6.75,-3.75) {};
    \draw[draw=blue] (-5,-1) rectangle ++(0.5,-0.5);
    \node[fill=none, draw=none, label=above:{\scalebox{0.4}{$E_1^1(1)$}}] at (-4.75,-1.75) {};
    \draw[draw=blue] (-5,-2) rectangle ++(0.5,-0.5);
    \node[fill=none, draw=none, label=above:{\scalebox{0.4}{$E_2^1(1)$}}] at (-4.75,-2.75) {};
    \draw[draw=blue] (-5,-3) rectangle ++(0.5,-0.5);
    \node[fill=none, draw=none, label=above:{\scalebox{0.4}{$E_3^1(1)$}}] at (-4.75,-3.75) {};
    \draw[blue, fill=yellow!10] (-3.7,-3.01) rectangle (-3.3,-3.99);
    \draw[blue, fill=red!10] (-2.7,-3.01) rectangle (-2.3,-3.99);
    \draw[blue, fill=blue!10] (-1.7,-3.01) rectangle (-1.3,-3.99);
    
    \foreach \i in {4,5,6,7,8} {
        % Draw a rectangle between the leaf and the corresponding end vertex
        \draw[blue] (\i*1-4.7,-3.01) rectangle (\i*1-4.3,-3.99);
    }
    \draw[teal] (-4.75,0.25) -- (-3.5,-4);
    \draw[teal] (-4.75,0.25) -- (-2.5,-3);
    \draw[teal] (-4.75,0.25) -- (-2.5,-4);
    \draw[teal] (-4.75,0.25) -- (-1.5,-3);
    \draw[pink] (0, -2) -- (-1,-2);
    \draw[pink] (0,-2) -- (1,-2);
    \draw[pink] (0,-2) .. controls (1.5,-1.5) .. (3,-2);
    \draw[pink] (0,-2) .. controls (-1.5,-1.5) .. (-3,-2);
    \draw[pink] (0, -3) -- (-0.5,-3);
    \draw[pink] (0,-3) -- (0.5,-3);
    \draw[pink] (0,-3) .. controls (-0.75,-2.5) .. (-1.5,-3);
    \draw[pink] (0,-3) .. controls (0.75,-2.5) .. (1.5,-3);
    \draw[pink] (0,-3) .. controls (-1.5,-2) .. (-3.5,-3);
    \draw[pink] (0,-3) .. controls (1.5,-2) .. (3.5,-3);
    \draw[pink] (0, -1) -- (-2,-1);
    \draw[pink] (0,-1) -- (2,-1);
    \node[fill=black, draw=black, label=above:{\scalebox{0.4}{$u(3,.)$}}] at (-8.75,0.25) {};
    \node[fill=black, draw=black, label=above:{\scalebox{0.4}{$u(2,.)$}}] at (-6.75,0.25) {};
    \node[fill=black, draw=black, label=above:{\scalebox{0.4}{$u(1,.)$}}] at (-4.75,0.25) {};
    
    % Root of the binary tree
    \node [label=above:{\tiny $a_1$}]{}
      child { node {}
        child { node {}
          child { node[circle, solid, draw=blue, fill=white, inner sep=1.5pt, label=below:{\scalebox{0.5}{$s_8$}}] {} }
          child { node[circle, solid, draw=blue, fill=white, inner sep=1.5pt, label=below:{\scalebox{0.5}{$s_7$}}] {} }
        }
        child { node {}
          child { node[circle, solid, draw=blue, fill=white, inner sep=1.5pt, label=below:{\scalebox{0.5}{$s_6$}}] {} }
          child { node[circle, solid, draw=blue, fill=white, inner sep=1.5pt, label=below:{\scalebox{0.5}{$s_5$}}] {} }
        }
      }
      child { node {}
        child { node {}
          child { node[circle, solid, draw=blue, fill=white, inner sep=1.5pt, label=below:{\scalebox{0.5}{$s_4$}}] {} }
          child { node[circle, solid, draw=blue, fill=white, inner sep=1.5pt, label=below:{\scalebox{0.5}{$s_3$}}] {} }
        }
        child { node {}
          child { node[circle, solid, draw=blue, fill=white, inner sep=1.5pt, label=below:{\scalebox{0.5}{$s_2$}}] {} }
          child { node[circle, solid, draw=blue, fill=white, inner sep=1.5pt, label=below:{\scalebox{0.5}{$s_1$}}] {} }
        }
      };
    
    \node at (0,-2) [circle, solid, draw=black, fill=black, inner sep=1.5pt, label=above:{\tiny $v^2$}] {};
    \node at (0,-1) [circle, solid, draw=black, fill=black, inner sep=1.5pt, label=above:{\tiny $v^1$}] {};
    \node at (0,-3) [circle, solid, draw=black, fill=black, inner sep=1.5pt, label=above:{\tiny $v^3$}] {};
    
    \foreach \i in {1,2,3,4,5,6,7,8} {
        \draw[black, dashed] (\i*1-4.5,-4) -- (0,-5);
        \node at (\i*1-4.5,-4) [circle, solid, draw=blue, fill=white, inner sep=1.5pt, label=above:{\scalebox{0.5}{$t_{\i}$}}] (end\i) {};
    }
    \node at (0,-5) [circle, solid, draw=black, fill=white, inner sep=1.5pt, label=below:{\tiny $b_1$}] {};
    \end{tikzpicture}
    \caption{\footnotesize An illustration of the graph $G$ formed as per the composition of \Cref{thm:cross_composition-Vcut-k} given input instances $(H_r, \phi_r, \kappa_r)$ for $r \in [8]$, where $\kappa_r = \kappa \ge 3$.
    For clarity, the graphs $G_r$ for $r \in [8]$ were replaced by rectangles with blue borders incident to two blue vertices $s_r$ and $t_r$ of $G_r$. 
    Pink edges are used to illustrate how the vertices $v^d$ for $d \in [3]$ connect to the vertices of $\mathcal{T}$. 
    Dotted lines represent paths of length $m^3 + \log t$ between the vertices and thick edges are used to represent that a vertex is adjacent to all vertices in a set of vertices.
    The yellow, pink, and blue rectangular areas on the left provide a zoomed-in view of some of the content of $G_1$, $G_2$, and $G_3$, respectively. Particularly, they show the sets of vertices $E_1^1(1)$, $E_1^1(2)$, $E_1^2(1)$, $E_2^1(1)$, $E_2^1(2)$, $E_2^2(1)$, $E_3^1(1)$, $E_3^1(2)$, and $E_3^2(1)$.
    For clarity, not all edges between vertices of the form $u(i,j)$ for $i \in [2(\kappa -1)]$ and $j \in [m-1]$, and both vertices $s_r$ and $t_r$ for $r \in [8]$ are shown.}
    \label{fig:vcut-general}
\end{figure}
}
We form $G$ of all $G_r$ for $r \in [t]$ as follows (see \Cref{fig:vcut-general}).
We create two global vertices $a_1$ and~$b_1$ such that $b_1$ is connected through paths of length $m^3+ \log t$ to $t_r$ for $r \in [t]$.
Additionally, we create a binary tree~$\mathcal{T}$ rooted at $a_1$, with $\log t + 1$ levels, and whose leaves constitute $s_r$ for $r \in [t]$.
For each depth $d$ of $\mathcal{T}$ for $d \in \{1, \ldots, \log t\}$, we create a vertex $v^d$ that contains a token and is connected through a single edge to each vertex of $\mathcal{T}$ that is at depth $d$. 
The edges of $\mathcal{T}$ are all replaced by paths of length $m^3 + \log t$.
Finally, we create $2(\kappa - 1)$ sets $\{M_1, \ldots, M_{2(\kappa - 1)}\}$, of $m - 1$ edges each. 
We connect each edge $e^{(i,j)} \in M_{i}$ for $i \in [2(\kappa-1)]$ and $j \in [m-1]$, from one of its endpoints, denoted $u^{(i,j)}$, to each vertex $v_{e_h}^{\ceil{i/2},r}(1)$ for each $r \in [t]$ if $i$ is odd, and to each vertex $v_{e_h}^{\ceil{i/2},r}(2)$ for each $r \in [t]$ if $i$ is even. 
Additionally, we connect through paths of length $m^3+ \log t$, each $s_r$ and $t_r$ for $r \in [t]$ to all of $u^{(i,j)}$ for $i \in [2(\kappa - 1)]$ and $j \in [m - 1]$.
All vertices in the sets $\{M_1, \ldots, M_{2(\kappa - 1)}\}$ contain tokens.
Setting $\budget = \log t + 2(2\kappa - 2) \cdot (m - 1)$ finalizes the construction of $(G, S, \budget, a_1, b_1)$. 
Since we perform only a polynomial number of operations per instance as well as some polynomial in $t$ other operations while creating the tree $\mathcal{T}$ and connecting some vertices, the reduction is polynomial in $\Sigma^t_{i=1} |x_i|$. 
Additionally, $k$ is $O(m^2 + \log t)$ since $\kappa \le m$.
\\[1ex]
\begin{figure}
    \centering
\begin{tikzpicture}
\node[circle, inner sep=2pt, fill=white, draw=black, label=above left:{\tiny $s^1_1$}] (s11) at (0, 0) {};
\node[circle, inner sep=2pt, fill=white, draw=black, label=above right:{\tiny $t^1_1$}] (t11) at (10, 0) {};
\node[circle, inner sep=2pt, fill=blue!30, label=above right:{\tiny $v_a^{(1,1)}(1)$}] (va11) at (3, 3) {};
\node[circle, inner sep=2pt, fill=blue!30, label=above right:{\tiny $v_b^{(1,1)}(1)$}] (vb11) at (3, 2.25) {};
\node[circle, inner sep=2pt, fill=orange!30, label=above right:{\tiny $v_c^{(1,1)}(1)$}] (vc11) at (3, 1.5) {};
\node[circle, inner sep=2pt, fill=orange!30, label=above right:{\tiny $v_d^{(1,1)}(1)$}] (vd11) at (3, 0.75) {};
\node[circle, inner sep=2pt, fill=green!30, label=above right:{\tiny $v_e^{(1,1)}(1)$}] (ve11) at (3, 0) {};
\node[circle, inner sep=2pt, fill=green!30, label=above right:{\tiny $v_f^{(1,1)}(1)$}] (vf11) at (3, -0.75) {};
\node[circle, inner sep=2pt, fill=red!30, label=above right:{\tiny $v_g^{(1,1)}(1)$}] (vg11) at (3, -1.5) {};
\node[circle, inner sep=2pt, fill=red!30, label=above right:{\tiny $v_h^{(1,1)}(1)$}] (vh11) at (3, -2.25) {};
\node[circle, inner sep=2pt, fill=brown!30, label=above right:{\tiny $v_i^{(1,1)}(1)$}] (vi11) at (3, -3) {};
\node[circle, inner sep=2pt, fill=brown!30, label=above right:{\tiny $v_j^{(1,1)}(1)$}] (vj11) at (3, -3.75) {};

\draw[draw=red, thick, decorate, decoration={brace,amplitude=10pt,mirror}] (2.5,3.5) -- (2.5,-4.25) node at (3.5, 4) {\textcolor{red}{\tiny $E_1^1(1)$}};
\draw[draw=red, thick, decorate, decoration={brace,amplitude=10pt}] (4.5,3.5) -- (4.5,-4.25);
\draw[draw=brown, thick, decorate, decoration={brace,amplitude=10pt,mirror}] (5.5,3.5) -- (5.5,-4.25) node at (6.5, 4) {\textcolor{brown}{\tiny $E_1^1(2)$}};
\draw[draw=brown, thick, decorate, decoration={brace,amplitude=10pt}] (7.5,3.5) -- (7.5,-4.25);

\node[circle, inner sep=2pt, fill=blue!30, label=above left:{\tiny $v_a^{(1,1)}(2)$}] (va12) at (7, 3) {};
\node[circle, inner sep=2pt, fill=blue!30, label=above left:{\tiny $v_b^{(1,1)}(2)$}] (vb12) at (7, 2.25) {};
\node[circle, inner sep=2pt, fill=orange!30, label=above left:{\tiny $v_c^{(1,1)}(2)$}] (vc12) at (7, 1.5) {};
\node[circle, inner sep=2pt, fill=orange!30, label=above left:{\tiny $v_d^{(1,1)}(2)$}] (vd12) at (7, 0.75) {};
\node[circle, inner sep=2pt, fill=green!30, label=above left:{\tiny $v_e^{(1,1)}(2)$}] (ve12) at (7, 0) {};
\node[circle, inner sep=2pt, fill=green!30, label=above left:{\tiny $v_f^{(1,1)}(2)$}] (vf12) at (7, -0.75) {};
\node[circle, inner sep=2pt, fill=red!30, label=above left:{\tiny $v_g^{(1,1)}(2)$}] (vg12) at (7, -1.5) {};
\node[circle, inner sep=2pt, fill=red!30, label=above left:{\tiny $v_h^{(1,1)}(2)$}] (vh12) at (7, -2.25) {};
\node[circle, inner sep=2pt, fill=brown!30, label=above left:{\tiny $v_i^{(1,1)}(2)$}] (vi12) at (7, -3) {};
\node[circle, inner sep=2pt, fill=brown!30, label=above left:{\tiny $v_j^{(1,1)}(2)$}] (vj12) at (7, -3.75) {};

\foreach \i in {a,b,c,d,e,f,g,h,i,j} {
  \draw (s11) -- (v\i11);
}

\foreach \i in {a,b,c,d,e,f,g,h,i,j} {
  \draw (t11) -- (v\i12);
}

\foreach \i in {a,b,c,d,e,f,g,h,i,j} {
  \draw[draw=blue] (v\i11) -- (v\i12);
}

\node[circle, inner sep=2pt, fill=white, draw=black, label=above left:{\tiny $s_1^2$}] (s12) at (0, -10) {};
\node[circle, inner sep=2pt, fill=white, draw=black, label=above right:{\tiny $t_1^2$}] (t12) at (10, -10) {};
\node[circle, inner sep=2pt, fill=blue!30, label=above right:{\tiny $v_a^{(2,1)}(1)$}] (va21) at (3, -7) {};
\node[circle, inner sep=2pt, fill=blue!30, label=above right:{\tiny $v_b^{(2,1)}(1)$}] (vb21) at (3, -7.75) {};
\node[circle, inner sep=2pt, fill=orange!30, label=above right:{\tiny $v_c^{(2,1)}(1)$}] (vc21) at (3, -8.5) {};
\node[circle, inner sep=2pt, fill=orange!30, label=above right:{\tiny $v_d^{(2,1)}(1)$}] (vd21) at (3, -9.25) {};
\node[circle, inner sep=2pt, fill=green!30, label=above right:{\tiny $v_e^{(2,1)}(1)$}] (ve21) at (3, -10) {};
\node[circle, inner sep=2pt, fill=green!30, label=above right:{\tiny $v_f^{(2,1)}(1)$}] (vf21) at (3, -10.75) {};
\node[circle, inner sep=2pt, fill=red!30, label=above right:{\tiny $v_g^{(2,1)}(1)$}] (vg21) at (3, -11.5) {};
\node[circle, inner sep=2pt, fill=red!30, label=above right:{\tiny $v_h^{(2,1)}(1)$}] (vh21) at (3, -12.25) {};
\node[circle, inner sep=2pt, fill=brown!30, label=above right:{\tiny $v_i^{(2,1)}(1)$}] (vi21) at (3, -13) {};
\node[circle, inner sep=2pt, fill=brown!30, label=above right:{\tiny $v_j^{(2,1)}(1)$}] (vj21) at (3, -13.75) {};
\node[circle, inner sep=2pt, fill=blue!30, label=above left:{\tiny $v_a^{(2,1)}(2)$}] (va22) at (7, -7) {};
\node[circle, inner sep=2pt, fill=blue!30, label=above left:{\tiny $v_b^{(2,1)}(2)$}] (vb22) at (7, -7.75) {};
\node[circle, inner sep=2pt, fill=orange!30, label=above left:{\tiny $v_c^{(2,1)}(2)$}] (vc22) at (7, -8.5) {};
\node[circle, inner sep=2pt, fill=orange!30, label=above left:{\tiny $v_d^{(2,1)}(2)$}] (vd22) at (7, -9.25) {};
\node[circle, inner sep=2pt, fill=green!30, label=above left:{\tiny $v_e^{(2,1)}(2)$}] (ve22) at (7, -10) {};
\node[circle, inner sep=2pt, fill=green!30, label=above left:{\tiny $v_f^{(2,1)}(2)$}] (vf22) at (7, -10.75) {};
\node[circle, inner sep=2pt, fill=red!30, label=above left:{\tiny $v_g^{(2,1)}(2)$}] (vg22) at (7, -11.5) {};
\node[circle, inner sep=2pt, fill=red!30, label=above left:{\tiny $v_h^{(2,1)}(2)$}] (vh22) at (7, -12.25) {};
\node[circle, inner sep=2pt, fill=brown!30, label=above left:{\tiny $v_i^{(2,1)}(2)$}] (vi22) at (7, -13) {};
\node[circle, inner sep=2pt, fill=brown!30, label=above left:{\tiny $v_j^{(2,1)}(2)$}] (vj22) at (7, -13.75) {};

\foreach \i in {a,b,c,d,e,f,g,h,i,j} {
  \draw (s12) -- (v\i21);
}

\foreach \i in {a,b,c,d,e,f,g,h,i,j} {
  \draw (t12) -- (v\i22);
}

\foreach \i in {a,b,c,d,e,f,g,h,i,j} {
  \draw[draw=blue] (v\i21) -- (v\i22);
}

\draw[draw=blue] (va11) -- (vb12);
\draw[draw=blue] (vb11) -- (va12);
\draw[draw=blue] (va11) -- (va22);
\draw[draw=blue] (va11) -- (vb22);
\draw[draw=blue] (va21) -- (vb22);
\draw[draw=blue] (vb21) -- (va22);
\draw[draw=blue] (va12) -- (va21);
\draw[draw=blue] (va12) -- (vb21);
\draw[draw=orange] (vh11) -- (vj22);
\draw[draw=black, dotted] (vj12) -- (va21);
\draw[draw=black, dotted] (vj12) -- (vb21);
\draw[draw=black, dotted] (vj12) -- (vc21);
\draw[draw=black, dotted] (vj12) -- (vd21);
\draw[draw=black, dotted] (vj12) -- (ve21);
\draw[draw=black, dotted] (vj12) -- (vf21);
\draw[draw=black, dotted] (vj12) -- (vg21);
\draw[draw=black, dotted] (vj12) -- (vh21);
\draw[draw=black, dotted] (vj12) -- (vi21);
\draw (-1,0) -- (s11);
\draw (11,0) -- (t11);
\draw (-1,-10) -- (s12);
\draw (11,-10) -- (t12);
\end{tikzpicture}
    \caption{\footnotesize An illustration of $E^1_1$, $E_1^2$, $s^1_1$, $t^1_1$, $s^2_1$, and $t^2_1$ of $G_1$ of the or-cross-composition of Theorem~\ref{thm:cross_composition-Vcut-k}. In $H_1$, the vertices are $a$, $b$, $c$, $d$, $e$, $f$, $g$, $h$, $i$, and $j$. For simplification purposes, the figure illustrates the types of edges but does not contain all edges between the illustrated vertices. Length $m^3+ \log t$ paths are represented by the edges (regular or dotted). Vertices in red brackets are in $E_1^1(1)$ and those in beige brackets are in $E_1^1(2)$. Blue edges are between vertices representing edges of the same color in $H^1$ and dotted ones between all $v_{e_h}^{p,r}(2)$ and $v_{e_g}^{q,r}(1)$ for $q = p + 1$, whenever $e_h \neq e_g$. Finally, orange edges show that the edges, represented by the adjacent edge-vertices, are adjacent in $H_1$. In $G_1$, length $m^3+ \log t$ paths exist between $s_1$ and both of $s^1_1$ and $s^2_1$ and between $t_1$ and both of $t^1_1$ and $t^2_1$. No vertex in this figure contains a token (colored vertices display the colors of the edges in the instance $(H_1, \phi_1, r_1)$).}
    \label{fig:vcut-k}
\end{figure}

\begin{claim}
If for some $\mathfrak{r} \in [t]$, $(H_\mathfrak{r}, \phi_\mathfrak{r}, \kappa_\mathfrak{r})$ is a yes-instance of \textsc{Rainbow Matching}, then the constructed instance $(G, S, \budget, a_1, b_1)$ is a yes-instance of \textsc{VCut-D}. 
\end{claim}

\begin{claimproof}
Let $\mathcal{M}_\mathfrak{r}$ be a solution to the instance $(H_\mathfrak{r}, \phi_\mathfrak{r}, \kappa_\mathfrak{r})$. 
$\mathcal{M}_\mathfrak{r} \subseteq E(H_\mathfrak{r})$ forms a matching in $H_\mathfrak{r}$ such that $\phi_\mathfrak{r}(e_h) \neq \phi_\mathfrak{r}(e_g)$, for all $e_h$, $e_g \in \mathcal{M}_\mathfrak{r}$.
We apply the following slides in $(G, S, \budget, a_1, b_1)$ to disconnect $a_1$ from $b_1$.
First, we choose one edge $e_h$ of $\mathcal{M}_\mathfrak{r}$ and using $m - 1$ slides, we slide the tokens on $u^{(1,j)}$ for $j \in [m-1]$ onto all vertices in $E^1_\mathfrak{r}(1)$ except $v^{1,\mathfrak{r}}_{e_h}(1)$.   
Then, using $(2\kappa - 3) \cdot (m - 1)$ slides, for each $i \in [\kappa-1]$, we choose one other edge $e_s \in \mathcal{M}_\mathfrak{r}$ and slide the tokens on $u^{(2i,j)}$ and~$u^{(2i+1,j)}$ (when applicable) for $j \in [m-1]$ onto all vertices in $E_\mathfrak{r}^{i}(2)$ and $E_\mathfrak{r}^{i+1}(1)$ except $v^{i,\mathfrak{r}}_{e_s}(2)$ and $v^{i+1,\mathfrak{r}}_{e_s}(1)$, respectively.   
We slide onto $u^{(i,j)}$ for all $i \in [2(\kappa - 1)]$ and $j \in [m-1]$ the tokens adjacent to the latter vertices, on the edges in $\{M_1, \ldots, M_{2(\kappa-1)}\}$, using $(2\kappa - 2) \cdot (m - 1)$ slides.
Finally, in~$\mathcal{T}$, we use the tokens on the vertices $v^d$ for $d \in \{1, \ldots, \log t\}$, to disconnect all paths from the root $a_1$ to all of $s_r$ for $r \in [t] - \{\mathfrak{r}\}$, using one slide per token.
This ensures that, through at most $\log t$ slides, all paths from $a_1$ to $b_1$ go through only both $s_\mathfrak{r}$ and $t_\mathfrak{r}$. 
Following the described steps, we have executed a total of $\budget$ slides. 
To see that $a_1$ and $b_1$ are now disconnected, note that after the slides of the tokens on $u^d$ for $d \in \{1, \ldots, \log t\}$ are performed, all paths from $a_1$ to $b_1$ in $G$ go through $s_\mathfrak{r}$ and $t_\mathfrak{r}$.
Thus it suffices to argue that the remaining $2(2\kappa - 2) \cdot (m - 1)$ slides disconnect $s_\mathfrak{r}$ and $t_\mathfrak{r}$.
First, if this is not the case, then no path between $s_\mathfrak{r}$ and $t_\mathfrak{r}$ goes through any $u^{(i,j)}$ for all $i \in [2(\kappa - 1)]$ and $j \in [m-1]$ since the tokens that left those vertices have been replaced. 
Also, the last four vertices on any path between $s_\mathfrak{r}$ and $t_\mathfrak{r}$ must be $v_{e_h}^{p,\mathfrak{r}}(1)$ for some $p \in [\kappa-1]$ and some $e_h \in E(H_\mathfrak{r})$, $v_{e_g}^{q,\mathfrak{r}}(2)$ for some $q \in \{p, \ldots, \kappa - 1\}$ and some $e_g \in E(H_\mathfrak{r})$, $t_\mathfrak{r}^{q}$ and $t_\mathfrak{r}$.
However, by construction, there exists no paths between all vertices $v_{e_h}^{p,\mathfrak{r}}(1)$ and $v_{e_g}^{q,\mathfrak{r}}(2)$ for each $p \le q \in [\kappa-1]$, such that $\phi_\mathfrak{r}(e_h) \neq \phi_\mathfrak{r}(e_g)$ and~$e_h$ and $e_g$ are non-adjacent. 
Thus, given our choice of the free vertices remaining in $E_\mathfrak{r}^p(.)$ for all $p \in [\kappa-1]$, no path exists between $s_\mathfrak{r}$ from $t_\mathfrak{r}$ and therefore between $a_1$ and $b_1$.    
\end{claimproof}

\begin{claim}
If $(G, S, \budget, a_1, b_1)$ is a yes-instance of \textsc{VCut-D}, then there exists an integer $\mathfrak{r} \in [t]$ for which $(H_\mathfrak{r}, \phi_\mathfrak{r}, \kappa_\mathfrak{r})$ is a yes-instance of \textsc{Rainbow Matching}. 
\end{claim}

\begin{claimproof}
Assume $C_\ell$ for $\ell \le b$, is a solution to $(G, S, \budget, a_1, b_1)$ that is reached with only $2(2\kappa - 2) \cdot (m - 1)+ \log t$ slides and disconnects $a_1$ from $b_1$, then any token that slides in $G$ slides at most once, given that everything except:
\begin{itemize}[itemsep=0pt]
    \item for $d \in \{1, \ldots, \log t\}$, the vertex $v^d$ and each vertex of $\mathcal{T}$ that is at level $d$, 
    \item $u^{(i,j)}$ for $i \in [2(\kappa - 1)]$ and $j \in [m - 1]$, to each vertex $v_{e_h}^{\ceil{i/2},r}(1)$ for each $r \in [t]$ if $i$ is odd, and to each vertex $v_{e_h}^{\ceil{i/2},r}(2)$ for each $r \in [t]$ if $i$ is even,
    \item the endpoints of each edge $e^{(i,j)} \in M_{i}$ for $i \in [2(\kappa - 1)]$ and $j \in [m-1]$, 
\end{itemize}
is connected by paths of length $(m^3+ \log t) > \budget$.
Thus, we know that the tokens on the vertices $v^d$ for $d \in \{1, \ldots, \log t\}$ will have to leave some paths that go from $a_1$ to $b_1$ at least through one pair of vertices $s_\mathfrak{r}$ and $t_\mathfrak{r}$ for some $\mathfrak{r} \in [t]$ and can use at most $\log t$ slides. 
We know that in $G \setminus C_\ell$, no path exists between $s_\mathfrak{r}$ and $t_\mathfrak{r}$.
Since no token can reach $s_\mathfrak{r}$ and $t_\mathfrak{r}$ in the allocated budget, the remaining slides can only disconnect $s_\mathfrak{r}$ from $t_\mathfrak{r}$. 
Note also that $u^{(i,j)} \in C_\ell$, for $i \in [2(\kappa - 1)]$ and $j \in [m - 1]$ as otherwise, a path from $a_1$ to $b_1$ that goes through $s_\mathfrak{r}$, $u^{(i,j)}$ and $t_\mathfrak{r}$ will remain tokens-free.
This implies that at most $m - 1$ tokens can be slid into any one level $\{E^1_\mathfrak{r}(\cdot), \ldots, E^{\kappa - 1}_\mathfrak{r}(\cdot)\}$. 
We show via an inductive argument that the set of edges in $H_\mathfrak{r}$ represented by the vertices in $\{E^1_\mathfrak{r}(\cdot), \ldots, E^{\kappa - 1}_\mathfrak{r}(\cdot)\}$ but not in $C_\ell$ must form a matching $\mathcal{M}_\mathfrak{r}$ in $H_\mathfrak{r}$ of size $\kappa_\mathfrak{r}=\kappa$, such that for $e_h \text{, } e_g \in \mathcal{M}_\mathfrak{r}$, $\phi_\mathfrak{r}(e_h) \neq \phi_\mathfrak{r}(e_g)$ and the claim follows.
Let $P(q)$ be the proposition that the set $\mathcal{E}_q$ of edges represented by vertices in $\{E^1_\mathfrak{r}(\cdot), \ldots, E^{q}_\mathfrak{r}(\cdot)\}$ but not in $C_\ell$ form a matching such that for $e_h \text{, } e_g \in \mathcal{E}_q$, $\phi_\mathfrak{r}(e_h) \neq \phi_\mathfrak{r}(e_g)$ and that vertices that remain free in $E^{q+1}_\mathfrak{r}(1)$ for $q < \kappa - 1$ represent the same edges as the vertices that remain free in $E^{q}_\mathfrak{r}(2)$.
We show that $P(q)$ holds by induction on the levels $q = \{1, \ldots, \kappa - 1\}$. 

We prove the base case by contradiction and assume that a vertex $v_{e_g}^{1,\mathfrak{r}}(2)$ that remains free in~$E^1_\mathfrak{r}(2)$ either represents an edge $e_g$ that is incident to an edge $e_h$ represented by a vertex $v_{e_h}^{1,\mathfrak{r}}(1)$ that remains free in $E^1_\mathfrak{r}(1)$ or it holds that $\phi_\mathfrak{r}(e_g) = \phi_\mathfrak{r}(e_h)$. 
This implies that there exists a path between $s_\mathfrak{r}$ and $t_\mathfrak{r}$ that goes from $s_\mathfrak{r}$ to $s^1_\mathfrak{r}$, to
$v_{e_h}^{1,\mathfrak{r}}(1)$, $v_{e_g}^{1,\mathfrak{r}}(2)$, $t^1_\mathfrak{r}$ and to $t_\mathfrak{r}$ and thus $C_\ell$ is not a solution to $(G, S, \budget, a_1, b_1)$.
As for the second part of the statement, assume that a vertex $v_{e_h}^{1,\mathfrak{r}}(2)$ that remains free in $E^1_\mathfrak{r}(2)$ does not represent the same edge as any of the vertices that remain free in $E^2_\mathfrak{r}(1)$, then there exists a path between $s_\mathfrak{r}$ and $t_\mathfrak{r}$ that goes through, $s^2_\mathfrak{r}$, then any of the latter vertices, followed by $v_{e_h}^{1,\mathfrak{r}}(2)$ and $t^1_\mathfrak{r}$ and thus $C_\ell$ is not a solution to $(G, S, \budget, a_1, b_1)$.
Note that the same arguments used in the base case apply for the inductive step.

In other words, given the second part of the statement, we may assume (for contradiction purposes) that a vertex $v_{e_g}^{i,\mathfrak{r}}(2)$ for $i \le q$ (that remains free in $E^i_\mathfrak{r}(2)$) either represents an edge $e_g$ that is incident to an edge $e_h$ represented by a vertex $v_{e_h}^{i',\mathfrak{r}}(1)$ for $i' \le i$ (that remains free in $E^{i'}_\mathfrak{r}(1)$) or it holds that $\phi_\mathfrak{r}(e_g) = \phi_\mathfrak{r}(e_h)$. 
By construction, this implies that there exists a path from $s_\mathfrak{r}$ and $t_\mathfrak{r}$ that goes from $s_\mathfrak{r}$ to $s^{i'}_\mathfrak{r}$, $v_{e_h}^{i',\mathfrak{r}}(1)$, $v_{e_g}^{i,\mathfrak{r}}(2)$, $t^i_\mathfrak{r}$, and to $t_\mathfrak{r}$ and thus $C_\ell$ is not a solution to $(G, S, \budget, a_1, b_1)$.
As for the second part of the statement, assume that a vertex~$v_{e_h}^{q,\mathfrak{r}}(2)$ (that remains free in $E^q_\mathfrak{r}(2)$) does not represent the same edge as any of the vertices that remain free in $E^{q+1}_\mathfrak{r}(1)$, then there exists a path between $s_\mathfrak{r}$ and $t_\mathfrak{r}$ that goes through, $s^{q+1}_\mathfrak{r}$, then any of the latter vertices, followed by~$v_{e_h}^{q,\mathfrak{r}}(2)$ and $t^q_\mathfrak{r}$ and thus $C_\ell$ is not a solution to $(G, S, \budget, a_1, b_1)$.

Thus, $P(\kappa - 1)$ holds and the set $\mathcal{E}_{\kappa - 1}$ of edges represented by vertices in $\{E^1_\mathfrak{r}(\cdot), \ldots, E^{\kappa-1}_\mathfrak{r}(\cdot)\}$ but not $C_\ell$ form a matching of size $\kappa$ such that for $e_h \text{, } e_g \in \mathcal{E}_{\kappa-1}$, $\phi_\mathfrak{r}(e_h) \neq \phi_\mathfrak{r}(e_g)$.    
\end{claimproof}
This concludes the proof of the theorem.
\end{proof}

\bibliographystyle{plain}
\bibliography{ref}
\end{document}